\pdfoutput=1
\documentclass[11pt]{article}
\usepackage{setspace}
%\doublespacing
% or:
\onehalfspacing
\usepackage[nottoc,numbib]{tocbibind}
\usepackage{cite}

\usepackage{booktabs}
\usepackage{array}

\usepackage{amssymb}
\usepackage{amsthm}
\usepackage{amsfonts}
\usepackage{caption}
\captionsetup[figure]{font=small}
\usepackage{enumerate}
\usepackage{mathtools}
\usepackage[colorlinks=true,linkcolor=blue]{hyperref}
\usepackage{amsmath}
\usepackage{graphicx}
\usepackage{float}
\usepackage{multirow}
\usepackage{blindtext}
\usepackage{dsfont}
\usepackage[title]{appendix}
\usepackage[dvipsnames]{xcolor}
% \usepackage[caption = false]{subfig}
%\usepackage{showlabels}

% Include Algorithm
\usepackage{algorithm}
\usepackage[noend]{algpseudocode}
\makeatletter
\def\BState{\State\hskip-\ALG@thistlm}
\makeatother

% \floatname{algorithm}{Procedure}

\usepackage[margin=1in]{geometry}

% Typeset Shortcuts

\newcommand{\R}{{\mathbb{R}}}

\newcommand{\N}{{\mathbb{N}}}

\newcommand{\statespace}{\mathcal{X}}
\newcommand{\states}{\textbf{x}}

\newcommand{\pis}{\boldsymbol{\pi}}
\newcommand{\rstates}{\textbf{X}}

\newcommand\Perm{\textrm{Perm}(N)}

\newcommand{\Ps}{\textbf{P}}

\newcommand{\K}{{\textbf{K}}}
\newcommand{\Kcomm}{\K^{\textrm{comm}}_n}
\newcommand{\Kexpl}{\K^{\textrm{expl}}}
\newcommand{\KSEO}{\K^{\textrm{SEO}}}
\newcommand{\KDEO}{\K^{\textrm{DEO}}_n}
\newcommand{\Kodd}{\K^{\textrm{odd}}}
\newcommand{\Keven}{\K^{\textrm{even}}}
\newcommand{\Kswap}{\K^{(i,i+1)}}
\newcommand{\KPT}{\K^{\textrm{PT}}_n}

\newcommand{\Rn}{\mathcal{R}_n}
\newcommand{\Tn}{\mathcal{T}_n}

\newcommand{\partition}{\mathcal{P}_N}

\newcommand{\PSEO}{\mathbb{P}_{\textrm{SEO}}}
\newcommand{\PDEO}{\mathbb{P}_{\textrm{DEO}}}
\newcommand{\ESEO}{\mathbb{E}_{\textrm{SEO}}}
\newcommand{\EDEO}{\mathbb{E}_{\textrm{DEO}}}

\newcommand{\nexpl}{n_\textrm{expl}} 
 
\newcommand{\nscan}{n_\textrm{scan}} 
\newcommand{\ntune}{n_\textrm{tune}}
\newcommand{\nsample}{n_\textrm{sample}}  
\newcommand{\ntotal}{n_\textrm{total}}

\newcommand{\tauSEO}{\tau_{\textrm{SEO}}}
\newcommand{\tauDEO}{\tau_{\textrm{DEO}}}

\newcommand{\popt}{\mathcal{P}_N^*}

\newcommand{\Var}{\mathrm{Var}}
\newcommand{\Bern}{\mathrm{Bern}}
\renewcommand{\P}{\mathbb{P}}
\newcommand{\E}{{\mathbb{E}}}

\newcommand{\eps}{\varepsilon}
\newcommand{\ud}{\textrm{d}}

\newtheorem{theorem}{Theorem}
\newtheorem{corollary}{Corollary}
\newtheorem{lemma}{Lemma}
\newtheorem{proposition}{Proposition}

\begin{document}
\title{Non-Reversible Parallel Tempering: a Scalable Highly\newline Parallel MCMC Scheme}
\author{Saifuddin Syed\thanks{Department of Statistics, University of British Columbia, Canada.}, Alexandre Bouchard-C\^{o}t\'{e}$^{*}$, George Deligiannidis\thanks{Department of Statistics, University of Oxford, UK.}, Arnaud Doucet$^{\dagger}$}
\maketitle
\begin{abstract}
Parallel tempering (PT) methods are a popular class of Markov chain Monte Carlo schemes used to sample complex high-dimensional probability distributions. They rely on a collection of $N$ interacting auxiliary chains targeting tempered versions of the target distribution to improve the exploration of the state-space. We provide here a new perspective on these highly parallel algorithms and their tuning by identifying and formalizing a sharp divide in the behaviour and performance of reversible versus non-reversible PT schemes. We show theoretically and empirically that a class of non-reversible PT methods dominates its reversible counterparts and identify distinct scaling limits for the non-reversible and reversible schemes, the former being a piecewise-deterministic Markov process and the latter a diffusion. These results are exploited to identify the optimal annealing schedule for non-reversible PT and to develop an iterative scheme approximating this schedule. We provide a wide range of numerical examples supporting our theoretical and methodological contributions. The proposed methodology is applicable to sample from a distribution $\pi$ with a density $L$ with respect to a reference distribution $\pi_0$ and compute the normalizing constant $\int L\ \ud \pi_0$. A typical use case is when $\pi_0$ is a prior distribution, $L$ a likelihood function and $\pi$ the corresponding posterior distribution.
\end{abstract}
%\tableofcontents 

\section{Introduction}\label{sec_introduction}
Markov Chain Monte Carlo (MCMC) methods are widely used to approximate expectations with respect to a probability distribution with density $\pi(x)$ known up to a normalizing constant, i.e., $\pi(x) = \gamma(x) / \mathcal{Z}$ where $\gamma$ can be evaluated pointwise but the normalizing constant $\mathcal{Z}$ is unknown. When $\pi$ has multiple well-separated modes, highly varying curvature or when one is interested in sampling over combinatorial spaces, standard MCMC algorithms can perform very poorly. This work is motivated by the need for practical methods for these difficult sampling problems. A natural direction to address them is to use multiple cores and to distribute the computation.

\begin{figure}[bt]
	\centering 
	\includegraphics[width=0.9\linewidth]{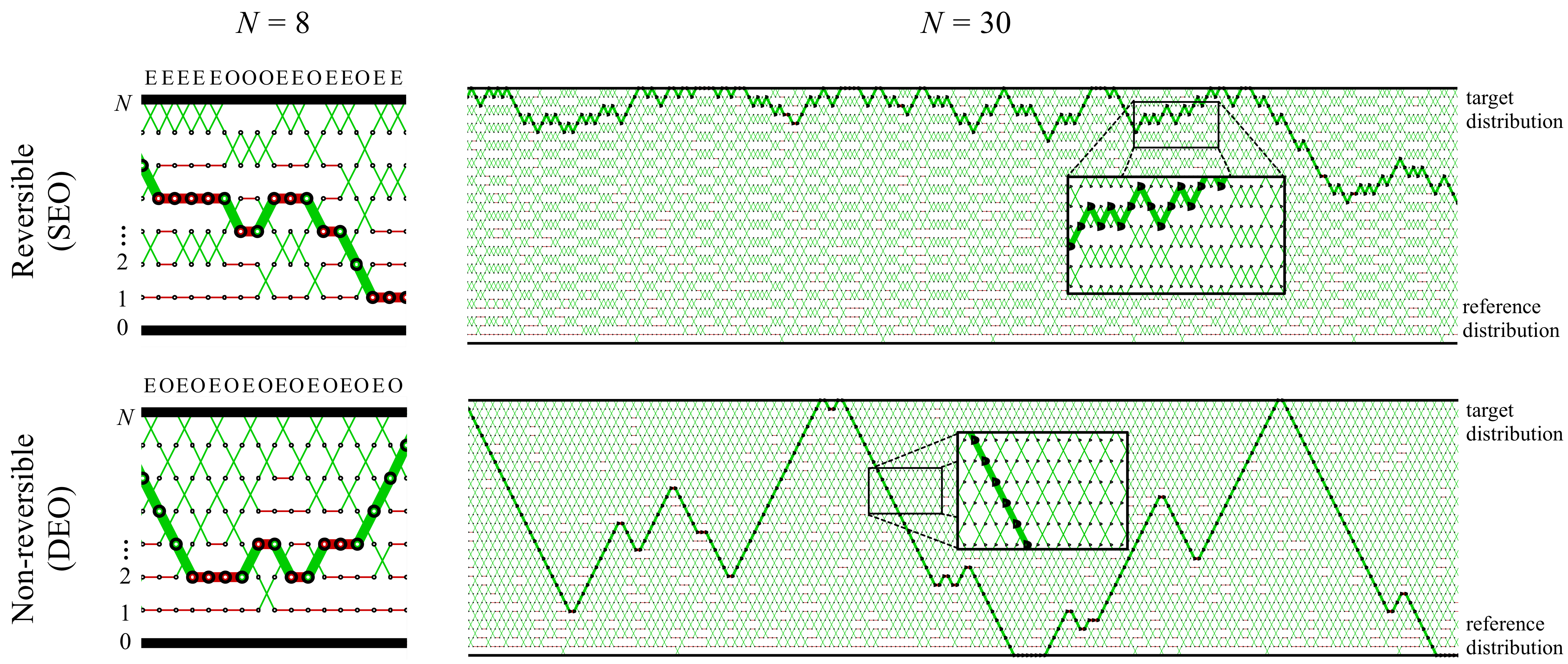}
 	\caption{Reversible (top) and non-reversible (bottom) PT for $N=8$ (left) and $N=30$ auxiliary chains (right) using equally spaced annealing parameters on a Bayesian change-point detection model \cite{davidson-pilon_bayesian_2015} where $\pi_0$ is the prior, $\pi$ the posterior. The sequence of swap moves forms $N+1$ index process trajectories (paths formed by the red and green edges). We show one such path in bold. The reversible and non-reversible PT clearly exhibit different scaling behaviour which we formalize in Section \ref{sec:scalinglim}.
}
	\label{fig:swaps}
\end{figure}

\subsection{Parallel Tempering}
One popular approach for multi-core and distributed exploration of complex distributions is Parallel Tempering (PT) which was introduced independently in statistics \cite{geyer1991markov} and physics \cite{hukushima1996exchange}; see also \cite{swendsen1986replica} for an earlier related proposal. Since its inception, PT remains to this day a very popular MCMC method to sample from complex multimodal target distributions arising in physics, chemistry, biology, statistics, and machine learning; see, e.g., \cite{issaoun2021persistent,ballnus2017comprehensive,chandra2019langevin,cho2010parallel,desjardins2014deep,diaz2020we,dorri_efficient_2020,kamberaj2020molecular,mueller2020adaptive}.

To sample from the target distribution $\pi$, PT introduces a sequence of auxiliary \emph{tempered} or \emph{annealed} probability distributions with densities $\pi^{(\beta_{i})}(x) \propto L(x)^{\beta_{i}} \pi_0(x)$ for $i=0,1,...,N$, where $\pi_0$ is an easy-to-sample reference distribution, $L(x)= \pi(x)/\pi_0(x)$ and the sequence $0=\beta_0<\beta_1<\cdots<\beta_N=1$ defines the \emph{annealing schedule}.
This bridge of auxiliary distributions is used to progressively transform samples from the \emph{reference distribution} ($\beta = 0$) into samples from the \emph{target distribution} ($\beta = 1$), for which only poorly mixing MCMC kernels may be available. For example, in the Bayesian setting where the target distribution is the posterior, we can choose the reference distribution as the prior, from which we can often obtain independent samples.

More precisely, PT algorithms are based on Markov chains in which the states are $(N+1)$-tuples, $\states = (x^0, x^1, x^2, \dots, x^N) \in \statespace^{N+1}$, and whose stationary distribution is given by $\pis(\states) = \prod_{i=0}^N \pi^{(\beta_i)}(x^i)$ \cite{geyer1991markov}.
At each iteration, PT proceeds by applying in parallel $N+1$ MCMC kernels targeting $\pi^{(\beta_i)}$ for $i=0,...,N$. We call these model-specific kernels the \emph{local exploration kernels}.  
The chains closer to the reference chain (i.e.\ those with annealing parameter $\beta_{i}$ close to zero) can typically traverse regions of low probability mass under $\pi$ while the chain at $\beta = 1$ ensures that asymptotically we obtain samples from the target distribution. Frequent communication between the chains at the two ends of the spectrum is therefore critical for good performance, and achieved by proposing to swap the states of chains at adjacent annealing parameters. \emph{Even swap} moves (rows labelled `E'), respectively \emph{Odd swap} moves (labelled `O'), propose to exchange states at chains with an even index $i$, respectively odd index $i$, and $i+1$. These proposals are accepted or rejected according to a Metropolis mechanism.

\subsection{Deterministic and stochastic even-odd schemes} \label{intro_deo}

A key notion used to analyze the behaviour of PT is the \emph{index process}. To provide intuition on this process, it is helpful to discuss briefly how PT is distributed over several machines. An important point is that instead of having pairs of machines exchanging high-dimensional states when a swap is accepted (which could be detrimental due to network latency), the machines should just exchange the annealing parameters. Suppose now we initialize machine $j$ with annealing parameter $\beta_j$. Then after $n$ scans, the annealing parameters are permuted among the $N+1$ machines according to the permutation $\textbf{I}_n=(I_n^0,\dots,I_n^N)$ of $\{0,\dots,N\}$, so that machine $j$ has annealing parameter $\beta_{I_n^j}$ at iteration $n$ of PT. Each \emph{index process} $I_n^j$, formally introduced in Section \ref{sec_trajectories}, is initialized using $I^j_0=j$ and tracks how the state of the corresponding chain evolves over the annealing schedule thanks to the swap moves; see Figure \ref{fig:swaps}. The index process $I_n^j$ thus monitors the information transfer between the reference and target on machine $j$ and as such determines partly the effectiveness of PT.

There have been many proposals made to improve this information transfer by adjusting the annealing schedule; see, e.g., \cite{kone_selection_2005,atchade_towards_2011,miasojedow2013adaptive}. These proposals are useful but do not address a crucial limitation of standard PT algorithms. In a distributed context, one can select randomly at each iteration whether to apply Even or Odd swap moves in parallel. The resulting stochastic even-odd swap (SEO) scheme, henceforth referred to as \emph{reversible PT} as it admits a reversible scaling limit (see Section \ref{sec:scalinglim}), yields index processes exhibiting a diffusive behaviour; see top row of Figure \ref{fig:swaps}. Hence we can expect that when $N$ is large it takes roughly $O(N^2)$ swap attempts for a state at $\beta_0=0$ to reach $\beta_N=1$ \cite{diaconis2000analysis}. The user thus faces a trade-off. If $N$ is too large, the acceptance probabilities of the swap moves are high but it takes a time of order $O(N^2)$ for a state at $\beta=0$ to reach $\beta=1$. If $N$ is too low, the acceptance probabilities of swap moves deteriorate resulting in poor mixing between the different chains. Informally, even in a multi-core or distributed setting, for $N$ large, the $O(N)$ gains in being able to harness more cores do not offset the $O(N^2)$ cost of the diffusion (see Section \ref{sec_domination} where we formalize this argument). As a consequence, the general consensus is that the temperatures should be chosen to allow for about a 20--40\% acceptance rate to maximize the squared jump distance travelled per swap in the space of annealing parameters $[0, 1]$ \cite{rathore2005optimal, kone_selection_2005, lingenheil2009efficiency, atchade_towards_2011}. Adding more chains past this threshold actually deteriorates the performance of reversible PT and there have even been attempts to adaptively reduce the number of additional chains \cite{lkacki2016state}. This is a lost opportunity, since PT is otherwise particularly suitable to implementation on multi-core or distributed architectures. 

An alternative to the SEO scheme is the deterministic even-odd swap (DEO) scheme introduced in \cite{okabe2001replica} where one deterministically alternates Even and Odd swap moves. We refer to DEO as \emph{non-reversible PT} as it admits a non-reversible scaling limit (see Section \ref{sec:scalinglim}). In particular, the resulting index processes do not appear to exhibit a diffusive, i.e.\ random walk type, behaviour, illustrated by the bottom row of Figure \ref{fig:swaps}. This non-diffusive behaviour of non-reversible PT explains its excellent empirical performance when compared to alternative reversible PT schemes observed in practice \cite{lingenheil2009efficiency} for any given schedule. However non-reversible PT, like reversible PT, is sensitive to the choice of the schedule. All the aforementioned tuning strategies developed for PT implicitly assume a reversible framework and do not apply to non-reversible PT (as empirically verified in \cite{lingenheil2009efficiency}). The main contribution of this paper is to identify some of the theoretical properties of non-reversible PT so as to establish optimal tuning guidelines for this algorithm and propose a novel schedule optimization scheme to implement them. In all our experiments, the resulting iteratively optimized  non-reversible PT scheme markedly outperforms the reversible and non-reversible PT schemes currently available (see Section \ref{sec_examples}).

\subsection{Overview of our contributions}\label{sec:overview-of-our-contributions}

After introducing formally the SEO and DEO schemes in Section \ref{sec_setup}, our first contribution is a non-asymptotic result showing that the non-reversible DEO scheme is guaranteed to outperform its reversible SEO counterpart. The notion of optimality we analyze is the \emph{round trip rate}, which quantifies how often information from the reference distribution percolates to the target; see Section \ref{sec_non_asymptotic_analysis}. The theoretical analysis is based on a simplifying assumption called Efficient Local Exploration (ELE), which is not expected to hold exactly in real scenarios. However we show empirically that there are practical methods to approximate ELE and that even when ELE is violated the key predictions made by the theory closely match empirical behaviour. In this sense ELE can be thought of as a useful \emph{model} for understanding PT algorithms.

In Section \ref{sec_communication_barrier} we introduce the local and global communication barrier $\lambda(\beta), \Lambda$ which encode the local and global efficiency of PT.  We then show that for non-reversible PT the round trip rate converges to $(2+2\Lambda)^{-1}$, in contrast to the reversible counterpart for which it decays to zero. To provide some intuition on how the small algorithmic difference between SEO and DEO can have such a profound impact, consider the scenario where PT would use the same distributions at the two end-points, $\pi = \pi_0$, as well as for all intermediate distributions so that $\Lambda=0$. Clearly this is not a realistic scenario, but in this simple context the index processes of DEO and SEO are easy to describe and contrast. In both DEO and SEO, $\pi = \pi_0$ implies that all proposed swaps will be accepted. For DEO, this makes the index process fully deterministic, performing direct trips from index $0$ to $N$ and back. Such a process could be compared to a ``conveyor belt'' with the property that no matter what is the value of $N$, one novel trip from chain $0$ reaches chain $N$ every two iterations, i.e.\ the round trip rate is $(2+2\Lambda)^{-1}=1/2$ as $\Lambda=0$. For SEO, even when $\pi = \pi_0$ the index process is still random and can be readily seen to be a simple discrete random walk. This implies that as $N$ increases, the round trip rate decreases to zero.

In practice, achieving high round trip rates requires careful tuning of the annealing schedule. %$\beta_0, \beta_1, \dots, \beta_N$.  
In Section \ref{sec_computation} we combine the analysis from Section \ref{sec_non_asymptotic_analysis} and Section \ref{sec_communication_barrier} to develop a novel methodology to optimize the annealing parameters. The optimal tuning guidelines we provide are different from existing (reversible) PT guidelines and the novel methodology is highly parallel as its performance does not collapse when a very large number of chains is used. However using a large number of chains does have a diminishing return, therefore we propose a mechanism to determine the optimal trade-off between the number of chains and the number of independent PT algorithms one should use.

In Section \ref{sec:scalinglim} we identify the scaling limit of the index processes for both reversible and non-reversible PT as the number of parallel chains goes to infinity. We show that this scaling limit is a piecewise-deterministic Markov process for non-reversible PT whereas it is a diffusion for reversible PT as suggested by the dynamics of the bold paths in Figure \ref{fig:swaps}. 

Finally in Section \ref{sec_examples}, we present a variety of experiments validating our theoretical analysis and novel methodology. The method is implemented in an open source Bayesian modelling language available at \url{https://github.com/UBC-Stat-ML/blangSDK}. Our software implementation allows the user to specify the model in BUGS-like language \cite{Lunn2000}. From this model declaration, a suitable sequence of annealed distributions is instantiated and a schedule optimized using our iterative method.

\section{Setup and notation}\label{sec_setup}

\subsection{Parallel tempering}\label{sec_Tempered}

Henceforth we will assume that the \emph{target} and \emph{reference} probability distributions $\pi$ and $\pi_0$ on $\statespace$ admit strictly positive densities with respect to a common dominating measure $\mathrm{d}x$. We will also denote these densities somewhat abusively by $\pi$ and $\pi_0$. 
It will be useful to define $V_0(x)=-\log\pi_0(x)$ and $V(x)=-\log L(x)$, where $L(x)=\pi(x)/\pi_0(x)$ is assumed finite for all $x\in\statespace$. Using this notation, the \emph{annealed distribution} at an annealing parameter $\beta$ is given by 
\begin{align}
\pi^{(\beta)}(x)= \frac{L(x)^\beta\pi_0(x)}{\mathcal{Z}(\beta)}=\frac{e^{-\beta V(x)-V_0(x)}}{\mathcal{Z}(\beta)},
\end{align}
where $\mathcal{Z}(\beta)= \int_\mathcal{X} L(x)^\beta\pi_0(x)\mathrm{d}x$ is the corresponding normalizing constant. We denote the \emph{annealing schedule} by $0=\beta_0<\beta_1<\cdots<\beta_N=1$. In our analysis we will view it as a partition $\partition=\{\beta_0,\dots,\beta_N\}$ of $[0,1]$ with mesh-size $\|\partition\|= \sup_i~\{\beta_i-\beta_{i-1}\}$. 

We define the PT state $\bar{\states}=(\states,\textbf{i})$ where $\states=(x^0,\dots,x^N)\in\statespace^{N+1}$ and $\textbf{i}=(i^0,\dots,i^N)\in\Perm$ the group of permutations on $\{0,\dots,N\}$. PT involves constructing a Markov Chain $\bar{\textbf{X}}_n=(\textbf{X}_n,\textbf{I}_n)$ over $\statespace^{N+1}\times\Perm$ invariant with respect to $\bar\pis(\bar\states) = (N+1)!^{-1}\prod_{i=0}^N \pi^{(\beta_i)}(x^i)$. The $i$-th component of $\textbf{X}_n=(X_n^0,\dots,X_n^N)$ tracks the states associated with annealing parameter $\beta_i$ at iteration $n$ while the permutation $\textbf{I}_n=(I_n^0,\dots,I_n^N)$ tracks how the annealing parameters are shuffled among machines. In particular at iteration $n$, machine $j$ stores the $I^j_n$-th component of $\textbf{X}_n$ and annealing parameter $\beta_{I_n^j}$. The sequence of states associated with $\beta_i$ is called the $i$-th \emph{chain} and the state on machine $j$ is the $j$-th \emph{replica}.

For both SEO and DEO, the overall $\bar\pis$-invariant Markov kernel $\KPT$ describing the algorithm is obtained by the composition of a $\bar\pis$-invariant local exploration kernel $\Kexpl$ and communication kernel $\Kcomm$,
\begin{equation}\label{def_PT_kernel}
    \KPT(\bar\states, A)=\Kcomm \Kexpl(\bar\states, A): = \int \Kcomm(\bar\states, \ud \bar\states') \Kexpl(\bar\states', A).
\end{equation} 
The difference between SEO and DEO is in the communication phase, namely $\Kcomm = \KSEO$ in the former case and $\Kcomm = \KDEO$ in the latter. Markov kernels corresponding to the reversible (SEO) and non-reversible (DEO) PT algorithms are described informally in the introduction and illustrated in Figure~\ref{fig:swaps}. 

 \subsection{Local exploration kernels}

The local exploration kernels are defined in the same way for SEO and DEO. They are also model specific, so we assume we are given one $\pi^{(\beta_i)}$-invariant kernel $K^{(\beta_i)}$ for each annealing parameter $\beta_i\in\partition$. These can be based on $\nexpl$ steps of Hamiltonian Monte Carlo, Metropolis--Hastings, Gibbs Sampling, Slice Sampling, etc. We construct the overall local exploration kernel by applying the annealing parameter specific kernels to each component independently from each other: 
\begin{equation}\label{def_exploration_kernel}
    \Kexpl((\states,\textbf{i}), A_0 \times A_1 \times \dots A_N\times\{\textbf{i}'\}) =\prod_{i=0}^N K^{(\beta_i)}(x^i, A_i)\delta_{\textbf{i}}(\textbf{i}'),
\end{equation}
where $\delta_{\textbf{i}}$ denotes the Dirac delta.

In our computational model, we implicitly assume that the local exploration kernel at $\beta = 0$ is special in that it can provide independent exact samples from $\pi_0$. Mathematically, $K^{(0)}(x, A_0) = \pi_0(A_0)$. This assumption is satisfied in most  Bayesian models equipped with proper prior distributions, but  also in other situations such as Markov random fields (see Appendix~\ref{sec:selection-of-pi0}).

\subsection{Communication kernels.} 

Before defining the communication scheme, we first construct its fundamental building block, a \emph{swap}. A swap is a Metropolis--Hastings move with a deterministic proposal which consists of swapping $\beta_i$ and $\beta_{i+1}$ across machines. From a current state $\bar\states=(\states, \textbf{i})$, a proposed swap state is denoted $\bar{\states}^{(i,i+1)}=(\states^{(i,i+1)},\textbf{i}^{(i,i+1)})$ where 
\begin{align}\label{swap_def}
\states^{(i,i+1)} &= (x^0,\dots,x^{i-1},x^{i+1},x^i,x^{i+2}, \dots,x^{N}),
\end{align}
and $\textbf{i}^{(i,i+1)}\in\Perm$ is the permutation obtained by swapping $i$ and $i+1$ in $\textbf{i}$.  The Metropolis--Hastings kernel $\textbf{K}^{(i,i+1)}$ corresponding to this update is given by
\begin{equation}
\textbf{K}^{(i,i+1)}(\bar{\states}, \cdot) = (1-\alpha^{(i,i+1)}(\bar{\states})) \delta_{\bar\states}(\cdot)+ \alpha^{(i,i+1)}(\bar\states) \delta_{\bar\states^{(i,i+1)}}(\cdot).
\end{equation}
The function  $\alpha^{(i,i+1)}(\bar\states)$ is the corresponding acceptance probability equal to 
\begin{align}
\alpha^{(i,i+1)}(\bar\states)
&=\min\left\{1,\frac{\bar\pis\left(\bar\states^{(i,i+1)}\right)}{\bar\pis(\bar\states)}\right\}=\exp \left( \min\{0,(\beta_{i+1}-\beta_i)(V(x^{i+1})-V(x^i))\}\right)\label{accept_ratio}
\end{align}
and the maximal collection of adjacent swap kernels that can be proposed in parallel without interference are
\begin{equation}
\Keven = \prod_{i \text{ even}} \Kswap,\qquad
\Kodd = \prod_{i \text{ odd}} \Kswap,
\end{equation}
which we call the \emph{even} and \emph{odd kernels} respectively.

For SEO, the kernel $\Kcomm=\KSEO$ is given by a mixture of the even and odd kernels in equal proportion while for DEO the kernel $\Kcomm=\KDEO$ is given by a deterministic alternation between even and odd kernels, that is
\begin{equation}
  \KSEO = \frac{1}{2} \Keven + \frac{1}{2} \Kodd,\quad\quad  \KDEO
=\begin{cases}
\Keven & \text{if $n$ is even,}\\
\Kodd  & \text{if $n$ is odd.}
\end{cases}
\end{equation} 

We provide pseudo-code for the DEO scheme in Algorithm \ref{alg_deo}. The pseudo-code also estimates the average rejection probabilities $r^{(i,i+1)}$ of swap moves between chains $i$ and $i+1$ which are used to optimize the annealing schedule in Section \ref{sec_computation}. When the schedule is fixed, lines 1, 12, 17 can be omitted, and one should use ``for $i\in P$" on line 10. For simplicity, the swap in line 15 is shown for a \emph{parallel computing} context, where several cores have a shared memory, and hence line 15 is simply an exchange of pointers in an array, which is efficient thanks to memory sharing. For a \emph{distributed computing} implementation, where several machines do not share memory and instead need to communicate over the network, it becomes advantageous to swap annealing parameters instead of states. Refer to Appendix~\ref{app:distributed} for pseudo-code for the distributed computing implementation. 

\begin{algorithm}[ht]
	\caption{DEO(number of scans $\nscan$, annealing schedule $\partition$)}\label{alg_deo} 
	\begin{algorithmic}[1]
		\State $ r^{(i,i+1)} \gets 0$ for all $i \in \{0, 1, \dots, N-1\}$ \Comment{Swap rejection statistics used in Section \ref{sec_adaptive_algo} to optimize the schedule} 
		\State $\states\gets \states_0$ \Comment{Initialize chain} 
		\For{$n$ {\bf in} 1, 2, \dots, $\nscan$}
		\If{$n$ is even} \Comment{Non-reversibility inducing alternation}
		\State $P \gets \{i: 0 \le i < N, i\text{ is even} \}$ \Comment{Even subset of $\{0,\dots,N-1\}$}
		\Else
		\State $P \gets \{i: 0 \le i < N, i\text{ is odd} \}$ \Comment{Odd subset of $\{0,\dots,N-1\}$}
		\EndIf
		\For{$i$ \textbf{in} $0,\dots,N$}  \Comment{Local exploration phase (parallelizable)}
		\State $x^i \sim K^{(\beta_i)}(x^i_{n-1},\cdot)$
		\EndFor
		\For{$i$ \textbf{in} $0,\dots,N-1$} \Comment{Communication phase (parallelizable)}
		\State $\alpha \gets \alpha^{(i,i+1)}$  \Comment{Equation (\ref{accept_ratio}).}
		\State $r^{(i,i+1)} \gets r^{(i,i+1)} + (1-\alpha)$
		\State $A \sim \Bern(\alpha)$
		\If{$i\in P$ \textbf{and} $A = 1$}
		\State $(x^i, x^{i+1}) \gets (x^{i+1}, x^i)$ \Comment{ See also Appendix \ref{app:distributed} for a distributed computing implementation, where annealing parameters are swapped instead of states.}
		\EndIf
		\EndFor
		\State $\states_n\gets \states$ 
		\EndFor
		\State $r^{(i,i+1)} \gets  r^{(i,i+1)}/\nscan$ for all $i \in \{0, 1, \dots, N-1\}$\Comment{Equation \eqref{def_rhat}}
		\State 
		\Return $(\states_1,\dots,\states_{\nscan}),(r^{(0,1)},\dots,r^{(N-1,N)})$
	\end{algorithmic}
\end{algorithm}

\subsection{The index process}\label{sec_trajectories}

As discussed in Section \ref{intro_deo}, the $j$-th component of permutation $\textbf{I}_n=(I^0_n,\dots,I^N_n)$ encodes the flow of information between the reference and target on machine $j$. We define the \emph{index process} for machine $j$ as $(I_n^j,\eps_n^j)\in\{0,\dots,N\}\times\{-1,1\}$, where $\eps_n^j=1$ if the annealing parameter on machine $j$ is proposed an increase after scan $n$, and $-1$ otherwise. The concept is best understood visually: refer to the bold piecewise linear paths illustrating $I^{N-2}_{n}$ for $N=8$ and $I^{N-1}_{n}$ for $N=30$ in Figure~\ref{fig:swaps}.

Let $P^{(i,i+1)} \in\{0,1\}$ denote an indicator that a swap is proposed between chains $i$ and $i+1$ at iteration $n$. The realized swaps are then defined from the proposal indicators as $S^{(i,i+1)} = P^{(i,i+1)}A^{(i,i+1)}$, where $A^{(i,i+1)}|\bar{\textbf{X}} \sim \mathrm{Bern}(\alpha^{(i,i+1)}(\bar{\textbf{X}} ))$ are acceptance indicator variables (Figure \ref{fig:notation} in Appendix~\ref{app_algos}). The index process satisfies the following recursive relation: initialize $I^j_0=j$ and $\eps_0^j=1$ if $P_0^{(j,j+1)}=1$ and $-1$ otherwise. For $n > 0$, we have
\begin{align}
I_{n+1}^j &= 
\begin{cases}
I_n^j+\eps_n^j &\text{if } S_n^{(I_n^j,I_n^j+\eps_n^j)}=1, \\
I_n^j &\text{otherwise},
\end{cases},
\qquad
  \eps_{n+1}^j &= 
\begin{cases}
1 &\text{if } P_n^{(I_{n+1}^j,I_{n+1}^j+1)}=1, \\
-1 &\text{otherwise.}
\end{cases} 
\end{align}

\begin{figure}
	\centering
	\includegraphics[width=0.9\linewidth]{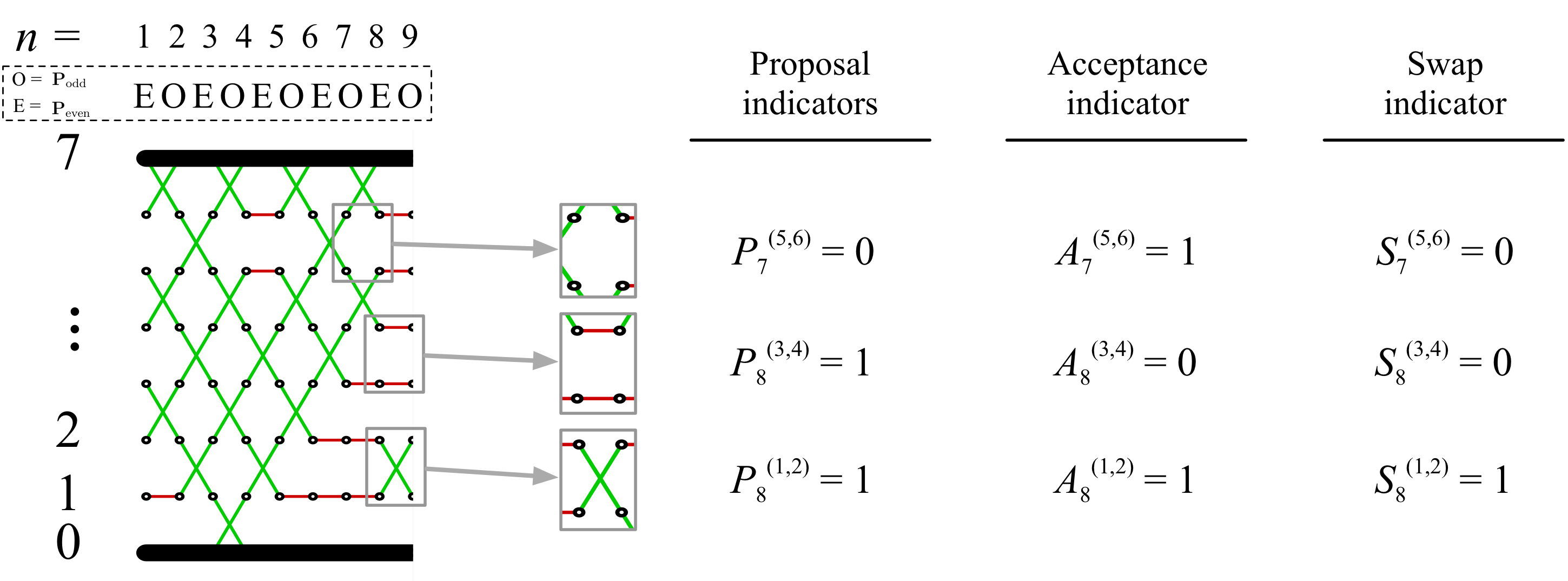}
	\caption{Illustration of the proposal, acceptance and swap indicators.}
	\label{fig:notation}
\end{figure}

We will use the dynamics of the index process to explain the differences between SEO and DEO communication. The only difference between the two is in the proposal indicators. Define $\Ps_n = (P_n^{(0,1)}, P_n^{(1,2)}, \dots, P_n^{(N-1,N)})$, $\Ps_n$ is deterministic for DEO, i.e.\ $\Ps_n = \Ps_{\text{even}} = (1,0,1,\dots)$ for even $n$ and $\Ps_n = \Ps_{\text{odd}} = (0,1,0,\dots)$ for odd $n$. In SEO, we have $\Ps_n \sim \text{Unif}\{\Ps_{\text{even}}, \Ps_{\text{odd}}\}$. 

For SEO, the variables $\eps_n\sim\mathrm{Unif}\{-1,1\}$ are i.i.d, and consequently the index process exhibits a random walk behaviour. In contrast for DEO, we have $\eps^{j}_{n+1}=\eps^{j}_n$ so long as $I^{j}_{n+1}=I^{j}_n+\eps^{j}_n$ and $\eps^{j}_{n+1}=-\eps^{j}_n$ otherwise. Therefore the index process for DEO performs a more systematic exploration of the space as the direction $\eps^{j}_{n+1}$ is only reversed when a swap involving machine $j$ is rejected or if the boundary is reached. The qualitative differences between the two regimes can be seen in Figure \ref{fig:swaps} (see also Figure \ref{fig_trajectory} in the Supplementary Material). In particular the index process for DEO in these figures behaves very differently as $N$ increases, this will be explored formally in Section \ref{sec:scalinglim}.

As mentioned in the introduction, we refer to the PT algorithm with SEO and DEO communication as \emph{reversible}  PT and \emph{non-reversible} PT respectively. Our terminology is somewhat abusive but is justified by the analysis in Section \ref{sec:lifted_property} and Section \ref{sec:scalinglim}, where it is shown that, under certain assumptions, the index process is reversible for SEO while it is non-reversible for DEO.

\section{Non-asymptotic analysis of PT algorithms}\label{sec_non_asymptotic_analysis}

\subsection{Model of compute time}
We start with a definition of what we model as one unit of compute time: throughout the paper, we assume a massively parallel or distributed computational setup and sampling once from each of local exploration kernel $K^{(\beta)}$ has cost $O(\nexpl)$ which dominates the cost of the swap kernel $\K^{(i,i+1)}$. Consequently a scan of PT has cost $O(\nexpl)$ and for a fixed computational budget $\ntotal$, the total number of scans is $\nscan=O(\ntotal/\nexpl)$.

The assumption that the per-iteration cost of PT is independent of the number of chains is reasonable in GPU and parallel computing scenarios, since the communication cost for each swap does not increase with the dimension of the problem (by swapping annealing parameters instead of states). We also assume that the number of PT scans will  dominate the number of parallel cores available, i.e.\ $\nscan \gg N$. This is reasonable when addressing challenging sampling problems. Although there are numerous empirical studies on multi-core and distributed implementation of PT \cite{altekar_parallel_2004,mingas_parallel_2012,fang_parallel_2014}, we are not aware of previous theoretical work investigating such a computational model.

\subsection{Performance metrics for PT methods}

The standard notion of computational efficiency of MCMC schemes is the effective sample size (ESS) per compute time. However, for PT methods, since the ESS per compute time depends on the details of the problem specific local exploration kernels $\Kexpl$, alternatives have been developed in the literature to assess the performance of $\Kcomm$ which are independent of $\Kexpl$ \cite{katzgraber2006feedback,lingenheil2009efficiency}.

We are motivated by the Bayesian context where it is typically possible to obtain one independent sample from the reference distribution $\pi_0$ (i.e.\ from the prior distribution) at each iteration. We say that an \emph{annealed restart} has occurred on machine $j$ when $I_n^j$ goes from $0$ to $N$ (i.e.\ $\beta$ goes from $0$ to $1$), which corresponds to a sample generated from $\pi_0$ propagating to the target $\pi$. Informally an annealed restart can be thought of as a sampling equivalent to what is known in optimization as a random restart. We say a \emph{round trip} has occurred on machine $j$ when $I_n^j$ goes from $0$ to $N$ and then goes back to $0$ (i.e. $\beta$ goes from $0$ to $1$ to $0$).

Formally, we recursively define $T_{\downarrow,0}^j=\inf\{n: (I_n^j,\eps_n^j)=(0,-1)\}$ and for $k\geq 1$,
\begin{align}
T_{\uparrow,k}^j &= \inf\{n > T_{\downarrow,k-1}^j : (I_n^j,\eps_n^j) = (N,1)\}, \\
T_{\downarrow,k}^j &= \inf\{n > T_{\uparrow,k}^j: (I_n^j,\eps_n^j) = (0,-1)\}.
\end{align}
The $k$-th annealed restart and round trip for machine $j$ occurs at scan $T_{\uparrow,k}^j$ and $T_{\downarrow,k}^j$ respectively. Let $\Tn$ and $\Rn$ be the total number of annealed restarts and round trips respectively during the first $n$ scans of the DEO algorithm.

We wish to optimize for the percentage of iterations that result in an annealed restart, i.e.\ $\tau = \lim_{n\to\infty} \E[\Tn] / n$, where we use abusively the same random variables for SEO and DEO but differentiate these schemes by using the probability measures $\PSEO$ and $\PDEO$ with associated expectation operators $\ESEO$ and $\EDEO$. We use $\P$ and $\E$ for statements that hold for both algorithms. If $\Tn^j$ and $\Rn^j$ are the total number of annealed restarts and round trips during the first $n$ iterations on machine $j$ respectively, then we have $\Rn^j\leq \Tn^j\leq \Rn^j+1$. Consequently, $\Rn\leq \Tn\leq \Rn+N+1$ and thus $\tau = \lim_{n\to\infty} \E[\Rn] / n$. 

 In the PT literature, $\tau$ is commonly referred to as the \emph{round trip rate} and has been used to compare the effectiveness of various PT algorithms \cite{katzgraber2006feedback,lingenheil2009efficiency}. Empirically we observed that round trips per unit cost strongly correlate with ESS per unit cost as seen in Figure \ref{fig_ESSvsRoundTrip}, making the round trip rate a natural objective function to compare and tune parallel tempering algorithms. 
 
 \begin{figure}
\begin{center}
\includegraphics[width=.42\linewidth]{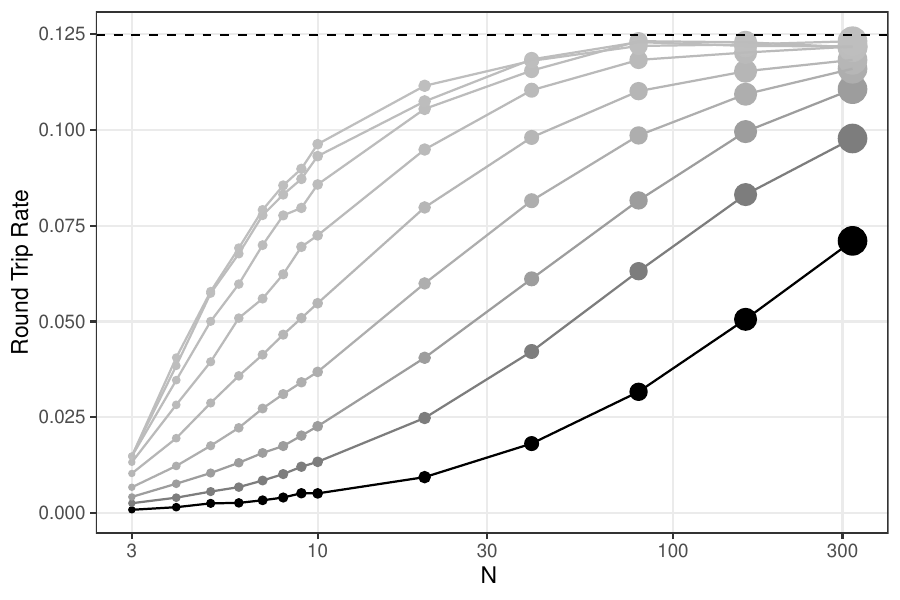}
\includegraphics[width = 0.56\linewidth]{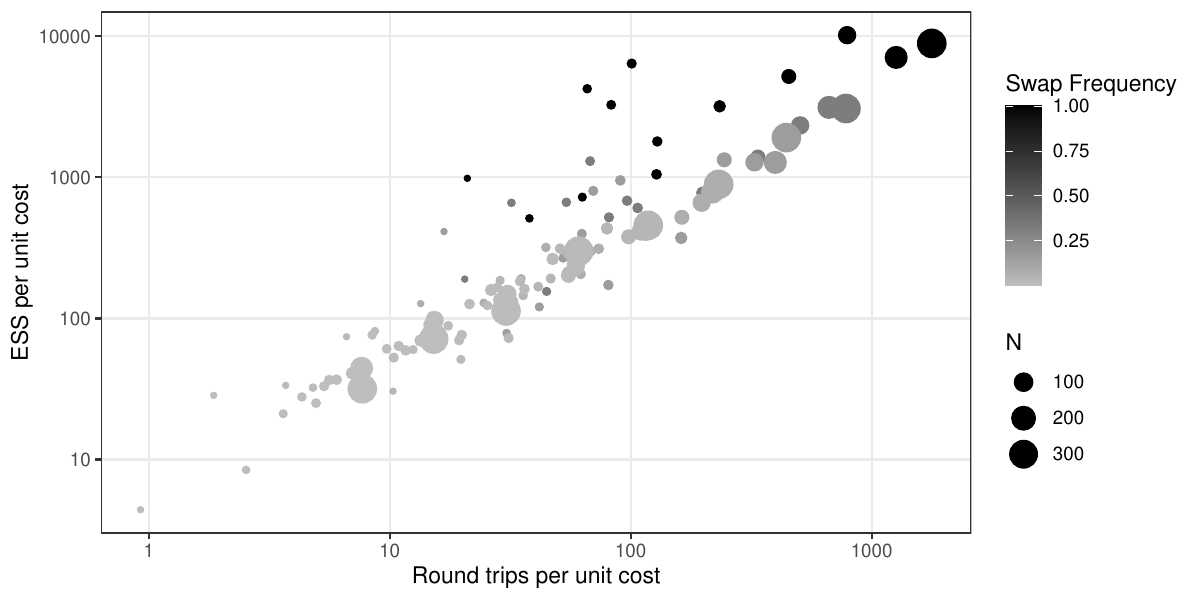}
\end{center}
\caption{The round trip rate and ESS for a $5\times 5$ Ising Model with a magnetic moment $0.1$ and $\nexpl$ 1-bit flips between scans ranging from $1$ to $400$, and $N$ ranging from $3$ to $320$. The schedule is tuned using Algorithm \ref{alg_adaptive} with $\ntune=25000$ scans for tuning, $\nsample=25000$ scans for sampling.  (Left) Round trip rate versus $N$ for different swap attempt frequency ($1/\nexpl)$. The dotted line is the optimal round trip rate $\bar{\tau}$ predicted by Theorem \ref{thm_efficiency_convergence}. (Right) The ESS per unit cost versus round trip rate per unit cost for each run, with a correlation coefficient of 0.81.}
\label{fig_ESSvsRoundTrip}
\end{figure}

Another performance metric commonly used in the PT literature is the \emph{expected square jump distance} (ESJD) \cite{atchade_towards_2011, kone_selection_2005}. 
While this criterion is useful within the context of reversible PT for selecting the optimal number of parallel chains, the ESJD is too coarse to compare reversible to non-reversible PT methods as, for any given annealing schedule, the ESJD is identical in both cases.

\subsection{Model assumptions}\label{sec_assumptions}
The analysis of the round trip times is in general intractable because the index process is not Markovian. Indeed, simulating a transition depends on the swap indicators $S_n^{(i,i+1)}$ (see Section \ref{sec_trajectories}), the distributions of which themselves depend on the state configuration $\bar\rstates$. To simplify the analysis, we will make in the remainder of the paper the following simplifying assumptions:
\begin{enumerate}
    \item [(A1)] \emph{Stationarity}: $\bar\rstates_0 \sim \bar\pis$ and thus $\bar\rstates_n \sim \bar\pis$ for all $n$ as the kernel $\KPT$ is $\bar\pis$-invariant.
    \item [(A2)] \emph{Efficient Local Exploration (ELE)}: For $X\sim \pi^{(\beta)}$ and $X' | X \sim K^{(\beta)}(X, \cdot)$, the random variables $V(X)$ and $V(X')$ are independent.
    \item [(A3)] \emph{Integrability}: $V^3$ is integrable with respect to $\pi_0$ and $\pi$. 
\end{enumerate}

It follows from Assumptions (A1)--(A2) and \eqref{accept_ratio} that the behaviour of the communication scheme only depends on the distribution of the state $\bar\rstates_n$ via the $N+1$ univariate distributions of the chain-specific energies $V^{(i)} = V\left(X^{(i)}\right)$, $i \in \{0, 1, 2, \dots, N\}$.  This allows us to build a theoretical analysis which makes no structural assumption on the state space $\statespace$ or the target $\pi$ as typically done in the literature: for example,  \cite{atchade_towards_2011} assume a product space $\statespace = \statespace_0^d$ for large $d$, and \cite{predescu2004incomplete} assume $\pi^{(\beta)}$ satisfies a constant heat capacity.

Admittedly the ELE assumption (A2) does not hold in practical applications. ELE can be approximated by increasing the number of local exploration kernels applied between consecutive swap ($\nexpl$). However one may worry that to achieve a good approximation in challenging problems, $\nexpl$ would have to be set to a value so large as to defy the practicality of our analysis. Surprisingly, we have observed empirically that this was not the case in the multimodal problems we considered. Figure~\ref{fig:mixture-V_vs_X} displays results in four models where a local exploration kernel alone induces good mixing of the energy chain $V(X_n)$ (hence ELE can be approximated) yet the local exploration kernel alone is insufficient to achieve good mixing on the full state space, $\bar\rstates_n$ (so that PT is justified and indeed yields efficient exploration of the configuration space). This gap is possible since $V(X)$ is 1-dimensional and potentially unimodal even when $X$ is not. This is the motivation for ELE since assuming the independence of $V(X)$ and $V(X')$ is weaker than assuming the independence of $X$ and $X'$ (as hypothesized e.g.\ in Section 5.1 of \cite{atchade_towards_2011}). Obviously ELE is still expected to be a somewhat crude simplifying assumption in very complex problems; e.g.\ for the highly challenging high-dimensional copy number inference problem illustrated in Figure~\ref{fig:mixture-V_vs_X}.

In Section~\ref{sec_ELE_violation}, we describe additional empirical results supporting that ELE is a useful model for the purpose of designing and analyzing PT algorithms. In the severe ELE violation regime, we show that the key quantities used in our analysis are either well approximated, or approached as $N$ increases.

\begin{figure}
	\begin{center}
		\begin{tabular}{cc} Bayesian mixture model & ODE parameters \\
			 \includegraphics[width=0.4\linewidth]{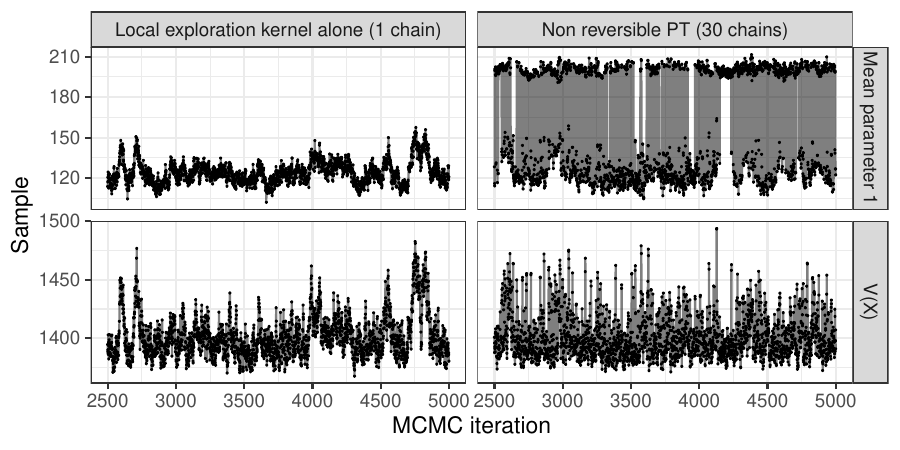} &  \includegraphics[width=0.4\linewidth]{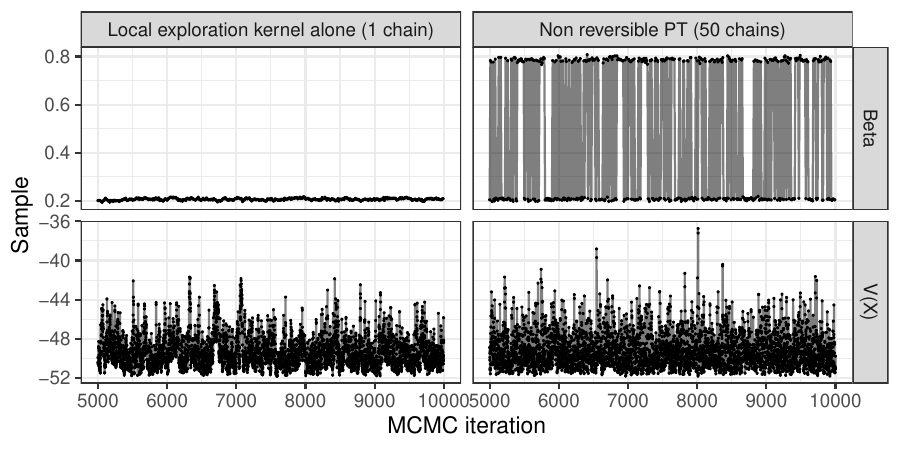}\\
			 Ising model & Copy number inference \\
			\includegraphics[width=0.4\linewidth]{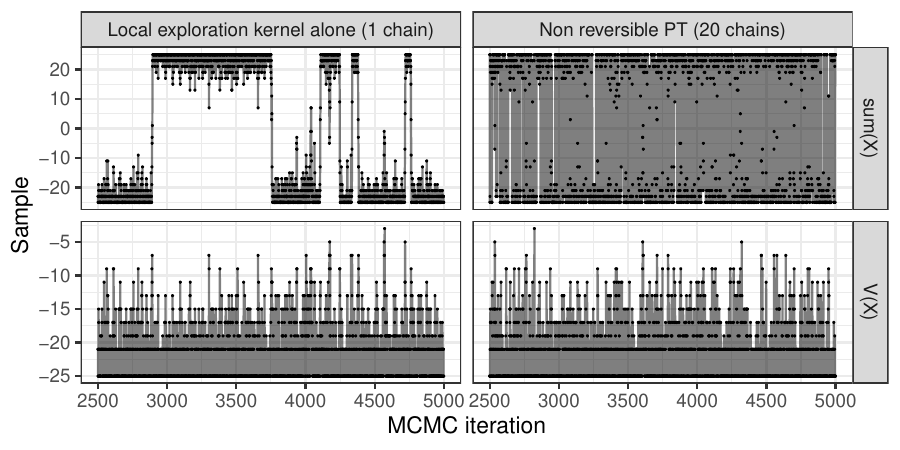}   &  \includegraphics[width=0.4\linewidth]{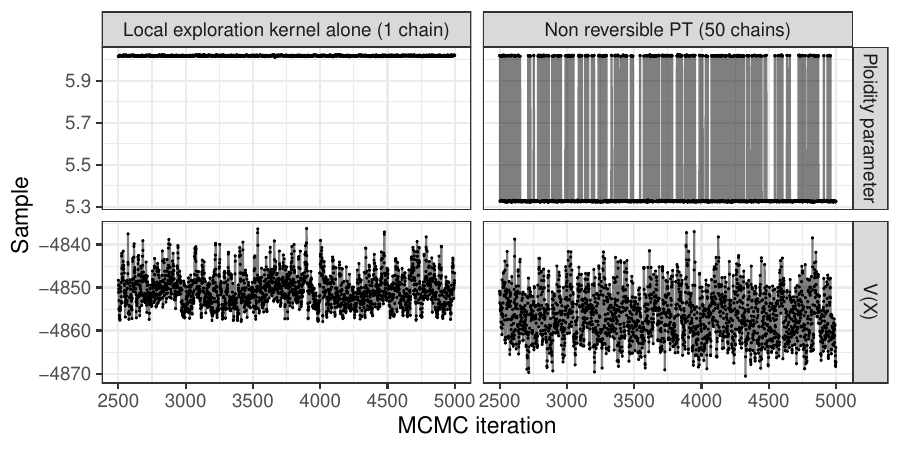}  
		\end{tabular} 
	\end{center}
	\caption{Four multimodal examples (described in Section~\ref{sec:empirical} and Appendix~\ref{app_experiments}) where a local exploration kernel provides a reasonable approximation of the ELE assumption. For each inference problem, we show  trace plots for MCMC based on single chain (i.e.\ the local exploration kernel alone; left facet), and for a non-reversible PT algorithm based on the same local exploration kernel (right facet). The top facets each show a component of $X$, and the bottom facets, $V(X)$.}
	\label{fig:mixture-V_vs_X}
\end{figure}

Assumptions (A1)-(A2) allow us to express the swap indicators as \emph{independent} Bernoulli random variables $S_n^{(i,i+1)} \sim \text{Bern}(s^{(i,i+1)})$ where $s^{(i,i+1)}$ is given by the expectation of Equation \eqref{accept_ratio},
\begin{equation}\label{eq_exp_accept} 
s^{(i,i+1)} = \E\left[\alpha^{(i,i+1)}(\bar\rstates)\right] = \E\left[\exp \left( \min\left\{0,\left(\beta_{i+1}-\beta_i\right)\left(V^{(i+1)}-V^{(i)}\right)\right\}\right)\right],
\end{equation}
the expectation being over two independent random variables $V^{(i)},V^{(i+1)}$, satisfying $V^{(i)} \stackrel{d}{=} V(X^{(\beta_i)})$ for $X^{(\beta_{i})}\sim\pi^{(\beta_i)}$.

\subsection{Reversibility and non-reversibility of the index process}\label{sec:lifted_property}
Under assumptions (A1)-(A2), each index process $(I^{j}_n,\eps^{j}_n)$ is Markovian for $j=0,\dots,N$ with transition kernel $K^{\mathrm{SEO}}$ and $K^{\mathrm{DEO}}$ for reversible and non-reversible PT respectively. See Appendix \ref{app_markov_index_process} for an explicit representation of these kernels. We drop the superscript $j$ when the particular machine is not relevant.

The kernel $K^{\mathrm{SEO}}$ defines a reversible Markov chain on $\{0,\dots,N\}\times\{-1,1\}$ with uniform stationary distribution while $K^{\mathrm{DEO}}$  
satisfies the skew-detailed balance condition with respect to the same distribution,
\begin{equation}\label{DEO_skew_detail_balance}
    K^{\mathrm{DEO}}((i,\eps),(i',\eps'))=K^{\mathrm{DEO}}((i',-\eps'),(i,-\eps)),
\end{equation}
and is thus non-reversible. It falls within the generalized Metropolis--Hastings framework, see, e.g., \cite{stoltz2010free}.

Reversibility necessitates that the Markov chain must be allowed to backtrack its movements. This leads to inefficient exploration of the state space. As a consequence,  non-reversibility is typically a favourable property for MCMC chains. A common recipe to design non-reversible sampling algorithms consists of expanding the state space to include a ``lifting'' parameter that allows for a more systematic exploration of the state space \cite{chen1999lifting,diaconis2000analysis,turitsyn2011irreversible,vucelja2016lifting}. 

The index process $(I_n,\eps_n)$ for non-reversible PT can be interpreted as a ``lifted'' version of the index process for reversible PT with lifting parameter $\eps_n$. Under DEO communication, $I_n$ travels in the direction $\eps_n$ and only reverses direction when $I_n$ reaches a boundary or when a swap rejection occurs. This ``lifting'' construction helps explain the qualitatively different behaviour between reversible and non-reversible PT and will be further explored when identifying the scaling limit of $(I_n,\eps_n)$ in Section \ref{sec:scalinglim}. The lifted PT of \cite{wu_irreversible_2017} exploits a similar construction but only one of the $N+1$ index processes is lifted instead of all of them for DEO. A lifted version of simulated tempering has also been proposed by \cite{sakai_irreversible_2016}.

\subsection{Non-asymptotic domination of non-reversible PT} \label{sec_domination}
Assumptions (A1)-(A2) ensure that for each $j=0,\dots,N$, $\mathcal{R}_n^j$ is a delayed renewal process with round trip times $T_k^j=T^j_{\downarrow,k}-T^j_{\downarrow,k-1}$ for $k\geq 1$ and $j=0,\dots,N$. In particular, $T_k^j$ are independent and identically distributed and, for convenience, we introduce the random variable $T\stackrel{d}{=}T_k^j$. By the key renewal theorem, we have
\begin{align}\label{round_trip_key_renewal}
    	\tau = \sum_{j=0}^N\lim_{n\to\infty}\frac{\E[\mathcal{R}^j_n]}{n}=\frac{N+1}{\E[T]}.
\end{align}
An analytical expression for $\E[T]$ for reversible PT was first derived by \cite{nadler2007optimizing}. 
We extend this result to non-reversible PT in Theorem \ref{prop_round_trip}.
\begin{theorem}\label{prop_round_trip}
For any annealing schedule $\partition=\{\beta_0,\dots,\beta_N\}$, 
    \begin{align}
        \E_{\mathrm{SEO}}[T]&=2(N+1)N+2(N+1)E(\partition),\label{round_trip_SEO}\\
        \E_{\mathrm{DEO}}[T]&=2(N+1)+2(N+1)E(\partition),\label{round_trip_DEO}
    \end{align}
where $E(\partition)=\sum_{i=1}^Nr^{(i-1,i)}/s^{(i-1,i)}$, and $r^{(i-1,i)}=1-s^{(i-1,i)}$ is the probability of rejecting a swap between chains $i$ and $i+1$.
\end{theorem}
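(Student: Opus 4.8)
The plan is to exploit the fact, established in Section~\ref{sec:lifted_property}, that under (A1)--(A2) each index process $(I_n,\eps_n)$ is a Markov chain on $\{0,\dots,N\}\times\{-1,1\}$ whose transitions are governed by \emph{independent} Bernoulli swap indicators with success probabilities $s^{(i,i+1)}$ from \eqref{eq_exp_accept}. Since $\mathcal{R}_n^j$ is a delayed renewal process, $T$ is the time for this chain to travel from the lifted bottom state $(0,-1)$ up to the annealed-restart state $(N,1)$ and back down to $(0,-1)$. I would therefore decompose $\E[T]=U+D$, where $U$ is the expected first-passage time from $(0,-1)$ to $(N,1)$ (the ascent) and $D$ that from $(N,1)$ back to $(0,-1)$ (the descent), and compute each by first-step analysis. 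A key point I would flag at the outset is that one must \emph{not} argue $U=D$ by symmetry: as the computation reveals, the two passages individually carry asymmetric, position-dependent weights on the edges, and it is only their \emph{sum} that collapses to the clean expressions \eqref{round_trip_SEO}--\eqref{round_trip_DEO}.

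For SEO the direction $\eps_n$ is redrawn independently at each scan, so $I_n$ reduces to a lazy birth--death walk with up/down probabilities $\tfrac12 s^{(i,i+1)}$ and $\tfrac12 s^{(i-1,i)}$ and a reflecting boundary, and the round trip reduces to its commute time between $0$ and $N$. Writing $m_i$ for the expected hitting time of $N$ from $i$ and setting $\delta_i=m_i-m_{i+1}$, first-step analysis yields the one-term recursion $s^{(i,i+1)}\delta_i = 2 + s^{(i-1,i)}\delta_{i-1}$ with $\delta_0 = 2/s^{(0,1)}$, which I would solve to get $\delta_i=2(i+1)/s^{(i,i+1)}$, hence $U=2\sum_{i=0}^{N-1}(i+1)/s^{(i,i+1)}$. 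The descent $D$ follows from the reflection $i\mapsto N-i$, contributing the complementary weight $(N-i)$; adding the two replaces every weight by the constant $N+1$, so that $U+D=2(N+1)\sum_i 1/s^{(i,i+1)}=2(N+1)(N+E(\partition))$ after using $1/s=1+r/s$.

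For DEO the direction persists, so the lifted coordinate must be retained. Let $a_i,b_i$ denote the expected times to reach $(N,1)$ from $(i,+1)$ and $(i,-1)$ respectively; first-step analysis gives the coupled system $a_i = 1 + s^{(i,i+1)}a_{i+1} + r^{(i,i+1)}b_i$ and $b_i = 1 + s^{(i-1,i)}b_{i-1} + r^{(i-1,i)}a_i$, together with the reflecting relation $b_0=1+a_0$. The main obstacle, and the crux of the proof, is to disentangle this momentum-coupled recursion. My plan is to introduce $w_i=a_i-b_i$: substituting the two relations into one another makes the $s,r$ coefficients cancel and forces the arithmetic progression $w_i=w_{i-1}-2$ with $w_0=-1$, hence $w_i=-(2i+1)$. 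Feeding this back into the increment identity $s^{(i,i+1)}(a_i-a_{i+1})=1-r^{(i,i+1)}w_i$ and telescoping yields the ascent $U=b_0$, with weight $(1+2i)$ on the terms $r^{(i,i+1)}/s^{(i,i+1)}$; the descent $D$ again follows by the reflection $i\mapsto N-i$, carrying the complementary weight $(2N-1-2i)$.

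Finally I would add the two passages. As in the SEO case, the position-dependent weights $(1+2i)$ and $(2N-1-2i)$ sum to the constant $2N$, collapsing all schedule dependence into $E(\partition)=\sum_i r^{(i-1,i)}/s^{(i-1,i)}$ and giving $U+D=2(N+1)(1+E(\partition))$. The remaining work is routine: checking the boundary conventions for the lifted states $(0,-1)$ and $(N,1)$ so that the reflections are counted as single scans, and verifying the base case $N=1$. I expect the delicate step to be the cancellation producing $w_i=-(2i+1)$, since this arithmetic progression is precisely what yields the ballistic DEO constant in place of the diffusive SEO factor $N$.
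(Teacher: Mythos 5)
Your proposal is correct and follows essentially the same route as the paper's proof: both decompose $\E[T]$ into ascent and descent first-passage times of the (lifted) index chain, run first-step analysis, identify a difference variable satisfying a constant-increment recursion, and telescope, with the position-dependent weights ($2i$ versus $2(N-i+1)$ in the paper's notation) cancelling only in the sum. The sole cosmetic difference is your choice of the same-site difference $w_i = a_i - b_i$, which is an affine rescaling $w_i = s^{(i-1,i)} d^i_\uparrow - 1$ of the paper's cross-site difference $d^i_\uparrow = a^{i,+}_\uparrow - a^{i-1,-}_\uparrow$, so that your recursion $w_i = w_{i-1} - 2$ is equivalent to the paper's $s_{i+1}d^{i+1}_\uparrow - s_i d^i_\uparrow = -2$.
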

The proof can be found in Appendix \ref{app_round_trip}.
 
Intuitively, Theorem \ref{prop_round_trip} implies $\E[T]$ can be decomposed as the independent influence of communication scheme $\Kcomm$ and schedule $\partition$ respectively. When all proposed swaps are accepted (i.e.\ $\pi=\pi_0$), the index process for reversible PT reduces to a simple random walk on $\{0,\dots,N\}$, whereas for non-reversible PT, the index processes takes a direct path from $0$ to $N$ and back. Therefore, the first term in \eqref{round_trip_SEO} and \eqref{round_trip_DEO} represents the expected time for a round trip to occur in this idealized, rejection-free setting. The second term of \eqref{round_trip_SEO} and \eqref{round_trip_DEO} are identical and represent the additional time required to account for rejected swaps under schedule $\partition$. Motivated by Theorem \ref{prop_round_trip}, we will refer to $E(\partition)$ as the \emph{schedule inefficiency}.

By applying Theorem \ref{prop_round_trip} to Equation \eqref{round_trip_key_renewal}, we get a non-asymptotic formula for the round trip rate in terms of $E(\partition)$.

\begin{corollary}\label{cor_rtr}
For any annealing schedule $\partition$ we have 
\begin{align}
    \tau_{\mathrm{SEO}}(\partition)&=\frac{N+1}{\E_{\mathrm{SEO}}[T]} =\frac{1}{2N+2E(\partition)},\\
    \tau_{\mathrm{DEO}}(\partition)&= \frac{N+1}{\E_{\mathrm{DEO}}[T]} =\frac{1}{2+2E(\partition)}. 
\end{align}
Consequently, $\tau_{\mathrm{DEO}}(\partition)> \tau_{\mathrm{SEO}}(\partition)$ for $N>1$.
\end{corollary}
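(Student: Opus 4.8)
The plan is to treat the corollary as an immediate consequence of Theorem \ref{prop_round_trip} combined with the key renewal identity \eqref{round_trip_key_renewal}, namely $\tau = (N+1)/\E[T]$. First I would substitute the SEO expression \eqref{round_trip_SEO} into this identity. Since $\E_{\mathrm{SEO}}[T] = 2(N+1)N + 2(N+1)E(\partition) = 2(N+1)\bigl(N + E(\partition)\bigr)$, the factor $N+1$ in the numerator cancels against the one in the denominator, yielding $\tau_{\mathrm{SEO}}(\partition) = 1/\bigl(2N + 2E(\partition)\bigr)$. The DEO case is handled identically: expression \eqref{round_trip_DEO} factors as $\E_{\mathrm{DEO}}[T] = 2(N+1)\bigl(1 + E(\partition)\bigr)$, so the same cancellation gives $\tau_{\mathrm{DEO}}(\partition) = 1/\bigl(2 + 2E(\partition)\bigr)$. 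This establishes the two displayed closed forms.

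For the strict inequality I would compare the two denominators directly. Both expected round trip times are strictly positive, being expectations of the almost surely positive renewal time $T$ (a round trip requires traversing $0 \to N \to 0$), and the schedule inefficiency $E(\partition) = \sum_{i=1}^N r^{(i-1,i)}/s^{(i-1,i)}$ is a finite, non-negative sum because each $r^{(i-1,i)} = 1 - s^{(i-1,i)} \in [0,1]$ and each acceptance probability $s^{(i-1,i)}$ is strictly positive. Hence both denominators $2N + 2E(\partition)$ and $2 + 2E(\partition)$ are positive, and the inequality $\tau_{\mathrm{SEO}}(\partition) < \tau_{\mathrm{DEO}}(\partition)$ is equivalent, after clearing these positive denominators, to $2N + 2E(\partition) > 2 + 2E(\partition)$, i.e.\ to $N > 1$. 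Since this is exactly the stated hypothesis, the strict inequality follows.

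There is essentially no obstacle in the corollary itself: the genuine analytic content lives entirely in Theorem \ref{prop_round_trip}, whose proof is deferred to the appendix, and what remains is a single algebraic cancellation followed by a comparison of denominators. The only points that deserve a word of care are well-definedness, namely that $\E[T] < \infty$ so that \eqref{round_trip_key_renewal} applies and that each $s^{(i-1,i)} > 0$ so that $E(\partition)$ is finite; both are guaranteed under Assumptions (A1)--(A3). I would finally remark that the inequality is sharp in the sense of being equivalent to $N>1$: the common term $2E(\partition)$ cancels out of the comparison, so the entire gap between the two round trip rates is driven by the $2N$ versus $2$ contrast, reflecting the diffusive $O(N^2)$ rejection-free round trip cost of SEO against the ballistic $O(N)$ cost of DEO.
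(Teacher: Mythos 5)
Your proposal is correct and follows exactly the paper's route: the corollary is obtained by substituting the two expressions for $\E[T]$ from Theorem \ref{prop_round_trip} into the renewal identity \eqref{round_trip_key_renewal}, cancelling the common factor $N+1$, and noting that the strict inequality reduces to comparing $2N+2E(\partition)$ with $2+2E(\partition)$, i.e.\ to $N>1$. Your added remarks on well-definedness (finiteness of $\E[T]$ and positivity of each $s^{(i-1,i)}$) are sound and consistent with Assumptions (A1)--(A3), though the paper leaves them implicit.
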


Corollary \ref{cor_rtr} implies that non-reversible PT dominates reversible PT for any $N>1$ and any annealing schedule $\partition$. 

\section{Asymptotic analysis of PT algorithms}\label{sec_communication_barrier}

\subsection{The communication barrier}\label{section_swap_acceptance}
We begin by analyzing the behaviour of the PT swaps as $\|\partition\|=\max_i |\beta_i-\beta_{i-1}|$ goes to zero. We define the swap and rejection functions $s,r:[0,1]^2\to[0,1]$ respectively as, 
\begin{align}  
s(\beta,\beta')&= \E\left[ \exp \left(\min\{0,(\beta'-\beta)(V^{(\beta')}-V^{(\beta)})\} \right)\right],\label{def_s_function}\\
r(\beta,\beta')&= 1-s(\beta,\beta'),\label{def_r_function}\
\end{align}
where $V^{(\beta)}\stackrel{d}{=} V(X^{(\beta)})$ for $X^{(\beta)}\sim \pi^{(\beta)}$ and $V^{(\beta)},V^{(\beta')}$ are independent. The quantities $s(\beta,\beta')$ and $r(\beta,\beta')$ are symmetric in their arguments and represent the probability of swap and rejection occurring between $\beta$ and $\beta'$ respectively under the ELE assumption (A2). Note that $s^{(i-1,i)}=s(\beta_{i-1},\beta_i)$.

To take the limit as $\|\partition\|\to 0$, it will be useful to understand the behaviour of $r(\beta,\beta')$ when $\beta\approx\beta'$. The key quantity that drives this asymptotic regime is given by a function $\lambda:[0,1]\to [0,\infty)$ defined as the instantaneous rate of rejection of a proposed swap at annealing parameter $\beta$,
\begin{align}\label{lambda_limit_def}
\lambda(\beta)=\lim_{\delta \to 0} \frac{r(\beta,\beta+\delta)}{|\delta|}.
\end{align}
See Figure~\ref{fig:all-models} (center) for examples of estimated $\lambda$ from various models. We define the integral of $\lambda$ by $\Lambda(\beta)=\int_0^\beta\lambda(\beta')\mathrm{d}\beta'$ and denote $\Lambda=\Lambda(1)$. 
Extending Proposition 1 in \cite{predescu2004incomplete} provides the following result. 
\begin{theorem}\label{theorem_rate_est}
The instantaneous rejection rate $\lambda$ is twice continuously differentiable and is equal to
\begin{equation}\label{def_lambda}
\lambda(\beta)=\frac{1}{2}\E\left[|V^{(\beta)}_1-V_2^{(\beta)}|\right],
\end{equation}
where $V_1^{(\beta)},V_2^{(\beta)}$ are independent random variables with law identical to the law of $V^{(\beta)}$. Moreover, we have
\begin{align}\label{estimate_rate_integral_r}
r(\beta,\beta')&=|\Lambda(\beta')-\Lambda(\beta)|+O(|\beta'-\beta|^3).
\end{align}
\end{theorem}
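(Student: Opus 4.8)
The plan is to reduce both claims to a single \emph{fixed} reference measure via exponential tilting, after which everything follows from one explicit second-order expansion. I would first treat $\delta:=\beta'-\beta>0$ and recover the two-sided statement at the end from the symmetry $r(\beta,\beta')=r(\beta',\beta)$. Resolving the $\min$ in \eqref{def_s_function} according to the sign of $V'-V$ gives $r(\beta,\beta+\delta)=\E[(1-e^{\delta(V'-V)})\mathbf{1}\{V'<V\}]$ with $V\sim f_\beta$, $V'\sim f_{\beta+\delta}$ independent, where $f_\beta$ denotes the law of $V^{(\beta)}$. The structural fact I would exploit is that $\{f_\beta\}$ is an exponential family: writing $\mu$ for the law of $V$ under $\pi_0$, we have $f_\beta(\mathrm{d}v)=\mathcal{Z}(\beta)^{-1}e^{-\beta v}\mu(\mathrm{d}v)$. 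Tilting $V'$ from $f_{\beta+\delta}$ back to $f_\beta$ inserts the factor $e^{-\delta V'}/m(\delta)$, with $m(\delta):=\mathcal{Z}(\beta+\delta)/\mathcal{Z}(\beta)=\E_{f_\beta}[e^{-\delta V}]$, and the algebraic identity $(1-e^{\delta(v'-v)})e^{-\delta v'}=e^{-\delta v'}-e^{-\delta v}$ collapses the integrand to
\[
r(\beta,\beta+\delta)=\frac{1}{m(\delta)}\,\E\big[(e^{-\delta V'}-e^{-\delta V})\mathbf{1}\{V'<V\}\big],
\]
now an expectation over $V,V'$ i.i.d.\ from the single fixed measure $f_\beta$, with all $\delta$-dependence made explicit.

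For the rate itself, I would justify differentiation under the integral sign using (A3): since $e^{-\beta v}\le 1+e^{-v}$ on $[0,1]$, third energy-moments are finite under every $\pi^{(\beta)}$, uniformly in $\beta$, which is exactly the integrability the expansion below consumes. As the numerator vanishes at $\delta=0$ and $m(0)=1$, the quotient rule yields $\lambda(\beta)=\lim_{\delta\downarrow0}r/\delta=\E[(V-V')\mathbf{1}\{V'<V\}]$; splitting $|V_1-V_2|$ into its two sign parts and using exchangeability gives $\lambda(\beta)=\tfrac12\E[|V_1^{(\beta)}-V_2^{(\beta)}|]$, which is \eqref{def_lambda}. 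The same differentiation-under-the-integral argument, now in $\beta$, shows $\lambda$ is twice continuously differentiable (third energy moments, finite by (A3), suffice).

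For the third-order expansion, I would write $r=N(\delta)/m(\delta)$ with $N(\delta)=\E[(e^{-\delta V'}-e^{-\delta V})\mathbf{1}\{V'<V\}]$, so that a second-order expansion gives $r(\beta,\beta+\delta)=N'(0)\delta+\big(\tfrac12 N''(0)-N'(0)m'(0)\big)\delta^2+O(\delta^3)$, while $\Lambda(\beta+\delta)-\Lambda(\beta)=\lambda(\beta)\delta+\tfrac12\lambda'(\beta)\delta^2+O(\delta^3)$ with $\lambda(\beta)=N'(0)$. The claim \eqref{estimate_rate_integral_r} then reduces to the single algebraic identity $\lambda'(\beta)=N''(0)-2N'(0)m'(0)$. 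I would verify this by differentiating $\lambda(\beta)=\iint(v-v')\mathbf{1}\{v'<v\}\,f_\beta(\mathrm{d}v)f_\beta(\mathrm{d}v')$ through the measure, using $\partial_\beta f_\beta=(\E_{f_\beta}[V]-v)f_\beta$ and $m'(0)=-\E_{f_\beta}[V]$, which produces exactly $N''(0)-2N'(0)m'(0)$. \textbf{This matching of the quadratic coefficients is the crux and the main obstacle:} it is the vanishing of the second-order discrepancy between $r$ and $\Lambda(\beta+\delta)-\Lambda(\beta)$ that upgrades the error from $O(\delta^2)$ to the claimed $O(\delta^3)$, and it is where the exponential-family cancellation of Step~1 does the real work. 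The companion task is controlling the remainder uniformly in $\beta$, which again rests on the uniform third-moment bound from (A3); restoring the absolute value and invoking symmetry for $\delta<0$ then yields \eqref{estimate_rate_integral_r}.
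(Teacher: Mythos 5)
Your proposal is correct, but it takes a genuinely different route from the paper. The paper does not re-derive the local expansion of $r$ at all: it quotes Proposition~1 of Predescu et al.\ as a black box, namely $r(\beta,\beta')=(\beta'-\beta)\,\lambda(\bar\beta)+O(|\beta'-\beta|^3)$ with the rate evaluated at the \emph{midpoint} $\bar\beta=(\beta+\beta')/2$, then proves that (A3) implies $\lambda\in C^2([0,1])$ via a Leibniz-rule argument (writing $\lambda(\beta)=g(\beta)/2\mathcal{Z}(\beta)^2$ and dominating the differentiated integrands using $L^\beta\le 1+L$), and finally converts the midpoint evaluation into the integral form \eqref{estimate_rate_integral_r} by the standard midpoint-quadrature error bound $\bigl|\int_\beta^{\beta'}\lambda-(\beta'-\beta)\lambda(\bar\beta)\bigr|\le\tfrac{1}{12}\|\lambda''\|_\infty|\beta'-\beta|^3$. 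In the paper's argument the second-order cancellation that upgrades the error to cubic is therefore hidden inside the midpoint evaluation of the cited result; in yours it is made explicit as the coefficient identity $\lambda'(\beta)=N''(0)-2N'(0)m'(0)$, which you correctly verify by differentiating through the tilted measure using $\partial_\beta f_\beta=(\E_{f_\beta}[V]-v)f_\beta$ and $m'(0)=-\E_{f_\beta}[V]$ (I checked: $\lambda'(\beta)=2\E[V]N'(0)-\E[(V^2-(V')^2)\mathds{1}\{V'<V\}]$, which is exactly the right-hand side). What your approach buys is self-containedness — you effectively re-prove the external ingredient, and the exponential-family tilting $f_{\beta+\delta}\propto e^{-\delta v}f_\beta$ together with the collapse $(1-e^{\delta(v'-v)})e^{-\delta v'}=e^{-\delta v'}-e^{-\delta v}$ exposes \emph{why} the cancellation occurs, yielding reusable explicit formulas for the Taylor coefficients of $r$. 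What the paper's route buys is brevity and a clean separation of concerns: given the cited expansion, the only new mathematical content is the regularity of $\lambda$ (their Proposition~\ref{prop_regularity}) plus a textbook quadrature estimate. Note that both routes consume (A3) in the same two places — the uniform third-moment bound (their Lemma~\ref{lemma_regularity}, your $e^{-\beta v}\le 1+e^{-v}$ remark) controls the cubic remainders uniformly in $\beta$, and the $C^2$ regularity of $\lambda$ is needed to expand (respectively, to integrate) $\Lambda$ to second order — so the hypotheses are used identically, and your one-sided-then-symmetrize treatment of the two-sided limit and of $|\Lambda(\beta')-\Lambda(\beta)|$ is sound since $\Lambda$ is nondecreasing.
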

See Appendix \ref{app_regularity} for a proof and general smoothness properties of $\lambda$. In particular, the existence and smoothness of $\lambda$ is guaranteed by Assumption (A3). 

Theorem \ref{theorem_rate_est} shows that $\lambda$ encodes up to second order the behaviour of $r$ as the annealing parameter difference between the chains goes to 0. When $\lambda(\beta)$ is high, swaps are much more likely to be rejected, implying $\lambda(\beta)$ measures the difficulty of local communication for a chain with annealing parameter $\beta$. 

Notice that $\Lambda\geq 0$ with equality if and only if $\lambda(\beta)=0$ for all $\beta\in[0,1]$. It can be easily verified from \eqref{def_lambda} that $\lambda=0$ if and only if $V^{(\beta)}$ is constant $\pi^{(\beta)}$-a.s.\ for all $\beta\in[0,1]$ which happens precisely when $\pi_0=\pi$. So $ \Lambda$ defines a natural symmetric divergence measuring the difficulty of communication between $\pi_0$ and $\pi$. In particular, for any schedule $\partition$, the sum of the rejection rates is approximately constant and equal to $\Lambda$ as formalized in Corollary \ref{cor_invariant}.

\begin{corollary}\label{cor_invariant}
For any schedule $\partition$, we have
\begin{align}
    \sum_{i=1}^Nr(\beta_{i-1},\beta_i)=\Lambda+O(N\|\partition\|^3).
\end{align}
\end{corollary}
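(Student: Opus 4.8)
The plan is to obtain the result as a direct, essentially bookkeeping-level consequence of the pointwise estimate \eqref{estimate_rate_integral_r} in Theorem \ref{theorem_rate_est}, applied termwise along the partition and then summed. First I would invoke Theorem \ref{theorem_rate_est} on each consecutive pair $(\beta_{i-1},\beta_i)$ of the schedule $\partition$, which gives
\begin{equation*}
r(\beta_{i-1},\beta_i)=|\Lambda(\beta_i)-\Lambda(\beta_{i-1})|+O(|\beta_i-\beta_{i-1}|^3).
\end{equation*}
Since $\lambda(\beta)\ge 0$ for all $\beta$, its integral $\Lambda(\beta)=\int_0^\beta\lambda(\beta')\,\mathrm{d}\beta'$ is non-decreasing, so $\Lambda(\beta_i)\ge\Lambda(\beta_{i-1})$ and the absolute value can simply be dropped, leaving $r(\beta_{i-1},\beta_i)=\Lambda(\beta_i)-\Lambda(\beta_{i-1})+O(|\beta_i-\beta_{i-1}|^3)$.

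Next I would sum over $i=1,\dots,N$. The leading terms telescope: using $\beta_0=0$ and $\beta_N=1$ together with $\Lambda(0)=0$ and the definition $\Lambda=\Lambda(1)$, we get $\sum_{i=1}^N(\Lambda(\beta_i)-\Lambda(\beta_{i-1}))=\Lambda(\beta_N)-\Lambda(\beta_0)=\Lambda$. For the remainder, each term satisfies $|\beta_i-\beta_{i-1}|\le\|\partition\|$, hence $|\beta_i-\beta_{i-1}|^3\le\|\partition\|^3$; there are exactly $N$ such terms, so their accumulated contribution is $O(N\|\partition\|^3)$. Combining the two pieces yields $\sum_{i=1}^N r(\beta_{i-1},\beta_i)=\Lambda+O(N\|\partition\|^3)$, which is the claim.

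The one point requiring genuine care, and the step I expect to be the main obstacle, is the \emph{uniformity} of the implicit constant in the $O(|\beta_i-\beta_{i-1}|^3)$ term. Theorem \ref{theorem_rate_est} as stated gives the cubic bound for a single pair, but the telescoping argument only delivers the advertised $O(N\|\partition\|^3)$ if the constant can be taken the same for every subinterval of every partition. This uniformity should follow from the fact that, under Assumption (A3), $\lambda$ is twice continuously differentiable on the compact interval $[0,1]$ (as asserted in Theorem \ref{theorem_rate_est}), so the third-order remainder in the underlying Taylor expansion of $r(\beta,\beta')$ is controlled by $\sup_{[0,1]}|\lambda''|<\infty$, independently of where the subinterval sits. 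I would therefore verify that the proof of Theorem \ref{theorem_rate_est} in Appendix \ref{app_regularity} indeed produces a constant depending only on global bounds on the derivatives of $\lambda$ (and not on the particular endpoints), which licenses pulling a single constant out of the sum. Once that uniformity is confirmed, the remainder of the argument is routine.
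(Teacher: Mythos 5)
Your proof is correct and follows exactly the route the paper intends: the corollary is an immediate consequence of Theorem \ref{theorem_rate_est}, obtained by applying \eqref{estimate_rate_integral_r} to each pair $(\beta_{i-1},\beta_i)$, dropping the absolute values by monotonicity of $\Lambda(\cdot)$, telescoping, and bounding the $N$ cubic remainders by $N\|\partition\|^3$. Your concern about uniformity of the implicit constant is well placed and is indeed resolved as you anticipate: the remainder in Appendix \ref{app_regularity} is controlled by $\tfrac{1}{12}\|\lambda''\|_\infty$ (finite since $\lambda\in C^2([0,1])$ under (A3)) together with moment bounds that are uniform in $\beta$ by Lemma \ref{lemma_regularity}, so a single constant serves for all subintervals.
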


Motivated by Theorem \ref{theorem_rate_est} and Corollary \ref{cor_invariant} we will henceforth refer to $\lambda$, $\Lambda$ and $\Lambda(\cdot)$ as the \emph{local}, \emph{global} and \emph{cumulative communication barriers} respectively.

\subsection{Asymptotic analysis of round trip rate}\label{sec_asymptotic_rtr}

The communication barrier allows us to study the asymptotic performance of PT when the number of parallel chains is large. When $N$ increases, the round trip rate for reversible PT decays to $0$ while it converges to a positive constant $\bar\tau$ for non-reversible PT as seen in left plot in Figure \ref{fig_ESSvsRoundTrip}. We formally show this in Theorem \ref{thm_efficiency_convergence}.

 \begin{theorem}\label{thm_efficiency_convergence}
As $\|\partition\|\to 0$ we have:
    \begin{enumerate}
        \item[(a)] The communication inefficiency satisfies 
        \begin{align}
             \Lambda \leq E(\mathcal{P}_N)= \Lambda + O(\|\mathcal{P}_N\|).
        \end{align}
	    \item [(b)] The round trip rate for reversible PT, $\tau_{SEO}$, goes to zero:
	\begin{align}
	\tau_{\mathrm{SEO}}(\partition)&\sim \frac{1}{2N+2\Lambda} \longrightarrow 0.
	\end{align}
	    \item [(c)] The round trip rate for non-reversible PT, $\tau_{DEO}$ satisfies
	\begin{align}\label{optimal_round_trip_def}
	\tau_{\mathrm{DEO}}(\partition) \longrightarrow \bar \tau = \frac{1}{2+2\Lambda} > 0,
	\end{align}
	where the converges occurs with rate $O(\|\partition\|)$.
	\end{enumerate}
\end{theorem}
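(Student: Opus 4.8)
The plan is to reduce the whole statement to controlling the schedule inefficiency $E(\partition) = \sum_{i=1}^N r^{(i-1,i)}/s^{(i-1,i)}$, since parts (b) and (c) are then mechanical. Once (a) gives $E(\partition) = \Lambda + O(\|\partition\|)$, substituting into the exact identities of Corollary \ref{cor_rtr} yields $\tau_{\mathrm{SEO}}(\partition) = (2N + 2\Lambda + O(\|\partition\|))^{-1} \sim (2N+2\Lambda)^{-1} \to 0$ as $N\to\infty$, and $\tau_{\mathrm{DEO}}(\partition) = (2 + 2\Lambda + O(\|\partition\|))^{-1} = \bar\tau + O(\|\partition\|) \to \bar\tau$, with the advertised $O(\|\partition\|)$ rate. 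So the entire work is in (a).

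Write $r_i = r(\beta_{i-1},\beta_i)$, $s_i = 1-r_i$, and $\Delta_i = \beta_i - \beta_{i-1}$, and decompose each summand as $\frac{r_i}{s_i} = r_i + \frac{r_i^2}{s_i}$. For the asymptotic estimate I would invoke Theorem \ref{theorem_rate_est} to write $r_i = (\Lambda(\beta_i) - \Lambda(\beta_{i-1})) + O(\Delta_i^3)$ and, since $\lambda$ is continuous hence bounded on $[0,1]$, also $r_i \le C\Delta_i$. Summing the main term telescopes to $\Lambda$. The elementary fact used throughout is $\Delta_i^2 \le \|\partition\|\,\Delta_i$, so that $\sum_i \Delta_i^2 \le \|\partition\|\sum_i\Delta_i = \|\partition\|$ and $\sum_i \Delta_i^3 \le \|\partition\|^2$. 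This turns the per-interval cubic error into a global $O(\|\partition\|^2)$ (sharpening Corollary \ref{cor_invariant} to $\sum_i r_i = \Lambda + O(\|\partition\|^2)$), and bounds the correction by $\sum_i r_i^2/s_i \le 2C^2\|\partition\| = O(\|\partition\|)$, using $s_i \ge 1 - C\|\partition\| \ge 1/2$ for small mesh. Hence $E(\partition) = \Lambda + O(\|\partition\|^2) + O(\|\partition\|) = \Lambda + O(\|\partition\|)$.

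For the one-sided bound $\Lambda \le E(\partition)$, which is a genuine refinement of the above (the latter only gives $E \ge \Lambda - O(\|\partition\|)$), I would argue interval by interval, establishing $\frac{r_i}{s_i} \ge \Lambda(\beta_i) - \Lambda(\beta_{i-1})$ for $\|\partition\|$ small and then summing and telescoping. From the decomposition, $\frac{r_i}{s_i} - (\Lambda(\beta_i)-\Lambda(\beta_{i-1})) = \frac{r_i^2}{s_i} + O(\Delta_i^3)$; the correction $r_i^2/s_i$ is strictly positive of order $\Delta_i^2$ whenever $\lambda \not\equiv 0$ on the interval and therefore dominates the signed cubic remainder for small $\Delta_i$, giving each per-interval inequality.

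The main obstacle is precisely this lower bound: unlike the two-sided estimate, which needs only absolute values, here the sign of the cubic remainder in Theorem \ref{theorem_rate_est} matters, and one must ensure the positive quadratic correction $r_i^2/s_i$ dominates it uniformly across the partition — including on intervals where $\lambda(\beta)$ is small or vanishes, so that both terms degenerate together. The clean way to handle this is to aggregate rather than compare pointwise: keep the remainder in its summed form and combine $\sum_i \Delta_i^3 \le \|\partition\|^2$ with the fact that $\sum_i r_i^2/s_i$ is of exact order $\|\partition\|$, so that for all sufficiently fine partitions the positive correction outweighs the total cubic error, yielding $\Lambda \le E(\partition)$ in the $\|\partition\|\to 0$ regime asserted by the theorem.
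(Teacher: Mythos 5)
Your reduction of (b) and (c) to (a), and your proof of the asymptotic equality $E(\partition)=\Lambda+O(\|\partition\|)$, are correct and essentially the paper's own argument: the paper sandwiches $\sum_i r^{(i-1,i)} \le E(\partition) \le \bigl(\min_j s^{(j-1,j)}\bigr)^{-1}\sum_i r^{(i-1,i)}$ and invokes Corollary \ref{cor_invariant}, which is the same mechanism as your decomposition $r_i/s_i = r_i + r_i^2/s_i$. Your bookkeeping $\sum_i \Delta_i^3 \le \|\partition\|^2$ (with $\Delta_i=\beta_i-\beta_{i-1}$) is in fact tidier than the paper's $O(N\|\partition\|^3)$, which is only $O(\|\partition\|^2)$ when $N\asymp\|\partition\|^{-1}$.

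The gap is exactly where you suspected it: the one-sided bound $\Lambda \le E(\partition)$. Your aggregate fix rests on the assertion that $\sum_i r_i^2/s_i$ is of exact order $\|\partition\|$, i.e., bounded \emph{below} by $c\|\partition\|$. That lower bound is unjustified and false in general. It requires both (i) $r_i \gtrsim \Delta_i$, i.e., $\lambda$ bounded away from zero, which is precisely what fails on the degenerate intervals that motivated aggregating in the first place, and (ii) $\sum_i\Delta_i^2 \gtrsim \|\partition\|$, which forces the partition to be regular; for a partition with one interval of length $\|\partition\|$ and all others of length $\eps\ll\|\partition\|^3$ one has $\sum_i\Delta_i^2 = O(\|\partition\|^2+\eps)$. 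Concretely, place the single mesh-sized interval where $\lambda$ vanishes at its midpoint: there $r_i=O(\Delta_i^3)$ by Proposition \ref{theorem_rate}, so that interval contributes only $O(\Delta_i^6)$ to the correction, while Theorem \ref{theorem_rate_est} permits a remainder of size $\Delta_i^3$ \emph{of either sign} on the same interval; refining the rest of the partition makes every other contribution (corrections and remainders alike) arbitrarily small, so the total positive correction cannot be guaranteed to dominate the total remainder. The deeper point is that the inequality cannot be extracted from the sign-uncontrolled $O(\Delta_i^3)$ remainder at all: one typically has $\sum_i r_i < \Lambda$ strictly (e.g., in location families where $V^{(\beta)}$ has the law of $Z+m(\beta)$ with Gaussian-type $Z$, $r_i$ is a concave function of $m(\beta_{i-1})-m(\beta_i)$ vanishing at $0$), so $\Lambda \le E(\partition)$ amounts to showing the positive correction $\sum_i r_i^2/s_i$ beats a genuinely nonzero concavity deficit---a comparison of two degenerating quantities that needs finer input, such as a per-interval inequality $r/s \ge \Lambda(\beta')-\Lambda(\beta)$ established by direct means, or superadditivity of the rejection odds $r/s$ under concatenation of intervals combined with convergence under refinement.

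In fairness, the paper's own Appendix \ref{app_efficiency_convergence} has the same hole: its sandwich only yields $E(\partition) \ge \sum_i r^{(i-1,i)} = \Lambda + O(N\|\partition\|^3)$, not $E(\partition)\ge\Lambda$, so it too proves only the asymptotic equality in (a). Your proposal therefore establishes exactly as much as the paper's proof does, but neither argument, as written, delivers the stated inequality $\Lambda \le E(\mathcal{P}_N)$.
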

See Appendix \ref{app_efficiency_convergence} for a proof.

In general, $\Lambda$ is large when $\pi_0$ deviates significantly from $\pi$. Since $\Lambda$ is problem specific, this identifies a limitation of PT present even in its non-reversible flavour, namely that adding more cores to the task will never be harmful, but does have a diminishing return. The bound $(2+2\Lambda)^{-1}$ could indeed be very small for complex problems. Moreover, it is independent of the choice of annealing schedule, hence it cannot be improved by the schedule optimization procedure described in Section \ref{sec_computation}. 

When $N$ is fixed, we will see in Section \ref{sec_optimal_schedule} that $\tauDEO,\tauSEO$ are both maximized when $r^{(i-1,i)}=r^*$ for all $i$. In this case,  $E(\partition)=N (1-r^*)^{-1}~r^*$  and one has 
\begin{equation}\label{eq:rtr_bound_DEO_SEO}
\frac{N}{1+Nr^\ast}\leq\frac{\tauDEO(\partition)}{\tauSEO(\partition)}=\frac{N}{1+(N-1)r^*}\leq  \frac{1}{r^*}.
\end{equation}
This implies that for an optimally chosen schedule, non-reversible PT improves the round trip rate over reversible PT by at most a factor of $1/r^*$. Corollary \ref{cor_invariant} implies $Nr^*\approx\Lambda$, allowing us to quantify the looseness of \eqref{eq:rtr_bound_DEO_SEO} directly in terms of $\Lambda$.
\begin{equation}
 \frac{1}{r^*}-\frac{N}{1+Nr^*}
= \frac{1}{r^*}\left(\frac{1}{1+Nr^*}\right)
\approx\frac{1}{r^*}\left(\frac{1}{1+\Lambda}\right).
\end{equation}

We can also investigate the performance of PT for multi-modal or high-dimensional targets by studying the behaviour of $\Lambda$. Assuming the target distribution factorizes \cite{woodard2009conditions, atchade_towards_2011,roberts2014minimising}, we can show that $\Lambda$ is robust to the number of modes and grows at rate $O(\sqrt{d})$ as the dimension $d$ of the target increases. See Appendices \ref{sec_multimodal} and \ref{sec_high_dim} for details.

\section{Tuning non-reversible PT}\label{sec_computation}

\begin{figure}
	\centering
	\includegraphics[width=0.24\linewidth]{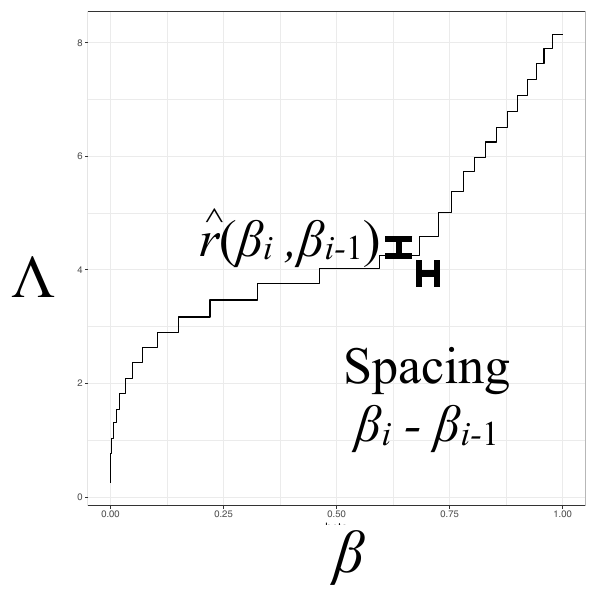}
	\includegraphics[width=0.24\linewidth]{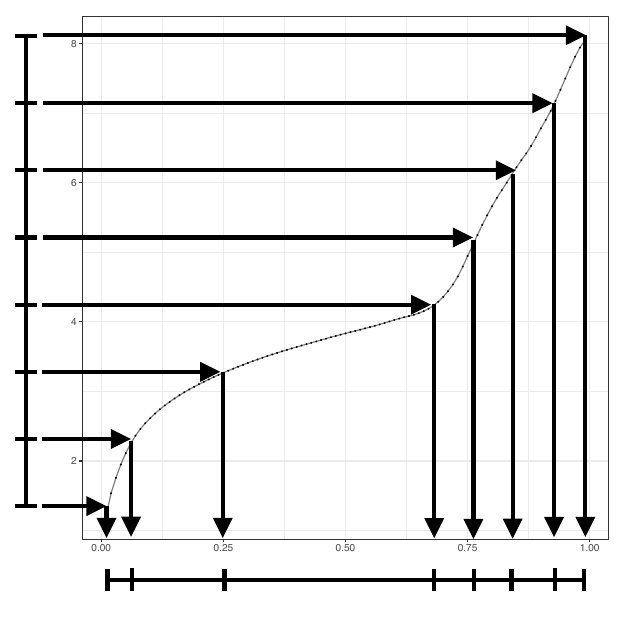}
	\includegraphics[width=0.22\linewidth]{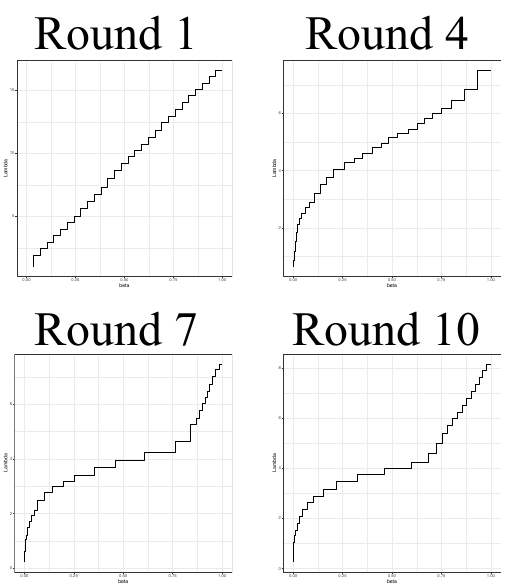}
	\includegraphics[width=0.25\linewidth]{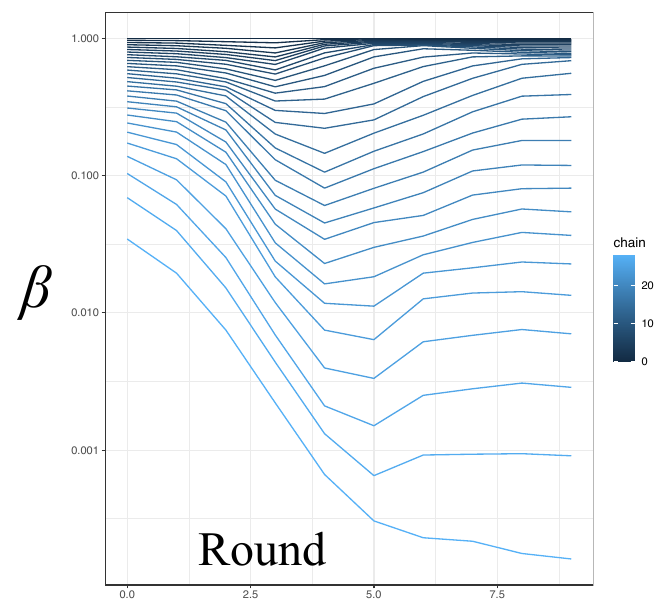}
	\caption{Proposed annealing schedule optimization method: example on the Bayesian mixture model of Section~\ref{sec_examples}. (1) The cumulative barrier $\Lambda(\cdot)$ is estimated using \eqref{eq_Lambda_estimate} at each point $\beta_k$ of an initial partition. (2) The cumulative barrier is interpolated using monotonic cubic interpolation, and a new schedule (shown beside the abscissa axis) is obtained by computing the inverse under $\Lambda(\cdot)$ of a regular grid (shown beside the ordinate). (3) The process 1--2 is repeated iteratively. (4) The sequence of annealing schedules obtained as a function of the round (colours represent different grid points in the schedule).}
	\label{fig:tuning}
\end{figure}

\subsection{Optimal annealing schedule}\label{sec_optimal_schedule}

We first discuss how to optimize the annealing schedule $\partition$ to maximize $\tauDEO$ when $N$ is fixed, which is equivalent to minimizing the schedule inefficiency, $E(\partition)=\sum_{i=1}^Nr_i/(1-r_i)$ where $r_i = r(\beta_{i-1},\beta_i)$. 
To get a tractable approximate characterization of the feasible region of $r_1, r_2, \dots, r_N$, we use Corollary \ref{cor_invariant}, which implies $\sum_{i=1}^N r_i\approx \Lambda$ for all schedules $\partition$. Therefore assuming $\|\partition\|$ is small enough to ignore the error term, finding the optimal schedule $\popt$ is approximately equivalent to minimizing $\sum_{i=1}^Nr_i/(1-r_i)$, subject to the constraint $\sum_{i=1}^Nr_i=\Lambda$ and $r_i>0$. Using Lagrange multipliers this leads to a solution where the rejection probabilities $r_i^*$ are constant in $i$, with a shared value denoted by $r^*$. 

Consequently the optimal schedule $\popt= \{\beta_0^*, \dots, \beta_N^*\}$ satisfies  $r(\beta^*_{i-1},\beta^*_i)=r^*$ for all $i$. Theorem \ref{theorem_rate_est} and Corollary \ref{cor_invariant} imply that $r^*$ must satisfy, $r^*\approx \Lambda(\beta_i^*)-\Lambda(\beta_{i-1}^*)$ for all $i$ and $r^* \approx \Lambda/N$ with $O(N^{-3})$ error. By equating these two estimates for $r^*$ and summing from $i=0,\dots,k$ we get
\begin{align}\label{beta_opt_constraint}
\Lambda(\beta_k^*) \approx \frac{k}{N} \Lambda,
\end{align}
with an error of $O(N^{-2})$. If we ignore error terms, \eqref{beta_opt_constraint} implies that $\beta_k^*\approx G(k/N)$ where $G=F^{-1}$ and $F(\beta)=\Lambda(\beta) / \Lambda$ (see Figure~\ref{fig:tuning}(2)). See Appendix~\ref{app_normal} for an example where an analytic result is available, however in general $\Lambda(\cdot)$ is not known but instead estimated from the MCMC output.

The ``equi-acceptance'' result in \eqref{beta_opt_constraint} is consistent with other theoretical frameworks and notions of efficiency \cite{atchade_towards_2011,lingenheil2009efficiency,kofke2002acceptance,predescu2004incomplete}. However implementing this equi-acceptance recommendation in practice is non-trivial. Previous work relied on Robbins-Monro schemes \cite{atchade_towards_2011,miasojedow2013adaptive}, which introduce sensitive tuning parameters. In contrast, provided $\Lambda(\cdot)$ can be estimated reliably (which we establish with Algorithm~\ref{alg_lambda} in the next section), \eqref{beta_opt_constraint} provides the basis for a straightforward and effective schedule optimization scheme, described in Algorithm~\ref{alg_update}.

\begin{algorithm}
	\caption{UpdateSchedule(communication barrier $\Lambda(\cdot)$, schedule size $N$)}\label{alg_update} 
	\begin{algorithmic}[1]
		\State $ \Lambda \gets  \Lambda(1)$
		\For{$k$ in $0,1, 2,..., N$}
			\State Find $\beta^*_k$ such that $ \Lambda(\beta^*_k) =  \frac{k}{N} \Lambda $ using e.g.\ bisection. 
		\EndFor
		\State \Return $\popt = (\beta^*_0, \beta^*_1, \beta^*_2, \dots, \beta^*_{N})$
	\end{algorithmic}
\end{algorithm}

\subsection{Estimation of the communication barrier}\label{sec_estimation_communication}

Assume we have access to a collection of samples $(\bar\rstates_1, \bar\rstates_2, \dots, \bar\rstates_n)$ from a non-reversible PT scheme based on an arbitrary annealing schedule $\partition$. These samples may come from a short pilot run, or, as described in the next section, from the previous round of an iterative scheme. For a given schedule $\partition$, when the central limit theorem for Markov chains holds, the Monte Carlo estimates for the rejection rates satisfy
\begin{equation}\label{def_rhat}
\hat{r}^{(i-1,i)} = \frac{1}{n}\sum_{k=1}^n\alpha^{(i-1,i)}(\bar\rstates_k) = r^{(i-1,i)} + O_p(n^{-1/2}).
\end{equation}
Next, using Theorem \ref{theorem_rate_est} we obtain $\sum_{i=1}^k r^{(i-1,i)} =  \Lambda(\beta_k) + O(N \|\partition\|^3)$. This motivates the following approximation for $\Lambda(\beta_k)$,
\begin{equation}\label{eq_Lambda_estimate}
\hat \Lambda(\beta_k) = \sum_{i=1}^k \hat r^{(i-1,i)},
\end{equation}
which has an error of order $O_p(\sqrt{N/n} + N \|\partition\|^3)$  (see Figure~\ref{fig:tuning} (left)). 

Given $\hat \Lambda(\beta_0),\dots, \hat \Lambda(\beta_N)$, we estimate the function $\Lambda(\beta)$ via interpolation, with the constraint that the interpolated function should be monotone increasing since $\lambda(\beta) \ge 0$. Specifically, we use the Fritsch-Carlson monotone cubic spline method \cite{fritsch_monotone_1980} and denote the monotone interpolation by $\hat \Lambda(\beta)$. While we only use $\hat \Lambda(\beta)$ in our schedule optimization procedure, it is still useful to estimate $\lambda(\beta)$ for visualization purpose. We use the derivative of our interpolation, $\hat \lambda(\beta) = \hat \Lambda'(\beta)$, which is a piecewise quadratic function.

The ideas described in this section so far are summarized in Algorithm \ref{alg_lambda}, which given rejection statistics collected for a fixed annealing schedule provides an estimate of the communication barrier. 

	\begin{algorithm}
		\caption{CommunicationBarrier(rejection rate $\{r^{(i-1,i)}\}$, schedule $\partition$)}\label{alg_lambda} 
		\begin{algorithmic}[1]
			\State For each $\beta_i \in \partition$, compute $\hat \Lambda(\beta_i)$ \Comment{Equation (\ref{eq_Lambda_estimate})}
			\State $S \gets \{(\beta_0, \hat \Lambda(\beta_0)), (\beta_1, \hat \Lambda(\beta_1)), \dots, (\beta_N, \hat \Lambda(\beta_N))\}$
			\State Compute a monotone increasing interpolation $\hat \Lambda(\cdot)$ of the points $S$ \Comment{e.g.\ using \cite{fritsch_monotone_1980}}
			\State \Return $\hat{\Lambda}(\cdot)$ 
		\end{algorithmic}
	\end{algorithm}
	
As a byproduct of Algorithm~\ref{alg_lambda} we also obtain a consistent estimator $\hat{\tau}=(2+2\hat{\Lambda})^{-1}$ for the optimal round trip rate $\bar{\tau}$, where $\hat \Lambda=\hat \Lambda(1)$. We show in Figure~\ref{fig:lambda-robustness-composite} an example where (A2) is severely violated, yet  $\lambda$ is still reliably estimated. This allows us to compare the empirically observed round trip rate against $\hat{\tau}$, and hence estimate how far an implementation deviates from optimal performance.

\subsection{Tuning $N$ }\label{sec_tuning_N}
	
Theorem \ref{thm_efficiency_convergence} shows that non-reversible PT does not deteriorate in performance as we increase $N$ unlike reversible PT, however the gains in round trip rate eventually become marginal. When $N$ is very large we expect to accumulate more round trips by running $k > 1$ parallel copies of PT. As we shall see, the large $N$ asymptotic is still required however in order to determine the optimal number $k^*$ of PT copies.  
	
Suppose there are $k$ copies of PT running in parallel consisting of $N+1$ chains with optimal annealing schedule $\partition$. If $\bar{N}$ is the total number of cores available then $k$ and $N$ satisfy the constraint $k(N+1)=\bar{N}$. By Corollary~\ref{cor_rtr} the total round trip rate across all $k$ copies of PT is 
	\begin{equation}\label{eq_roundtrip_kcopies}
	\tau= k\ \tau_{\mathrm{DEO}}(\partition)= \frac{\bar{N}}{2(N+1)(1+E(\partition))}.
	\end{equation}
	
From Section~\ref{sec_optimal_schedule}, the optimal schedule $\popt$ has a corresponding swap rejection rate $r^*= \Lambda/N+O(N^{-3})$. Substituting this into \eqref{eq_roundtrip_kcopies} we get $\tau= \tau_\Lambda(N)+O(N^{-1})$, where
\begin{align}\label{eq_tau_lambda_N}
\tau_\Lambda(N)
=\frac{\bar{N}(1-\Lambda/N)}{2(N+1)(1-\Lambda/N+\Lambda)}.
\end{align} 
Ignoring error terms, $\tau_\Lambda(\cdot)$ is optimized when we run $k^*=\bar{N}/(1+N^*)$ copies of PT with $N^*+1=2\Lambda+1$ chains to achieve an optimum round trip rate $\tau^*=k^*/(2+4\Lambda)$. 

Note that when $\tau$ is optimized, we have $r^*\approx 1/2$, which differs from from the $0.77$ optimal rejection rate from the reversible PT literature \cite{atchade_towards_2011,kofke2002acceptance,predescu2004incomplete}. 
In Section~\ref{sec_ELE_violation}, we show empirically that when the ELE assumption (A2) is severely violated, the recommendation in the present section is turned into a bound, $r^* < 1/2$, as increasing $N$ appears to alleviate ELE violations.

\subsection{Iterative schedule optimization}\label{sec_adaptive_algo}

Suppose we have $\bar{N}$ cores available and a computational budget of $\nscan$ scans of PT, one scan consisting in one application of $\K^{\textrm{PT}}$. 
The first $\ntune$ scans are used to find an accurate estimate of the communication barrier $\hat{\Lambda}(\cdot)$ and the remaining $\nsample=\nscan-\ntune$ scans to sample from the target. Algorithm \ref{alg_adaptive}  approximates $\hat{\Lambda}$ with error $O_p(\sqrt{\bar N/\ntune} + \bar N^{-2})$ by iteratively using  Algorithms \ref{alg_update} and \ref{alg_lambda} for {\texttt maxRound}$ = \log_2\ntune$ rounds until the tuning budget is depleted.  Equipped with $\hat \Lambda(\cdot)$, we can use the remaining $\nsample$ scans to run $k^*=\bar{N}/(N^*+1)$ copies of non-reversible PT with $N^*=2\hat\Lambda(1)$ chains with optimal schedule $\mathcal{P}_{N^*}^*$ from Algorithm \ref{alg_update}. In cases where it is suspected that (A2) may be severely violated (as evidenced for example by a large gap between $\bar \tau$ and the estimated $\tau$ as described in Section~\ref{sec_estimation_communication}), it may be advantageous to also attempt $N^* = \bar{N}$ in line \ref{line_optimalN} of Algorithm~\ref{alg_adaptive}. 
We show empirically in Figure~\ref{fig:all-models} (top) that in a range of synthetic and real world problems our scheme converges using a small number of iterations, in the order of {\texttt maxRound} $= 10$. In our experiments we used $\ntune = \nscan / 2$ and discarded the samples from the first $\ntune$ scans as burn-in.

\begin{algorithm}
	\caption{NRPT(number of cores $\bar{N}$, tuning budget $\ntune$, sampling budget $\nsample$)} \label{alg_adaptive} 
	\begin{algorithmic}[1]
		\State $\mathcal{P}_{\bar{N}} \gets$\ initial annealing schedule of size $\bar N+1$ (e.g.\ uniform)
		\State $\texttt{maxRound} \gets \log_2(\ntune)$ 
		\State $n \gets 1$ \Comment{Scans per round}
		\For{\texttt{round}  {\bf in} 1, 2, \dots, \texttt{maxRound}} \label{line:round}
			\State $\{ r^{(i-1,i)}\} \gets \textrm{DEO}(n, \mathcal{P}_{\bar{N}})$ \Comment{Algorithm \ref{alg_deo}}
			\State $\hat\Lambda(\cdot) \gets$CommunicationBarrier$(\{ r^{(i-1,i)}\},\mathcal{P}_{\bar{N}})$ \Comment{Algorithm \ref{alg_lambda}}
			\State $\mathcal{P}_{\bar{N}}\gets$UpdateSchedule($\hat\Lambda(\cdot),\bar N$) \Comment{Algorithm \ref{alg_update}}
			\State $n \gets 2 n$ \Comment{Rounds use an exponentially increasing number of scans}
		\EndFor
		\State $N^*\gets 2\hat\Lambda(1)$  \label{line_optimalN} \Comment{See Section~\ref{sec_tuning_N}}
		\State $\mathcal{P}^*\gets$ UpdateSchedule($\hat\Lambda(\cdot), N^*$)  \Comment{Optimal schedule}
		\State $k^*\gets \bar N/(N^*+1)$  \Comment{Number of copies of PT}
		\For{ 1,\dots, $k^*$}  
		    \State $(\bar\states_1,\dots,\bar\states_{\nsample})\gets \textrm{DEO}(\nsample, \mathcal{P}^*)$
		    \State \Return $(\bar\states_1,\dots,\bar\states_{\nsample})$
		\EndFor
	\end{algorithmic}
\end{algorithm}

\subsection{Normalizing constant computation}\label{sec:normalizing-constant-computation} 

Algorithm \ref{alg_adaptive} can be easily modified to obtain efficient estimators for quantities of the form $\int_0^1\E[f(V^{(\beta)},\beta)]\mathrm{d}\beta$ with an error of $O_p(\sqrt{\bar{N}/\ntune}+\bar{N}^{-2})$ for well-behaved functions $f$. Examples of such quantities include log-normalizing constants \cite{kirkwood1935statistical,gelman1998simulating,xie2011improving}, KL-divergence \cite{dabak2002relations}, and Fisher-Rao distance \cite{amari2016information} between $\pi$ and $\pi_0$.

For example, by taking a Riemann sum in the thermodynamic integration identity \cite{ogata_monte_1989}, the log-normalizing constant $\log\mathcal{Z}(1)$ of $\pi$ can be approximated using
\begin{equation}\label{eq:thermodymanic_integration}
\log \frac{\mathcal{Z}(1)}{\mathcal{Z}(0)}=-\int_0^1 \mu(\beta)\mathrm{d}\beta=-\sum_{i=1}^N\mu(\beta_i)(\beta_i-\beta_{i-1})+O(N^{-2}),
\end{equation}
where $\mu(\beta)=\E[V^{(\beta)}]$ and $\mathcal{Z}(0)$ the normalizing constant of $\pi_0$. By running DEO with schedule $\partition$ for $n$ scans we substitute a Monte Carlo estimate $\hat{\mu}(\beta_k)$ for $\mu(\beta_k)$ into \eqref{eq:thermodymanic_integration} to get an estimator $\log (\hat{\mathcal{Z}}(1)/\hat{\mathcal{Z}}(0))$ with error $O_p(\sqrt{N/n}+N^{-2})$. See also \cite{gelman1998simulating,xie2011improving} for closely related normalization constant estimators. Figure \ref{fig:all-models} (bottom) in Section \ref{sec:empirical} shows how the estimate for $\log \mathcal{Z}(1)$ evolves with the number of rounds in Algorithm \ref{alg_adaptive} for 16 different models.

\section{Scaling limits of index process}\label{sec:scalinglim}

The non-reversible communication scheme introduced in \cite{okabe2001replica} was presumably devised on algorithmic grounds (it performs the maximum number of swap attempts in parallel) since no theoretical justification was provided and the non-reversibility of the scheme was not mentioned. The arguments given in \cite{lingenheil2009efficiency} to explain the superiority of non-reversible communication over various PT algorithms rely on a misleading assumption, namely a diffusive scaling limit for the index process. Figure \ref{fig:swaps} (see also Figure \ref{fig_trajectory} in the Supplementary Material) suggests that the index process behaves qualitatively differently as $N$ increases for reversible and non-reversible PT. The goal of this section is to investigate these differences by identifying some scaling limits. Such limits exist under assumptions (A1)-(A3) specified in Section \ref{sec_assumptions} and Section \ref{section_swap_acceptance}. 

Suppose $G:[0,1]\to[0,1]$ is an increasing differentiable function satisfying $G(0)=0$ and $G(1)=1$. We say that $G$ is a \emph{schedule generator} for $\mathcal{P}_N=\{\beta_0,\dots,\beta_N\}$ if
$\beta_i=G(i/N)$. We will now assume without loss of generality that the sequence of schedules $\mathcal{P}_N$ are generated by some common $G$. In particular the mean value theorem implies $\|\mathcal{P}_N\|=O(N^{-1})$ as $N\to\infty$. This is not a strict requirement as most annealing schedules commonly used fall within this framework: the uniform schedule $\mathcal{P}_N^{\text{uniform}}=\{0,1/N,\dots,1\}$ is generated by $G(w)=w$, the optimal schedule $\popt=\{\beta_0^*,\dots,\beta_N^*\}$ derived in Section \ref{sec_optimal_schedule} is approximately generated by $G(w)=F^{-1}(w)$, where $F(\beta)=\Lambda(\beta)/\Lambda$. If $\pi_0(x)\propto \pi(x)^{\gamma}$ for some $\gamma\in (0,1)$, and $L(x)\propto \pi(x)^{1-\gamma}$ then $G(w)=\frac{\gamma^{1-w}-\gamma}{1-\gamma}$ corresponds to the geometric schedule commonly used by practitioners. In contrast, we show in Figure~\ref{fig:Gs} examples of schedule generators $G$ corresponding to estimated optimal $\popt$ from various models. Figure~\ref{fig:Gs} shows that in real world inference scenarios, the optimal $G$ often qualitatively differs from the simple geometric schedule.

Suppose $(I_n,\eps_n)$ is the index process for an annealing schedule $\mathcal{P}_N$ generated by $G$. To establish a scaling limit for $(I_n,\eps_n)$, it will be convenient to work in a continuous time setting. To do this, we suppose the times that PT iterations occur are distributed according to a Poisson process $\{M(\cdot)\}$ with mean $\mu_N$. The number $M(t)$ of PT iterations that occur by time $t\geq 0$ thus satisfies $M(t)\sim \mathrm{Poisson}(\mu_N t)$. We define the \emph{scaled index process} by $Z^N(t)=(W^N(t),\eps^N(t))$ where $W^N(t)=I_{M(t)}/N$ and $\eps^N(t)=\eps_{M(t)}$.

Define the piecewise-deterministic Markov process (PDMP) \cite{davis1993markov} $Z(t)=(W(t),\eps(t))$ on $[0,1]\times\{-1,1\}$ as follows: $W(t)$ moves in $[0,1]$ with velocity $\eps(t)$ and the sign of $\eps(t)$ is reversed at an inhomogeneous rate $\lambda(G(W(t))G'(W(t))$ or when $W(t)$ hits a boundary; see \cite{bierkens2018piecewise} for a discussion of PDMP on restricted domains. The process $Z$ corresponds to an inhomogeneous persistent random walk with reflective boundary conditions; see \cite{masoliver1992solutions} for a description of the Fokker-Plank equation and first passage times. As the optimal schedule is generated by $G=F^{-1}$ for $F(\beta)=\Lambda(\beta)/\Lambda$, we have $\lambda(G(w))G'(w)=\Lambda$ for all $w\in[0,1]$, so $\eps(t)$ changes sign at a constant rate $\Lambda$. See Figure \ref{fig_trajectory_Z} for sample trajectories of $Z$ for different values of $\Lambda$. The main result of this section is the following theorem.

\begin{theorem}\label{theorem_weak_limit_main} Suppose $\partition$ is family of schedules generated by $G$, then
\begin{enumerate}
    \item [(a)]For reversible PT if $\mu_N=N^2$ and if $W^N(0)$ converges weakly to $W(0)$ then $W^N$ converges weakly to a diffusion $W$ with initial condition $W(0)$, where $W$ is a Brownian motion on $[0,1]$ with reflective boundary conditions. The process $W$ admits $\mathrm{Unif}([0,1])$ as stationary distribution.
    \item [(b)]For non-reversible PT if $\mu_N=N$ and if $Z^N(0)$ converges weakly to $Z(0)$, then $Z^N$ converges weakly to the PDMP $Z$ with initial condition $Z(0)$. The process $Z$ admits $\mathrm{Unif}([0,1]\times\{-1,1\})$ as stationary distribution.
\end{enumerate}
\end{theorem}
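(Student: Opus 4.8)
The plan is to treat both parts as instances of convergence of continuous-time Markov jump processes, established through convergence of their infinitesimal generators together with well-posedness of the limiting martingale problem. The key structural fact is that, under (A1)--(A2), $Z^N$ (respectively its first marginal $W^N$ in the reversible case) is a time-homogeneous Markov jump process on the finite state space $\{0,1/N,\dots,1\}\times\{-1,1\}$: the Poisson clock $M$ turns the Markov chain $(I_n,\eps_n)$ of Section~\ref{sec:lifted_property} into a continuous-time chain with generator $\mathcal{A}^N=\mu_N(K^{\mathrm{SEO}/\mathrm{DEO}}-\mathrm{Id})$ acting on the rescaled state. All probabilistic content is encoded in the swap/rejection probabilities $s^{(i,i+1)},r^{(i,i+1)}$, and the main analytic input is Theorem~\ref{theorem_rate_est}, which gives $r(\beta_{i-1},\beta_i)=\Lambda(\beta_i)-\Lambda(\beta_{i-1})+O(\|\partition\|^3)$; combined with $\beta_i=G(i/N)$ and the mean value theorem this yields $r^{(i,i+1)}=\lambda(G(w))G'(w)/N+O(N^{-2})$ with $w=i/N$.

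For the non-reversible case (b) I would write out $\mathcal{A}^N f(w,\eps)$ for a test function $f\in C^1([0,1]\times\{-1,1\})$ and Taylor-expand in $1/N$. With $\mu_N=N$, using $f(w+\eps/N,\eps)=f+\tfrac{\eps}{N}\partial_w f+O(N^{-2})$ and the expansion of $r^{(i,i+1)}$, the $O(N)$ terms cancel and one is left with
\[
\mathcal{A}^N f(w,\eps)=\eps\,\partial_w f(w,\eps)+\lambda(G(w))G'(w)\,[f(w,-\eps)-f(w,\eps)]+O(N^{-1}),
\]
uniformly in $(w,\eps)$, which is exactly the generator $\mathcal{A}$ of the PDMP $Z$. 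For the reversible case (a) the analogous computation with $\mu_N=N^2$ and $g\in C^2([0,1])$ gives a drift coefficient $\tfrac{N}{2}(s_+-s_-)$, where $s_\pm$ are the up/down swap probabilities at position $i$; since $s_+-s_-=-(r_+-r_-)$ is a second difference of $\Lambda\circ G$ and hence $O(N^{-2})$, the drift vanishes in the limit, while the diffusion coefficient $\tfrac14(s_++s_-)\to\tfrac12$ because $r_\pm\to0$. Thus $\mathcal{A}^N g\to\tfrac12 g''$, the generator of standard Brownian motion, uniformly; notably the schedule generator $G$ disappears from the leading-order diffusive limit.

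It then remains to upgrade generator convergence to process convergence. I would specify the cores encoding the reflecting boundaries: for (a), $\mathcal{D}=\{g\in C^2[0,1]:g'(0)=g'(1)=0\}$ (Neumann conditions), and for (b) the set of $f\in C^1$ satisfying the velocity-flip compatibility $f(0,-1)=f(0,+1)$ and $f(1,+1)=f(1,-1)$. Since the state spaces are compact, the compact-containment condition is automatic and relative compactness of $\{Z^N\}$ (resp.\ $\{W^N\}$) follows from the uniform boundedness of $\mathcal{A}^N f$; I would then invoke the standard result that convergence of generators on a core, together with uniqueness of the limiting martingale problem, yields weak convergence. Uniqueness for the reflected diffusion is classical, and for the reflected PDMP I would appeal to the theory of piecewise-deterministic Markov processes on restricted domains \cite{davis1993markov,bierkens2018piecewise}. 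The stated stationarity is checked directly: a short integration by parts shows $\int \mathcal{A}f\,\mathrm{d}\mu=0$ for every $f$ in the core when $\mu$ is uniform, the swap terms cancelling by symmetry in $\eps$ and the transport boundary terms cancelling precisely because of the velocity-flip (resp.\ Neumann) boundary conditions.

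The main obstacle is the boundary analysis. Matching the discrete reflection of the index process (which cannot leave $\{0,\dots,N\}$, and for SEO is simply not proposed a swap on half the steps at the endpoints) to the continuous reflecting conditions requires a careful treatment of the generator near $i=0$ and $i=N$, in particular verifying that the limit generator's domain is correctly identified so that the limiting process reflects rather than gets absorbed; for (b) one must also establish uniqueness of the martingale problem for the reflected PDMP, whose hyperbolic transport-type generator is degenerate and for which the boundary behaviour, rather than the bulk dynamics, is the delicate point. Controlling the $O(N^{-1})$ remainder uniformly up to the boundary relies on the $C^2$-regularity of $\lambda$ from Theorem~\ref{theorem_rate_est} and the smoothness of $G$.
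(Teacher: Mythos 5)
Your plan for part (a) is essentially the paper's own proof (Appendix~\ref{sec_reversible_proof_lemma}): Feller convergence via Theorem~\ref{theorem_feller_convergence} on the Neumann core $\{f\in C^2([0,1]):f'(0)=f'(1)=0\}$, Taylor expansion of $f(\Phi^N_\pm(w))$ with a separate reflected expansion near the endpoints, and control of the swap probabilities via Lemma~\ref{lem:s_beta}; your cancellation of the drift through the second difference of $\Lambda\circ G$ (i.e.\ Theorem~\ref{theorem_rate_est}) is the right way to justify the step that the paper compresses into its $[1+o(N^{-1})]$ factor.

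For part (b), however, you take a genuinely different route, and it is there that your proposal has a real gap. The paper never attacks the boundary PDMP head-on: in Appendix~\ref{sec_lifted_proof_lemma} it unfolds $[0,1]\times\{-1,1\}$ onto the circle $\mathds{S}^1$ (two copies of the interval glued at the endpoints, via a map $\phi$ with $Z^N=\phi(U^N)$ and $Z=\phi(U)$), so that the limiting object $U$ is a PDMP \emph{without boundary}. Its Feller property, generator with domain $C^1(\mathds{S}^1)$, and uniform invariant law are obtained from Davis's criterion \cite[Theorem~27.6]{davis1993markov}, whose boundary hypotheses become vacuous on the circle (Lemma~\ref{lem:lifted_aux_properties}); generator convergence then gives $U^N\Rightarrow U$ (Proposition~\ref{prop:aux_weak_conv}); and the conclusion for $Z^N$ is pulled back through $\phi$ by the mapping theorem \cite[Theorem~2.7]{billingsley2013convergence}, checking via stationarity that the law of $U$ puts no mass on paths hitting the two gluing points at fixed times, together with compact containment \cite[Remark~3.7.3]{ethier2009markov}. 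Your direct route instead requires well-posedness for the \emph{reflected} PDMP --- its Feller property, the identification of a core, and uniqueness of the associated martingale problem --- and you dispose of all of this by ``appealing to the theory of PDMPs on restricted domains.'' That is precisely the layer of technicality the paper's unfolding is designed to avoid: the paper cites \cite{bierkens2018piecewise} only to make sense of the \emph{definition} of $Z$, not to supply a core or a uniqueness statement, and the Feller machinery it actually uses applies only in the absence of forced boundary jumps. As written, the hardest step of part (b) is asserted rather than proved.

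There is also a concrete error in your choice of core for (b): value matching $f(0,-1)=f(0,+1)$ and $f(1,+1)=f(1,-1)$ alone is not enough. Writing $w=u/N$ with $u\in[0,1)$ and $\eps=-1$, the reflection $\Phi^N(w,-1)=(1/N-w,+1)$ gives $\mathcal{L}_{Z^N}f(w,-1)=(1-u)\,\partial_w f(0,+1)-u\,\partial_w f(0,-1)+o(1)$, whereas the limiting generator at these points tends to $-\partial_w f(0,-1)$; the sup-norm discrepancy over $u$ equals $|\partial_w f(0,+1)+\partial_w f(0,-1)|$, so uniform generator convergence forces the derivative-flip condition $\partial_w f(0,+1)=-\partial_w f(0,-1)$ (and similarly at $w=1$), exactly as in the paper's domain \eqref{generator_Z_def}. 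On the circle this condition is invisible, because it is nothing but $C^1$-smoothness across the gluing points --- another payoff of the paper's construction. Your route could likely be repaired (enlarge the boundary conditions in the core, or construct approximating functions $f_N$ as permitted by Theorem~\ref{theorem_feller_convergence}, and prove well-posedness of the reflected PDMP, e.g.\ by the very same unfolding), but as proposed it is incomplete at its most delicate point.
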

See Appendix \ref{app_scaling_limit} for a detailed construction of the scaled index process via their infinitesimal generators, and the proof of Theorem \ref{theorem_weak_limit_main}.

Theorem \ref{theorem_weak_limit_main} implies for reversible PT that, if we speed time by a factor of $N^2$, then the index process scales to a diffusion $W$ independent of the choice of $\pi_0,\pi$ and of the schedule. This is in contrast to non-reversible PT where, if we speed time by a factor of $N$, the index process converges to a PDMP $Z$ depending on $\pi_0,\pi$ through $\lambda$ and the schedule through $G$.

\section{Examples}\label{sec_examples}

\subsection{Empirical behaviour of the schedule optimization method}\label{sec:empirical}

We applied the proposed NRPT algorithm to 16 models from the statistics and physics literature to demonstrate its versatility. 
The models include 9 Bayesian models ranging from simple standard models such as generalized linear models and Bayesian mixtures, to complex ones such as cancer copy-number calling, ODE parameter estimation, spike-and-slab classification and two types of phylogenetic models. This is complemented by three models from statistical mechanics and four artificial models. The datasets considered include eight real datasets spanning diverse data-types and size, e.g.\ state-of-the-art measurements such as whole-genome single-cell sequencing data (494 individual cells from two types of cancer, triple negative breast cancer and high-grade serous ovarian \cite{dorri_efficient_2020}) and mRNA transfection time series \cite{leonhardt_single-cell_2014}, as well as more conventional ones such as mtDNA data and various feature selection/classification datasets. See  Figure~\ref{tab:models} and references therein for details. See also Appendix~\ref{app_experiments} for more information on the models and additional experimental results. All models and algorithms are implemented in the Blang modelling language \cite{bouchard-cote_blang_2019}.

In each of the 16 models and for each round of Algorithm~\ref{alg_adaptive} (line~\ref{line:round}), we computed the mean swap acceptance probability across all neighbour chains. We then summarized these means across chains using a box plot. The equi-acceptance objective function of Section~\ref{sec_optimal_schedule} can be visually understood as collapsing this box plot into a single point.  We show in Figure~\ref{fig:all-models} (top) the progression of these swap acceptance probabilities. In all examples considered, equi-acceptance is well approximated within 10 rounds. 

Within the range of models considered, we observed a diversity of local barriers $\lambda$ estimated by Algorithm \ref{alg_adaptive} (Figure~\ref{fig:all-models} (middle)).  Most statistical models exhibit a high but narrow peak in the neighbourhood of the reference ($\beta = 0$). However, a subset of models including statistical models (mixture, ode, phylo-cancer, spike-slab) and physics models (Ising, magnetic, rotor) exhibit additional peaks away from $\beta = 0$. See Figure~\ref{fig:Gs} in Appendix~\ref{app:empirical} for the corresponding schedule generators $G$ and global barriers $\Lambda$.

\begin{figure}
	\begin{center}
	\scalebox{0.7}{
	\begin{tabular}{llllll}  
		\toprule
		Model (and dataset when applicable)    & $n$ & $d$ & $\hat \Lambda$ & $N$ \\
		\midrule
		\emph{Change point} detection (text message data, \cite{davidson-pilon_bayesian_2015}) & $74$ & $3$ & $4.0$ & $20$ \\
		\emph{Copy number} inference (whole genome ovarian cancer data, Section~\ref{sec:chromobreak}) & $6\;206$ & $30$ & $13.0$ & $50$ \\
		\emph{Discrete} multimodal distribution (Appendix~\ref{sec_discrete}) & N/A & $3$ & $0.4$ & $30$ \\
		Weakly identifiable \emph{elliptic} curve (Appendix~\ref{description-models-data}) & N/A & $2$ & $4.4$ & $30$ \\
		General Linear Model (\emph{GLM}) (Challenger O-ring dataset, \cite{dalal_risk_1989}) & $23$ & $2$ & $3.3$ & $15$ \\
		Bayesian \emph{hierarchical} model (historical rocket failure data, Appendix~\ref{description-models-data}) & $5\;667$ & $369$ & $12.0$ & $30$ \\
		\emph{Ising} model (Appendix~\ref{sec_ising}) & N/A & $25$ & $3.1$ & $30$ \\
		Ising model with \emph{magnetic} field (Appendix~\ref{sec_ising}) & N/A & $25$ & $2.3$ & $30$ \\
		Bayesian \emph{mixture} model (Section~\ref{sec:mixture-model}) & $300$ & $305$ & $8.2$ & $30$ \\
		Bayesian \emph{mixture} model (subset of 150 datapoints) & $150$ & $155$ & $5.5$ & $20$ \\
		Isotropic \emph{normal} distribution & N/A & $5$ & $1.4$ & $30$ \\
		\emph{ODE} parameters (mRNA data, \cite{leonhardt_single-cell_2014}) & $52$ & $5$ & $6.4$ & $50$ \\
		\emph{Phylogenetic} inference (single cell breast cancer data, \cite{dorri_efficient_2020})  & $192\;763$ & $192\;765$ & $88$ & $300$ \\ % 192763 is size of X matrix = 493 cells x 391 loci
		\emph{Phylogenetic species} tree inference (mtDNA, \cite{hayasaka_molecular_1988})  & $249$ & $10\;395$ & $7.1$ & $30$ \\ % 10395 = 9!!
		Unidentifiable \emph{product} parameterization (Appendix~\ref{description-models-data}) & $100\;000$ & $2$ & $3.7$ & $15$ \\
		\emph{Rotor} (XY) model, \cite{hsieh_finite-size_2013} & N/A & $25$ & $3.3$ & $40$ \\
		\emph{Spike-and-slab} classification (RMS Titanic passengers data \cite{Hind_2019}) & $200$ & $19$ & $4.7$ & $30$ \\
		\bottomrule
	\end{tabular}}
	\end{center}
	\caption{Summary of models used in the experiments, with the number of observations $n$ (when applicable), the number of latent random variables $d$, estimated $\hat \Lambda$, and the default number of chains used. An abbreviation for each is shown in italic.}
	\label{tab:models}
\end{figure}

\begin{figure}[ht]
	\begin{center} 
		\includegraphics[width=\linewidth]{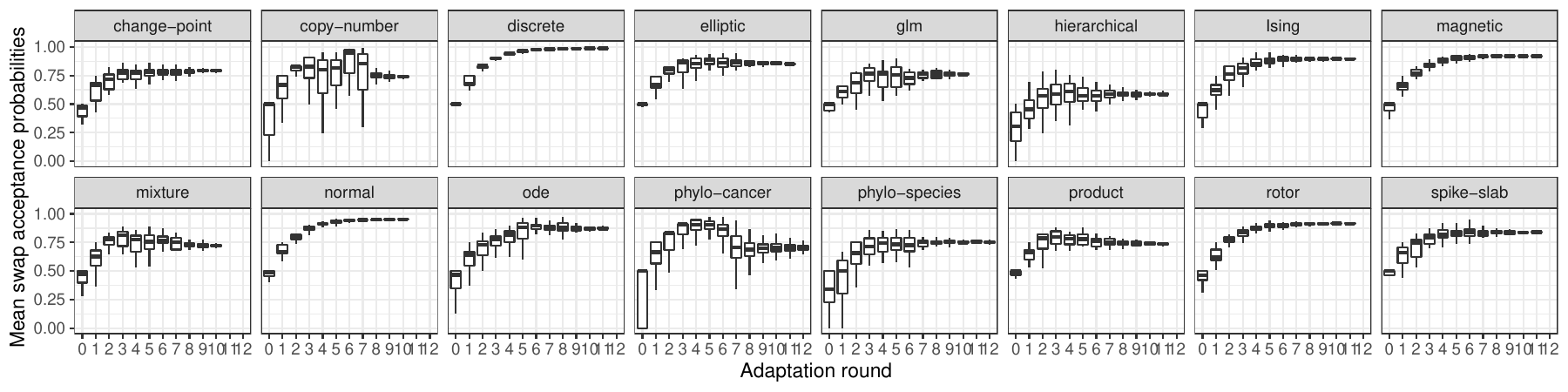}
		\includegraphics[width=\linewidth]{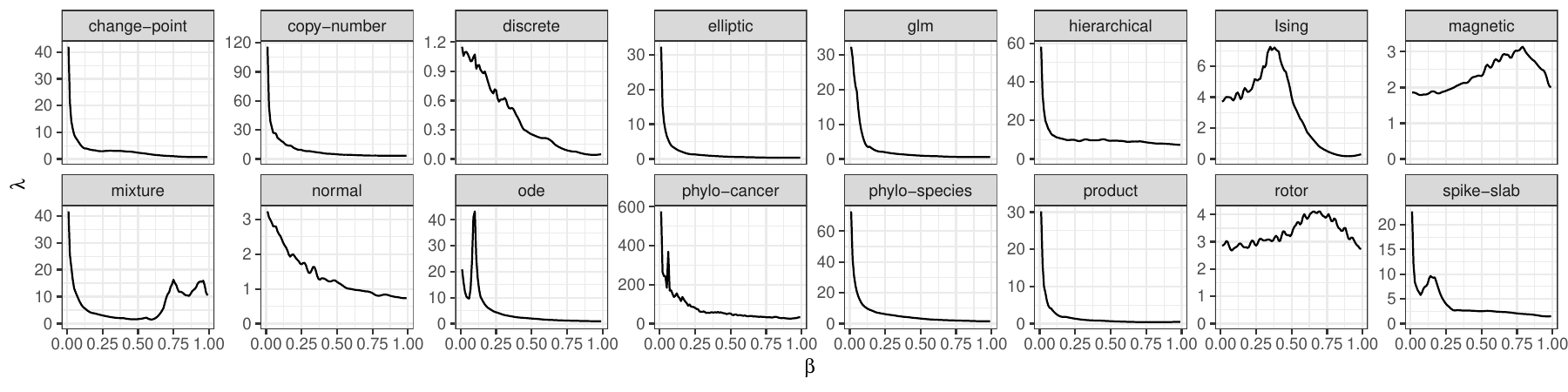}  
		\includegraphics[width=\linewidth]{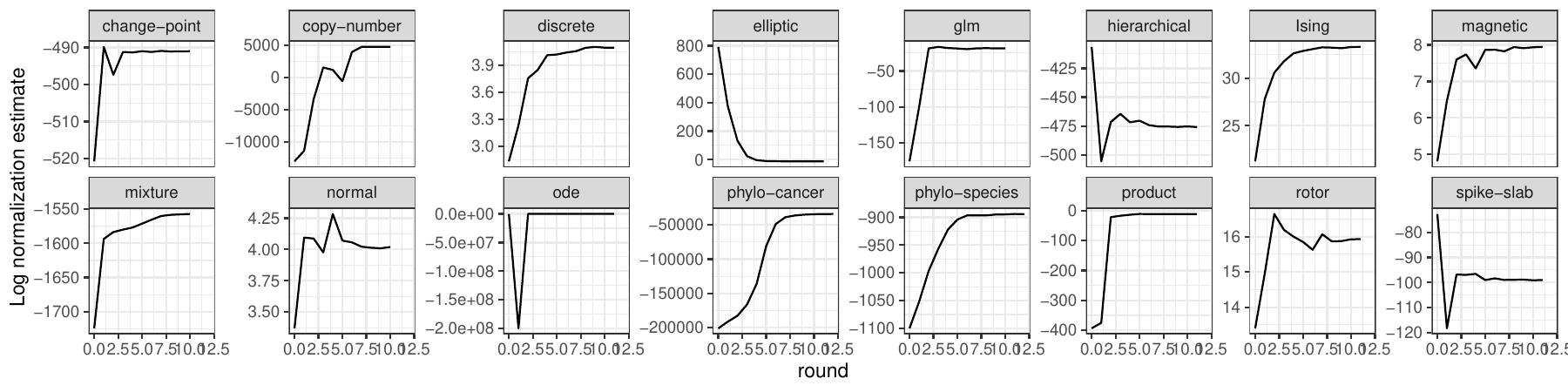}
	\end{center}
	\caption{Empirical behaviour of NRPT on 16 models. Top: distribution of acceptance rates observed at each round of Algorithm \ref{alg_adaptive}. Middle: estimates of the local communication barrier $\hat\lambda$. Bottom: progression of the log normalization constant estimates  (Section~\ref{sec:normalizing-constant-computation})  produced during each NRPT adaptation round.}
	\label{fig:all-models}
\end{figure}

\subsection{Robustness to ELE violation}\label{sec_ELE_violation}
 
To investigate empirically whether the NRPT methodology is robust to the violation of the ELE assumption, we ran NRPT with a range of values for $\nexpl$ on the models shown in Figure~\ref{fig:lambda-robustness-composite}. Let $d_{\text{var}}$ denote the number of variables in each model. We run experiments with $\nexpl = 0$, $(1/2)d_{\text{var}}$, $d_{\text{var}}$, $2d_{\text{var}}$, $4d_{\text{var}}$, \dots, $32d_{\text{var}}$ (the only exception is the reference chain ($\beta = 0$), where we always use $\nexpl = 1$ since we can get exact samples from $\pi_0$). 
The key quantity used by the NRPT algorithm for schedule optimization is the communication barrier $\lambda$. The results shown in Figure~\ref{fig:lambda-robustness-composite} demonstrate that in all models considered the function $\lambda$ is reliably estimated even when ELE is severely violated, provided $\nexpl > 0$. 
Similarly, since the estimates of $\Lambda$ and $\bar \tau$ are derived from $\lambda$, these quantities can be accurately estimated even when ELE is severely violated (see Figure~\ref{fig:tau-and-tau-bar} in Appendix~\ref{sec_ising}).
The results shown in Figure~\ref{fig_ESSvsRoundTrip} support that increasing $\nexpl$ reduces the difference between the theoretical and observed round trip rate. Moreover, Figure~\ref{fig_ESSvsRoundTrip} also demonstrate a second strategy to alleviate ELE violation, which is to increase $N$ while fixing $\nexpl$. 

In the next experiment shown in Figure~\ref{fig:vary-n-independent-chains}, we compare the two mechanisms available for alleviating ELE violation, namely increasing $N$ and increasing $\nexpl$.
First consider the black line in Figure~\ref{fig:vary-n-independent-chains} showing the regime where ELE is best approximated in these experiments. This is in close agreement with the theoretical guidelines developed in Section~\ref{sec_tuning_N}. At the same time, the light blue lines in Figure~\ref{fig:vary-n-independent-chains} show that in situations where the local exploration kernel is not easy to parallelize, it can be advantageous to decrease $\nexpl$ and to compensate with a number of chains $N$ higher than $2\Lambda$, leading to an average swap rejection rate $r < 1/2$.

 \begin{figure}
 	\begin{center} 
 		\includegraphics[width=1.0\linewidth]{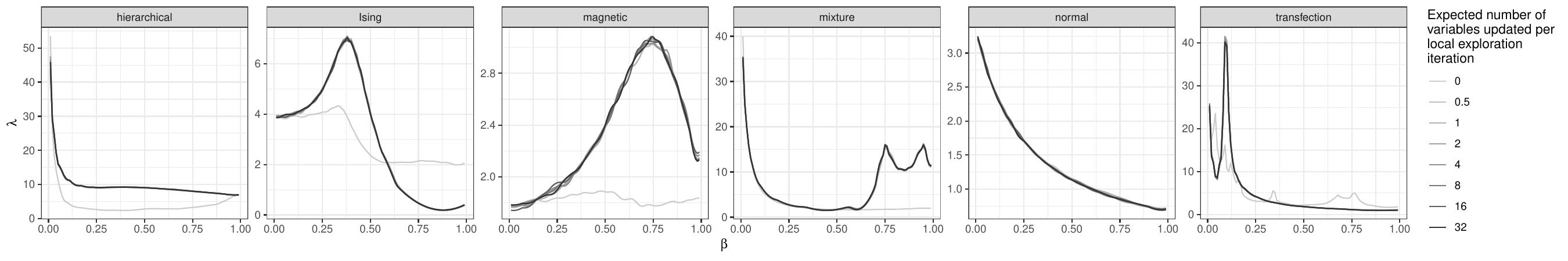}
 	\end{center}
 	\caption{Estimate $\hat{\lambda}$ of the local communication barrier $\lambda$ for different values of $\nexpl\geq 0$ and different models (refer to Figure \ref{tab:models} for model descriptions). The estimates are consistent as long as $\nexpl>0$, even when assumption (A2) is severely violated. Refer to Section \ref{sec_ELE_violation} for experiment details.}
 	\label{fig:lambda-robustness-composite}
 \end{figure}

\begin{figure}[ht]
	\centering
	\includegraphics[width=0.8\linewidth]{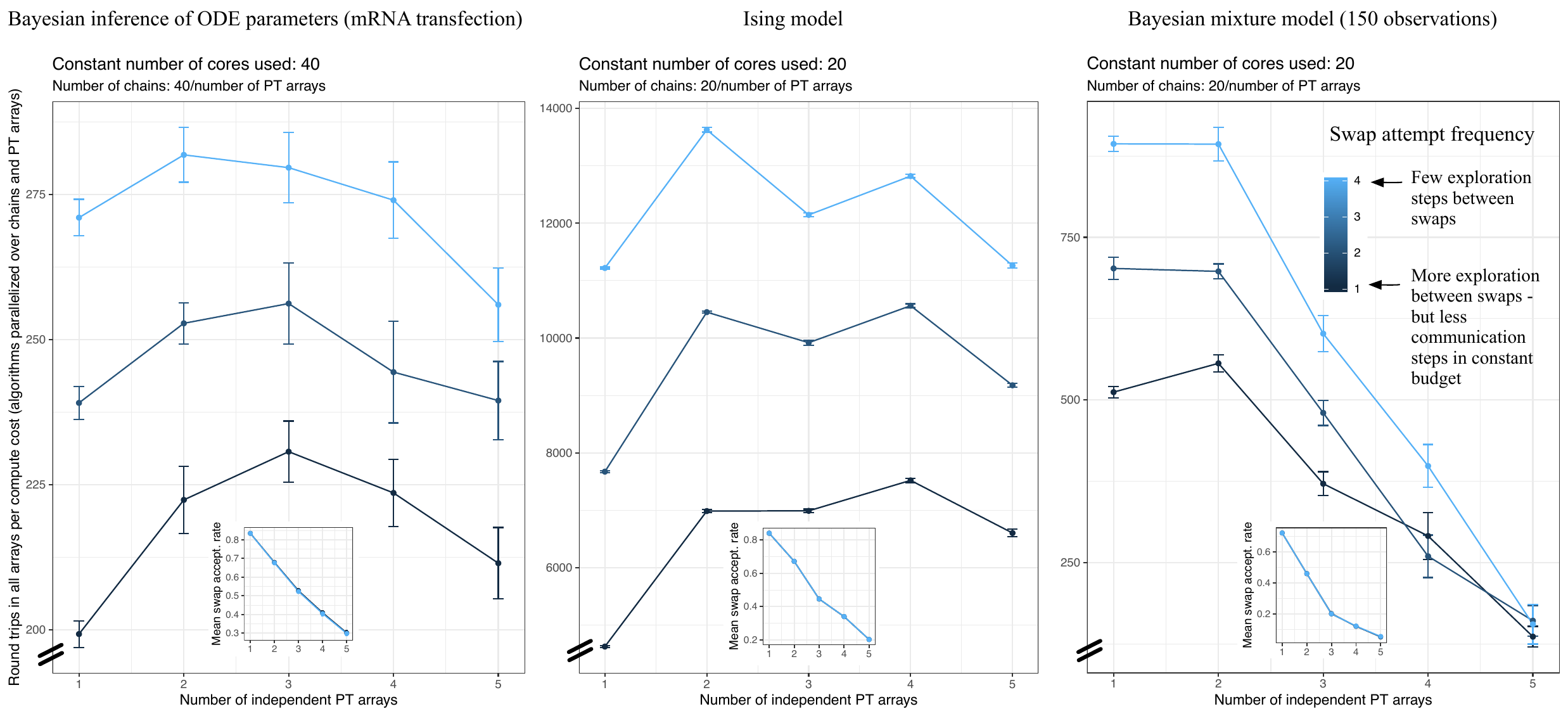}
	\caption{Trade-off between number of chains $N$, number of independent PT algorithms, $k$, and the frequency at which swaps are attempted ($\propto 1/\nexpl$). Each model uses a constant number of cores and set $N+1=\bar{N}/k$ chains per independent PT algorithm. Insets: swap acceptance probability for each $k$. The results in the black lines (where ELE is best approximated) agree with the theory of Section~\ref{sec_tuning_N}. For the ODE model, $\Lambda \approx 6.4$, $N^* \approx 12.8$ in agreement with $k^* = 3$ observed. For the Ising model, $\Lambda \approx 3.1$, $N^* \approx 6.2$ in agreement with $k^* = 4$ observed. For the Bayesian mixture model, $\Lambda \approx 5.5$, $N^* \approx 11$, in agreement with $k^* = 2$ observed. These results also show that in the context of local exploration moves that are not easily parallelized, it is better to use higher swap attempt frequencies (light blue lines) which achieve their optima at $N > 2 \Lambda$ and average swap rejection rate $r < 1/2$.}
	\label{fig:vary-n-independent-chains}
\end{figure}

\subsection{Comparison with other parallel tempering schemes}\label{sec_comparison}

We benchmarked the empirical running time of Algorithm \ref{alg_adaptive} compared to previous adaptive PT methods \cite{atchade_towards_2011,miasojedow2013adaptive}.
The methods we considered are: (1) the stochastic optimization adaptive method for reversible schemes proposed in \cite{atchade_towards_2011}; (2), a second stochastic optimization scheme, which still selects the optimal number of chains using the $23\%$ rule but uses an improved update scheme from \cite{miasojedow2013adaptive}, refer to Appendix~\ref{app_experiments_comparison} for details; (3) our non reversible schedule optimization scheme (NRPT); and finally, (4) our scheme, combined with a better initialization based on a preliminary execution of a sequential Monte Carlo algorithm (more precisely, based on a ``sequential change of measure,'' labelled SCM, as described in \cite{del_moral_sequential_2006}), we use this to investigate the effect on the violation of the stationarity assumption, and for fairness, we use this sophisticated initialization method for all the methods except (3). We benchmarked the methods on four models: (a) a 369-dimensional hierarchical model applied to a dataset of rocket launch failure/success indicator variables \cite{McDowell_2019}; (b) a 19-dimensional Spike-and-Slab variable selection model applied to the RMS Titanic Passenger Manifest dataset \cite{Hind_2019}; (c) A 25-dimensional Ising model from Section~\ref{sec_ising} ($M=5$); (d) a 9-dimensional model for an end-point conditioned Wright-Fisher stochastic differential equation (see, e.g., \cite{tataru_statistical_2017}). 

We provide the same number of cores and parallelized all the methods to the greatest extent possible. We refer the reader to Appendix \ref{app_experiments} for implementation details, experimental setup, and detailed description of the models and datasets used. 

In Figure~\ref{fig:benchmarking}, each dot summarized in the box plots represents the ESS per total wall clock time in seconds including schedule optimization time for the marginal of one of the model variables. The results show that schedule optimization is more efficient with our proposed non-reversible scheme, with a speed-up in the 10--100 range for the four models considered. The results also suggest that sequential Monte Carlo based initialization may not have a large impact on the performance of our method. See Figure~\ref{tab:summary-stat-comparisons} in Appendix~\ref{app:comparison} for other summary statistics from this benchmark.

\begin{figure}
\centering
 \includegraphics[width=0.49\linewidth]{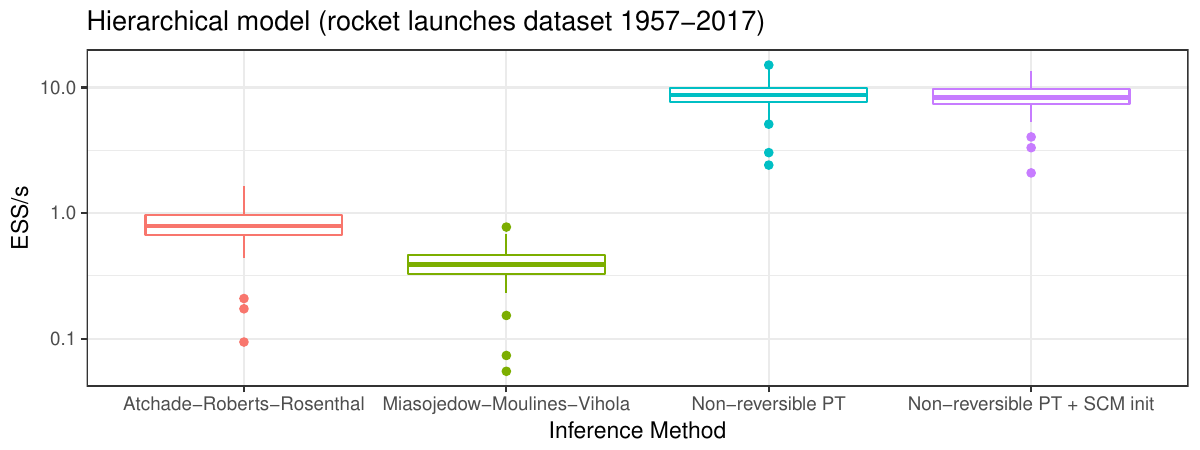}
 \includegraphics[width=0.49\linewidth]{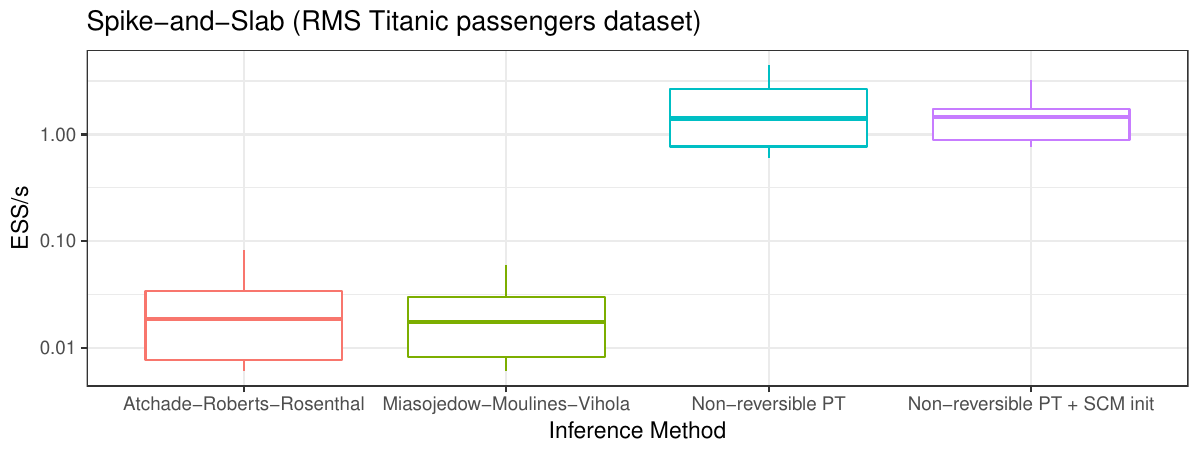} \\
 \includegraphics[width=0.49\linewidth]{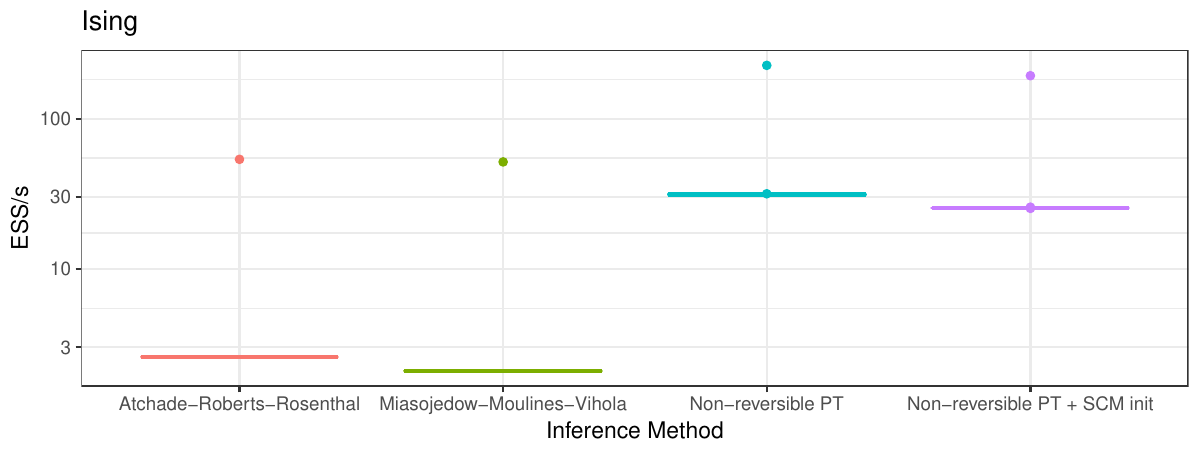}
 \includegraphics[width=0.49\linewidth]{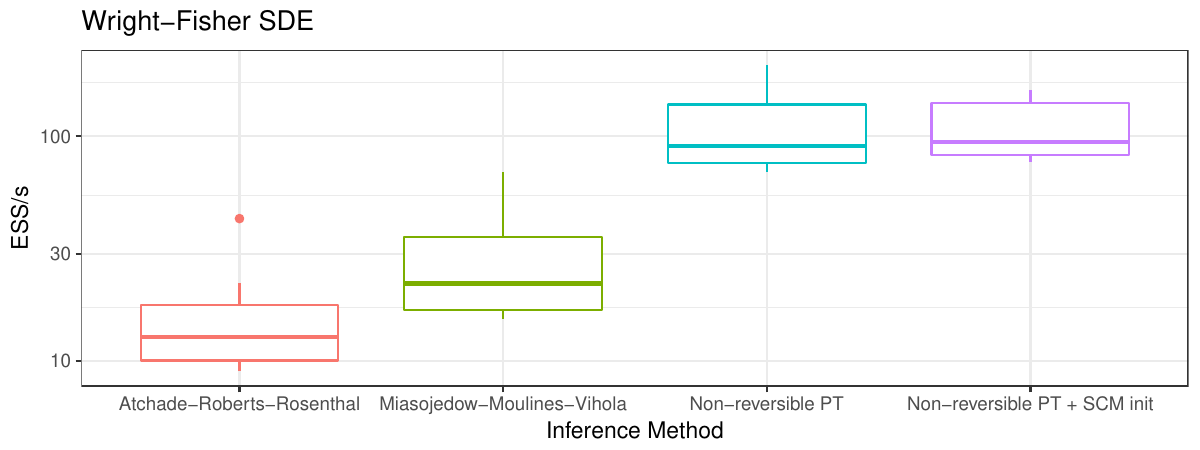}
\caption{Effective Sample Size (ESS) per second (ordinate, in log scale) for four PT methods (abscissa). The four facets show results for the four models described in Section~\ref{sec_comparison}.}
\label{fig:benchmarking}
\end{figure}

\subsection{Mixture models}\label{sec:mixture-example}

\begin{figure}[ht]
	\begin{center}
		\includegraphics[width=1.0\linewidth]{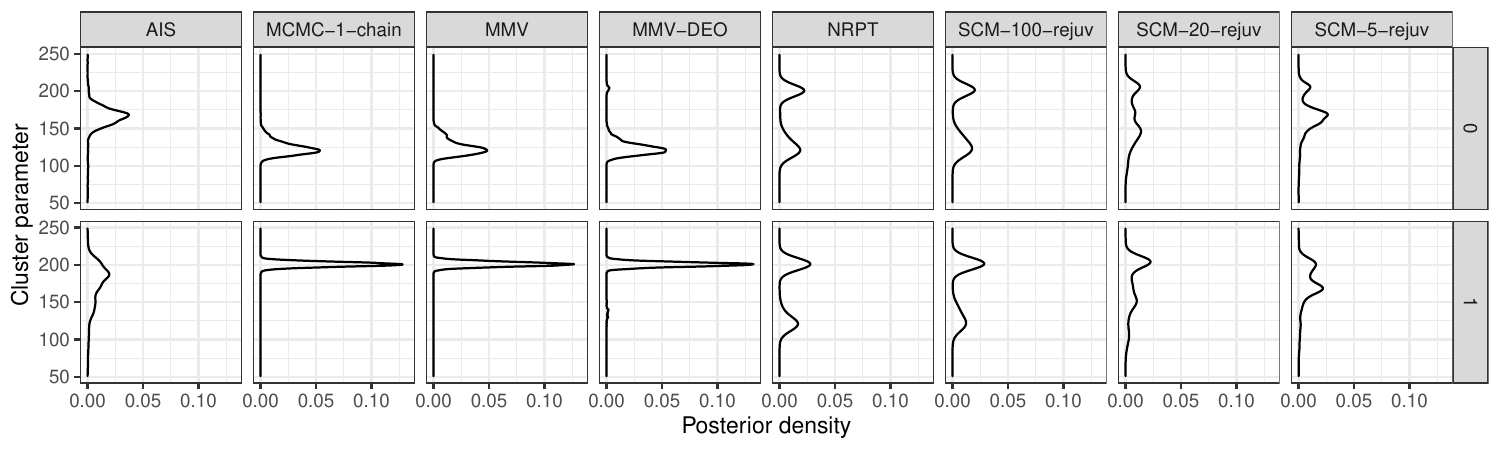} 
	\end{center}
	\caption{Mixture model example: eight approximations of the posterior distributions, showing for each approximation two of the 155 latent random variables, namely the two component mixtures' mean parameters.  Since the two latent random variables are exchangeable a posteriori (by label switching), the true marginal posterior distributions are identical for the two random variables. Amongst the approximation methods, multimodality is only captured by our algorithm (NRPT, 1.508min) and by a state-of-the-art sequential Monte Carlo combined with several rounds of rejuvenation (SCM-100, 5.064min). The benchmark is conservative in that all competing methods use a computational budget (in terms of both parallelism and wall clock time) greater or equal than NRPT.}
	\label{fig:mixture-means}
\end{figure}

Bayesian analysis of mixture models can give rise to a label-switching symmetry, leading to a  multimodal posterior distribution (see Appendix~\ref{sec:mixture-model}). 
In Figure~\ref{fig:mixture-means} we compare the following inference methods on a label-switching posterior distribution: our proposed algorithm (NRPT), an MCMC run based on a single chain (i.e.\ the exploration kernel alone), the stochastic optimization method of Miasojedow-Moulines-Vihola (MMV), DEO but optimized using MMV (MMV-DEO), Annealed Importance Sampling (AIS) \cite{neal_annealed_2001}, and a sequential Monte Carlo based on a sequence of annealed distributions \cite{del_moral_sequential_2006} (labelled SCM as before). For both SCM and AIS, we use the adaptive scheme of \cite{Zhou2016} (see Figure~\ref{fig:scm-diag} in Appendix~\ref{app:mixture-example} for diagnostics of the SCM adaptation).  The number of iterations are set so that the method with the smallest wall clock time is our proposed algorithm (NRPT: 1.508min, MMV: 2.019min, MMV-DEO:1.665min, AIS: 1.887min, SCM: 1.778min--5.064min). For all methods, timing includes the time spent to perform schedule optimization. 

Amongst MCMC methods, only NRPT correctly captures the multimodality of the target distribution. This is confirmed by the trace plots of the three MCMC methods, shown in Figure~\ref{fig:trace-plots-mcmcs} in Appendix~\ref{app:mixture-example}. MMV automatically selected $N=8$ by targeting an swap acceptance probability of $23\%$. Post-adaptation, the swap acceptance probability was $27\%$. For NRPT we use $N=30$ but to avoid penalizing MMV for a lack of parallelism, we limited all methods to use no more than $8$ threads. After schedule optimization, NRPT estimates the global communication barrier to a value of $\hat \Lambda \approx 8$, and the average swap acceptance probability was $72\%$ (see Figure~\ref{fig:nrpt-diag} in Appendix~\ref{app:mixture-example} for more NRPT diagnostics).

\subsection{Multimodality arising from single cell, whole genome copy number inference}\label{sec:chromobreak}

In this section we describe an application to copy number inference in which multimodality arises from the unknown ploidy of a cancer cell. The likelihood of this model is based on a hidden Markov model over $n=6\,206$ observations, and after analytic marginalization of the corresponding $6\,206$ hidden states, the multimodal sampling problem is defined over $d=30$ remaining latent variables. The model is described in Appendix~\ref{app:copy-number-inference}, in this section we summarize the results concerning the performance of the algorithms.

We compare the quality of posterior approximations from four samplers on the High-Grade Serous Ovarian cancer dataset from \cite{dorri_efficient_2020}. The four methods compared are: the stochastic optimization adaptive PT method MMV (adaptation selected $7$ chains, average post-adaptation swap acceptance, $28.7\%$), DEO but optimized using MMV (MMV-DEO, adaptation selected $11$ chains, average post-adaptation swap acceptance, $26.9\%$) our proposed algorithm algorithm (NRPT with 25 and 50 chains, average post-optimization swap acceptance of respectively $48.8\%$ and $73.5\%$), and inference based on a single MCMC chain. As in Section~\ref{sec:mixture-example}, to avoid overly penalizing MMV for a lack of parallelism, we limited all methods to use no more than $8$ threads, hence obtaining the following wall clock running times including schedule optimization when applicable: MMV, 15.91h; MMV+DEO, 14.33h; 25 chains NRPT, 8.809h; 50 chains NRPT, 15.35h; 1 chain MCMC, 2.91h. The running time of MMV was dominated by adaptation (88\% of the wall clock time), and is higher than the schedule optimization time used by NRPT (67\%).  

Refer to Figure~\ref{fig:nstates-traces-small} for a comparison of the trace plots between all three methods. Since the experimental protocol provides only proportionality between read counts and copy number, not absolute expected read count for a given copy number, the likelihood does not distinguish between a given copy number profile, and another one obtained by genome duplication of the same profile (refer to Appendix~\ref{app:copy-number-inference} for details). The prior favours the former, but when local copy number events occur after the genome duplication it may be possible to detect recently evolved genome duplication. Challenging multimodality arises when these subsequent events involve only small regions. In such case, the tension between the prior distribution favouring low ploidy and the noisy observation favouring higher ploidy creates a multimodal posterior distribution. We use data from one such cell in the following experiments to illustrate a realistic multimodal inference problem.

The multimodality of the posterior distribution was successfully captured by NRPT but not by a single-chain MCMC nor by the stochastic optimization method MMV (Figure~\ref{fig:nstates-traces-small}). MMV failed to achieve any round trip because the learnt schedule is highly suboptimal: while the mean swap acceptance probability across chains is $28.7\%$, the minimum across chains is nearly zero due to the noise in the optimization procedure. 
The difficulty of exploring this particular multimodal target is compounded by the fact that the state space of each chromosome's copy number is random, being upper-bounded by  random variables $m_c$ corresponding to each chromosome $c$. So in order to perform a jump doubling the cell's ploidy the sampler has to increase a large number of discrete variables $m_c$ simultaneously. This is reflected in the posterior distribution of the model variables denoted $m_c$, shown in Figure~\ref{fig:multimodality-chromo} in Appendix~\ref{app:chromobreak}. Hence the local exploration kernel, which in this example samples the variables $m_c$ one at the time, is insufficient to jump mode, and only excursions through the prior can achieve mode jumping, via a regeneration based on sampling from the prior at $\beta = 0$. 

NRPT used 50 chains, estimated $\hat \Lambda = 13$ (see Figure~\ref{fig:chromo-diag} in Appendix~\ref{app:chromobreak}), and converged quickly in this large scale example, performing drastic changes in the annealing schedule in the first 8 rounds then relatively little changes in rounds 8 through the final tenth round (Figure~\ref{fig:chromo-diag}). 

Comparing the performance of the two NRPT variants, we estimated a round trip rate of $9\times 10^{-4}$ for 25-chain, and $6\times 10^{-3}$ for 50-chain. Notice the increase being more than a factor two supports the use of large $N$ for exploration of complex multimodal posterior distributions. Neither MMV nor MMV-DEO achieved any round trips in the post-burn-in iterations.

\begin{figure}[ht]
	\begin{center}
		\includegraphics[width=\linewidth]{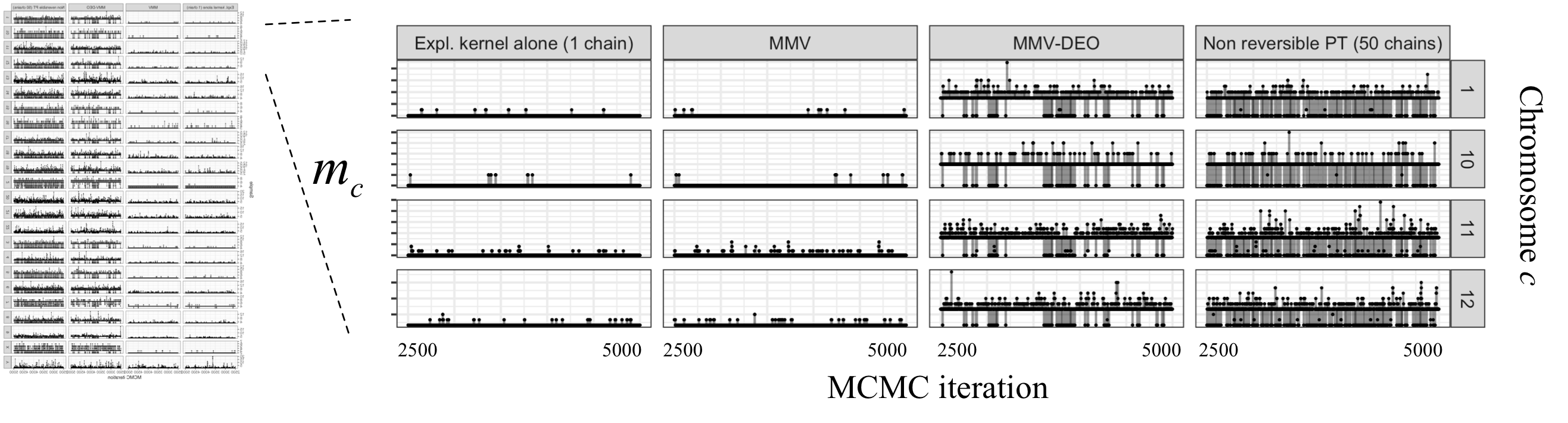}
	\end{center}
	\caption{Trace plots for the copy number inference problem. Each facet shows a post burn-in trace plot for one of the discrete random variables $m_c$. Facet rows are indexed by chromosomes. Facet columns are indexed by MCMC approximation methods. The trace plots show that only NRPT frequently jumps between the two modes. MMV and the exploration kernel alone do not achieve any jump, due to the required concerted changes in a large number of discrete variables. See Figure~\ref{fig:nstates-traces} in Appendix~\ref{app:chromobreak} for a larger version.}
	\label{fig:nstates-traces-small}
\end{figure}

\section{Discussion}
We have established that under the ELE assumption, non-reversible PT dominates reversible PT in terms of round trip rates for any choice of $N$ and annealing schedule $\partition$ (Corollary \ref{cor_rtr}). Moreover we have characterized the optimal round trip rate in terms of the communication barrier $\Lambda$, which we have used to identify an intrinsic limitation of both reversible and non-reversible PT (Section~\ref{sec_asymptotic_rtr}), and to design a novel schedule optimization method (Section~\ref{sec_computation}). 
We have shown that this schedule optimization method outperforms previous schedule optimizers in a range of numerical simulations, and we are already aware of other research groups that have successfully applied our methodology in astronomy \cite{collaboration_first_2021,issaoun2021persistent} and engineering \cite{langmore_hamiltonian_2021}. 

By combining (\ref{eq:rtr_bound_DEO_SEO}) with the estimate $r^* \approx 1/2$ derived from (\ref{eq_tau_lambda_N}), our theoretical results indicate a $1/r^* \approx 2$-fold increase in the round trip rate of DEO compared to SEO in an idealized setting where the ELE assumption holds and the schedule, including the number of chains, is well-tuned. 
The efficiency gains can be more pronounced during schedule optimization, in the course of which the number of chains can be initially larger than the optimal number of chains specified in Section~\ref{sec_tuning_N} as a consequence of Theorem~\ref{thm_efficiency_convergence}. 
Moreover, our experimental results show that previous methods to tune PT schedules can be fragile. 
Compared to these previous methods, we have observed speedups of one to two orders of magnitude thanks to our schedule optimization method.

\bibliographystyle{apalike}
\bibliography{report}
\clearpage

\begin{appendices}

\section*{\Large{Supplementary material}}
\section{Markov kernel for the index process}\label{app_markov_index_process}
For SEO, initialize $(I_0,\eps_0)=(i,\eps)$. We then define the Markov transition kernel, $(I_{n+1},\eps_{n+1})|(I_n,\eps_n)\sim P^{\mathrm{SEO}}((I_n,\eps_n),\cdot)$ in two steps. In the first step,  simulate
\begin{equation}\label{I_n_update}
I_{n+1} | (I_n,\eps_n)=(i,\eps) \sim 
\begin{cases}
(i + \eps) \wedge N \vee 0 &\text{ with probability }s^{(i,i+\eps)}, \\
i &\text{ otherwise,}
\end{cases}    
\end{equation}
where the expression ``$\wedge N \vee 0$" enforces the annealing parameter boundaries. In the second step, independently sample $\eps_{n+1} \sim \text{Unif}\{-1, +1\}$. 

Similarly for DEO, initialize $(I_0,\eps_0)=(i_0,\eps_0)$. Analogous to the SEO construction, we define $(I_{n+1},\eps_{n+1})|(I_n,\eps_n)\sim P^{\mathrm{DEO}}((I_n,\eps_n),\cdot)$ in two steps. We first update $I_{n+1} | (I_n,\eps_n)=(i,\eps)$ as in \eqref{I_n_update}, but apply the deterministic update in the second step,
\begin{equation}
\eps_{n+1} = 
\begin{cases}
\eps &\text{ if }I_{n+1}= i+\eps, \\
-\eps &\text{ otherwise.}
\end{cases}
\end{equation}

\section{Proof of Theorem \ref{prop_round_trip}}\label{app_round_trip}
To simplify notation for the rest of the proof, let $T_{\uparrow}$ and $T_{\downarrow}$ be the hitting times to the target and reference defined by,
\begin{align}
T_{\uparrow} = \min\{n: (I_n,\eps_n) = (N,1)\}, \;\;\;\ T_{\downarrow} = \min\{n: (I_n,\eps_n) = (0,-1)\}.
\end{align}
We will also denote
\begin{align}
s_i=s^{(i-1,i)},\qquad
r_i=r^{(i-1,i)}.    
\end{align}

\begin{enumerate}
    \item [(a)] If we define $a_\bullet^i=\E_{\mathrm{SEO}}(T_\bullet|I_0=i)$ for $i=0,\dots,N$ and $\bullet \in\{\uparrow,\downarrow\}$, then we have
    \begin{align}\label{round_trip_proof_rev}
        \E_{\mathrm{SEO}}(T)=a_{\uparrow}^0+a_{\downarrow}^N.
    \end{align}
    By the Markov property, for $i=1,\dots,N-1$, $a_{\bullet}^i$ satisfies the recursion
    \begin{align}\label{round_trip_proof_eq1}
        a_{\bullet}^i
        = \frac{1}{2}s_{i+1}(a^{i+1}_\bullet+1) + \frac{1}{2}s_{i}(a^{i-1}_\bullet+1)+\frac{1}{2}(r_{i+1}+r_i)(a^{i}_\bullet+1).
    \end{align}
    For $i=1,\dots,N$, we substitute in $b_{\bullet}^i=a_{\bullet}^i-a_\bullet^{i-1}$ into \eqref{round_trip_proof_eq1}. After simplification, $b^{i}_\bullet$ satisfies the following recursive relation
    \begin{align}\label{round_trip_proof_rev_recursion}
        -2 = s_{i+1}b^{i+1}_\bullet-s_ib^{i}_\bullet.
    \end{align}
    The solutions to \eqref{round_trip_proof_rev_recursion} are
    \begin{align}
        s_ib^{i}_\bullet  &=s_1b^{1}_\bullet-2(i-1),\label{round_trip_proof_rev_recursion_sol1}
    \end{align}
    or equivalently
    \begin{align}
        s_ib^{i}_\bullet 
        &=s_Nb^{N}_\bullet+2(N-i).\label{round_trip_proof_rev_recursion_sol2}
    \end{align}
    We now deal with the case of $\uparrow$ and $\downarrow$ separately.
    
    \begin{enumerate}
        \item To determine $a_\uparrow^0$, we note that if $I_0=0$ then $I_1=1$ with probability $\frac{1}{2}s_1$ and $I_1=0$ otherwise. So $a_\uparrow^0$ satisfies
    \begin{align}
        a^{0}_\uparrow=\frac{1}{2}s_1(a^1_\uparrow+1)+\left(1-\frac{1}{2}s_1\right)(a^0_\uparrow+1),
    \end{align}
    or equivalently
    \begin{align}\label{round_trip_proof_eq2}
        s_1b^{1}_\uparrow=-2.
    \end{align}
    Substituting this into \eqref{round_trip_proof_rev_recursion_sol1} implies $s_ib^{i}_\uparrow=-2i$. By summing $b^{i}_\uparrow=a^{i}_\uparrow-a^{i-1}_\uparrow$ from $i=1,\dots,N$ and, noting $a^{N}_\uparrow=0$, we get
    \begin{align}\label{round_trip_proof_A_01}
        a^{0}_\uparrow=\sum_{i=1}^N\frac{2i}{s_i}.
    \end{align}
    
    \item Similarly to determine $a_\downarrow^N$ we note that if $I_0=N$ then $I_1=N-1$ with probability $\frac{1}{2}s_N$ and $I_1=N$ otherwise. So $a_\downarrow^N$ satisfies
    \begin{align}
        a^{N}_\downarrow=\frac{1}{2}s_N(a^{N-1}_\downarrow+1)+\left(1-\frac{1}{2}s_N\right)(a^N_\downarrow+1),
    \end{align}
    or equivalently
    \begin{align}\label{round_trip_proof_eq4}
        s_Nb^{N}_\downarrow=2.
    \end{align}
  Substituting this into \eqref{round_trip_proof_rev_recursion_sol2} implies $s_ib^{i}_\downarrow=2+2(N-i)$. By summing $b^{i}_\downarrow=a^{i}_\downarrow-a^{i-1}_\downarrow$ from $i=1,\dots,N$ and, noting $a^{0}_\downarrow=0$, we get
    \begin{align}\label{round_trip_proof_A_N0}
        a^{N}_\downarrow=\sum_{i=1}^N\frac{2(N-i)+2}{s_i}.
    \end{align}
    \end{enumerate}

    Substituting in \eqref{round_trip_proof_A_01} and \eqref{round_trip_proof_A_N0} into \eqref{round_trip_proof_rev}, it follows that
    \begin{align}
        \E_{\mathrm{SEO}}(T)
        &=\sum_{i=1}^N\frac{2i}{s_i}+\sum_{i=1}^N\frac{2(N-i)+2}{s_i}\\
        &=2(N+1)\sum_{i=1}^N\frac{1}{s_i}\\
        &=2N(N+1)+2(N+1)\sum_{i=1}^N\frac{r_i}{s_i}.
    \end{align}

    \item[(b)] If we define $a^{i,\eps}_\bullet=\E_{\mathrm{DEO}}(T_\bullet|I_0=i,\eps_0=\eps)$ for $i=0,\dots, N$, $\eps \in \{+,-\}$ and $\bullet \in \{\uparrow,\downarrow\}$, then we have
    \begin{align}\label{round_trip_proof_L}
        \E_{\mathrm{DEO}}(T)=a^{0,-}_\uparrow+a^{N,+}_\downarrow.
    \end{align}
    Note that for $i=1,\dots,N-1$ $a_\bullet^{i,\eps}$ satisfies the recursion relations
    \begin{align}
        a_\bullet^{i,+} &= s_{i+1}(a_\bullet^{i+1,+} + 1) + r_{i+1}(a_\bullet^{i,-}+ 1),\label{round_trip_proof_L_eq1+}\\
        a_\bullet^{i,-} &= s_{i}(a_\bullet^{i-1,-} + 1) + r_{i}(a_\bullet^{i,-} + 1).\label{round_trip_proof_L_eq1-}
    \end{align}
    If we substitute $c_\bullet^i=a_\bullet^{i,+}+a_\bullet^{i-1,-}$, and $d_\bullet^i=a_\bullet^{i,+}-a_\bullet^{i-1,-}$ into \eqref{round_trip_proof_L_eq1+} and \eqref{round_trip_proof_L_eq1-} and simplify, we obtain
    \begin{align}
        a_\bullet^{i+1,+}-a_\bullet^{i,+}&=r_{i+1}d_\bullet^{i+1}-1\label{round_trip_proof_L_eq2+},\\
        a_\bullet^{i,-}-a_\bullet^{i-1,-}&=r_{i}d_\bullet^{i}+1.\label{round_trip_proof_L_eq2-}
    \end{align}
    By subtracting and adding \eqref{round_trip_proof_L_eq2+} and \eqref{round_trip_proof_L_eq2-}, we obtain a joint recursion relation for $c_\bullet^{i}$ and $d_\bullet^{i}$ of the form
    \begin{align}
        c_\bullet^{i+1}-c_\bullet^{i}&=r_{i+1}d_\bullet^{i+1}+r_{i}d_\bullet^{i},\label{round_trip_proof_L_eq3C}\\
        d_\bullet^{i+1}-d_\bullet^{i}&=r_{i+1}d_\bullet^{i+1}+r_{i}d_\bullet^{i}-2.\label{round_trip_proof_L_eq3D}
    \end{align}
    Note that \eqref{round_trip_proof_L_eq3D} can be rewritten as
    \begin{align}
        s_{i+1}d_\bullet^{i+1}-s_id_\bullet^{i}=-2.\label{round_trip_proof_L_Drecursion}
    \end{align}
    Once one has expressions for $c_\bullet^{i}$ and $d_\bullet^{i}$, then we can recover $a_\bullet^{i,\eps}$ by using
    \begin{align}
        a_\bullet^{i,+}&=\frac{c_\bullet^{i}+d_\bullet^{i}}{2},\label{round_trip_proof_L_CDtoA+}\\
        a_\bullet^{i-1,-}&= \frac{c_\bullet^{i}-d_\bullet^{i}}{2}.\label{round_trip_proof_L_CDtoA-}
    \end{align}
    We now deal with the $\uparrow$ and $\downarrow$ cases separately.
    
    \begin{enumerate}
        \item Note that $a_\uparrow^{0,-}=a_\uparrow^{0,+}+1$. We can substitute this into \eqref{round_trip_proof_L_eq2+} to get $s_1d_\uparrow^{1}=-2$, which combined with \eqref{round_trip_proof_L_Drecursion} implies
        \begin{align}
            s_id_\uparrow^{i}=-2i.\label{round_trip_proof_L_Dsol1}
        \end{align}
        Since $a_\uparrow^{N,+}=0$ we have $c_\uparrow^{N}=-d_\uparrow^{N}$, so by summing \eqref{round_trip_proof_L_eq3C} we get
        \begin{align}
            2a_\uparrow^{0,-}&=c_\uparrow^{1}-d_\uparrow^{1}\\
            &=c_\uparrow^{N}-d_\uparrow^{1}-\sum_{i=1}^{N-1}(c_\uparrow^{i+1}-c_\uparrow^{i})\\
            &=-d_\uparrow^{N}-d_\uparrow^{1}-\sum_{i=1}^{N-1}(r_{i+1}d_\uparrow^{i+1}+r_id_\uparrow^{i})\\
            &=-s_{N}d_\uparrow^{N}-s_1d_\uparrow^{1}-2\sum_{i=1}^Nr_id_\uparrow^{i}.\label{round_trip_proof_L_A-1}
        \end{align}
        After substituting in \eqref{round_trip_proof_L_Dsol1} into \eqref{round_trip_proof_L_A-1}, we obtain
        \begin{align}\label{round_trip_proof_A_0-sol}
            a_\uparrow^{0,-}=N+1+\sum_{i=1}^N\frac{2ir_i}{s_i}.
        \end{align}
        
        \item
        Note that $a_\downarrow^{N,+}=a_\uparrow^{N,-}+1$. We can substitute this expression into \eqref{round_trip_proof_L_eq2-} to get $s_Nd_\downarrow^{N}=2$, which combined with \eqref{round_trip_proof_L_Drecursion} implies
        \begin{align}
            s_id_\downarrow^{i}=2(N-i+1).\label{round_trip_proof_L_Dsol2}
        \end{align}
        Since $a_\downarrow^{0,-}=0$ we have $c_\downarrow^{1}=d_\downarrow^{1}$, so by summing \eqref{round_trip_proof_L_eq3C} we get
        \begin{align}
            2a_\downarrow^{N,+}&=c_\downarrow^{N}+d_\downarrow^{N}\\
            &=c_\downarrow^{1}+d_\downarrow^{N}+\sum_{i=1}^{N-1}(c_\downarrow^{i+1}-c_\downarrow^{i})\\
            &=d_\downarrow^{1}+d_\downarrow^{N}+\sum_{i=1}^{N-1}(r_{i+1}d_\downarrow^{i+1}+r_id_\downarrow^{i})\\
            &=s_{1}d_\downarrow^{1}+s_Nd_\downarrow^{N}+2\sum_{i=1}^Nr_id_\downarrow^{i}.\label{round_trip_proof_L_A+0}
        \end{align}
        After substituting in \eqref{round_trip_proof_L_Dsol2} into \eqref{round_trip_proof_L_A+0}, we obtain
        \begin{align}\label{round_trip_proof_A_N+sol}
            a_\downarrow^{N,+}=N+1+\sum_{i=1}^N\frac{2(N-i+1)r_i}{s_i}.
        \end{align}

    \end{enumerate}
    Finally, by substituting \eqref{round_trip_proof_A_0-sol} and \eqref{round_trip_proof_A_N+sol} into \eqref{round_trip_proof_L}, it follows that
    \begin{align}
        \E_{\mathrm{DEO}}(T)=2(N+1)+2(N+1)\sum_{i=1}^N\frac{r_i}{s_i}.
    \end{align}
\end{enumerate}

\section{Proof of Theorem \ref{theorem_rate_est}}\label{app_regularity}
We begin by recalling the following estimate for $r(\beta,\beta')$ from Proposition 1 in \cite{predescu2004incomplete}: 
\begin{proposition}\label{theorem_rate}\cite{predescu2004incomplete} Suppose $V^3$ is integrable with respect to $\pi_0$ and $\pi$. For $\beta\leq \beta'$, let $\bar{\beta}= \frac{\beta+\beta'}{2}$, then we have
\begin{align}
r(\beta,\beta')&=(\beta'-\beta)\lambda(\bar{\beta})+O(|\beta'-\beta|^3),\label{r_theorem}
\end{align}
where $\lambda$ satisfies \eqref{def_lambda}.
\end{proposition}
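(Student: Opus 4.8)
The plan is to establish the three assertions of the theorem in turn: the closed form \eqref{def_lambda}, the $C^2$ regularity, and the integral estimate \eqref{estimate_rate_integral_r}. Throughout I would work with the explicit representation $\pi^{(\beta)}(x)\,\mathrm{d}x=\mathcal{Z}(\beta)^{-1}L(x)^{\beta}\,\mathrm{d}\pi_0(x)$, which turns every quantity into an integral against the fixed product measure $\pi_0\otimes\pi_0$ with $\beta$-dependent weights $L(x)^{\beta}$, so that all limits and derivatives in $\beta$ reduce to interchange-of-limit questions controlled by moments of $V$. The single bound I would exploit repeatedly is that for $\beta\in[0,1]$ one has $L^{\beta}\le 1+L$ pointwise, whence by (A3), $\int|V|^{k}L^{\beta}\,\mathrm{d}\pi_0\le \E_{\pi_0}[|V|^{k}]+\mathcal{Z}(1)\,\E_{\pi}[|V|^{k}]<\infty$ for $k\le 3$, uniformly in $\beta$. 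This furnishes $\delta$-independent integrable dominating functions everywhere below.

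First I would derive \eqref{def_lambda} directly from the definition \eqref{lambda_limit_def}, independently of the Predescu formula. For $\delta>0$ with $\beta+\delta\le 1$, writing $\min\{0,\delta(V(x)-V(y))\}=-\delta\,(V(y)-V(x))^{+}$ gives
\begin{align*}
\frac{r(\beta,\beta+\delta)}{\delta}=\int\!\!\int \frac{1-\exp\!\big(-\delta\,(V(y)-V(x))^{+}\big)}{\delta}\,\frac{L(x)^{\beta+\delta}}{\mathcal{Z}(\beta+\delta)}\,\frac{L(y)^{\beta}}{\mathcal{Z}(\beta)}\,\mathrm{d}\pi_0(x)\,\mathrm{d}\pi_0(y).
\end{align*}
As $\delta\to 0^{+}$ the integrand converges pointwise to $(V(y)-V(x))^{+}L(x)^{\beta}L(y)^{\beta}\mathcal{Z}(\beta)^{-2}$, and using $1-e^{-t}\le t$ together with the uniform weight bound it is dominated by an integrable function; dominated convergence then yields $\lambda(\beta)=\E[(V_2^{(\beta)}-V_1^{(\beta)})^{+}]$. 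Since $V_1^{(\beta)}-V_2^{(\beta)}\stackrel{d}{=}V_2^{(\beta)}-V_1^{(\beta)}$ (the copies are i.i.d.), $\E[(\cdot)^{+}]=\tfrac12\E|\cdot|$, giving \eqref{def_lambda}. The case $\delta<0$ follows from the symmetry $r(\beta,\beta')=r(\beta',\beta)$ and the endpoints $\beta\in\{0,1\}$ from the relevant one-sided limit.

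For the regularity I would write $\lambda(\beta)=\Phi(\beta)/(2\mathcal{Z}(\beta)^2)$ with $\Phi(\beta)=\int\!\int|V(x)-V(y)|\,L(x)^{\beta}L(y)^{\beta}\,\mathrm{d}\pi_0\,\mathrm{d}\pi_0$ and differentiate under the integral sign. Since $\partial_\beta L^{\beta}=-V\,L^{\beta}$, two derivatives pull down a factor $(V(x)+V(y))^2$, so the integrand of $\Phi''$ is bounded in modulus by $|V(x)-V(y)|\,(|V(x)|+|V(y)|)^2\,L(x)^{\beta}L(y)^{\beta}$, of total degree three in $V$ and hence uniformly dominated by the $k=3$ bound above (likewise $\mathcal{Z}''$ needs only degree two). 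This legitimizes the differentiation and shows $\Phi,\mathcal{Z}\in C^2([0,1])$ with continuous derivatives; as $\mathcal{Z}>0$, the quotient $\lambda$ is $C^2$. For \eqref{estimate_rate_integral_r} I would then invoke the midpoint expansion of Proposition~\ref{theorem_rate}, $r(\beta,\beta')=(\beta'-\beta)\lambda(\bar\beta)+O(|\beta'-\beta|^3)$ for $\beta\le\beta'$, and compare it with the exact midpoint quadrature identity $\Lambda(\beta')-\Lambda(\beta)=\int_\beta^{\beta'}\lambda=(\beta'-\beta)\lambda(\bar\beta)+\tfrac{1}{24}(\beta'-\beta)^3\lambda''(\xi)$; the just-established bound $\|\lambda''\|_\infty<\infty$ makes the remainder $O(|\beta'-\beta|^3)$, and subtracting gives $r(\beta,\beta')=|\Lambda(\beta')-\Lambda(\beta)|+O(|\beta'-\beta|^3)$.

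The main obstacle I anticipate is the first step: making the passage to the limit in \eqref{lambda_limit_def} rigorous when both the integrand and the weight $L(x)^{\beta+\delta}/\mathcal{Z}(\beta+\delta)$ depend on $\delta$, and doing so uniformly enough to also cover the one-sided limits at $\beta=0$ and $\beta=1$. This is precisely where (A3) enters, through the bound $L^{\beta}\le 1+L$ that supplies a single $\delta$-independent integrable dominating function; once this domination is secured, both the regularity (differentiation under the integral) and the quadrature comparison are essentially routine.
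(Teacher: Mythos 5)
Your attempt does not prove the assigned statement; it assumes it. The statement under review is Proposition~\ref{theorem_rate} itself, the third-order-accurate midpoint estimate \eqref{r_theorem}, and in your third step you explicitly ``invoke the midpoint expansion of Proposition~\ref{theorem_rate}'' in order to compare it with the midpoint quadrature identity --- that is circular. What your argument actually establishes from first principles is only the first-order statement: the dominated-convergence computation in your first step proves that the limit \eqref{lambda_limit_def} exists and equals \eqref{def_lambda}, i.e.\ $r(\beta,\beta+\delta)=\delta\,\lambda(\beta)+o(\delta)$. That is strictly weaker than \eqref{r_theorem}: the entire content of the Proposition is that evaluating $\lambda$ at the \emph{midpoint} $\bar\beta$ upgrades the error from $o(\delta)$ to $O(|\delta|^3)$, and dominated convergence can never produce a quantitative remainder of that kind. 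The missing idea is a second-order cancellation. Setting $g(\delta)=r(\bar\beta-\delta/2,\bar\beta+\delta/2)$, the symmetry $r(\beta,\beta')=r(\beta',\beta)$ makes $g$ even, and one must show $g(\delta)=|\delta|\,u(\delta)$ with $u$ twice continuously differentiable and even, so that $u(\delta)=u(0)+O(\delta^2)$ with $u(0)=\lambda(\bar\beta)$. Concretely this requires expanding $1-\exp\bigl(\min\{0,\delta(V'-V)\}\bigr)$ to second order in $\delta$ \emph{and} simultaneously expanding the two annealed measures $\pi^{(\bar\beta\pm\delta/2)}$ about $\bar\beta$, then verifying that the order-$\delta^2$ contributions cancel precisely because of the symmetric placement about the midpoint, with the third-moment assumption (A3) controlling the remainder. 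This is the computation carried out in Proposition~1 of \cite{predescu2004incomplete}; note an asymmetric placement (e.g.\ evaluating $\lambda$ at $\beta$ rather than $\bar\beta$) genuinely has an $O(\delta^2)$ error, so the cancellation is not cosmetic. You flag the uniform domination in the DCT step as the main anticipated obstacle, but that step is routine given $L^\beta\le 1+L$; the nontrivial obstacle is the midpoint cancellation, which your proposal never addresses.

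For calibration against the paper: the paper does not reprove Proposition~\ref{theorem_rate} either --- it imports it from \cite{predescu2004incomplete} --- and what Appendix~\ref{app_regularity} actually proves is the $C^2$ regularity of $\lambda$ (your second step, which matches Proposition~\ref{prop_regularity} and Lemma~\ref{lemma_regularity2} closely, down to the $L^\beta\le 1+L$ domination and the degree-three integrand bound) together with the passage from \eqref{r_theorem} to \eqref{estimate_rate_integral_r} via the midpoint-rule error bound (your third step). So your steps two and three correctly reproduce the paper's derivation of Theorem~\ref{theorem_rate_est}, and your first step is a correct and worthwhile derivation of \eqref{def_lambda} that the paper leaves implicit; but as a blind proof of the assigned Proposition the attempt has a genuine gap, since the $O(|\beta'-\beta|^3)$ midpoint estimate --- on which the paper's $O(N\|\partition\|^3)$ bounds in Corollary~\ref{cor_invariant} and downstream schedule-optimization errors rest --- is nowhere established.
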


When $\beta<\beta'$, $(\beta'-\beta)\lambda(\bar{\beta})$ is the Riemann sum for $\int_{\beta}^{\beta'}\lambda(b)db$ with a single rectangle. Let $C^k([0,1])$ be the set of $k$ times continuously differentiable function on $[0,1]$. If $\lambda\in C^2([0,1])$, then standard midpoint rule error estimates yield
\begin{align}\label{riemann_sum_est}
\left|\int_{\beta}^{\beta'}\lambda(b)\mathrm{d}b-(\beta'-\beta)\lambda(\bar{\beta})\right|\leq \frac{1}{12}\left\|\frac{\mathrm{d}^2\lambda}{\mathrm{d}\beta^2}\right\|_\infty|\beta'-\beta|^3.
\end{align}
Therefore, if $\lambda\in C^2([0,1])$ then we can substitute
\eqref{riemann_sum_est} in \eqref{r_theorem} in Proposition \ref{theorem_rate}, to obtain Theorem \ref{theorem_rate_est}. This follows from Proposition \ref{prop_regularity}.

\begin{proposition}\label{prop_regularity}
If $V^{k}$ is integrable with respect to $\pi_0$ and $\pi$, then $\lambda\in C^{k-1}([0,1])$.		
\end{proposition}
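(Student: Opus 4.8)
The plan is to write $\lambda$ explicitly as a ratio of two integrals and prove that numerator and denominator are each $(k-1)$-times continuously differentiable by differentiating under the integral sign, the whole difficulty being to produce a single $\beta$-independent integrable dominating function. Starting from \eqref{def_lambda}, since $V_1^{(\beta)},V_2^{(\beta)}$ are i.i.d.\ copies of $V(X^{(\beta)})$ with $X^{(\beta)}\sim\pi^{(\beta)}$, and $\pi^{(\beta)}(\mathrm{d}x)=\mathcal{Z}(\beta)^{-1}e^{-\beta V(x)}\pi_0(\mathrm{d}x)$, I would write
\[
\lambda(\beta)=\frac{N(\beta)}{2\,\mathcal{Z}(\beta)^2},\qquad N(\beta)=\E_{\pi_0\otimes\pi_0}\!\left[\,|V(X)-V(Y)|\,e^{-\beta(V(X)+V(Y))}\,\right],
\]
where $X,Y$ are independent under $\pi_0\otimes\pi_0$. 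Since $\mathcal{Z}$ is continuous and strictly positive on the compact set $[0,1]$ (the integrand $e^{-\beta V}$ is positive and $\pi_0$ is a probability measure), it is bounded away from $0$, so it suffices to show $N,\mathcal{Z}\in C^{k-1}([0,1])$; the conclusion then follows because $t\mapsto t^{-2}$ is smooth away from $0$ and products of $C^{k-1}$ functions are $C^{k-1}$.

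The key estimate I would isolate is that for every $\beta\in[0,1]$ and every real value $v$ one has $e^{-\beta v}\le \max(1,e^{-v})\le 1+e^{-v}$ (check the two cases $v\ge 0$ and $v<0$). Combined with the identity $e^{-V(x)}\pi_0(\mathrm{d}x)=\pi(\mathrm{d}x)$ (which holds because $V=-\log L$ and $\mathcal{Z}(1)=\int L\,\pi_0=\int\pi=1$), this yields, for any exponent $a$,
\[
\E_{\pi_0}\!\left[\,|V|^a\,e^{-\beta V}\,\right]\le \E_{\pi_0}\!\left[\,|V|^a(1+e^{-V})\,\right]=\E_{\pi_0}\!\left[\,|V|^a\,\right]+\E_{\pi}\!\left[\,|V|^a\,\right],
\]
a bound \emph{uniform} in $\beta\in[0,1]$ and finite whenever $a\le k$ by the integrability hypothesis. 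This is precisely the device that converts the awkward $\beta$-dependence of $\pi^{(\beta)}$ into integrability against only the two endpoints $\pi_0$ and $\pi$.

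With this in hand I would differentiate formally: $\mathcal{Z}^{(m)}(\beta)=\E_{\pi_0}[(-V)^m e^{-\beta V}]$ and
\[
N^{(m)}(\beta)=\E_{\pi_0\otimes\pi_0}\!\left[\,|V(X)-V(Y)|\,(-(V(X)+V(Y)))^m\,e^{-\beta(V(X)+V(Y))}\,\right].
\]
For $\mathcal{Z}^{(m)}$ the integrand is dominated by $|V|^m(1+e^{-V})$, integrable for $m\le k$; for $N^{(m)}$ the bounds $|V(X)-V(Y)|\le|V(X)|+|V(Y)|$ and $|V(X)+V(Y)|^m\le(|V(X)|+|V(Y)|)^m$ give the $\beta$-independent dominating function $(|V(X)|+|V(Y)|)^{m+1}(1+e^{-V(X)})(1+e^{-V(Y)})$, whose expectation under $\pi_0\otimes\pi_0$ expands by the binomial theorem into products $\E_{\pi_0}[|V|^a(1+e^{-V})]\,\E_{\pi_0}[|V|^{m+1-a}(1+e^{-V})]$ and is finite as soon as $m+1\le k$. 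Hence the standard theorem on differentiation under the integral sign applies up to order $m=k-1$ for both $N$ and $\mathcal{Z}$, and continuity of the top derivatives $N^{(k-1)},\mathcal{Z}^{(k-1)}$ follows from dominated convergence using the same dominating functions as $\beta_n\to\beta$. Therefore $N,\mathcal{Z}\in C^{k-1}([0,1])$ and so is $\lambda$.

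The main obstacle, and the reason the hypothesis is $V^k$ rather than $V^{k-1}$, is the extra factor $|V(X)-V(Y)|$ in $N$: each of the $k-1$ derivatives contributes one power of $V$, but this pre-existing factor contributes one more, so controlling $N^{(k-1)}$ requires integrability of $|V|^{k}$. Getting the domination uniform over $[0,1]$ (as opposed to merely pointwise in $\beta$, which would only give differentiability but not a clean statement) is exactly what the elementary bound $e^{-\beta V}\le 1+e^{-V}$ buys, and verifying that this bound makes every term integrable against $\pi_0\otimes\pi_0$ is the one calculation I would carry out in full.
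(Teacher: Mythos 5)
Your proposal is correct and follows essentially the same route as the paper's proof: the same decomposition $\lambda=g/(2\mathcal{Z}^2)$ into a numerator integral against $\pi_0\otimes\pi_0$ and the squared normalizing constant, the same uniform domination device $L^\beta=e^{-\beta V}\le 1+e^{-V}=1+L$ (which converts $\beta$-dependent integrability into integrability against the endpoints $\pi_0$ and $\pi$), and the same differentiation under the integral sign with the extra factor $|V(x)-V(y)|$ accounting for the loss of one derivative. The only cosmetic difference is that the paper routes the integrability bookkeeping through an intermediate lemma on $\pi^{(\beta)}$-integrability of $V^j$, whereas you expand the dominating function binomially; these are equivalent calculations.
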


\begin{proof}
Suppose $V^k$ is integrable with respect to $\pi_0$ and $\pi$, we want to show here that $\lambda:[0,1]\to\R_+$ given by
\begin{align}\label{lamda_integral_form}
\lambda(\beta)=\frac{1}{2}\int_{\mathcal{X}^2}|V(x)-V(y)|\pi^{(\beta)}(x)\pi^{(\beta)}(y)\mathrm{d}x\mathrm{d}y
\end{align}
is in $C^{k-1}([0,1])$. If we define $L(x,y)=L(x)L(y)$ and $\pi_0(x,y)=\pi_0(x)\pi_0(y)$, we can rewrite \eqref{lamda_integral_form} as
\begin{align}
\lambda(\beta)
&=\frac{1}{2\mathcal{Z}(\beta)^2}\int_{\mathcal{X}^2}|V(x)-V(y)|L(x,y)^\beta\pi_0(x,y)\mathrm{d}x\mathrm{d}y\\
&= \frac{g(\beta)}{2\mathcal{Z}(\beta)^2},
\end{align}
where $\mathcal{Z},g:[0,1]\to \R_+$ are defined by
\begin{align}
\mathcal{Z}(\beta)&=\int_\mathcal{X} L(x)^\beta\pi_0(x)\mathrm{d}x,\\
g(\beta)&=\int_{\mathcal{X}^2}|V(x)-V(y)| L(x,y)^\beta\pi_0(x,y)\mathrm{d}x\mathrm{d}y.
\end{align}
Since $\mathcal{Z}(\beta)>0$ on $[0,1]$, if we can show that $\mathcal{Z},g\in C^{k-1}([0,1])$ then it implies that $\lambda\in C^{k-1}([0,1])$. This is established in Lemma \ref{lemma_regularity2}.
\end{proof}

\begin{lemma}\label{lemma_regularity}
If $V^k$ is integrable with respect to $\pi_0$ and $\pi$ for $k\in \N$. Then for all $\beta\in [0,1]$, $j\leq k$, $V^j$ is $\pi^{(\beta)}$-integrable.
\end{lemma}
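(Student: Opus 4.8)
The plan is to reduce the claim to the two endpoint integrability hypotheses (at $\beta = 0$ and $\beta = 1$) by a uniform-in-$\beta$ domination argument. First I would note that for any $j \le k$ and any $x \in \mathcal{X}$ we have $|V(x)|^j \le 1 + |V(x)|^k$ (considering separately the cases $|V(x)| \le 1$ and $|V(x)| > 1$). Consequently $V^j$ is $\pi^{(\beta)}$-integrable whenever $V^k$ is, so it suffices to establish the case $j = k$.

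Recalling that $\pi^{(\beta)}(x) = L(x)^\beta \pi_0(x) / \mathcal{Z}(\beta)$, the quantity to control is $\int_\mathcal{X} |V(x)|^k L(x)^\beta \pi_0(x)\,\mathrm{d}x$. The key elementary fact is the interpolation bound $L(x)^\beta \le \max\{1, L(x)\} \le 1 + L(x)$, valid for every $\beta \in [0,1]$: when $L(x) \ge 1$ then $L(x)^\beta \le L(x)$, and when $L(x) < 1$ then $L(x)^\beta \le 1$. Since $L(x)\pi_0(x) = \pi(x)$, this yields
\[
\int_\mathcal{X} |V(x)|^k L(x)^\beta \pi_0(x)\,\mathrm{d}x \le \int_\mathcal{X} |V(x)|^k \pi_0(x)\,\mathrm{d}x + \int_\mathcal{X} |V(x)|^k \pi(x)\,\mathrm{d}x,
\]
and both terms on the right are finite precisely by the hypothesis that $V^k$ is integrable with respect to $\pi_0$ and $\pi$.

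It then remains to divide by $\mathcal{Z}(\beta)$, so I would finally verify that $\mathcal{Z}(\beta) \in (0, \infty)$: strict positivity follows from the assumed strict positivity of the densities, while the same domination gives $\mathcal{Z}(\beta) = \int_\mathcal{X} L(x)^\beta \pi_0(x)\,\mathrm{d}x \le \int_\mathcal{X} (1 + L(x))\pi_0(x)\,\mathrm{d}x = 2$. Combining the displays shows $\int_\mathcal{X} |V(x)|^k \pi^{(\beta)}(x)\,\mathrm{d}x < \infty$, which proves the lemma. I do not anticipate any genuine obstacle; the only point requiring care is the uniform bound $L^\beta \le 1 + L$, which is exactly the device that lets the two endpoint hypotheses control the entire interpolating family $\{\pi^{(\beta)}\}_{\beta \in [0,1]}$.
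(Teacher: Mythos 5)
Your proof is correct and takes essentially the same route as the paper's: the core device in both is the uniform domination $L(x)^\beta \le 1 + L(x)$ for $\beta \in [0,1]$, which bounds the annealed integral by the sum of the two endpoint integrals against $\pi_0$ and $\pi$ (your explicit check that $\mathcal{Z}(\beta) \in (0,\infty)$ is a point the paper leaves implicit). The only cosmetic difference is that you reduce $j \le k$ to $j = k$ via the pointwise bound $|V|^j \le 1 + |V|^k$, whereas the paper uses Jensen's inequality, $\int_\mathcal{X} |V(x)|^j\pi^{(\beta)}(x)\,\mathrm{d}x \le \bigl(\int_\mathcal{X} |V(x)|^k\pi^{(\beta)}(x)\,\mathrm{d}x\bigr)^{j/k}$; both are equally elementary.
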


\begin{proof}
We begin by noting that for $L>0$ and $\beta\in[0,1]$, we have $L^\beta\leq 1+ L$. This implies
\begin{align}
&\int_\mathcal{X} |V(x)|^k\pi^{(\beta)}(x)\mathrm{d}x\\
=~&\frac{1}{\mathcal{Z}(\beta)}\int_\mathcal{X} |V(x)|^k L(x)^\beta\pi_0(x)\mathrm{d}x\\
\leq~& \frac{1}{\mathcal{Z}(\beta)}\int_\mathcal{X} |V(x)|^k \pi_0(x)\mathrm{d}x+\frac{1}{\mathcal{Z}(\beta)}\int_\mathcal{X} |V(x)|^k L(x)\pi_0(x)\mathrm{d}x\\
=~&\frac{\mathcal{Z}(0)}{\mathcal{Z}(\beta)}\int_\mathcal{X} |V(x)|^k\pi_0(x)\mathrm{d}x+\frac{\mathcal{Z}(1)}{\mathcal{Z}(\beta)}\int_\mathcal{X} |V(x)|^k\pi(x)\mathrm{d}x\\
<~&\infty.
\end{align}
Therefore since $V^k$ is $\pi_0$ and $\pi$-integrable, $V^k$ is $\pi^{(\beta)}$-integrable. Finally by Jensen's inequality we have for $j\geq k$,
\begin{align}
\int_\mathcal{X} |V(x)|^j\pi^{(\beta)}(x)\mathrm{d}x \leq \left(\int_\mathcal{X} |V(x)|^k\pi^{(\beta)}(x)\mathrm{d}x\right)^{\frac{j}{k}}<\infty.
\end{align}
\end{proof}

\begin{lemma}\label{lemma_regularity2} Suppose $V^k$ is integrable with respect to $\pi_0$ and $\pi$ for some $k\in\N$ then:
\begin{enumerate}
\item [(a)] $\mathcal{Z}\in C^k([0,1])$ with derivatives satisfying,
\begin{align}
\frac{d^j\mathcal{Z}}{d\beta^j}=\int_\mathcal{X} (-1)^j V(x)^j L(x)^\beta\pi_0(x)\mathrm{d}x,
\end{align}
for $j\leq k$.
\item [(b)] $g\in C^{k-1}([0,1])$ with derivatives satisfying,
\begin{align}
\frac{d^jg}{d\beta^j}=\int_{\mathcal{X}^2} (-1)^j |V(x)-V(y)|(V(x)+V(y))^j L(x,y)^\beta\pi_0(x,y)\mathrm{d}x\mathrm{d}y,
\end{align}
for $j< k$.
\end{enumerate}
\end{lemma}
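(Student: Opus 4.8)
The plan is to prove both parts by differentiating under the integral sign, with an induction on the order $j$ of the derivative, the whole content being the justification of the interchange and the continuity of the top-order derivative. The key observation is that since $V=-\log L$ we have $\frac{\partial^j}{\partial\beta^j}L(x)^\beta=(-1)^jV(x)^jL(x)^\beta$, and for the product kernel $L(x,y)=L(x)L(y)$, using $-\log L(x,y)=V(x)+V(y)$, we have $\frac{\partial^j}{\partial\beta^j}L(x,y)^\beta=(-1)^j(V(x)+V(y))^jL(x,y)^\beta$. Thus the stated formulas are immediate once differentiation and integration are shown to commute.

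For part (a) I would argue inductively: assuming the formula for $\mathcal{Z}^{(j-1)}$ with $j\le k$, I examine the difference quotient of the integrand in $\beta$. By the mean value theorem, for $\beta,\beta+\delta\in[0,1]$ this equals $(-1)^jV(x)^jL(x)^\xi\pi_0(x)$ for some $\xi$ between $\beta$ and $\beta+\delta$, hence $\xi\in[0,1]$. Using the elementary bound $L^\xi\le 1+L$ for $\xi\in[0,1]$ (already exploited in Lemma~\ref{lemma_regularity}), the difference quotients are dominated uniformly in $\delta$ by the $\delta$-free function $|V(x)|^j(1+L(x))\pi_0(x)$. This envelope is integrable since $\int_\mathcal{X}|V|^j(1+L)\pi_0\,\mathrm{d}x=\int_\mathcal{X}|V|^j\pi_0\,\mathrm{d}x+\mathcal{Z}(1)\int_\mathcal{X}|V|^j\pi\,\mathrm{d}x$ (using $L\pi_0=\mathcal{Z}(1)\pi$), and both terms are finite by Lemma~\ref{lemma_regularity} as $j\le k$. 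Dominated convergence then lets me pass $\delta\to0$ inside the integral, yielding the formula for $\mathcal{Z}^{(j)}$; continuity of $\mathcal{Z}^{(k)}$ follows from one more dominated-convergence argument, since along any $\beta_n\to\beta$ the integrands converge pointwise and are bounded by the same envelope $|V|^k(1+L)\pi_0$.

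For part (b) the argument is structurally identical, with $L(x,y)^\beta$ in place of $L(x)^\beta$ and integration over $\mathcal{X}^2$ against $\pi_0(x,y)=\pi_0(x)\pi_0(y)$. The only new ingredient is controlling the extra algebraic factor: after $j$ differentiations the candidate integrand is $(-1)^j|V(x)-V(y)|\,(V(x)+V(y))^jL(x,y)^\beta\pi_0(x,y)$, and to dominate I would bound $|V(x)-V(y)|\,|V(x)+V(y)|^j\le(|V(x)|+|V(y)|)^{j+1}$ and expand binomially into a finite sum of terms $|V(x)|^a|V(y)|^b$ with $a+b=j+1$. Using $L(x,y)^\xi\le(1+L(x))(1+L(y))$ for $\xi\in[0,1]$, each term factorizes, so its integral is $\big(\int_\mathcal{X}|V|^a(1+L)\pi_0\,\mathrm{d}x\big)\big(\int_\mathcal{X}|V|^b(1+L)\pi_0\,\mathrm{d}x\big)$, finite by Lemma~\ref{lemma_regularity} whenever $a,b\le k$. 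Since $a+b=j+1$, this holds precisely when $j+1\le k$, i.e. $j\le k-1$, which is exactly the source of the one-order loss: differentiating $g$ costs a power of $V$, so only $k-1$ derivatives can be controlled and $g\in C^{k-1}([0,1])$.

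The main obstacle is supplying a single integrable, $\beta$-independent dominating function for the difference quotients; once the mean value theorem reduces each quotient to the derivative evaluated at an intermediate $\xi\in[0,1]$, and the bound $L^\xi\le1+L$ removes the $\beta$-dependence, everything reduces to the moment bounds of Lemma~\ref{lemma_regularity}. The only genuinely delicate bookkeeping is in part (b), where tracking that the highest power required is $j+1$ is what pins down the $C^{k-1}$ regularity.
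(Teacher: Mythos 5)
Your proof is correct and follows essentially the same route as the paper's: compute the $\beta$-derivatives of the integrands, dominate them uniformly in $\beta$ by the envelope obtained from $L^\xi\le 1+L$ together with the moment bounds of Lemma~\ref{lemma_regularity}, and differentiate under the integral sign. The only difference is presentational—you inline the standard mean-value-theorem/dominated-convergence proof of the Leibniz rule (which the paper cites as a black box) and spell out the binomial bookkeeping in part (b) that the paper leaves implicit, correctly identifying $j+1\le k$ as the source of the $C^{k-1}$ loss.
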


\begin{proof}
\begin{enumerate}
\item [(a)] Let $h(x,\beta)=  L(x)^\beta\pi_0(x)=\exp(-\beta V(x))\pi_0(x)$ which satisfies
\begin{align}
 \frac{\partial^j }{\partial\beta^j}h(x,\beta)= (-1)^jV(x)^j L(x)^\beta\pi_0(x).
 \end{align}
Note for all $\beta\in[0,1]$ and $j\leq k$,
\begin{align}
\sup_{\beta\in[0,1]}\left|\frac{\partial^j }{\partial\beta^j}h(x,\beta)\right|&\leq  |V(x)|^j \pi_0(x)+ |V(x)|^j L(x)\pi_0(x).\label{h_bound}
\end{align}
The left hand side of \eqref{h_bound} dominates $ \frac{\partial^j h}{\partial\beta^j}$ uniformly in $\beta$ and is integrable by Lemma \ref{lemma_regularity}. The result follows using the Leibniz integration rule. 

\item [(b)] Let $\tilde{h}(x,y,\beta)=|V(x)-V(y)| L(x,y)^\beta\pi_0(x,y)$. By using $\log L(x,y)=-V(x)-V(y)$, we get
\begin{align}
 \frac{\partial^j }{\partial\beta^j}\tilde{h}(x,y,\beta) = (-1)^j|V(x)-V(y)|(V(x)+V(y))^j L(x,y)^\beta\pi_0(x,y).
\end{align}
Similar to (a), we have for all $\beta\in[0,1]$, $j\leq k-1$,
\begin{align}
\sup_{\beta\in[0,1]}\left| \frac{\partial^j }{\partial\beta^j}\tilde{h}(x,y,\beta)\right|
 \leq~&  |V(x)-V(y)||V(x)+V(y)|^j \pi_0(x,y) \notag\\
&+  |V(x)-V(y)||V(x)+V(y)|^j L(x,y)\pi_0(x,y),\label{htilde_bound}
\end{align}
The left hand side of \eqref{htilde_bound} dominates $ \frac{\partial^j \tilde{h}}{\partial\beta^j}$ uniformly in $\beta$. It is integrable by Lemma \ref{lemma_regularity} and using the fact that $V^k$ is integrable with respect to $\pi_0$ and $\pi$. The result follows using the Leibniz integration rule.
\end{enumerate}
\end{proof}

\section{Proof of Theorem \ref{thm_efficiency_convergence}}\label{app_efficiency_convergence}
We first note that (b) and (c) follow immediately from (a) and Corollary \ref{cor_rtr}. So to prove Theorem \ref{thm_efficiency_convergence} it is sufficient to show (a).

We use the fact that for all $\mathcal{P}_N$
\begin{align}\label{eff_est1}
    \sum_{i=1}^Nr^{(i-1,i)}\leq E(\mathcal{P}_N)\leq \frac{1}{\min_js^{(j-1,j)}}\sum_{i=1}^Nr^{(i-1,i)}.
\end{align}
By Theorem \ref{theorem_rate_est}, we have
$\min_j s_j=1+O(\|\mathcal{P}_N\|)$ and, by Corollary \ref{cor_invariant}, we have  $\sum_{i=1}^Nr^{(i-1,i)}=\Lambda+O(N\|\mathcal{P}_N\|^3)$ which combined with \eqref{eff_est1} implies
\begin{equation}
    E(\mathcal{P}_N)=\Lambda+O(\|\mathcal{P}_N\|).
\end{equation}
Therefore $E(\mathcal{P}_N)$ converges to $\Lambda$ at a $O(\|\mathcal{P}_N\|)$ rate as $\|\mathcal{P}_N\|\to 0$.

\section{Multimodal decomposition of communication barrier}\label{sec_multimodal}

Since PT is often used to sample from multimodal targets, it is natural to ask how the communication barrier behaves under the presence of modes. Similar to \cite{woodard2009conditions}, we partition $\statespace$ into the disjoint union $\statespace=\bigcup_{k=1}^K\statespace_k$ where we have $\statespace_k$ represents region of the $k$-th mode of $\pi$ and the target decomposes as a mixture of its modes its modes $\pi(x)=\sum_{k=1}^Kp_k\pi(x|\statespace_k)$ where $p_k=\pi(\statespace_k)$ and $\pi(x|\statespace_k)=p_k^{-1}\pi(x) \mathbb{I}_{\statespace_k}(x)$ are the probability mass and distribution of the $k$-th mode. If we assume that the reference distribution puts the same relative mass on each mode as the target, $\pi_0(\statespace_k)=\pi(\statespace_k)$, then $\pi^{(\beta)}$ decomposes as
\begin{equation}\label{pi_anneal_mode_def}
\pi^{(\beta)}(x)\propto \sum_{i=1}^Kp_k\pi^{(\beta)}(x|\statespace_k).
\end{equation}
Similarly $V(x)=\sum_{k=1}^KV_k(x)\mathbb{I}_{\statespace_k}(x)$ where $V_k(x)=-\log(\pi(x|\statespace_k)/\pi_0(x|\statespace_k))$. Define $\lambda_{k,k'}(\beta)$ and $\Lambda_{k,k'}$ as the local and global communication barrier between mode $k$ and $k'$ by
\begin{align}
\lambda_{k,k'}(\beta)=\frac{1}{2}\E\left[|V_k^{(\beta)}-V_{k'}^{(\beta)}|\right],
\qquad \Lambda_{k,k'}=\int_0^1\lambda_{k,k'}(\beta)\mathrm{d}\beta,
\end{align}
where $V_k^{(\beta)}\stackrel{d}{=}V_k(X_k^{(\beta)})$ for $X_k^{\beta}\sim \pi^{(\beta(}(x|\statespace_k)$ and $V_k^{(\beta)},V_{k'}^{(\beta)}$ are independent. An immediate consequence of \eqref{def_lambda} in Theorem \ref{theorem_rate_est} implies the following decomposition of the communication barrier.
\begin{proposition}[Multimodal decomposition]\label{prop_multimodal}
If $\pi_0(\statespace_k)=\pi(\statespace_k)$ then,
\begin{align}\label{mode_local}
\lambda(\beta)=\sum_{k=1}^K\sum_{k'=1}^Kp_{k}p_{k'}\lambda_{k,k'}(\beta),
\qquad \Lambda=\sum_{k=1}^K\sum_{k'=1}^Kp_{k}p_{k'}\Lambda_{k,k'},
\end{align}
\end{proposition}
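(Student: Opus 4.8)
The plan is to read off the decomposition directly from the representation of $\lambda$ in Theorem \ref{theorem_rate_est}, namely $\lambda(\beta)=\tfrac12\E[|V^{(\beta)}_1-V_2^{(\beta)}|]$ with $V^{(\beta)}_1,V_2^{(\beta)}$ i.i.d.\ copies of $V(X^{(\beta)})$ for $X^{(\beta)}\sim\pi^{(\beta)}$, and then to condition on which mode each of the two independent draws lands in. The whole argument is a single application of the tower property once the mixture structure of $\pi^{(\beta)}$ has been pinned down.

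First I would record the one place the hypothesis $\pi_0(\statespace_k)=\pi(\statespace_k)$ is used. For $x\in\statespace_k$ we have $\pi(x)=p_k\,\pi(x\mid\statespace_k)$ and $\pi_0(x)=\pi_0(\statespace_k)\,\pi_0(x\mid\statespace_k)$, so the balance assumption $\pi_0(\statespace_k)=p_k$ cancels the mass factors and gives $L(x)=\pi(x\mid\statespace_k)/\pi_0(x\mid\statespace_k)$, i.e.\ $V(x)=V_k(x)$ on $\statespace_k$. Combined with \eqref{pi_anneal_mode_def}, this shows that $X^{(\beta)}\sim\pi^{(\beta)}$ can be generated by first drawing a mode label $\kappa$ with $\P(\kappa=k)=p_k$ and then $X^{(\beta)}\mid\{\kappa=k\}\sim\pi^{(\beta)}(\cdot\mid\statespace_k)$, with $V(X^{(\beta)})=V_\kappa(X^{(\beta)})$. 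In particular the conditional law of $V(X^{(\beta)})$ given $\kappa=k$ is exactly that of $V_k^{(\beta)}$.

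Next I would take two independent draws $X_1^{(\beta)},X_2^{(\beta)}$ with labels $\kappa,\kappa'$ and apply the tower property to $\lambda(\beta)=\tfrac12\E[|V(X_1^{(\beta)})-V(X_2^{(\beta)})|]$. Since $(\kappa,X_1^{(\beta)})$ and $(\kappa',X_2^{(\beta)})$ are independent with $\P(\kappa=k,\kappa'=k')=p_kp_{k'}$, conditioning yields
\begin{align}
\lambda(\beta)=\frac12\sum_{k=1}^K\sum_{k'=1}^Kp_kp_{k'}\,\E\big[|V_k^{(\beta)}-V_{k'}^{(\beta)}|\big]=\sum_{k=1}^K\sum_{k'=1}^Kp_kp_{k'}\,\lambda_{k,k'}(\beta),
\end{align}
the last equality being merely the definition of $\lambda_{k,k'}$ together with the independence of $V_k^{(\beta)}$ and $V_{k'}^{(\beta)}$. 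Integrating this identity over $\beta\in[0,1]$ and interchanging the finite double sum with the integral then gives $\Lambda=\sum_{k,k'}p_kp_{k'}\Lambda_{k,k'}$ via $\Lambda_{k,k'}=\int_0^1\lambda_{k,k'}(\beta)\,\mathrm{d}\beta$.

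The only step requiring genuine care is the identification of the mixture weights: one must verify that the weight attached to mode $k$ in $\pi^{(\beta)}$ is the constant $p_k$ and not some $\beta$-dependent quantity, and this is precisely what the balance assumption $\pi_0(\statespace_k)=\pi(\statespace_k)$ secures, through the identity $V=V_k$ on $\statespace_k$ and the representation \eqref{pi_anneal_mode_def}. Everything else — the conditioning computation and the interchange of the finite sum with the integral defining $\Lambda$ — is routine and immediate.
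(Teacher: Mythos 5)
Your argument is exactly the one the paper intends: the paper offers no written proof beyond calling the decomposition ``an immediate consequence'' of the representation $\lambda(\beta)=\tfrac{1}{2}\E[|V_1^{(\beta)}-V_2^{(\beta)}|]$ from Theorem \ref{theorem_rate_est}, and your conditioning/tower-property computation, together with the observation that $V=V_k$ on $\statespace_k$ under the balance assumption, is the natural way to fill in the details. Given a mixture representation of $\pi^{(\beta)}$ with $\beta$-independent weights $p_k$, everything you write is correct, and integrating in $\beta$ to obtain the $\Lambda$ identity is indeed routine.

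However, the step you single out as the crux --- that the weight attached to mode $k$ under $\pi^{(\beta)}$ is the constant $p_k$ --- is not in fact secured by the balance assumption; your appeal to \eqref{pi_anneal_mode_def} inherits a gap in the paper rather than closing it. The assumption $\pi_0(\statespace_k)=\pi(\statespace_k)=p_k$ does give $L=L_k:=\pi(\cdot|\statespace_k)/\pi_0(\cdot|\statespace_k)$ on $\statespace_k$, hence $V=V_k$ there and the identification of the conditional law of $\pi^{(\beta)}$ given $\statespace_k$ with the within-mode annealed path; but the mode weights come out as
\begin{equation*}
\pi^{(\beta)}(\statespace_k)=\frac{p_k\,\mathcal{Z}_k(\beta)}{\sum_{k'}p_{k'}\,\mathcal{Z}_{k'}(\beta)},\qquad \mathcal{Z}_k(\beta):=\int_{\statespace_k}L_k(x)^\beta\,\pi_0(x|\statespace_k)\,\mathrm{d}x .
\end{equation*}
The per-mode normalizing constants satisfy $\mathcal{Z}_k(0)=\mathcal{Z}_k(1)=1$, but for intermediate $\beta$ they depend on the internal geometry of each mode (a flat mode versus a sharply peaked one gives different $\mathcal{Z}_k(\beta)$), so $\pi^{(\beta)}(\statespace_k)$ is in general a $\beta$-dependent quantity different from $p_k$. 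The decomposition that actually follows from the conditioning argument is $\lambda(\beta)=\sum_{k,k'}w_k(\beta)w_{k'}(\beta)\lambda_{k,k'}(\beta)$ with $w_k(\beta)=\pi^{(\beta)}(\statespace_k)$; it reduces to \eqref{mode_local} exactly when the $\mathcal{Z}_k(\beta)$ coincide across $k$, e.g.\ for exchangeable modes --- the case the paper emphasizes immediately after the proposition. So your proof is faithful to the paper's reasoning, but to make it airtight you would need either to add this extra hypothesis or to verify $\mathcal{Z}_k(\beta)\equiv\mathcal{Z}_{k'}(\beta)$ in the setting at hand.
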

In particular \eqref{mode_local} implies $\Lambda$ is a weighted average over the communication barriers between modes. In the particular case where the modes are exchangeable, we have $\Lambda_{k,k'}=\Lambda$ and thus $\Lambda$ is exactly invariant to $K$. We remark that the meta-model used in this section is not realistic for practical problems as we do not know a priori the location of the modes. This meta-model is only used to obtain intuition on the multimodal scalability of the method. Section~\ref{sec:empirical} illustrates the empirical behaviour of the method in several genuinely challenging multimodal problems.

\section{High-dimensional scaling of communication barrier.}\label{sec_high_dim}
We determine here the asymptotic behaviour of $\lambda$ and $\Lambda$ when the dimension of $\statespace$ is large. To make the analysis tractable, we assume that $\pi_d(x)=\prod_{i=1}^d\pi(x_i)$ as in \cite{atchade_towards_2011,roberts2014minimising}. This provides a model for weakly dependent high-dimensional distributions. We only make this structural assumption on the state space and distribution to establish Proposition \ref{prop_high_dim} below.

The corresponding annealed distributions are thus given by
\begin{equation}\label{pi_d_def}
\pi_d^{(\beta)}(x)=\prod_{i=1}^d \pi^{(\beta)}(x_i).
\end{equation}
Let $\lambda_d$ and $\Lambda_d$ be the local and global communication barriers for $\pi_d$ respectively. 
\begin{proposition}[High Dimensional Scaling]\label{prop_high_dim}
Define $\sigma^2(\beta)=\Var(V^{(\beta)})$, for all $\beta \in[0,1]$ we have as $d\to\infty$,
\begin{align}\label{high_dim_local}
    \lambda_d(\beta)\sim \sqrt{\frac{d}{\pi}}\sigma(\beta),\qquad
    \Lambda_d \sim \sqrt{\frac{d}{\pi}}\int_0^1\sigma(\beta)\mathrm{d}\beta.
\end{align}
\end{proposition}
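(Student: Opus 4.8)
The plan is to exploit the product structure of $\pi_d$ to represent the energy as a sum of i.i.d.\ contributions, apply a central limit theorem to the difference of two independent energies, and then identify $\lambda_d$ with a rescaled Gaussian absolute moment. First I would observe that $L_d(x)=\prod_{i=1}^d L(x_i)$, so $V_d(x)=\sum_{i=1}^d V(x_i)$ and, under $\pi_d^{(\beta)}$ as in \eqref{pi_d_def}, the energy $V_d^{(\beta)}\deq\sum_{i=1}^d V_i^{(\beta)}$ is a sum of $d$ i.i.d.\ copies of $V^{(\beta)}$. Applying the representation \eqref{def_lambda} from Theorem \ref{theorem_rate_est} to the product model gives $\lambda_d(\beta)=\tfrac12\,\E|V_{d,1}^{(\beta)}-V_{d,2}^{(\beta)}|$ with $V_{d,1}^{(\beta)},V_{d,2}^{(\beta)}$ independent. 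Writing $D_d=V_{d,1}^{(\beta)}-V_{d,2}^{(\beta)}=\sum_{i=1}^d(V_{i,1}^{(\beta)}-V_{i,2}^{(\beta)})$ exhibits $D_d$ as a sum of $d$ i.i.d.\ symmetric (hence mean-zero) summands, each of variance $2\sigma^2(\beta)$, so the classical CLT yields $D_d/\sqrt d\Rightarrow \mathcal N(0,2\sigma^2(\beta))$ as $d\to\infty$.

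Next I would upgrade this weak convergence to convergence of the first absolute moment. The key point is that $\E[(D_d/\sqrt d)^2]=2\sigma^2(\beta)$ is constant in $d$, so $\{D_d/\sqrt d\}_d$ is bounded in $L^2$ and therefore uniformly integrable; combined with the CLT this gives $\tfrac{1}{\sqrt d}\E|D_d|\to \E|Z|$ for $Z\sim\mathcal N(0,2\sigma^2(\beta))$. Since $\E|Z|=\sqrt2\,\sigma(\beta)\sqrt{2/\pi}=2\sigma(\beta)/\sqrt\pi$, this produces $\lambda_d(\beta)=\tfrac12\E|D_d|\sim \tfrac12\sqrt d\cdot\tfrac{2}{\sqrt\pi}\sigma(\beta)=\sqrt{d/\pi}\,\sigma(\beta)$, the first claim in \eqref{high_dim_local}. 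Finiteness of $\sigma^2(\beta)$ and of the second moments used here follows from Assumption (A3) via Lemma \ref{lemma_regularity}.

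Finally, for the global barrier $\Lambda_d=\int_0^1\lambda_d(\beta)\,\mathrm d\beta$, I would pass the limit inside the integral. The Cauchy--Schwarz bound $\E|D_d|\le(\E D_d^2)^{1/2}=\sqrt{2d}\,\sigma(\beta)$ yields the uniform domination $\lambda_d(\beta)/\sqrt d\le \sigma(\beta)/\sqrt2$; since $\beta\mapsto\sigma(\beta)$ is continuous, hence bounded, on $[0,1]$ by the regularity arguments behind Lemma \ref{lemma_regularity2}, the dominating function is integrable. Dominated convergence applied to the pointwise limit $\lambda_d(\beta)/\sqrt d\to\sigma(\beta)/\sqrt\pi$ then gives $\Lambda_d\sim\sqrt{d/\pi}\int_0^1\sigma(\beta)\,\mathrm d\beta$.

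The main obstacle is not the CLT itself but the two interchanges of limits it conceals: promoting weak convergence to $L^1$-moment convergence, handled by the $L^2$-boundedness and uniform-integrability argument, and moving the $d\to\infty$ limit through the $\beta$-integral, handled by dominated convergence with the Cauchy--Schwarz envelope. Both rely on the third-moment integrability (A3), which guarantees the requisite moments are finite and depend regularly on $\beta$.
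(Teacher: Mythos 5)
Your proposal is correct and follows essentially the same route as the paper's proof: the same decomposition of the energy difference into $d$ i.i.d.\ mean-zero summands of variance $2\sigma^2(\beta)$, the same CLT plus $L^2$-boundedness/uniform-integrability argument to upgrade to convergence of the first absolute moment, and the same Cauchy--Schwarz envelope $\lambda_d(\beta)/\sqrt{d}\le\sigma(\beta)/\sqrt{2}$ feeding dominated convergence for $\Lambda_d$. No gaps to report.
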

It follows from Proposition \ref{prop_high_dim} that $\lambda_d$ and $\Lambda_d$ increase at a $O(d^{1/2})$ rate as $d\to\infty$.

\begin{proof}
For $k=1,2$, let us define $\textbf{V}_k^{(\beta)}\stackrel{d}{=}\textbf{V}(X_k^{(\beta)})$ where
$X_k^{(\beta)}\sim\pi_d^{(\beta)}$ and $\textbf{V}(x)=\sum_{i=1}^dV(x_i)$. The independence structure from Equation \eqref{pi_d_def} tells us that $\textbf{V}_k^{(\beta)}$ can be decomposed as  $\textbf{V}_k^{(\beta)}=\sum_{i=1}^dV_{ki}^{(\beta)}$ where $V_{ki}^{(\beta)}$ are iid with a distribution identical to $V^{(\beta)}$, and therefore
\begin{equation}
 \textbf{V}_1^{(\beta)}-\textbf{V}_2^{(\beta)}=\sum_{i=1}^d  V_{1i}^{(\beta)}-V_{2i}^{(\beta)}.
\end{equation}
The random variables $\{V_{1i}^{(\beta)}-V_{2i}^{(\beta)}\}_{i=1}^d$ are iid with mean zero and variance $2\sigma^2(\beta)$. By the central limit theorem,
\begin{equation}\label{CLT_rate}
\frac{\textbf{V}_1^{(\beta)}-\textbf{V}_2^{(\beta)}}{\sqrt{2\sigma^2(\beta)d}}
=\frac{1}{\sqrt{d}}\sum_{i=1}^d \frac{V_{1i}^{(\beta)}-V_{2i}^{(\beta)}}{\sqrt{2\sigma^2(\beta)}}
\xRightarrow[d\rightarrow\infty]{} \tilde{Z} \sim N(0,1).
\end{equation}
Thus we have
\begin{align}
\lambda_d(\beta)
&=\frac{1}{2}\E\left[| \textbf{V}_1^{(\beta)}-\textbf{V}_2^{(\beta)}|\right]\\
&=\frac{1}{2}\sqrt{2\sigma^2(\beta) d}\E\left[\left|\frac{\textbf{V}_1^{(\beta)}-\textbf{V}_2^{(\beta)}}{\sqrt{2\sigma^2(\beta)d}}\right|\right].\label{high_dim_lam_int}
\end{align}
The sequence of variables indexed by $d$ in the expectation in \eqref{high_dim_lam_int} is also uniformly integrable. This follows by noting that the second moment of the integrand in \eqref{high_dim_lam_int} is uniformly bounded in $d$:
\begin{align}
    \sup_{d} \E\left[\left|\frac{\textbf{V}_1^{(\beta)}-\textbf{V}_2^{(\beta)}}{\sqrt{2\sigma^2(\beta)d}}\right|^2\right] =\sup_d \frac{1}{2\sigma^2(\beta)d}\sum_{i=1}^d\mathrm{Var}\left[V_{1i}^{(\beta)}-V_{2i}^{(\beta)}\right]
    =1.
\end{align}
By $d\to\infty$ and using \eqref{CLT_rate} we have
\begin{equation}
\lim_{d\to\infty}\sqrt{\frac{2}{\sigma^2(\beta)d}}\lambda_d(\beta)=\E|\tilde{Z}|=\sqrt{\frac{2}{\pi}},
\end{equation}
which proves \eqref{high_dim_local}. 

To obtain the dimensional scaling limit for $\Lambda_d$, we use Jensen's inequality, 
\begin{align}
    \frac{\lambda_d(\beta)}{\sqrt{d}}
    &= \frac{1}{2\sqrt{d}}\E\left[| \textbf{V}_1^{(\beta)}-\textbf{V}_2^{(\beta)}|\right]\\
    &\leq \frac{1}{2\sqrt{d}}\sqrt{\Var\left[| \textbf{V}_1^{(\beta)}-\textbf{V}_2^{(\beta)}|\right]}\\
    &=\frac{\sigma(\beta)}{\sqrt{2}}.\label{high_dim_domination}
\end{align}
Finally, \eqref{high_dim_local}, \eqref{high_dim_domination} along with dominated convergence yield
\begin{align}
    \lim_{d\to\infty}\frac{\Lambda_d}{\sqrt{d}}
    =\int_0^1\lim_{d\to\infty}\frac{\lambda_d(\beta)}{\sqrt{d}}\mathrm{d}\beta 
    =\int_0^1\frac{\sigma(\beta)}{\sqrt{\pi}}\mathrm{d}\beta.
\end{align}
\end{proof}

\section{Scaling limit of index process}\label{app_scaling_limit}

\subsection{Scaled index process}\label{sec_scaled_index}

\begin{figure}
	\begin{center}
		\includegraphics[width = 0.9\linewidth]{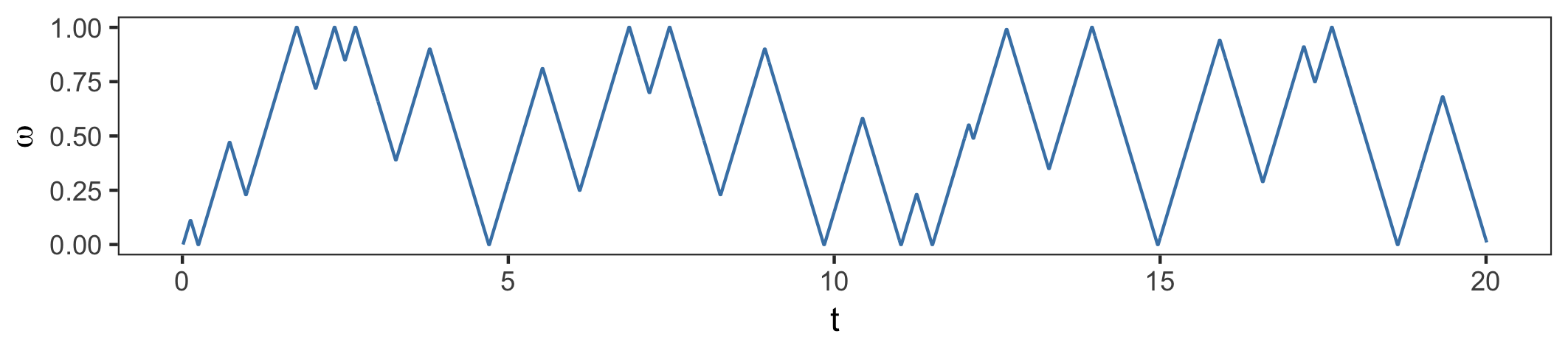}\\
		\includegraphics[width = 0.9\linewidth]{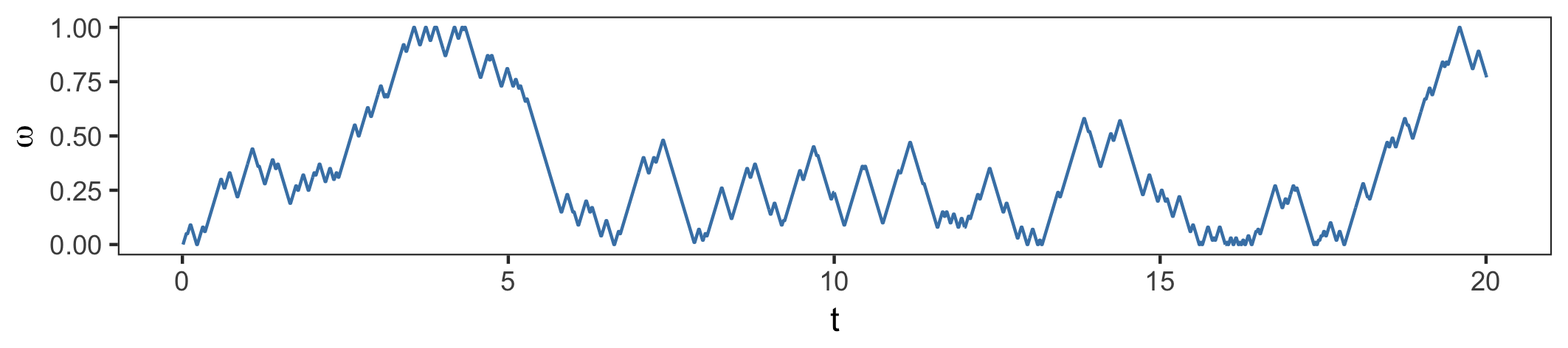}
	\end{center}
	\caption{Sample trajectories of $W(t)$ where $Z(t)=(W(t),\eps(t))$ under an optimal schedule generated by $G=F^{-1}$ for $F(\beta)=\Lambda(\beta)/\Lambda$ for $\Lambda=1$ (top) and $\Lambda=10$ (bottom) respectively.}
	\label{fig_trajectory_Z}
\end{figure}

To establish scaling limits for $(I_n,\eps_n)$, it will be convenient to work in a continuous time setting. To do this, we suppose the times that PT iterations occur are distributed according to a Poisson process $\{M(\cdot)\}$ with mean $\mu_N$. The number of PT iterations that occur by time $t\geq 0$ satisfies $M(t)\sim \mathrm{Poisson}(\mu_N t)$. We define the \emph{scaled index process} by $Z^N(t)=(W^N(t),\eps^N(t))$ where $W^N(t)=I_{M(t)}/N$ and $\eps^N(t)=\eps_{M(t)}$ (example realizations in  Figure~\ref{fig_trajectory_Z}). 

For convenience, we will denote $\beta_w=G(w)$ and use $z=(w,\eps)\in [0,1]\times\{-1,1\}$ to be a \emph{scaled index}. Define $C(\R_+,\mathcal{S})$ and $D(\R_+,\mathcal{S})$ to be set of functions $f:\R_+\to \mathcal{S}$ that are continuous and c\`adl\`ag respectively.

The process $Z^N\in D(\R_+,[0,1]\times\{-1,1\})$ takes values on the discrete set $\mathcal{P}_N^{\text{uniform}}\times\{-1,1\}$ and is only well-defined when $Z^N(0)=z_0\in \mathcal{P}_N^{\text{uniform}}\times\{-1,1\}$. To establish convergence, it is useful to extend it to a process $Z^N$ which can be initialized at any $z_0\in[0,1]\times\{-1,1\}$. Suppose $Z^N(0)=z_0\in [0,1]\times\{-1,1\}$ and let $T_1,T_2,\dots$ be the iteration times generated by the Poisson process $M$. We construct $Z^N(t)$ as follows: define $Z^N(t)=z_n$ for $t\in[T_{n},T_{n+1})$ and update $z_{n+1}|z_n$ via a transition kernel dependent on the communication scheme. We determine this transition kernel mirroring the construction from Section \ref{sec_domination}.

Before doing this, it will be useful to define the backward and forward shift operators $\Phi^N_-,\Phi^N_+:[0,1]\to[0,1]$ by,
\begin{align}\label{Phi_minus_def}
\Phi^N_- (w)=
\begin{cases}
w-\frac{1}{N} & w\in \left[\frac{1}{N},1\right],\\
\frac{1}{N}-w & w\in \left[0,\frac{1}{N}\right),
\end{cases}
\end{align}
and similarly,
\begin{align}\label{Phi_plus_def}
\Phi^N_+ (w)=
\begin{cases}
w+\frac{1}{N} & w\in \left[0,1-\frac{1}{N}\right],\\
1-\left(\frac{1}{N}-(1-w)\right) & w\in \left(1-\frac{1}{N},1\right].
\end{cases}
\end{align}
Intuitively $\Phi^N_\eps(w)$ represents the location in $[0,1]$ after $w$ moves a distance $\frac{1}{N}$ in the direction of $\eps$ with a reflection at $0$ and $1$.

\subsection{Scaled index process for reversible PT}
Under the reversible communication scheme, if $z_n=(w_n,\eps_n)\in \{0,1/N,\dots,1\}\times\{-1,1\}$, then we have $w_{n+1}=\Phi^N_{\eps_n}(w_n)$ if a swap successfully occurred and $w_{n+1}=w_n$ otherwise. In both cases, $\eps_{n+1}\sim \mathrm{Unif}\{-1,+1\}$. Since $\Phi^N_{\eps}(w)$ is not only well-defined for $w\in\mathcal{P}_N^{\text{uniform}}$ but for $w\in[0,1]$, we naturally extend this construction to any $w\in[0,1]$.

Formally, we generate $(w_{n+1},\eps_{n+1})$ in two steps. In the first step we simulate, 
\begin{align}
    w_{n+1}|w_n,\eps_n\sim \begin{cases}
\Phi^N_{\eps_n}(w_n) 	& \text{ with probability }	s(\beta_{w_n},\beta_{\Phi^N_{\eps_n}(w_n)}),\\
w_n 					& \text{ otherwise}.\\
\end{cases}
\end{align}
In the second step we simulate $\eps_{n+1}\sim \mathrm{Unif}\{-1,+1\}$. This defines a continuous time Markov pure jump process $W^N\in D(\R_+,[0,1])$ with jumps occurring according to an exponential of rate $\mu_N$ and is well defined when initialized at any state $w_0\in[0,1]$.  

From Theorem 19.2 in \cite{kallenberg2002foundations}, the infinitesimal generator for $W^N$ with reversible communication is
\begin{align}\label{generator_WN_def}
\mathcal{L}_{W^N}f(w)
&=\frac{\mu_N}{2}\sum_{\eps \in\{\pm 1\}}\left(f(\Phi_\eps^N(w))-f(w)\right)s(\beta_w,\beta_{\Phi^N_\eps(w)}),
\end{align}
where the domain $\mathcal{D}(\mathcal{L}_{W^N})$ is given by the set of functions such that $\mathcal{L}_{W^N}f$ is continuous. Since $\Phi^N_+, \Phi^N_-$ are continuous, we have $\mathcal{D}(\mathcal{L}_{W^N})=C([0,1])$.

\subsection{Scaled index process for non-reversible PT}
Before defining the transition kernel for the scaled index process under non-reversible communication, it will be convenient to define the propagation function $\Phi^N:[0,1]\times\{-1,1\}\to [0,1]\times\{-1,1\}$ for $z=(w,\eps)$,
\begin{align}
    \Phi^N(z)=
    \begin{cases}
    (\Phi^N_\eps(w),\eps) 	& \text{ if } \Phi^N_\eps(w)=w+\frac{\eps}{N},\\
    (\Phi^N_\eps(w),-\eps)	& \text{ otherwise},
    \end{cases}
\end{align}
and similarly the rejection function $R:[0,1]\times\{-1,1\}\to [0,1]\times\{-1,1\}$,
\begin{align}
    R(z)=(w,-\eps).
\end{align}

Under the non-reversible scheme, if $z_n=(w_n,\eps_n)\in \mathcal{P}_N^{\text{uniform}}\times\{-1,1\}$, then we have $z_{n+1}=\Phi^N(z_n)$ when a swap is accepted and $z_{n+1}=R(z_n)$ otherwise. Since $\Phi^N(z)$ and $R(z)$ are well-defined for all of $z\in [0,1]\times\{-1,1\}$, we naturally extend this construction to any $z\in[0,1]\times\{-1,1\}$. 

Formally, we generate $z_{n+1}$ according to the transition kernel,
\begin{align}
    z_{n+1}|z_n\sim \begin{cases}
    \Phi^N(z_n) 	& \text{ with probability } s(\beta_{w_n},\beta_{\Phi^N_{\eps_n}(w_n)}),\\
    R(z_n) 					& \text{ otherwise}.\\
\end{cases}
\end{align}
This defines a continuous time Markov pure jump process $Z^N\in D(\R_+,[0,1]\times\{-1,1\})$ with jumps occurring at an exponential of rate $\mu_N$. This process is well defined when initialized at any $z_0\in[0,1]\times\{-1,1\}$. 

Analogously to the reversible case, under non-reversible communication, the infinitesimal generator for $Z^N$ is
\begin{align}\label{generator_ZN_def}
\mathcal{L}_{Z^N}f(z)
&=\mu_N\left(f(\Phi^N(z))-f(z)\right)s(\beta_{w},\beta_{{\Phi}^N_\eps(w)})
+\mu_N\left(f(R(z))-f(z)\right)r(\beta_{w},\beta_{\Phi^N_\eps(w)}),
\end{align}
where $z=(w,\eps)$ and $\mathcal{D}(\mathcal{L}_{Z^N})$ is given by the set of functions $f$ such that $\mathcal{L}_{Z^N}f$ is continuous. Since $\Phi^N$ has discontinuities at $(\frac{1}{N},-1)$ and $(1-\frac{1}{N},1)$, we can verify that $f\in\mathcal{D}(\mathcal{L}_{Z^N})$ if and only if $f(w_0,-1)=f(w_0,1)$ for $w_0\in\{0,1\}$.

\subsection{Weak limits for scaled index processes}

Define $W\in C(\R_+,[0,1])$ to be the diffusion on $[0,1]$ with generator
\begin{equation}\label{generator_W_def}
\mathcal{L}_Wf(w)=\frac{1}{2}\frac{\mathrm{d}^2f}{\mathrm{d}w^2},
\end{equation}
where the domain $\mathcal{D}(\mathcal{L}_W)$ is the set of functions $f\in C^2([0,1])$ such that $f'(0)=f'(1)=0$. $W$ is a Brownian motion on $[0,1]$ with reflective boundary conditions admitting the uniform distribution $\mathrm{Unif}([0,1])$ as stationary distribution. 

Define $Z\in C(\R_+,[0,1]\times\{-1,1\})$ to be the PDMP on $[0,1]\times\{-1,1\}$ 
given by $Z(t)=(W(t),\eps(t))$ where $W(t)$ moves in $[0,1]$ with velocity $\eps(t)$ and the sign of $\eps(t)$ is reversed at the arrivals times of a non-homogeneous Poisson process of rate $\lambda(G(W(t)))G'(W(t))$ or when $W(t)$ reaches the boundary $\{(0,-1),(1,+1)\}$; see \cite{bierkens2018piecewise} for a discussion of PDMP on restricted domains. 
The infinitesimal generator of $Z$ is given by
\begin{equation}\label{generator_Z_def}
\mathcal{L}_{Z}f(z)=\eps\frac{\partial f}{\partial w}(z)+\lambda(\beta_w)G'(w)\left(f(R(z))-f(z)\right),
\end{equation}
for any $f\in \mathcal{D}(\mathcal{L}_Z)$, the set of functions $f\in C^1([0,1]\times\{-1,1\})$ such that $f(w_0,-1)=f(w_0,1)$ and $\frac{\partial f}{\partial w}(w_0,-1)=-\frac{\partial f}{\partial w}(w_0,1)$ for $w_0\in\{0,1\}$.

\subsection{Proof of scaling limit for reversible PT} \label{sec_reversible_proof_lemma}
\newcommand{\rd}{\mathrm{d}}
We will prove Theorem \ref{theorem_weak_limit_main}(a) by using Theorem 17.25 from \cite{kallenberg2002foundations}.
\begin{theorem}[Trotter, Sova, Kurtz, Mackevi\u{c}ius]\label{theorem_feller_convergence}
Let $X, X^1, X^2,\dots$ be Feller processes defined on a state space $S$ with generators $\mathcal{L}, \mathcal{L}_1, \mathcal{L}_2,\dots$ respectively. If $D$ is a core for $\mathcal{L}$, then the following statements are equivalent:
\begin{enumerate}
\item If $f\in D$, there exist $f_N\in\mathcal{D}(\mathcal{L}_N)$ such that $\|f_N-f\|_\infty\to 0$ and $\|\mathcal{L}_Nf_N-\mathcal{L}f\|_\infty\to 0$ as $N\to \infty$.
\item If $X^N(0)$ converges weakly to $X(0)$ in $S$, then $X^N$ converges weakly to $X$ in $D(\R_+,S)$.
\end{enumerate}
\end{theorem}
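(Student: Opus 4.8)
The plan is to route the entire equivalence through the semigroup formulation of Feller processes, reducing it to the Trotter--Kurtz approximation theorem for strongly continuous contraction semigroups and then transferring semigroup convergence to and from weak convergence in the Skorokhod space $D(\R_+,S)$. First I would recall that each Feller process $X^N$ (resp.\ $X$) corresponds to a Feller semigroup $(T_t^N)_{t\geq 0}$ (resp.\ $(T_t)_{t\geq 0}$) on $C_0(S)$ with generator $\mathcal{L}_N$ (resp.\ $\mathcal{L}$), characterised by Hille--Yosida, and I would work with the bounded resolvents $R_\lambda^N=(\lambda-\mathcal{L}_N)^{-1}$ and $R_\lambda=(\lambda-\mathcal{L})^{-1}$ for $\lambda>0$ rather than with the unbounded generators directly.

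The central step is to show that condition (1) is equivalent to strong convergence $T_t^N g\to T_t g$ for every $g\in C_0(S)$, uniformly for $t$ in compacts. Here the core hypothesis on $D$ is indispensable: it guarantees that the limiting data $\{\mathcal{L}f:f\in D\}$ determine $\mathcal{L}$, hence $(T_t)$, uniquely. I would first deduce resolvent convergence $R_\lambda^N g\to R_\lambda g$ from (1), using that $(\lambda-\mathcal{L}_N)f_N\to(\lambda-\mathcal{L})f$ on the core together with the contraction bound $\|R_\lambda^N\|\leq\lambda^{-1}$ to control the remainder on all of $C_0(S)$; resolvent convergence is then equivalent to uniform-on-compacts semigroup convergence for contraction semigroups, which is the analytic heart of the statement.

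With semigroup convergence in hand I would establish (2): convergence of the initial laws $\mu_N\Rightarrow\mu$ together with the Markov property express every finite-dimensional expectation $\E[\prod_{j=1}^k g_j(X^N(t_j))]$ as an iterated application of the semigroups against $\mu_N$, so semigroup convergence and $\mu_N\Rightarrow\mu$ yield convergence of all finite-dimensional distributions. To upgrade to weak convergence in $D(\R_+,S)$ I would prove tightness via the martingale $f_N(X^N(t))-\int_0^t\mathcal{L}_Nf_N(X^N(s))\,\mathrm{d}s$: the uniform bound on $\mathcal{L}_Nf_N$ supplied by (1) gives the Aldous--Rebolledo modulus estimates, and the compact containment inherited from the Feller property gives relative compactness; finite-dimensional convergence then pins the unique limit point to the law of $X$. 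For the converse (2)$\Rightarrow$(1), weak convergence of the processes with deterministic starts $X^N(0)=x$ (legitimate since $x\mapsto T_t^N g(x)$ is continuous by the Feller property) forces $T_t^N g(x)\to T_t g(x)$, and equicontinuity together with contraction promotes this to uniform semigroup convergence; running the resolvent argument in reverse recovers (1) precisely because $D$ is a core.

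The main obstacle I anticipate is the semigroup-approximation step: exponentiating unbounded generators is unavailable directly, so the argument must pass through the bounded resolvents and exploit the core hypothesis to ensure that the limit semigroup is uniquely determined---this is where the Hille--Yosida/Trotter--Kurtz machinery does the real work. A secondary difficulty is the tightness estimate, where the martingale characterisation and the uniform control of $\mathcal{L}_N f_N$ must be combined carefully to verify the Aldous condition and to rule out explosion of mass off compact sets.
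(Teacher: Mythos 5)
The first thing to note is that the paper does not prove this statement at all: it is quoted verbatim as Theorem 17.25 of \cite{kallenberg2002foundations} (the Trotter--Sova--Kurtz--Mackevi\u{c}ius theorem) and used as a black box in Appendix \ref{app_scaling_limit} to establish Theorem \ref{theorem_weak_limit_main}; the paper's contribution there is verifying the hypotheses (Feller property of the limit generators, cores, and condition (1)) for the specific processes $W^N$ and $Z^N$, not proving the equivalence itself. Your sketch is therefore not an alternative to anything in the paper; it is a reconstruction of the standard textbook proof --- generator convergence on a core $\Rightarrow$ resolvent convergence via the contraction bound $\|R_\lambda^N\|\le \lambda^{-1}$ $\Rightarrow$ strong semigroup convergence uniformly on compact time intervals (Trotter--Kato), then finite-dimensional convergence via the Markov property, tightness via approximate martingales, and a converse recovering (1) --- which is exactly the route taken in \cite{kallenberg2002foundations} and in Chapters 1 and 4 of \cite{ethier2009markov}. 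At that level of granularity your architecture is sound and correctly identifies where the core hypothesis and the Hille--Yosida machinery carry the load.

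Two steps in your outline, however, would fail as written. First, in the converse (2)$\Rightarrow$(1) you fix a deterministic start $X^N(0)=x$ and claim ``equicontinuity together with contraction'' upgrades the resulting pointwise convergence $T_t^N g(x)\to T_t g(x)$ to sup-norm convergence; Feller semigroups are not equicontinuous in general, so this upgrade is unavailable. The standard fix is to exploit the full strength of hypothesis (2): apply it with initial laws $\delta_{x_N}$ for \emph{arbitrary} convergent sequences $x_N\to x$, obtaining continuous convergence of $T_t^N g$, which is equivalent to uniform convergence on compact subsets of $S$; the decay of $g\in C_0(S)$ at infinity (or a one-point compactification argument) then controls the tail and yields convergence in sup norm. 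Second, compact containment is not ``inherited from the Feller property'' for free when $S$ is merely locally compact: it must be derived, e.g., by applying the already-established resolvent convergence to functions vanishing at infinity, or by compactifying $S$ and checking that the limit law assigns no mass to the point at infinity. (A smaller remark: condition (1) supplies uniform control of $\mathcal{L}_N f_N$ only for $f$ in the core $D$, so the Aldous-type estimates must be run through core functions; this suffices because $D$ is dense in $C_0(S)$ and hence separates points, but it deserves saying.) With these repairs your sketch becomes the proof in the cited references.
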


We will be applying Theorem~\ref{theorem_feller_convergence} with $\mathcal{L}= \mathcal{L}_W$ defined as $\mathcal{L}_W f = \tfrac{1}{2} f''$ for $f\in \mathcal{D}(\mathcal{L}_W)$ where
\begin{equation}
\mathcal{D}(\mathcal{L}_W) := \left\{f\in C^2\left([0,1]\right): f'(0)=f'(1)=0 \right\},    
\end{equation}
and $\mathcal{L}_N=\mathcal{L}_{W^N}$ defined in \eqref{generator_WN_def}, which we recall here for the reader's sake
\begin{equation}
\mathcal{L}_{W^N}f (w) = \frac{N^2}{2}\sum_{\eps \in\{\pm 1\}}\left(f(\Phi_\eps^N(w))-f(w)\right)s(\beta_w,\beta_{\Phi^N_\eps(w)}), \quad w \in[0,1],
\end{equation}
with $\Phi_{\pm}^N(w)$ defined in \eqref{Phi_minus_def}, \eqref{Phi_plus_def} and $\beta_w=G(w)$. Recall from the discussion just before \eqref{generator_WN_def} that $\mathcal{L}_{W^N}$ defines a Feller semigroup. 

First notice that in \cite{kallenberg2002foundations}, the transition semi-group and generator of a Feller process taking values in a metric space $S$ are defined on $C_0(S)$, the space of functions vanishing at infinity. Equivalently $f\in C_0(S)$ if and only for any $\delta>0$ there exists a compact set $K\subset S$ such that for $x\notin K$, $|f(x)|< \delta$. In our case since $S=[0,1]$ is compact $C_0(S)=C(S)$, which justifies the definition of the generator $\mathcal{L}_W$ given above. 

\subsubsection{The Feller property of $\mathcal{L}_W$.}
Similarly $\mathcal{L}_W$ can be seen to define a Feller semigroup on $C([0,1])$ by the Hille-Yosida theorem; see \cite[Theorem~19.11]{kallenberg2002foundations}. 

Indeed the first condition is satisfied since any function $f\in C([0,1])$ can be uniformly approximated within $\epsilon>0$ by a polynomial $p_\epsilon$, that is a smooth function, by the Stone-Weirstrass theorem. We can further uniformly approximate $p_\epsilon$ within $\epsilon$  by a $C^2$ function $\hat{p}_\epsilon$ with vanishing derivatives at the endpoints. For example one can let, for a $\delta$ to be chosen later, $\hat{p}_\epsilon(x) = p_\epsilon(x)$ for $x\in (\delta, 1-\delta)$ and for $x\leq\delta$ set $\hat{p}_\epsilon(x)=\int_0^x \rho_\delta (y) p_\epsilon'(y) \rd y+c$, where $\rho_\delta$ is a smooth, increasing transition function such that $\rho_\delta(x)=0$ for $x<0$, $\rho_\delta(x)=1$ for $x>\delta$, for example let $\rho_\delta=\rho(x/\delta)$, $\rho(x)=g(x)/(g(x)+g(1-x))$ and $g(x)=\exp(-1/x)\mathbf{1}_{\{x>0\}}$. We choose $c$ so that $\hat{p}_\epsilon(x)$ is continuous at $\delta$. A similar construction can be used for the right-endpoint. One can then check that indeed $\hat{p}_\epsilon \in C^2([0,1])$, $\hat{p}_\epsilon'(0)=\hat{p}_\epsilon'(1)=0$ and that for $\delta$ small enough $\|\hat{p}_\epsilon - p_\epsilon\|_\infty < \epsilon$. 

The second condition of \cite[Theorem~19.11]{kallenberg2002foundations} requires that for some $\mu>0$, the set $(\mu-\mathcal{L}_W)(\mathcal{D}\big(\mathcal{L}_W)\big)$ is dense in $C([0, 1])$. Let $g\in C([0,1])$ be given. We apply
\cite[Corollary~2.2]{saranen1988solution}, with $f(t,y)=2\mu y- 2 g$,  which is clearly square integrable in $t$ and $2\mu$-Lipschitz in $y$. Then \cite[Corollary~2.2]{saranen1988solution} implies that 
 for small enough $\mu>0$ the two-point Neumann-boundary value problem
\begin{align*}
    \mu u - \frac{1}{2}u'' &= g\\
    u'(0) = u'(1)&=0
\end{align*}
admits a solution in the Sobolev space $H^2([0,1])$ of functions with square integrable first and second derivatives. This already implies that $u\in C^1([0,1])$, whereas the continuity of $g$ and of $u$ a priori implies the continuity of $u''$ since $u''=2\mu u -g$. Overall, for any $g\in C([0,1])$ we can find $u\in \mathcal{D}(\mathcal{L}_W)$ such that $g=(\mu-\mathcal{L}_W)g$ establishing the second condition of \cite[Theorem~19.11]{kallenberg2002foundations}.  

The third condition of \cite[Theorem~19.11]{kallenberg2002foundations} is that $(\mathcal{L}_W, \mathcal{D}(\mathcal{L}_W))$ satisfies the \textit{positive maximum principle}, that is if for some $f\in \mathcal{D}(\mathcal{L}_W))$ and $x_0\in[0,1]$ we have $f(x_0)\geq f(x)\vee 0$ for all $x\in [0,1]$, then $f''(x_0)\leq 0$. Suppose first that the maximum is attained at an interior point $x_0\in (0,1)$; since $f\in C^2([0,1])$, by definition of $\mathcal{D}(\mathcal{L}_W))$, $f''(x_0)\geq 0$. If on the other hand the positive maximum is attained at $x_0=0$, suppose that $f''(0)>\epsilon$ for all $x
\leq \epsilon$. Thus for $0<y<\epsilon$ small enough, since $f'(0)=0$ we have
$$f(y)=f(0)+\int_0^y f'(s) \rd s=f(0)+\int_0^y \int_0^s f''(r) \rd r \rd y 
\geq f(0)+ \frac{\epsilon}{2} y^2> f(0),$$
thus arriving at a contradiction. 

We have thus established that $\left(\mathcal{L}_W, \mathcal{D}\left(\mathcal{L}_W\right) \right)$ satisfies all conditions of \cite[Theorem~19.11]{kallenberg2002foundations} and therefore generates a Feller process. 
Now we can apply Theorem~\ref{theorem_feller_convergence} to prove Theorem~\ref{theorem_weak_limit_main}(a). We only need to check the first condition of Theorem~\ref{theorem_feller_convergence}. In this direction, first note that by definition 
$\Phi_{\pm}^N(w) = w \pm 1/N$ for $w\in [1/N, 1-1/N]$. 
Thus in this case using a Taylor expansion we have for $w^\ast_- \in [w-1/N, w]$ and $w^\ast_+\in [w,w+1/N]$ that for any $f\in\mathcal{D}\left(\mathcal{L}_W\right)$,
\begin{align}
f(\Phi_+^N(w)) - 2 f(w) + f(\Phi_-^N(w))
&= f(w) + \frac{1}{N} f'(w) + \frac{1}{2N^2}f''(w^\ast_+) \notag\\
&\qquad+ f(w) - \frac{1}{N} f'(w) + \frac{1}{2N^2}f''(w^\ast_-) - 2f(w)\\
&= \frac{1}{2N^2}\left( f''(w^\ast_+) + f''(w^\ast_-)\right).
\end{align}
Since $f''$ is uniformly continuous it follows that as $N\to\infty$,
\begin{equation}
    \sup_{w\in[0,1]} |f''(w^\ast_\pm) - f''(w)| =o(1), 
\end{equation}
and therefore for $w\in [1/N, 1-1/N]$ we have

\begin{align}
\sup_{w\in [0,1]} \Big| f(\Phi_+^N(w)) - 2 f(w) + f(\Phi_-^N(w)) - \frac{f''(w)}{N^2}\Big|
&= o\left( \frac{1}{N^2}\right).
\end{align}

When $w\in [0,1/N)$ or $w\in(1-1/N, 1]$ we instead perform a Taylor expansion around 0 or 1 respectively. We only do the calculation in the first case, the other case being similar. 
Let $w\in [0,1/N)$ in which case, since $f'(0)=0$,  for $w^\ast, w^\ast_- , w^\ast_+\in [0,2/N]$

\begin{align}
f(\Phi_+^N(w)) - 2 f(w) + f(\Phi_-^N(w))
&= f(0) +\Phi_+^N(w)f'(0) + \frac{1}{2} \left[ \Phi_+^N(w)\right]^2 f''(w^\ast_+) \notag\\
&\qquad + f(0) +\Phi_-^N(w) f'(0) +  \frac{1}{2} \left[ \Phi_-^N(w)\right]^2 f''(w^\ast_-)\notag\\
&\qquad - 2f(0)-2 f'(0)w - 2\frac{f''(w^\ast)}{2}w^2\\
&= \frac{f''(0)}{2} \left\{\left[ \Phi_+^N(w)\right]^2 + \left[ \Phi_-^N(w)\right]^2 - 2w^2\right\} + o\left(N^{-2} \right),
\end{align}
where the error term is uniform in $w$ and was obtained by combining the facts that $f''$ is uniformly continuous and that $|\Phi^N_{\pm}|, |w|/\leq 2/N$. 
Finally notice that since $w\in [0,1/N]$, then
\begin{equation}
\left[ \Phi_+^N(w)\right]^2 + \left[ \Phi_-^N(w)\right]^2 - 2w^2
= \left[ w+ \frac{1}{N}\right]^2 + \left[\frac{1}{N}-w\right]^2 - 2w^2 = \frac{2}{N^2}.
\end{equation}

Finally we will need the following weaker version of Theorem~\ref{theorem_rate}, whose proof is postponed to the end of this section.  
\begin{lemma}\label{lem:s_beta}
Suppose that $\pi_0(|V|), \pi(|V|)<\infty$. Then there exists a constant $C>0$ such that 
\begin{equation}
\sup_\beta \left|s(\beta, \beta+\delta)-1\right|\leq C \delta.    
\end{equation}
\end{lemma}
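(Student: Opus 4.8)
The plan is to first observe that since $s\le 1$ we have $|s(\beta,\beta+\delta)-1| = r(\beta,\beta+\delta)$, so it suffices to bound the rejection probability linearly in $\delta$. Writing $D=V^{(\beta+\delta)}-V^{(\beta)}$ with the two energies independent, for $\delta>0$ one has $\min\{0,\delta D\}=-\delta\,(D)_-$, where $(D)_-=\max\{0,-D\}$, so the integrand defining $r$ equals $1-e^{-\delta(D)_-}$. Applying the elementary inequality $1-e^{-x}\le x$ (valid for $x\ge 0$) and then $(D)_-\le|D|$ gives
\begin{equation}
r(\beta,\beta+\delta)=\E\!\left[1-e^{-\delta(D)_-}\right]\le \delta\,\E[(D)_-]\le \delta\,\E[|D|].
\end{equation}
For $\delta<0$ the identical computation with $(D)_+$ in place of $(D)_-$ produces the same bound with $|\delta|$, so in all cases $r(\beta,\beta+\delta)\le |\delta|\,\E[|V^{(\beta+\delta)}-V^{(\beta)}|]$.

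By the triangle inequality $\E[|V^{(\beta+\delta)}-V^{(\beta)}|]\le \E[|V^{(\beta+\delta)}|]+\E[|V^{(\beta)}|]$, so the whole statement reduces to the \emph{uniform-in-$\beta$} integrability bound $\sup_{\beta\in[0,1]}\E_{\pi^{(\beta)}}[|V|]<\infty$; taking $C$ to be twice this supremum then finishes the proof. This reduction is routine, and I expect the uniform integrability estimate to be the only real obstacle, since the pointwise-in-$\beta$ bound extracted from the proof of Lemma~\ref{lemma_regularity} carries a factor $1/\mathcal{Z}(\beta)$ that a priori could blow up as $\mathcal{Z}(\beta)\to 0$.

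To control this uniformly I would reuse the inequality $L^\beta\le 1+L$ (valid for $L>0$, $\beta\in[0,1]$) exactly as in Lemma~\ref{lemma_regularity}, which yields $\int_{\mathcal{X}}|V|\,L^\beta\pi_0\le \pi_0(|V|)+\pi(|V|)$, a bound independent of $\beta$ and finite by hypothesis. It then remains to bound $\mathcal{Z}(\beta)$ away from zero, and the key step here is Jensen's inequality: since $V$ is $\pi_0$-integrable,
\begin{equation}
\mathcal{Z}(\beta)=\int_{\mathcal{X}}e^{-\beta V}\pi_0=\E_{\pi_0}\!\left[e^{-\beta V}\right]\ge e^{-\beta\,\pi_0(V)}\ge e^{-|\pi_0(V)|}=:c_0>0
\end{equation}
for every $\beta\in[0,1]$. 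Combining the two displays gives $\E_{\pi^{(\beta)}}[|V|]\le c_0^{-1}\big(\pi_0(|V|)+\pi(|V|)\big)$ uniformly in $\beta$, so that $C=2c_0^{-1}\big(\pi_0(|V|)+\pi(|V|)\big)$ works.
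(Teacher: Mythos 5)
Your proof is correct and follows essentially the same route as the paper's: both reduce $\left|s(\beta,\beta+\delta)-1\right|$ to the rejection probability, apply the elementary bound $1-e^{-x}\le x$ for $x\ge 0$, and then use the triangle inequality to land on the uniform moment bound $\sup_{\beta}\pi^{(\beta)}(|V|)<\infty$. The one place you go beyond the paper is that you actually prove this uniform bound --- via $L^\beta\le 1+L$ together with the Jensen lower bound $\mathcal{Z}(\beta)\ge e^{-|\pi_0(V)|}>0$ --- whereas the paper's proof simply writes $\sup_{\beta}\pi^{(\beta)}(|V|)$ in its final display and leaves its finiteness implicit (its Lemma on integrability only gives pointwise-in-$\beta$ bounds carrying a $1/\mathcal{Z}(\beta)$ factor), so your completion of this step is a genuine, if small, improvement.
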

Using Lemma \ref{lem:s_beta}, we see that for some constant $C>0$ 
\begin{equation}
\sup_{w\in[0,1]} \Big|s\left(\beta_w, \beta_{\Phi_\pm^N(w)}\right)-1 \Big| \leq C \sup_{w}\left|G(w)-G\left(\Phi_\pm^{N}(w)\right) \right|
\leq \frac{C \|G'\|_\infty}{N},
\end{equation} 
and therefore 
\begin{align}
\mathcal{L}_N f(w)
&= \frac{N^2}{2}\sum_{\eps \in\{\pm 1\}}\left(f(\Phi_\eps^N(w))-f(w)\right)s(\beta_w,\beta_{\Phi^N_\eps(w)})\\
&= \frac{N^2}{2}\sum_{\eps \in\{\pm 1\}}\left(f(\Phi_\eps^N(w))-f(w)\right)\left[ 1 + o(N^{-1}) \right]\\
&= \frac{N^2}{2}\left( f(\Phi_+^N(w)) - 2 f(w) + f(\Phi_-^N(w)) \right) \left[ 1 + o(N^{-1}) \right]= \frac{N^2}{2} \frac{f''(w)}{N^2}  \left[ 1 + o(1) \right],
\end{align}
where the error term is uniform in $w$. Thus $\mathcal{L}_N f \to \mathcal{L}f$ uniformly. 
\begin{proof}[of Lemma~\ref{lem:s_beta}]
Using the bound $0\leq 1-\exp(-x)\leq x$ for $x\geq 0$ we have

\begin{align}\belowdisplayskip=-12pt
\lefteqn{\left| s(\beta, \beta') -1 \right|}\notag\\
&\leq \int\!\int \pi^{(\beta)}(\rd x)\pi^{(\beta')}(\rd y) \Big[1- \exp \big(-\max\big\{0, (\beta'-\beta)[V(x)-V(y)] \big\} \big) \Big]\\
&\leq |\beta'-\beta|\int\!\int \pi^{(\beta)}(\rd x)\pi^{(\beta')}(\rd y) \max\big\{0, [V(x)-V(y)] \big\} \big) \Big]\\
&\leq |\beta'-\beta|\int\!\int \pi^{(\beta)}(\rd x)\pi^{(\beta')}(\rd y) \big( |V(x)| + |V(y)| \big) \Big] \\
&\leq 2 |\beta'-\beta| \sup_{\beta}\pi^{(\beta)}(|V|).
\end{align}

\end{proof}

\subsection{Proof of scaling limit for non-reversible PT}\label{sec_lifted_proof_lemma}
We will prove Theorem \ref{theorem_weak_limit_main}(b) in a slightly round about way. We will define the auxiliary processes $\{U^N(\cdot)\}$, $\{U(\cdot)\}$ living on the 
unit circle $\mathds{S}^1:=\{z\in \mathds{C}: |z|=1\}$ along with a mapping $\phi: \mathds{S}^1\mapsto [0,1]\times \{\pm 1\}$ such that $Z^N = \phi(U^N)$ and $Z=\phi(U)$. 
We will first show that the law of $U^N$ converges weakly to $U$. 

Before defining the processes we point out that we will identify $\mathds{S}^1$ with $[0,2\pi)$ in the usual way by working in$\mod 2\pi$ arithmetic. Notice that in this way 
\begin{equation}
    C(\mathds{S}^1)  = \{f\in C([0,2\pi]): f(0)=f(2\pi)\}.
\end{equation}
The reason for working with these auxiliary processes is that we can now avoid working with PDMPs with boundaries, helping us to remove a layer of technicalities. 

For any $N$ we define $\Sigma^N:\mathds{S}^1\mapsto\mathds{S}^1$ through $\Sigma^N(\theta) = \theta + 2\pi/N$. 
Consider then a continuous-time process $U^N$ that 
jumps at the arrival times of a homogeneous Poisson process with rate $N$ according to the kernel 
\begin{equation}\label{eq:kernel_aux}
Q^N(\theta, \rd \theta') = 
s\left(\tilde{\beta}_\theta, \tilde{\beta}_{\Sigma^N(\theta)} \right)\delta_{\Sigma^N(\theta)} (\rd \theta') 
+ \left[1-s\left(\tilde{\beta}_\theta, \tilde{\beta}_{\Sigma^N(\theta)} \right) \right] \delta_{2\pi -\theta} (\rd \theta'),
\end{equation}
where 
\begin{equation}
\tilde{\beta}_\theta = 
\begin{cases} 
G\left(\frac{\theta}{\pi}\right), & \theta \in [0,\pi),\\
G\left(\frac{2\pi -\theta}{\pi}\right), & \theta \in [\pi,2\pi).
\end{cases}    
\end{equation}

Define the map 
\begin{equation}
\phi(\theta)
=\begin{cases}
\left( \frac{\theta}{\pi}, +1\right), & \theta \in [0,\pi),\\
\left( \frac{2\pi-\theta}{\pi},-1\right), & \theta \in [\pi,2\pi).
\end{cases}    
\end{equation}
Essentially we think of the circle as comprising of two copies of $[0,1]$ glued together at the end points. The top one is traversed in an increasing direction and the bottom one in a decreasing direction. When glued together and viewed as a circle these dynamics translate in a counter-clockwise rotation with occasional reflections with respect to  the $x$-axis at the time of events. With this picture in mind it should be clear that $\phi(U^N) = Z^N$. 

We also define the limiting process $U$ as follows. First let 
\begin{equation}
    \tilde{\lambda}(\theta)
=(\lambda\circ G)(\phi^1(\theta)) G' (\phi^1(\theta)),
\end{equation}
where $\phi^1(\theta)$ is the first coordinate of $\phi(\theta)$. Notice at this point that $\phi^1: \mathds{S}^1\mapsto [0,1]$ is continuous and satisfies $\phi^1(\theta)=\phi^1(-\theta)$ for any $\theta\in[0,2\pi)$, whence we obtain that $\tilde{\lambda}(-\theta)=\tilde{\lambda}(\theta)$. 
Given $U(0)=\theta$, let $T_1$ be a random variable such that 
\begin{equation}
    \mathbb{P}[T_1 \geq t] = \exp \left\{-\int_0^t \tilde{\lambda}(\theta+s) \rd s \right\},
\end{equation}
and define the process as $U(s)=\theta+s \mod 2\pi$ for all $s<T_1$  and set $U(T_1) = - U(T_1-) \mod 2\pi$. 
Iterating this procedure will define the $\mathds{S}^1$-valued PDMP $\{U(\cdot)\}$. 
We first need the next lemma. 

\begin{lemma}\label{lem:lifted_aux_properties}
Suppose $V$ is integrable with respect to $\pi_0$ and $\pi$. 
The process $U$ defined above is a Feller process, its infinitesimal generator is given by 
\begin{equation}
    \mathcal{L}_U f(\theta) = f'(\theta) + \tilde{\lambda}(\theta) \left[ f(2\pi-\theta) - f(\theta)\right],
\end{equation}
with domain 
\begin{equation}
    \mathcal{D}(\mathcal{L}_U) = \{f\in C^1([0,\pi]): f(0)=f(2\pi)\},
\end{equation}
and invariant measure $\rd \theta/2\pi$. 
\end{lemma}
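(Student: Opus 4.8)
The plan is to treat $U$ as a piecewise-deterministic Markov process whose deterministic flow is the rigid counter-clockwise rotation $\theta\mapsto\theta+t \bmod 2\pi$ and whose jumps occur at the state-dependent rate $\tilde\lambda(\theta)$, each jump resetting $\theta\mapsto 2\pi-\theta$. The crucial preliminary observation is that $\tilde\lambda = (\lambda\circ G)(\phi^1)\,G'(\phi^1)$ is continuous on the compact circle: $\lambda$ is continuous on $[0,1]$ by Theorem~\ref{theorem_rate_est} (integrability of $V$ already yields $\lambda\in C^0$), $G'$ is continuous, and $\phi^1$ is continuous. Hence $\tilde\lambda$ is bounded, which is what makes every subsequent step routine. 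Throughout I identify $\mathds{S}^1$ with $[0,2\pi)$ so that $C(\mathds{S}^1)=\{f\in C([0,2\pi]):f(0)=f(2\pi)\}$ and the natural core is $C^1(\mathds{S}^1)$.

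First I would identify the generator. The pure rotation semigroup $T_t^0 f(\theta)=f(\theta+t)$ is a classical Feller semigroup on $C(\mathds{S}^1)$ with generator $\mathcal{L}_0 f=f'$ on $C^1(\mathds{S}^1)$, while the jump mechanism contributes $\mathcal{J}f(\theta)=\tilde\lambda(\theta)[f(2\pi-\theta)-f(\theta)]$, a \emph{bounded} operator on $C(\mathds{S}^1)$ since $\tilde\lambda$ is bounded and $\theta\mapsto 2\pi-\theta$ is a continuous involution of the circle. A short Dynkin-type expansion — conditioning on whether the first jump has occurred by time $t$, and using that this jump time has hazard $\tilde\lambda$ along the flow (exactly the survival function defining $U$) — gives $t^{-1}(\E_\theta f(U_t)-f(\theta))\to f'(\theta)+\mathcal{J}f(\theta)$ uniformly in $\theta$ for $f\in C^1(\mathds{S}^1)$, which is precisely the claimed $\mathcal{L}_U$.

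Second, for the Feller property I would invoke the bounded-perturbation theorem for strongly continuous semigroups: since $\mathcal{L}_0$ generates a Feller semigroup and $\mathcal{J}$ is bounded, $\mathcal{L}_U=\mathcal{L}_0+\mathcal{J}$ generates a strongly continuous semigroup with the same domain $\mathcal{D}(\mathcal{L}_U)=C^1(\mathds{S}^1)$. To certify that it is Feller I would verify the positive maximum principle and close the argument via Hille--Yosida, mirroring the reversible case through \cite[Theorem~19.11]{kallenberg2002foundations}: if $f\in C^1(\mathds{S}^1)$ attains a nonnegative maximum at $\theta_0$ then $f'(\theta_0)=0$ — here the absence of a boundary on $\mathds{S}^1$ is exactly the reason for lifting, since it removes the boundary technicalities present for $Z$ — and $f(2\pi-\theta_0)-f(\theta_0)\le 0$, so $\mathcal{L}_U f(\theta_0)\le 0$.

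Finally, invariance of $\rd\theta/2\pi$ reduces to showing $\int_0^{2\pi}\mathcal{L}_U f(\theta)\,\rd\theta=0$ on the core. The drift term integrates to $f(2\pi)-f(0)=0$ by periodicity, and in the jump term the substitution $\psi=2\pi-\theta$ together with the symmetry $\tilde\lambda(2\pi-\psi)=\tilde\lambda(\psi)$ (equivalently $\tilde\lambda(-\theta)=\tilde\lambda(\theta)$, recorded just before the lemma) turns $\int_0^{2\pi}\tilde\lambda(\theta)f(2\pi-\theta)\,\rd\theta$ into $\int_0^{2\pi}\tilde\lambda(\psi)f(\psi)\,\rd\psi$, which cancels the subtracted term. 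I expect the main obstacle to be not these perturbation and invariance steps, which are clean once $\tilde\lambda$ is known to be bounded and symmetric, but the bookkeeping around the domain: rigorously checking that $C^1(\mathds{S}^1)$ is genuinely a core for $\mathcal{L}_U$ and that the generator limit above holds uniformly rather than merely pointwise.
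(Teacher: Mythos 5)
Your proposal is correct, and both your generator computation (first-jump/Dynkin expansion) and your invariance argument coincide in substance with the paper's; however, your route to the Feller property is genuinely different. The paper works with the process $U$ directly: it invokes the PDMP theory of \cite[Theorem~27.6]{davis1993markov} (no active boundary, $Qf(\theta)=f(-\theta)$ continuous, rate $\tilde{\lambda}$ bounded and continuous by Proposition~\ref{prop_regularity} and $G\in C^1$) to conclude that the transition semigroup of $U$ is Feller, and then pins down the domain by computing limits pointwise, using \cite[Theorem~1.33]{bottcher2013levy} to guarantee that pointwise limits do not enlarge the domain. You instead build the semigroup abstractly: rotation generator plus the bounded jump operator $\mathcal{J}$, Phillips' bounded-perturbation theorem for generation on the domain $C^1(\mathds{S}^1)$, and the positive maximum principle plus Hille--Yosida (as in \cite[Theorem~19.11]{kallenberg2002foundations}) to certify positivity and contraction. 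Your approach is more self-contained, avoiding PDMP machinery altogether, and your invariance computation explicitly completes the change-of-variables cancellation (via $\psi = 2\pi-\theta$ and $\tilde{\lambda}(2\pi-\theta)=\tilde{\lambda}(\theta)$) that the paper leaves implicit. The price is exactly the step you flag at the end: the perturbation argument produces \emph{a} Feller semigroup generated by $\mathcal{L}_0+\mathcal{J}$, whereas the lemma concerns the transition semigroup of the explicitly constructed process $U$; to identify the two you still need, e.g., Dynkin's formula for $U$ (obtained by iterating the first-jump decomposition, valid since $\tilde{\lambda}$ is bounded) together with the fact that a generator admits no proper extension that also generates, so that the process semigroup and the abstractly generated one agree on the core $C^1(\mathds{S}^1)$ and hence everywhere. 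This identification is standard but not optional; the paper's route through \cite{davis1993markov} avoids it entirely because it never leaves the process.
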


\begin{proof}
First, note that since $\mathds{S}^1$ is compact $C_0(\mathds{S}^1)=C(\mathds{S}^1)$ and thus to study the Feller process we consider the semi-group $\{P_U^t\}_t$ defined by the process $U$ as acting on $C(\mathds{S}^1)$. 
To prove the Feller property we can thus use \cite[Theorem~27.6]{davis1993markov}. Since there is no boundary in the definition of $U$ the first assumption is automatically verified, $Qf(\theta)=f(-\theta)\in C\left(\mathds{S}^1\right)$ for any continuous $f$. We also know that the rate $\tilde{\lambda}$ is bounded
whereas by Proposition~\ref{prop_regularity} and the fact that $G\in C^1[0,1]$ we know that $\tilde{\lambda}$ is also continuous. Therefore the third condition of \cite[Theorem~27.6]{davis1993markov} holds and thus $U$ is Feller. 

The infinitesimal generator will be defined on 
$\mathcal{D}(\mathcal{L}_U)\subseteq C(\mathds{S}^1)$. 
The domain is defined as the class of functions $f\in C(\mathds{S}^1)$ such that 
\begin{equation}
g(\theta) = \lim_{h\to 0} \frac{1}{h} \left[ P_U^t f(\theta)-f(\theta)\right]\in C(\mathds{S}^1),
\end{equation}
where
the limit is uniform in $\theta$. However by \cite[Theorem~1.33]{bottcher2013levy}, we can also consider pointwise limits without enlarging the domain. Using the definition of $U$ we then have for $\theta\in[0,2\pi)$ that 
\begin{align}
\frac{1}{h}\mathbb{E}^\theta \left[ f(U_h) - f(\theta)\right]
&=\frac{1}{h}
\left[f \left( \theta + h\right)\mathbb{P}^\theta\left[ T_1\geq h\right]-f(\theta)\right]  + 
\frac{1}{h}\mathbb{E}^\theta\left[
f\left( U_h\right) \mathds{1}\left\{T_1 \leq h\right\}
\right].
\end{align}
Since for $x\geq 0$ we have
$|\exp(-x)-1+x| \leq C x^2$ for some constant $C>0$,
and using the uniform continuity of $\tilde{\lambda}$ we can see that 
\begin{align}
\left|\exp \left\{-\int_0^h \tilde{\lambda}(\theta+s) \rd s \right\}
- 1+ \tilde{\lambda}(\theta)h
\right|&=o(h),
\end{align}
uniformly in $\theta$, and thus 

\begin{align}
\frac{1}{h}
f \left( \theta + h\right)\mathbb{P}^\theta\left[ T_1\geq h\right] -f(\theta) 
&= \frac{1}{h}
\left[f \left( \theta + h\right)\left[1-\tilde{\lambda}(\theta)h + o(h)\right] - f(\theta)\right]\\
&=\frac{1}{h}\left[ f(\theta+h)-f(\theta)\right] - \tilde{\lambda}(\theta)f(\theta) + o(1).
\end{align}
In addition 

\begin{align}
\lefteqn{\frac{1}{h}\mathbb{E}^\theta
\left[f\left( U_h\right) \mathds{1}\left\{T_1 \leq h\right\}\right]}\notag\\
&=\frac{1}{h}\int_0^h \tilde{\lambda}(\theta+s)
\exp\left\{-\int_0^s \tilde{\lambda}(\theta+r)\rd r \right\}
 \rd s P_U^{h-s} Q f(\theta)\\
&\to \tilde{\lambda}(\theta)Qf(\theta),
\end{align}
for any $f\in C(\mathds{S}^1)$ by strong continuity of $\{P_U^t\}$ (Feller  property) and continuity of $\tilde{\lambda}$. Overall we thus have that $f\in \mathcal{D}(\mathcal{L}_U)$ if and only if 
\begin{align}
\frac{1}{h}\mathbb{E}^\theta \left[ f(U_h) - f(\theta)\right]
&= \frac{f(\theta+h)-f(\theta)}{h} + \tilde{\lambda}(\theta)\left[ Qf(\theta)-f(\theta)\right] +o(1)\\
&\to g(\theta) \in C(\mathds{S}^1), 
\end{align}
which is clearly equivalent to $f\in C^1(\mathds{S}^1)$. 

Finally to see that $\rd \theta/2\pi$ is invariant, having identified the domain we can easily check that for any $f\in C(\mathds{S}^1)$ we have
\begin{align}
\int \rd \theta
P_U^t f(\theta)
&= \int_{s=0}^t \int \rd \theta \mathcal{L}_U P_U^s f(\theta) \rd\theta
 \rd s.
\end{align} 
Since $f\in \mathcal{D}(\mathcal{L}_U)$ we have that $P_U^s g \in\mathcal{D}(\mathcal{L}_U)$. Since for any $g\in \mathcal{D}(\mathcal{L}_U)$ we have
\begin{align}
\int\rd \theta \mathcal{L}_U f(\theta)
&= \int_{\theta=0}^{2\pi} f'(\theta) \rd \theta
+ \int_{\theta=0}^{2\pi} \tilde{\lambda}(\theta) f(Q(\theta))\rd \theta - \int_{\theta=0}^{2\pi} \tilde{\lambda}(\theta) f(\theta)\rd \theta\\
&= f(2\pi)-f(0) + \int_{\theta=0}^{2\pi} \tilde{\lambda}(\theta) f(Q(\theta))\rd \theta.    
\end{align}

\end{proof}

\begin{proposition}\label{prop:aux_weak_conv}
Suppose $U^N(0)$ converges weakly to $U(0)$, then $U^N$ converges weakly to $U$ in $D(\R_+,[0,1])$.
\end{proposition}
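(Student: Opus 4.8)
The plan is to prove this exactly as in the reversible case of Section~\ref{sec_reversible_proof_lemma}, via the Trotter--Kurtz type convergence theorem (Theorem~\ref{theorem_feller_convergence}) applied to the generators. Because $\mathds{S}^1$ is compact, both processes are Feller: the limit $U$ is Feller by Lemma~\ref{lem:lifted_aux_properties}, and $U^N$ is a pure-jump process with bounded jump rate $N$ and continuous jump kernel $Q^N$, so its generator
\begin{equation}
\mathcal{L}_{U^N}f(\theta)=N\,s\big(\tilde\beta_\theta,\tilde\beta_{\Sigma^N(\theta)}\big)\big[f(\Sigma^N(\theta))-f(\theta)\big]+N\big[1-s\big(\tilde\beta_\theta,\tilde\beta_{\Sigma^N(\theta)}\big)\big]\big[f(2\pi-\theta)-f(\theta)\big]
\end{equation}
is bounded and defined on all of $C(\mathds{S}^1)$. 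I would take as core for $\mathcal{L}_U$ its full domain $D=\mathcal{D}(\mathcal{L}_U)=C^1(\mathds{S}^1)$; since $D\subseteq C(\mathds{S}^1)=\mathcal{D}(\mathcal{L}_{U^N})$ I may simply choose $f_N=f$, so that the first requirement of Theorem~\ref{theorem_feller_convergence}(1) holds trivially and everything reduces to the single uniform estimate $\|\mathcal{L}_{U^N}f-\mathcal{L}_U f\|_\infty\to 0$ for $f\in C^1(\mathds{S}^1)$.

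The core of the argument is this generator convergence, which splits into a rotation term and a reflection term. For the rotation term I would write $f(\Sigma^N(\theta))-f(\theta)=\int_0^{\Sigma^N(\theta)-\theta}f'(\theta+u)\,\rd u$, note $\Sigma^N(\theta)-\theta=O(N^{-1})$, and combine the uniform continuity of $f'$ on the compact circle with the bound $\sup_\theta|s(\tilde\beta_\theta,\tilde\beta_{\Sigma^N(\theta)})-1|=O(N^{-1})$ of Lemma~\ref{lem:s_beta} (valid because $G\in C^1$ makes $\tilde\beta$ Lipschitz) to conclude that the rotation term converges uniformly to $f'(\theta)$. For the reflection term I would use $1-s=r$ together with the estimate $r(\beta,\beta')=\lambda(\beta)\,|\beta'-\beta|+O(|\beta'-\beta|^2)$ implied by Theorem~\ref{theorem_rate_est}, evaluated at $\beta=\tilde\beta_\theta$, $\beta'=\tilde\beta_{\Sigma^N(\theta)}$ with $|\beta'-\beta|=O(N^{-1})$. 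Since the increment $\tilde\beta_{\Sigma^N(\theta)}-\tilde\beta_\theta$ is to leading order a fixed multiple of $G'(\phi^1(\theta))/N$, the jump rate and increment being calibrated so that the multiple is one, this yields $N\,r(\tilde\beta_\theta,\tilde\beta_{\Sigma^N(\theta)})\to(\lambda\circ G)(\phi^1(\theta))\,G'(\phi^1(\theta))=\tilde\lambda(\theta)$ uniformly, and hence the reflection term converges uniformly to $\tilde\lambda(\theta)[f(2\pi-\theta)-f(\theta)]$. Summing recovers $\mathcal{L}_U f(\theta)=f'(\theta)+\tilde\lambda(\theta)[f(2\pi-\theta)-f(\theta)]$.

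The main obstacle is making these limits uniform at the two turning points $\theta\in\{0,\pi\}$, where $\tilde\beta$ has a corner: for $\theta$ within $O(N^{-1})$ of a turning point $\Sigma^N(\theta)$ straddles it, so $\tilde\beta_\theta$ and $\tilde\beta_{\Sigma^N(\theta)}$ sit on opposite branches of $\phi^1$ and the leading-order expansion above breaks down. This is the exact analogue of the separate boundary-layer analysis carried out for $w\in[0,1/N)$ in Section~\ref{sec_reversible_proof_lemma}. I would handle it by noting that $N\,r=N(1-s)$ stays uniformly bounded there by Lemma~\ref{lem:s_beta} (the $\beta$-increment is $O(N^{-1})$ since $\phi^1$ is Lipschitz with constant $1/\pi$ and $G\in C^1$), while the reflection amplitude $f(2\pi-\theta)-f(\theta)$ is continuous and vanishes at $\theta\in\{0,\pi\}$ because $f(0)=f(2\pi)$; the limiting coefficient $\tilde\lambda$ is likewise continuous and even, $\tilde\lambda(-\theta)=\tilde\lambda(\theta)$, as observed before Lemma~\ref{lem:lifted_aux_properties}. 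Thus the reflection contribution is uniformly small in the $O(N^{-1})$ boundary layer and agrees with its continuous limit there, closing the uniform convergence. With condition~(1) of Theorem~\ref{theorem_feller_convergence} verified, its condition~(2) yields $U^N\Rightarrow U$ in $D(\R_+,\mathds{S}^1)$ whenever $U^N(0)\Rightarrow U(0)$, which is the claim and which will feed into the convergence of $Z^N=\phi(U^N)$ to $Z=\phi(U)$.
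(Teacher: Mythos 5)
Your proposal is correct and follows essentially the same route as the paper's proof: both verify condition (1) of Theorem~\ref{theorem_feller_convergence} with $f_N=f$, splitting the generator $\mathcal{L}_{U^N}$ into a rotation term (controlled via the bound $|1-s|\le C/N$ from Lemma~\ref{lem:s_beta} and uniform continuity of $f'$) and a reflection term (controlled via the expansion of $r$ from Theorem~\ref{theorem_rate_est}, giving $N\,r\to\tilde{\lambda}$ and hence convergence to $\tilde{\lambda}(\theta)\left[f(2\pi-\theta)-f(\theta)\right]$). The only real difference is that you explicitly handle the $O(N^{-1})$ layer around the turning points $\theta\in\{0,\pi\}$, where the paper simply asserts uniformity of the error; your observation that $N\,r$ remains bounded there while the reflection amplitude $f(2\pi-\theta)-f(\theta)$ vanishes at $\{0,\pi\}$ is a careful tightening of a step the paper glosses over.
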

\begin{proof}
We will once again use Theorem~\ref{theorem_feller_convergence}. The generator of $U_N$ is given by 

\begin{align}
\mathcal{L}_U^N f(\theta)
&= N \left[f(\theta+1/N)-f(\theta)\right]
s\left(\tilde{\beta}_\theta, \tilde{\beta}_{\Sigma^N(\theta)} \right)
+ N \left[f(-\theta)-f(\theta)\right]r\left(\tilde{\beta}_\theta, \tilde{\beta}_{\Sigma^N(\theta)} \right). 
\end{align}

We will consider the two terms separately. To this end notice that  by \eqref{estimate_rate_integral_r}, the boundedness of $\lambda$ and the fact that $G\in C^1[0,1]$ 
\begin{equation}
    \left|1-s\left(\tilde{\beta}_\theta, \tilde{\beta}_{\Sigma^N(\theta)} \right)\right| \leq \frac{C}{N},
\end{equation}
for some $C>0$. Thus, using the mean value theorem, for each $\theta\in[0,2\pi)$, there exists $g_N(\theta)\in [\theta, \theta+1/N]$ such that 
\begin{align}
N \left[f(\theta+1/N)-f(\theta)\right]
s\left(\tilde{\beta}_\theta, \tilde{\beta}_{\Sigma^N(\theta)} \right)
&= f'\left(g_N(\theta)\right)\left(1+ O(1/N)\right) = f'\left(\theta\right)(\left(1+ o(1)\right),
\end{align}
where the errors are uniformly bounded and to obtain the second equality above we have used the fact that $|g_N(\theta)-\theta|\leq 1/N$ and that $f'$ is uniformly continuous, being continuous on a compact set. 

Overall we can see that as $N\to \infty$
\begin{equation}
    \sup_\theta \left| N \left[f(\theta+1/N)-f(\theta)\right]
s\left(\tilde{\beta}_\theta, \tilde{\beta}_{\Sigma^N(\theta)} \right)
-f'(\theta)\right| \to 0.
\end{equation}
Next, using \eqref{estimate_rate_integral_r} we have that 
\begin{equation}
r\left(\tilde{\beta}_\theta, \tilde{\beta}_{\Sigma^N(\theta)} \right)
= \tilde{\lambda}(\theta)\frac{1}{N} + o(N^{-1}),
\end{equation}
where the error is uniform in $\theta$, 
whence we easily conclude that 
\begin{equation}
N \left[f(-\theta)-f(\theta)\right]r\left(\tilde{\beta}_\theta, \tilde{\beta}_{\Sigma^N(\theta)} \right) \to 
\tilde{\lambda}(\theta) \left[Q f(\theta)-f(\theta)\right],
\end{equation}
uniformly in $\theta$. 
\end{proof}

\subsubsection{Proof of Theorem~\ref{theorem_weak_limit_main}(b)}
Now we are ready to prove the main result of this section. Notice that $Z^N (\cdot)=\phi\left(U^N (\cdot)\right)$ and $Z(\cdot)=\phi\left( U(\cdot) \right)$. 

From Proposition~\ref{prop:aux_weak_conv} we know that the finite dimensional distributions of $U_N$ converge to those of $U$. If $\phi$ were continuous we could conclude using the continuous mapping theorem. Since it is not continuous at the points $\{0, \pi\}$, we will be using \cite[Theorem~2.7]{billingsley2013convergence}. We have to check that the law of the limiting process, that is the law of $\{U(\cdot)\}$ places zero mass on finite dimensional distributions that hit $\{0, \pi\}$, that is 
for $n\in \mathbb{N}$ and $0<t_1<\dots<t_n$ we want 
\begin{equation}
\mathbb{P}\left[ \text{$U(t_i)\in\{0,\pi\}$ for some $i\in \{1, \dots,n\}$}\right]=0,
\end{equation}
when $U(0)$ is initialized according to $\rd\theta/2\pi$. 
But the above follows from the fact that 
$\mathbb{P}[ U(t_i)\in\{0,\pi\}]=0$, by stationarity when $U(0)$ is initialised uniformly on $\mathds{S}^1$. 

Relative compactness of $\{Z_N(\cdot)\}_N$ can be easily seen to follow from the compact containment condition \cite[Remark~3.7.3]{ethier2009markov}. This combined with convergence of the finite dimensional distributions of $Z^N$ to those of $Z$ concludes the proof. 

\section{Distributed implementation of PT}\label{app:distributed}
	For simplicity, in Section~\ref{sec_setup} we have shown pseudo-code for PT with a parallel computing context in mind rather than a distributed computing context. The key difference is that in the latter case, it is advantageous to swap annealing parameters instead of states to minimize network communication. For completeness, we show the pseudo-code for the distributed implementation in this section.

\begin{algorithm}[ht]
	\caption{Distributed DEO}\label{alg_deo_distributed} 
	\begin{algorithmic}[1]
		\State $\states\gets \states_0$ \Comment{Initialize chain} 
		\State $\textbf{i} \gets (0, 1, 2, \dots, N)$ \Comment{For a machine $j$, $\textbf{i}^j$ gives the annealing parameter index for that machine. Initialized with the identity map.}
		\State $\textbf{j} \gets (0, 1, 2, \dots, N)$ \Comment{For an annealing parameter index $i$, $\textbf{j}^i$ gives the machine processing the corresponding chain. This is the inverse of the above mapping, hence it is also initialized with the identity map.}
		\For{$n$ {\bf in} 1, 2, \dots, $\nscan$}
		\If{$n$ is even} \Comment{Non-reversibility inducing alternation}
		\State $P \gets \{i: 0 \le i < N, i\text{ is even} \}$ \Comment{Even subset of $\{0,\dots,N-1\}$}
		\Else
		\State $P \gets \{i: 0 \le i < N, i\text{ is odd} \}$ \Comment{Odd subset of $\{0,\dots,N-1\}$}
		\EndIf
		\For{$j$ \textbf{in} $0,\dots,N$}  \Comment{Distributed local exploration phase}
		\State $\beta \gets \beta_{\textbf{i}^{j}}$ \Comment{Fetch current annealing parameter for machine $j$}
		\State $x^j \sim K^{(\beta)}(x^j_{n-1},\cdot)$
		\EndFor
		\For{$i$ \textbf{in} $0,\dots,N-1$} \Comment{Distributed communication phase}
		\State $\alpha \gets \alpha^{(i,i+1)}$  \Comment{Equation (\ref{accept_ratio}).}
		\State $A \sim \Bern(\alpha)$
		\If{$i\in P$ \textbf{and} $A = 1$}
		\State $(\textbf{j}^i, \textbf{j}^{i+1}) \gets (\textbf{j}^{i+1}, \textbf{j}^{i})$ 
		\EndIf
		\EndFor
		\For{$i$ \textbf{in} $0,\dots,N$} \Comment{Recompute $\textbf{i}$ from $\textbf{j}$ such that $\textbf{j}^i = j \Longleftrightarrow \textbf{i}^j = i$}
			\State $j \gets \textbf{j}^i$
			\State $\textbf{i}^j \gets i$
		\EndFor
		\State $\states_n\gets \states$ 
		\EndFor
	\end{algorithmic}
\end{algorithm}

\section{Experiment supplements}\label{app_experiments}

\begin{figure}
	\centering
	\includegraphics[width=0.32\linewidth]{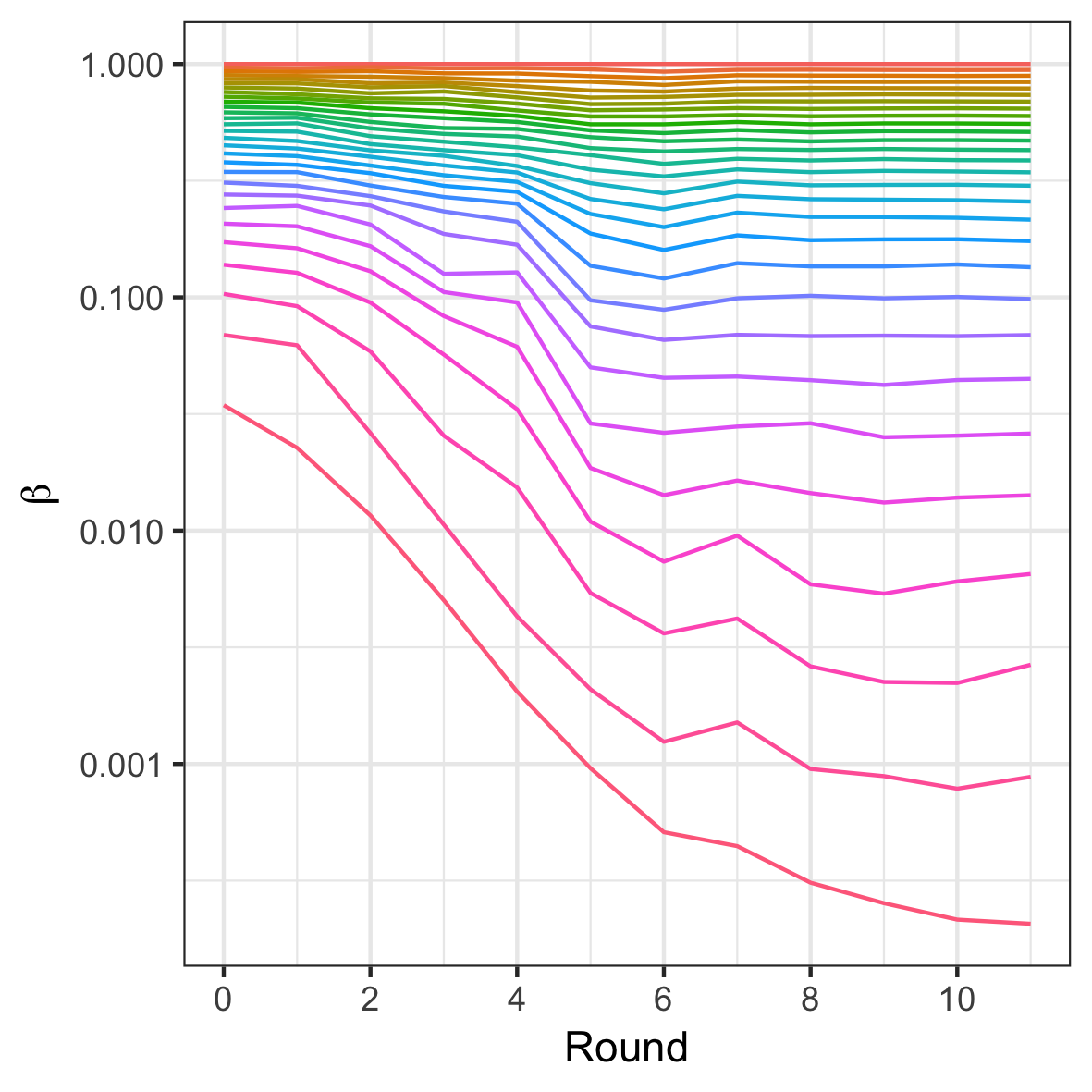}
	\includegraphics[width=0.32\linewidth]{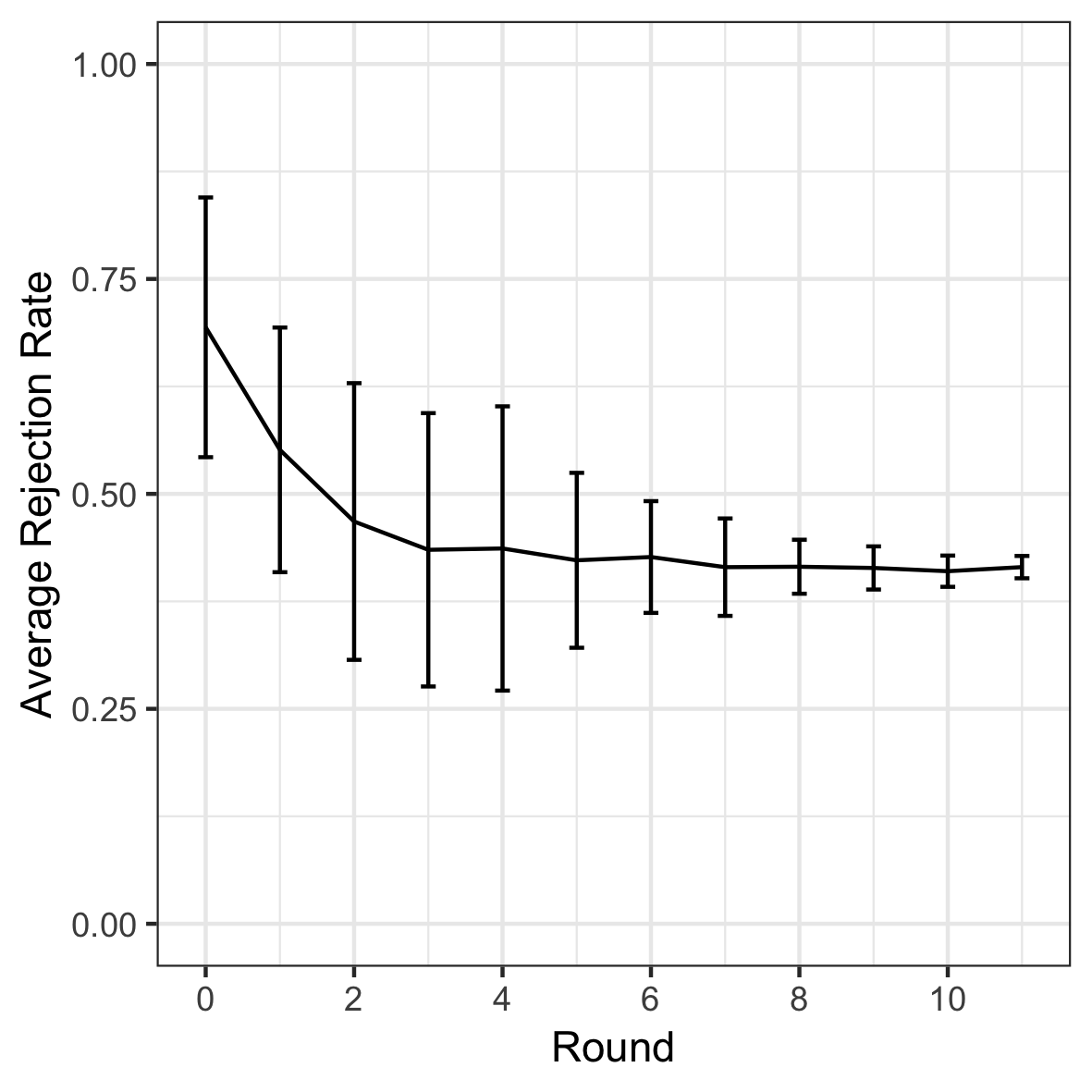}
	\includegraphics[width=0.32\linewidth]{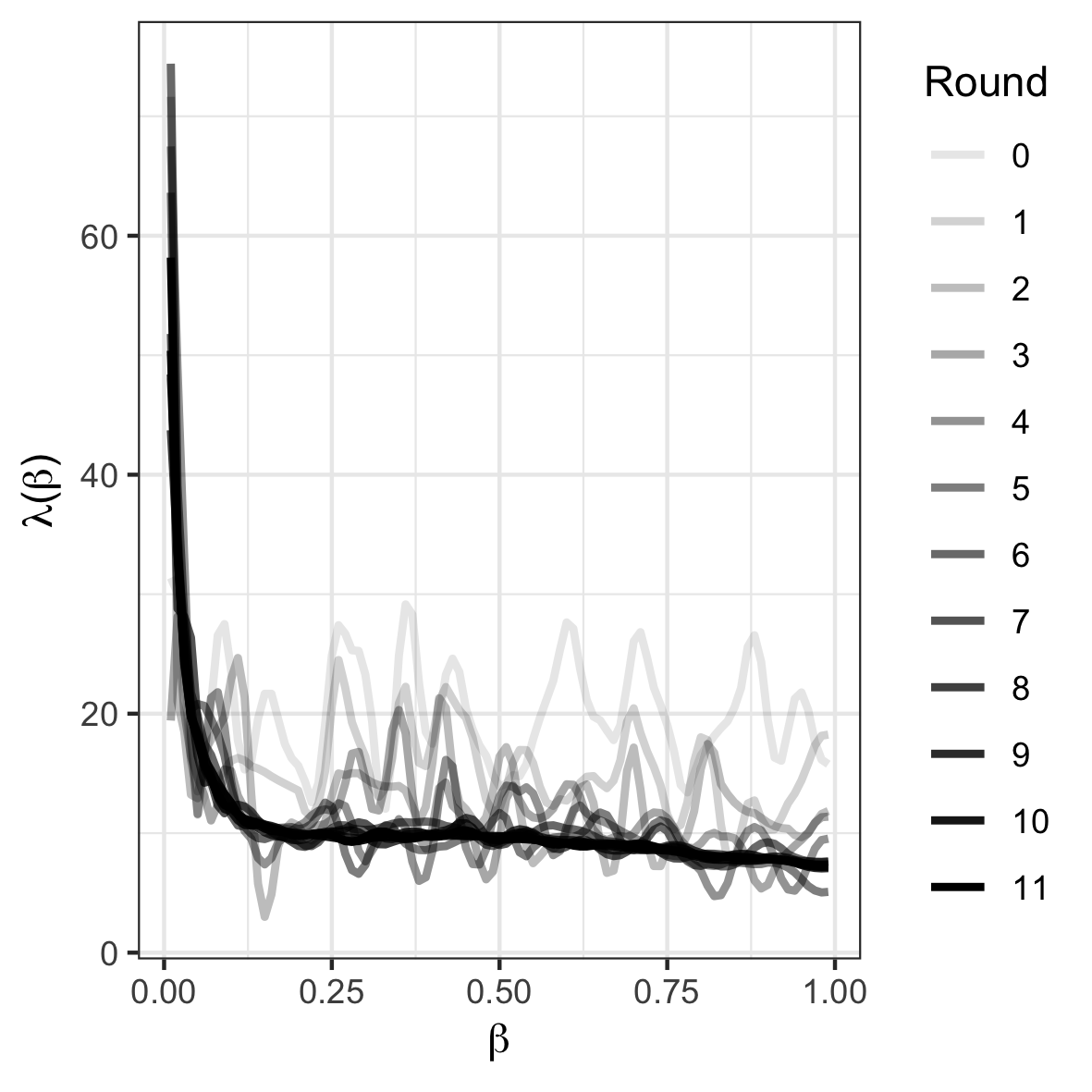}
	\caption{A demonstration of the tuning phase in Algorithm \ref{alg_adaptive} ran on a hierarchical Bayesian model applied to the historical failure rates of $5\,667$ launches for 367 types of rockets (a 369 dimensional problem, see Appendix \ref{description-models-data} for details). We use $\bar N=30$ cores and $11$ schedule optimization rounds, the last one consisting of $\ntune=5\,000$ and $\hat{\Lambda}=12.03$. (Left) Progression of the  annealing schedule (colours index parallel chains, y-axis, the values $\beta_k$ for each schedule optimization round, in log scale). (Centre) Progression of the sample mean and standard deviation of empirical rejection probabilities $\{\hat{r}^{(i-1,i)}\}_{i=1}^N$. The mean stabilizes quickly, and as the schedule optimization rounds increase, the rejection probabilities converge to the mean as desired. (Right) Progression of the estimated $\hat \lambda(\beta)$ as a function of the schedule optimization round.} 
	\label{fig_adapt_example} 
\end{figure}

\subsection{Reproducibility.} To make our NRPT method easy to use we implemented it as an inference engine in the open source probabilistic programming language (PPL) Blang \url{https://github.com/UBC-Stat-ML/blangSDK}. A full description of the models used in the paper are available at \url{https://github.com/UBC-Stat-ML/blangDemos}, see in particular \url{https://github.com/UBC-Stat-ML/blangDemos/blob/master/src/main/resources/demos/models.csv} for a list of command line options and data paths used for each model.
All methods use the same local exploration kernels, namely slice sampling with exponential doubling followed by shrinking \cite{neal_slice_2003}. Scripts documenting replication of our experiments are available at \url{https://github.com/UBC-Stat-ML/ptbenchmark}.

\subsection{Multi-core implementation.} We use lightweight threads \cite{friesen_java_2015} to parallelize both the local exploration and communication phases, as shown in Algorithm \ref{alg_deo}. We use the algorithm of \cite{leiserson_deterministic_2012} as implemented in \cite{jdk_splittable} to allow each PT chain to have its own random stream. This technique avoids any blocking across threads and hence makes the inner loop of our algorithm embarrassingly parallel in $N$. Moreover, the method of \cite{leiserson_deterministic_2012} combined with the fact that we fix random seeds means that the numerical value output by the algorithm is not affected by the number of threads used. Increasing the number of threads simply makes the algorithm run faster. In all experiments unless noted otherwise we use the maximum number of threads available in the host machine, by default an Intel i5 2.7 GHz (which supports 8 threads via hyper-threading) except for Section~\ref{sec_comparison} where we use an Amazon EC2 instance of type \texttt{c4.8xlarge}, which is backed by a 2.9 GHz Intel Xeon E5-2666 v3 Processor (20 threads).

\subsection{Stochastic optimization methods.} \label{app_experiments_comparison}
All baseline methods are implemented in Blang (\url{https://github.com/UBC-Stat-ML/blangSDK}), the same probabilistic programming language used to implement our method. The code for the baseline adaption methods are available at \url{https://github.com/UBC-Stat-ML/blangDemos}. All methods therefore run on the Java Virtual Machine, so their wall clock running times are all comparable. 

Both \cite{atchade_towards_2011} and \cite{miasojedow2013adaptive} are based on reversible PT together with two different flavours of stochastic optimization to adaptively select the annealing schedule. In \cite{atchade_towards_2011}, the chains are added one by one, each chain targeting a swap acceptance rate of $23\%$ from the previous one. In \cite{miasojedow2013adaptive}, this scheme is modified in two ways: first, all annealing parameters are optimized simultaneously, and second, a different update for performing the stochastic optimization is proposed. To optimize all chains simultaneously, the authors assume that both the number of chains and the equi-acceptance probability are specified. Since this information is not provided to the other methods, in order to perform a fair comparison, for the method we label as ``Miasojedow, Moulines, Vihola'' we implemented a method which adds the chain one at the time while targeting the swap acceptance rate of $23\%$ but based on the improved stochastic optimization update of \cite{miasojedow2013adaptive}. Specifically, both \cite{atchade_towards_2011} and \cite{miasojedow2013adaptive} rely on updates of the form $\rho_{n+1} = \rho_n + \gamma_n (\alpha_{n+1} - 0.23)$ where $\gamma_n$ is an update schedule and $\rho_n$ is a re-parameterization of difference in annealing parameter from the previous chain $\beta$ to the one being added $\beta'$. The work of \cite{atchade_towards_2011} uses the update $\beta'_n = \beta (1 + \exp(\rho_n))^{-1}$, whereas the work of \cite{miasojedow2013adaptive} specifies the explicit parameterization used for $\rho$, namely $\rho = \log(\beta'^{-1} - \beta^{-1})$, from which the update becomes $\beta'_n = \beta (1 + \beta \exp(\rho_n))^{-1}$. Moreover, while \cite{atchade_towards_2011} use $\gamma_n = (n+1)^{-1}$, \cite{miasojedow2013adaptive} suggest to use $\gamma_n = (n+1)^{-0.6}$. The experiments confirm that the latter algorithm is more stable.

\subsection{Description of models and datasets}\label{description-models-data}

In Section~\ref{sec_comparison} we benchmark the methods on the following four models. First, a hierarchical model applied to a dataset of rocket launch failure/success indicator variables \cite{McDowell_2019}. We organized the data by types of launcher, obtaining $5,667$ launches for $367$ types of rockets (processed data available at \url{https://github.com/UBC-Stat-ML/blangDemos/blob/master/data/failure_counts.csv}). Each type is associated with a Beta-distributed parameter with parameters tied across rocket types, with the likelihood given by a Binomial distribution (full model specification available at \url{https://github.com/UBC-Stat-ML/blangDemos/blob/master/src/main/java/hier/HierarchicalRockets.bl}). The second model is a Spike-and-Slab variable selection model applied to the RMS Titanic Passenger Manifest dataset \cite{Hind_2019}. The preprocessed data is available at \url{https://github.com/UBC-Stat-ML/blangDemos/tree/master/data/titanic}. The data consist in binary classification indicators for the survival of each individual passenger as well as covariates such as age, fare paid, etc. We used a Spike-and-Slab prior with a point mass at zero and a Student-t continuous component (full model specification available at \url{https://github.com/UBC-Stat-ML/blangDemos/blob/master/src/main/java/glms/SpikeSlabClassification.bl}). Third, we used the Ising model from Section~\ref{sec_ising}. Finally, we also used an end-point conditioned Wright-Fisher stochastic differential equation (see, e.g., \cite{tataru_statistical_2017}). For this last model we used synthetic data generated by the model. The specification of this last model is available at \url{https://github.com/UBC-Stat-ML/blangSDK/blob/master/src/main/java/blang/validation/internals/fixtures/Diffusion.bl}. 

The model-specific command line options used for all four experiments is available at \url{https://github.com/UBC-Stat-ML/blangDemos/blob/master/src/main/resources/demos/models.csv}. The same file also documents the location of the probabilistic programming source code and precise command line arguments for the 16 models summarized in Figure~\ref{fig:all-models}.

\subsubsection{Gaussian model}\label{app_normal}

Suppose $\pi\sim N(0,\sigma^2\mathbb{I}_d)$, and $\pi_0\sim N(0,\sigma^2_0\mathbb{I}_d)$ with $\sigma_0>\sigma$. It can be shown that $\pi^{(\beta)}\sim N(0,\sigma^2_\beta\mathbb{I}_d)$ where $\sigma_\beta^{-2}=(1-\beta)\sigma_0^{-2}+\beta\sigma^{-2}$. Theorem 1 in \cite{predescu2004incomplete} implies the following closed form expressions for $\lambda(\beta)$ and $\Lambda(\beta)$
\begin{equation}\label{gaussian_local}
\lambda(\beta)=\frac{2^{1-d}(\sigma^{-2}-\sigma_0^{-2})}{B\left(\frac{d}{2},\frac{d}{2}\right)}\sigma_\beta^2,\qquad \Lambda(\beta) = \frac{2^{2-d}}{B\left(\frac{d}{2},\frac{d}{2}\right)}\log\left(\frac{\sigma_0}{\sigma_\beta}\right),
\end{equation}
where $B(a,b)$ is the Beta function. As $d\to\infty$, we have $ \Lambda\sim \sqrt{\frac{2d}{\pi}}\log\left(\frac{\sigma_0}{\sigma}\right)$, which is consistent with Proposition \ref{prop_high_dim}. 
We see from Figure \ref{fig_gaussian_communication} that the empirical approximation of $\hat\lambda,\hat\Lambda$ from Algorithm \ref{alg_adaptive} are consistent with the theoretical values from in \eqref{gaussian_local}.

Substituting \eqref{gaussian_local} into \eqref{beta_opt_constraint} we have $\beta^*_k$ satisfies
\begin{equation}\label{gaussian_opt_schedule}
\sigma_{\beta_k^*}=\sigma^{\frac{k}{N}}\sigma_0^{1-\frac{k}{N}}.
\end{equation}
This is the same spacing obtained (based on a different theoretical approach) in \cite{atchade_towards_2011} and \cite{predescu2004incomplete} for the Gaussian model. This combined with Proposition \ref{prop_multimodal} also justifies why the geometric schedule works for Gaussian mixture models with well-separated modes. 
See Figure~\ref{fig_trajectory} for realizations of the index process for the reversible and non-reversible PT algorithms. 

\begin{figure}[ht]
	\begin{center}
		\includegraphics[width = 0.450\linewidth]{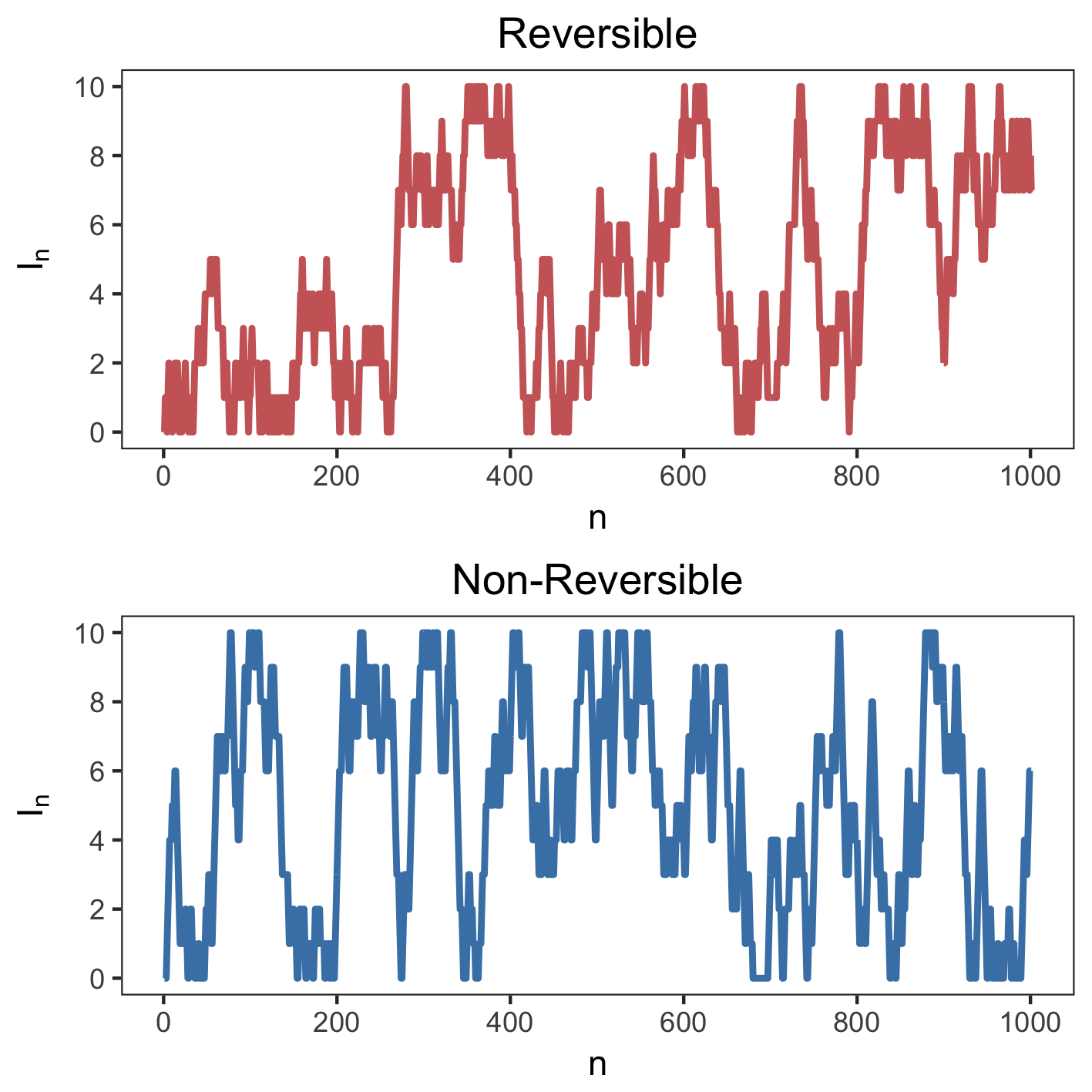}
		\includegraphics[width = 0.45\linewidth]{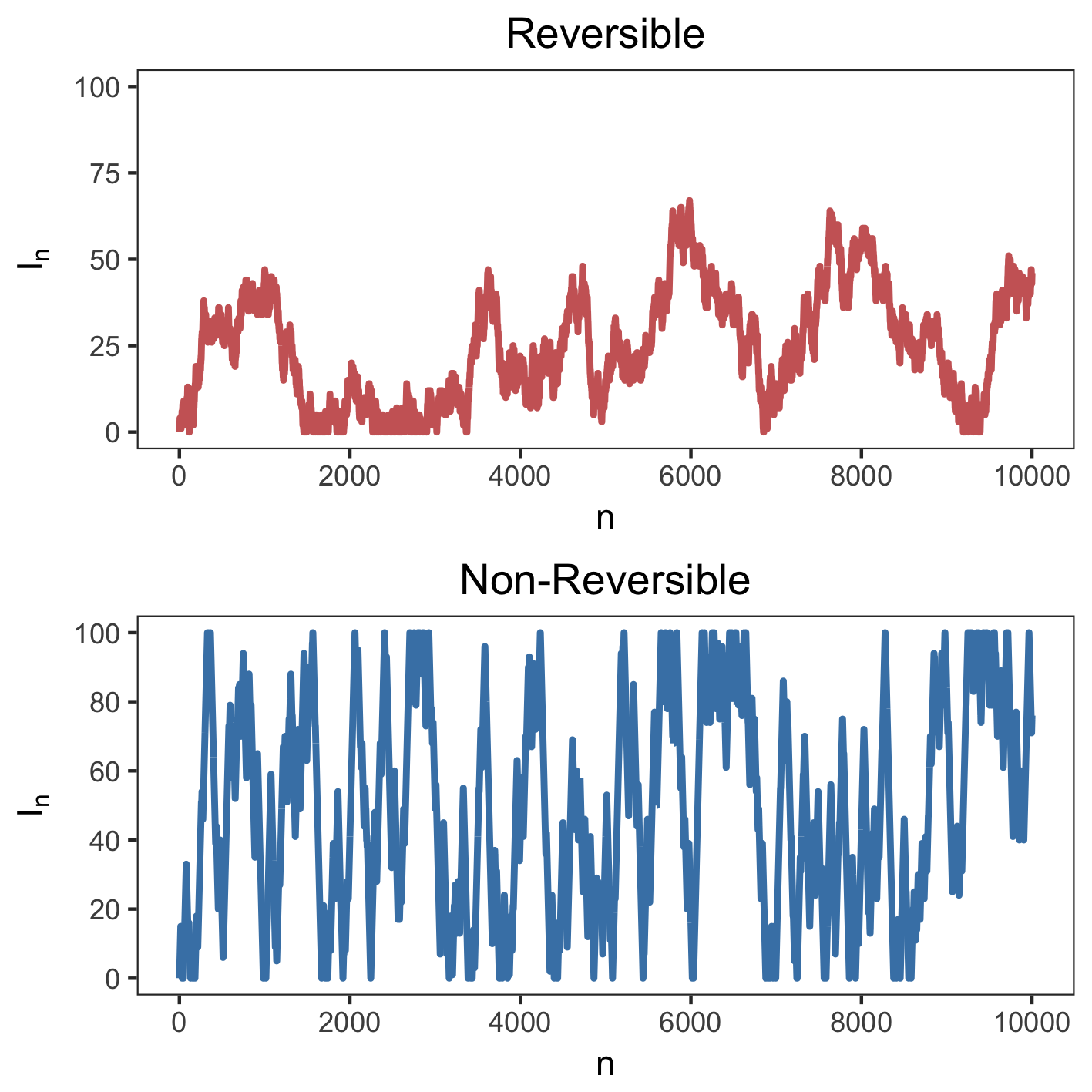}
	\end{center}
	\caption{Sample trajectories of the component $I^{0}_n$ of the index process for a Gaussian model with $\Lambda=5$  with schedule $\partition$ satisfying \eqref{gaussian_opt_schedule} for $N=10$ (left) and $N=100$ (right). The trajectories are run over $100N$ iterations for reversible (top) and non-reversible (bottom) PT.}
	\label{fig_trajectory}
\end{figure}

\begin{figure}
	\begin{center}
		\includegraphics[width=0.3\linewidth]{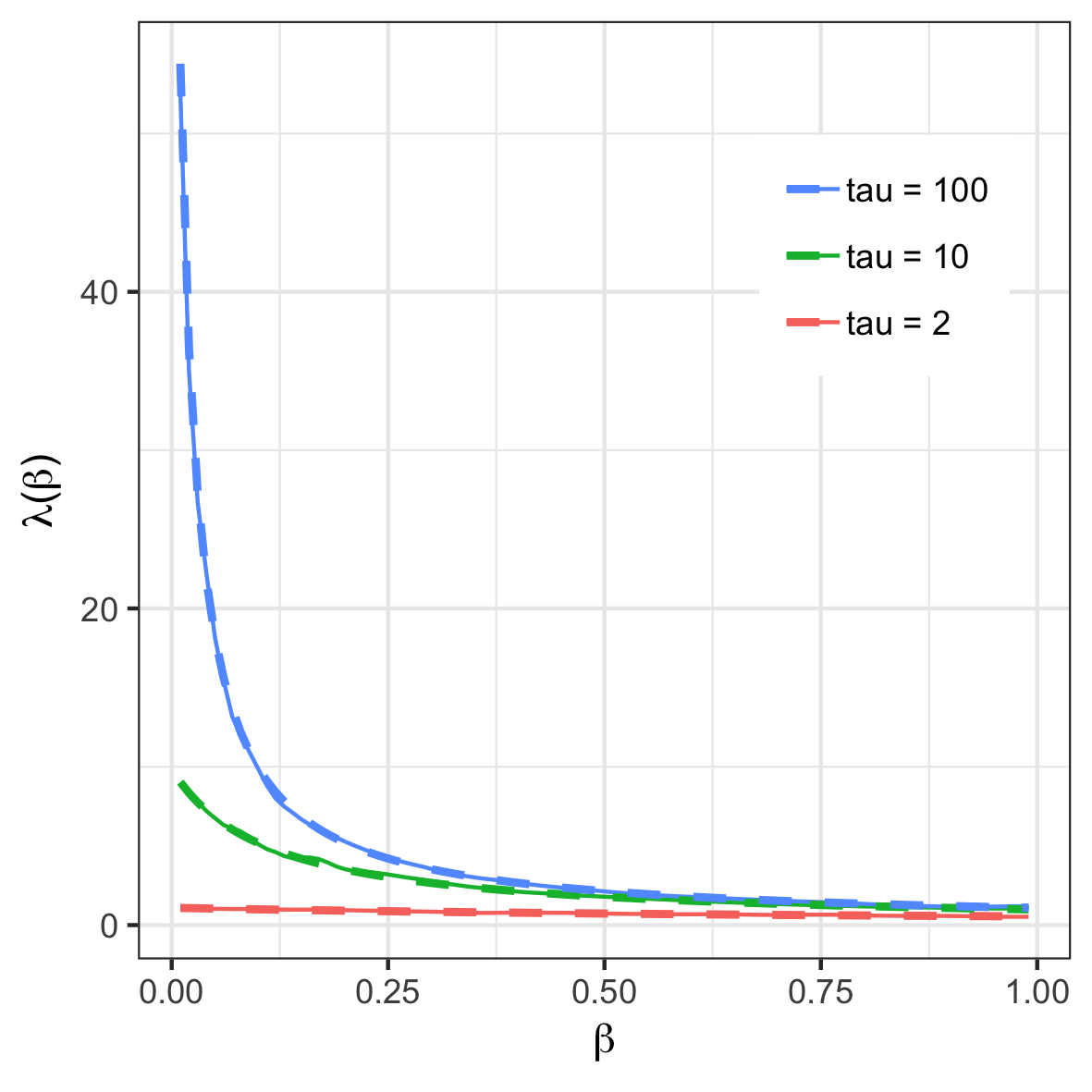}
		\hspace{.5in}
		\includegraphics[width=0.3\linewidth]{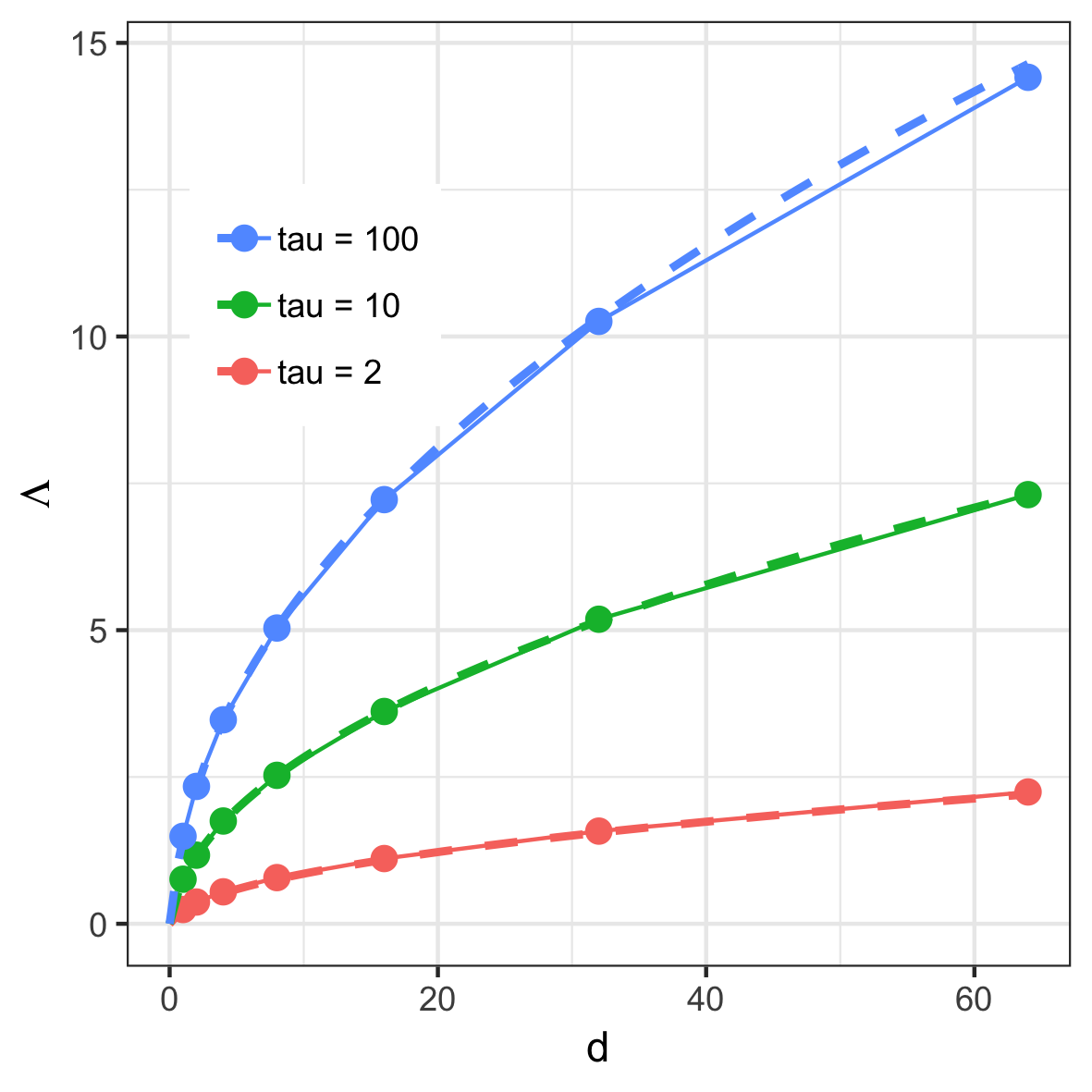}
		\caption{(Left) The local communication barrier for $d=8,\sigma=1/2,1/10,1/100$ and $\sigma_0=1$. (Right) The global communication barrier as a function of $d$ for $\tau=2,10,100$ and $\sigma_0=1$. The solid line is the approximation $\hat{\lambda}(\beta)$ (respectively $\hat{\Lambda}$), resulting from Algorithm \ref{alg_adaptive} ($N=60$ and $n=10\,000$ scans) and the dotted line is the analytic expression from \eqref{gaussian_local}.}
		\label{fig_gaussian_communication}
	\end{center}
\end{figure}

\subsubsection{Ising model}\label{sec_ising}

We now compute numerically $\lambda$ for the Ising model on a 2-dimensional lattice of size $M\times M$ with magnetic moment $\mu$. Note that the terminology ``Ising'' is technically incorrect since we are using a finite $M$ instead of an infinite lattice. The terminology ``autologistic'' would be more appropriate but we abuse the terminology ``Ising'' since it is widespread in the computational statistics literature. Similarly, when we talk about ``phase transition'' this should be interpreted as an approximate phase transition since $M$ is finite. 

Using the notation $x_i\sim x_j$ to indicate sites are nearest neighbours on the lattice, the target distribution is annealed by the inverse temperature $\beta$ and the tempered distributions are given by
 	\begin{align}
 	\pi^{(\beta)}(x)=\frac{1}{Z(\beta)} \exp\left(\beta\sum_{x_i\sim x_j} x_ix_j+\mu \sum_{i}x_i\right).
 	\end{align}
This is an $M^2$ dimensional model which undergoes a phase transition as $M\to\infty$ at some critical inverse-temperature $\beta_c$. When $\mu=0$ it is known that $\beta_c=\log(1+\sqrt{2})/2$ \cite{baxter2007exactly}. We consider experiments with $\mu=0$ and with $\mu = 0.1$, the latter denoted ``Ising with magnetic field'' and abbreviated ``magnetic'' in composite figures.  

We observe that $\lambda$ exhibits very different characteristics in this scenario compared to the Gaussian model: it is not monotonic and is maximized at the critical temperature. Consequently, the optimal annealing schedule is denser near the phase transition. We also note from Figure \ref{fig:ising_communication} that both $\lambda$ and $\Lambda$ increases roughly linearly with respect to $M$, similarly to the conclusion of Proposition \ref{prop_high_dim}, however here the conditions of Proposition \ref{prop_high_dim} do not apply. As $N$ increases, the round trip rate of reversible PT decays to 0 and non-reversible PT increase towards $(2+2\Lambda)^{-1}$ as seen in Figure \ref{fig:ising_rtr}. This is consistent with Theorem \ref{thm_efficiency_convergence}.

 \begin{figure}[ht]
 \begin{center}
 \includegraphics[height=0.3\linewidth]{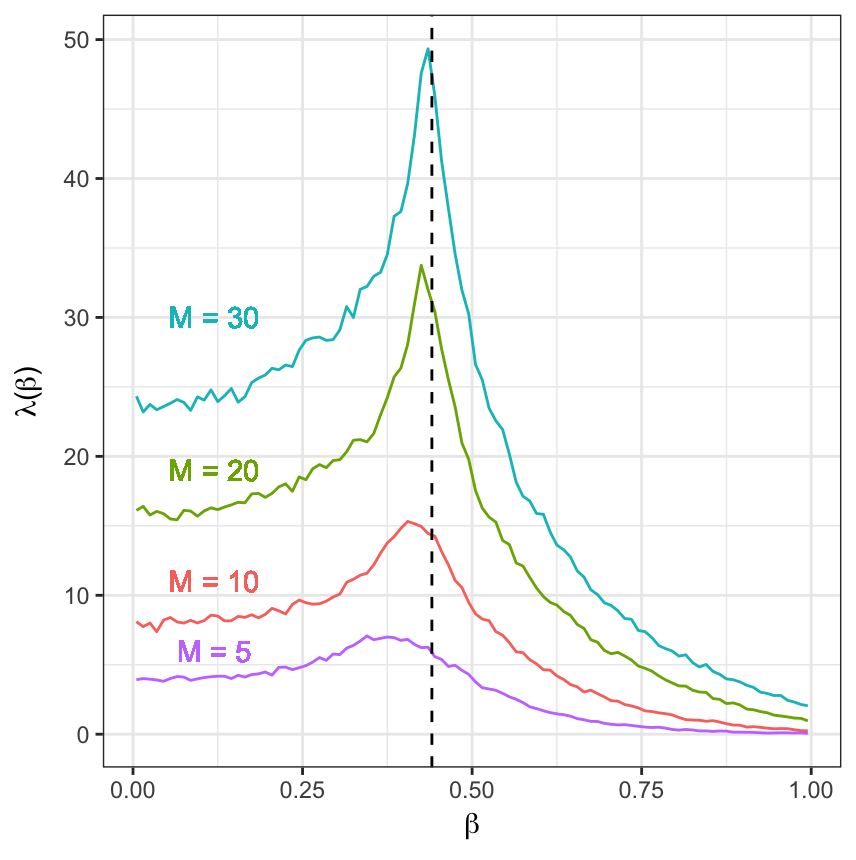}
 \hspace{.5in}
 \includegraphics[height=0.3\linewidth]{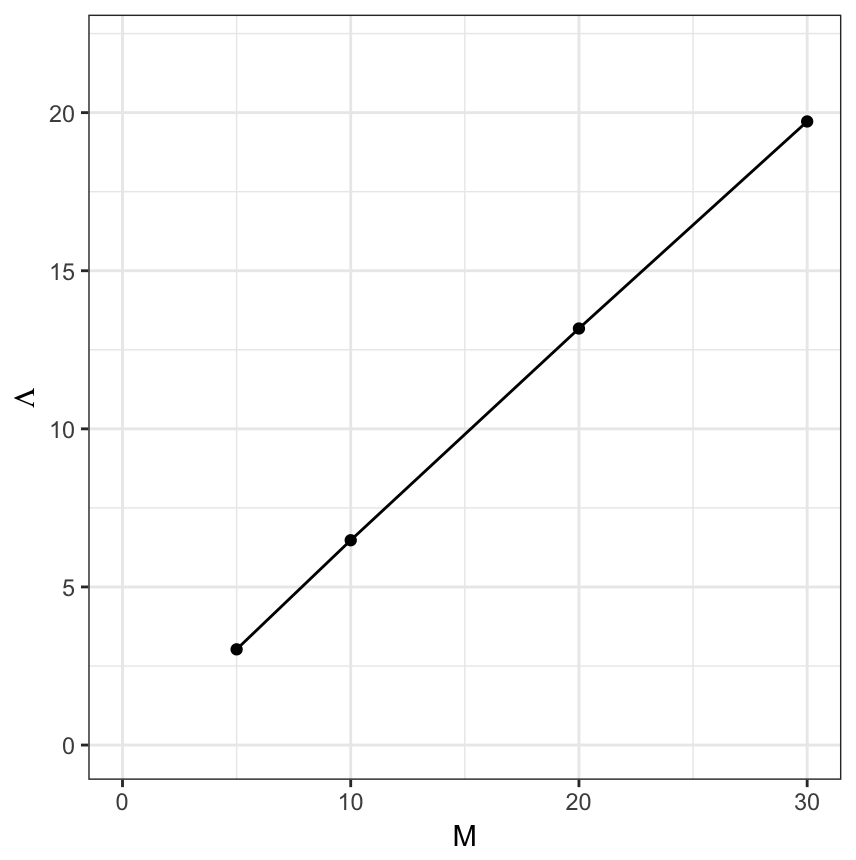}
 \end{center}
  \caption{Estimate of the local communication barrier (left) and global communication barrier (right) for the Ising model with $\mu=0$ and $M=5,10,20,30$. The vertical line is at the phase transition.}
 \label{fig:ising_communication}
 \end{figure}

 \begin{figure}[ht]
 \begin{center}
 \includegraphics[height=0.3\linewidth]{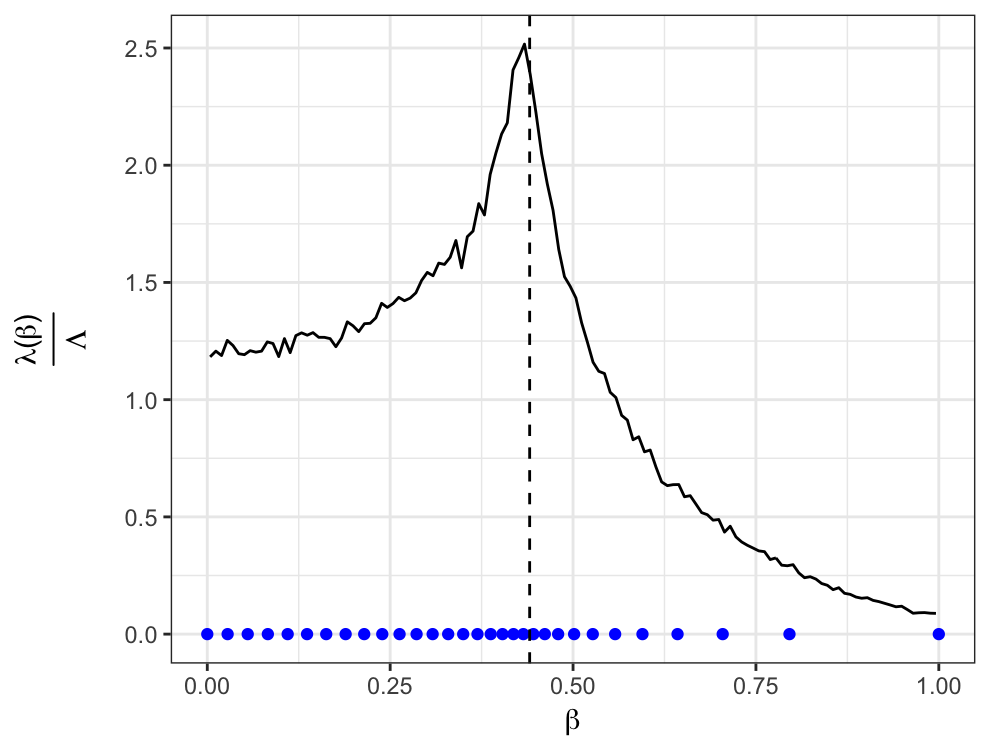}
 \hspace{.3in}
 \includegraphics[height=0.3\linewidth]{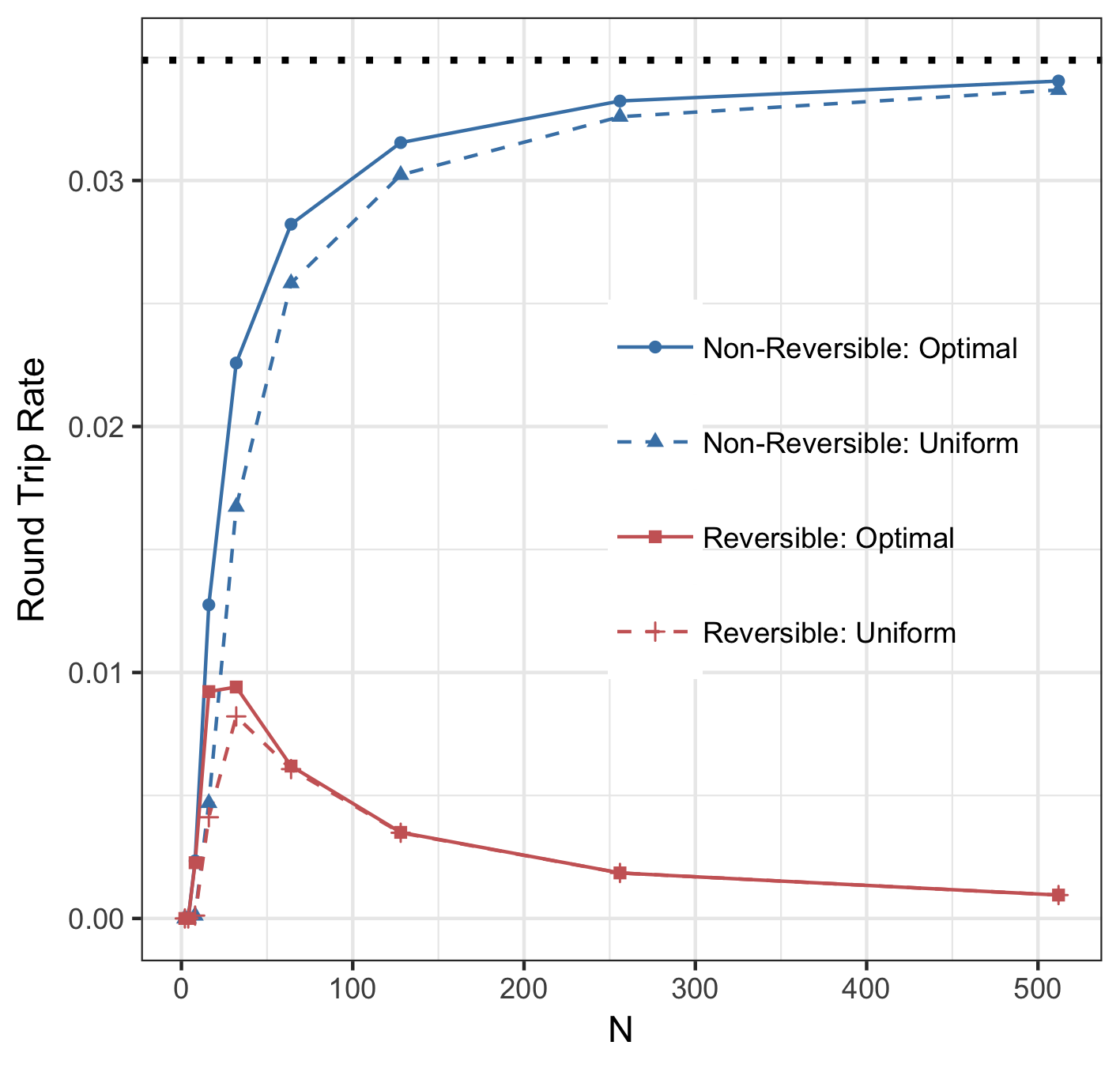}
 \end{center}
  \caption{(Left) Optimal annealing schedule for the Ising model with $M=20$, $\Lambda=13.33$ with $N=30$. The vertical line is at the phase transition. (Right) The round trip rates when $M=20$ with a uniform schedule (dashed) to the optimal schedule (solid) for both non-reversible (blue) and reversible (red) PT. The dotted horizontal line represents the approximation of the optimal round trip rate $(2+2\hat\Lambda)^{-1}$. }
 \label{fig:ising_rtr}
 \end{figure}
 
  \begin{figure}[H]
 	\centering
 	\includegraphics[width=0.3\linewidth]{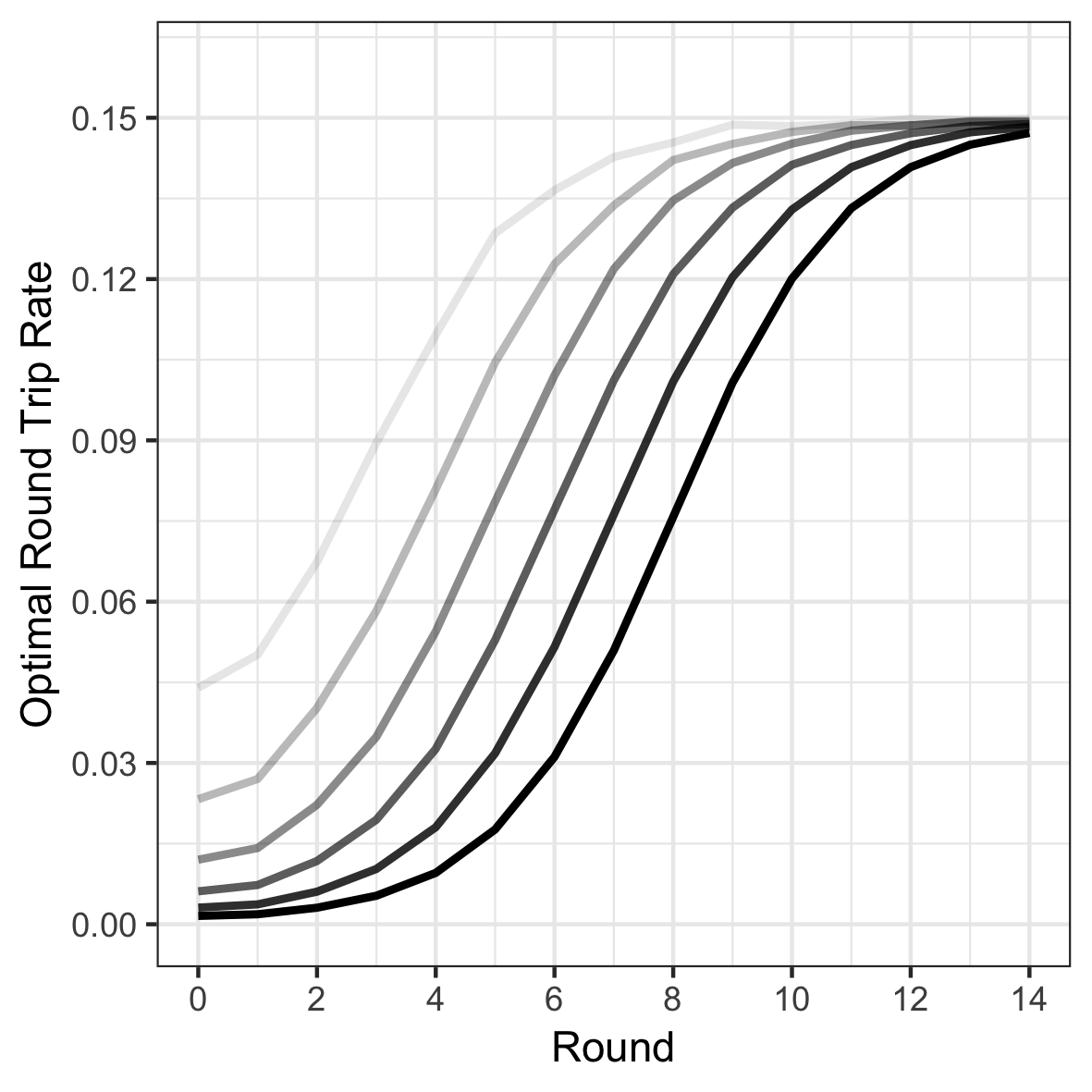}
 	\includegraphics[width=0.45\linewidth]{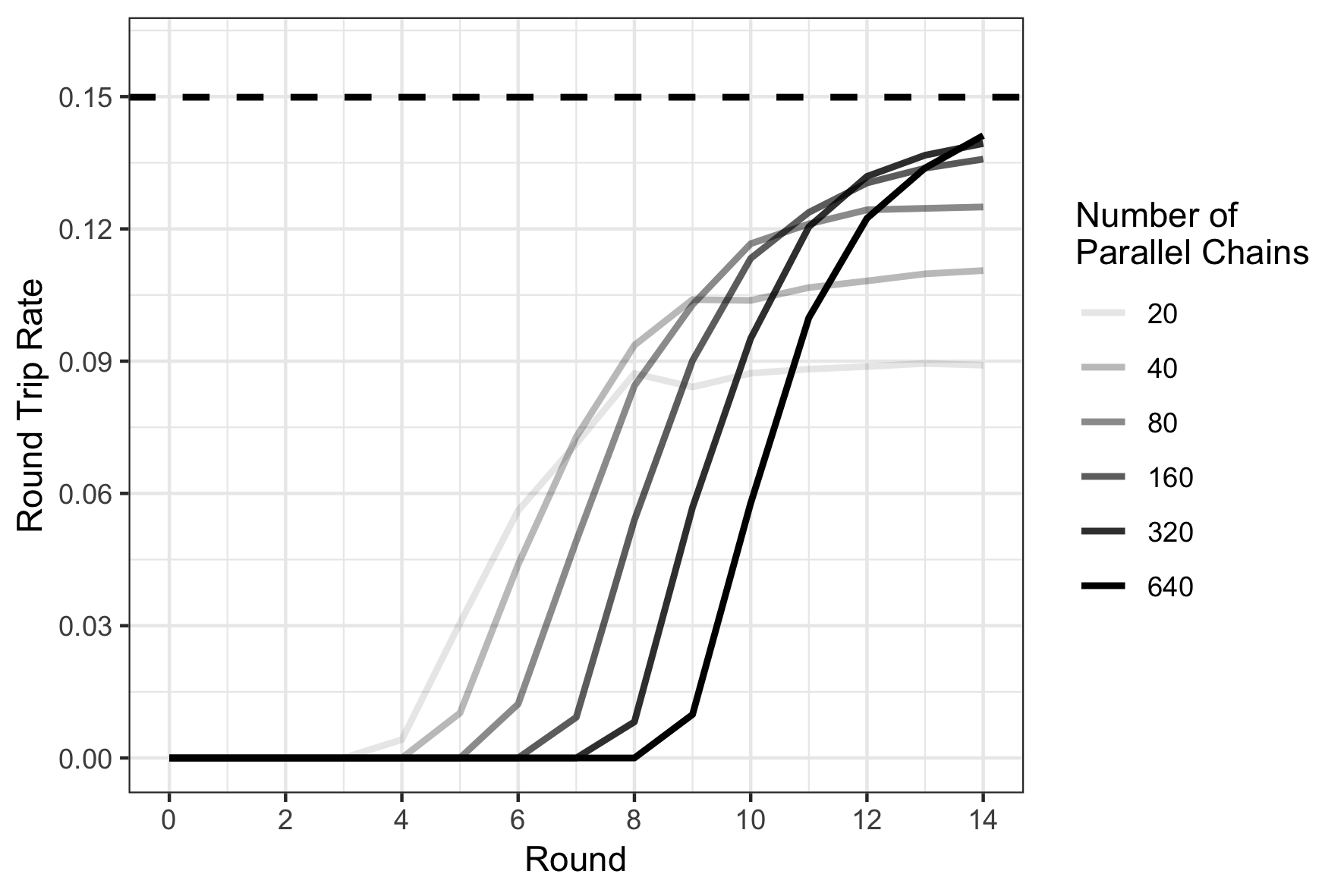}
 	\caption{Further results on robustness to ELE violation on the Ising model with a magnetic moment, $\mu = 0.1$. Impact of increasing the number $N$ of parallel chains for an example where ELE is severely violated (only half the variables are updated at each exploration step). (Left) Estimated upper bound $\bar \tau$. This quantity is well estimated for any $N$. (Right) Round trip rate $\tau$ measured from the empirical index process trajectories. Dotted line shows $\bar \tau$. The result shows $\tau$ approaches $\bar \tau$ by increasing $N$. This experiment suggests that the result in Theorem~\ref{thm_efficiency_convergence}(c) may hold under more general conditions. Moreover since $\bar \tau$ is well estimated even when the number of chains is modest, the gap between $\bar \tau$ and observed $\tau$ can be readily estimated.} 
 	\label{fig:tau-and-tau-bar} 
 \end{figure}

 \subsubsection{Discrete-multimodal problem}\label{sec_discrete} Consider a discrete state space $\statespace=\{0,\dots,2k\}$, and let $1_{\mathrm{Even}}:\Omega\to \{0,1\}$ denote the indicator function for even numbers. Define $\pi(x)\propto a^{1_{\mathrm{Even}}(x)}$ for $a>1$ and $\pi_0(x)\propto 1$ with  $V(x)=-1_{\mathrm{Even}}(x)\log a$. The distribution $\pi$ has $k+1$ modes located where $x$ is even with low probability ``barriers" located at $x$ odd. The parameter $a$ controls the relative mass put on the modes. Therefore we have
 	\begin{equation}
 	\pi^{(\beta)}(x)= \frac{a^{\beta1_{\mathrm{Even}}(x)}}{Z(\beta)},
 	\end{equation}
 	where $Z(\beta)=k+(k+1)a^\beta$. A simple computation using \eqref{def_lambda} shows that the local communication barrier is,
 	\begin{equation}\label{toy_local}
 	\lambda(\beta)=\frac{k(k+1)a^\beta\log a}{(k+(k+1)a^\beta)^2}\xrightarrow[k\to\infty]{}\frac{a^\beta\log a}{(1+a^\beta)^2}.
 	\end{equation}
 	By integrating we obtain the global communication barrier between $\pi$ and $\pi_0$,
 	\begin{align}\label{toy_global}
 	\Lambda=\frac{k(k+1)(a-1)}{(2k+1)(k+(k+1)a)}\xrightarrow[k\to\infty]{} \frac{a-1}{2(a+1)}.
 	\end{align}

\subsubsection{Mixture model}\label{sec:mixture-model}

We consider a Bayesian mixture model with two mixture components. The likelihood for each component is a normal distribution with a non-conjugate Uniform$(0, 100)$ prior on the standard deviation and a normal prior on the means (standard deviation of $100$). We placed a uniform prior on the mixture proportion. We used simulated data generated from the model. While the mixture membership indicator latent random variables can be marginalized in this model, we sample them to make the posterior inference problem more challenging. Sampling these mixture membership random variables is representative of more complex models from the Bayesian non-parametric literature where marginalization of the latent variable is intractable; for example, this is the case for the stick-breaking representation of general completely random measures \cite{Zhu2020Slice}. 

In addition to the three MCMC methods described in the main text, we also ran baselines based on SMC and AIS, which are popular methods to explore complex posterior distributions. 
These methods also depend on the construction of a sequence of annealed distributions from prior to posterior. For the SMC and AIS baselines, to select the sequence of distributions we used an adaptive scheme based on relative ESS as described in \cite{Zhou2016}. Diagnostics of the adaptation are shown in Figure~\ref{fig:nrpt-diag}. These methods were parallelized at the particle level. We set the number of particles to achieve a similar running time compared to NRTP, namely $2\,000$ particles for SMC and $2\,500$ particles for AIS. We found that the quality of the posterior approximation was highly dependent on performing several rounds of rejuvenations on the final particle population. The wall-clock time with 5 and 20 rounds of rejuvenation (SMC-5, SMC-20) was comparable to NRPT (1.778min, 2.302min) however the posterior approximation is markedly poor compared to NRPT. With 100 rounds of rejuvenation, the posterior approximation matches closely that of NRPT, however this brings the computation cost to 5.064min. AIS did not perform well since the weights were highly unbalanced in the last iteration, effectively resulting in an approximation putting all mass to a single particle.

\subsubsection{Copy number inference}\label{app:copy-number-inference}

A copy number alteration is a widespread type of mutation in cancers. Whereas in healthy cells each non-sexual chromosomes comes in two copies, in certain cancer cells, chunks of chromosomes of small and large size are frequently deleted or duplicated \cite{beroukhim_landscape_2010}---one extreme type of such event is a whole-genome duplication, in which the copy number of every genomic position is doubled. 
We consider the task of inferring the copy number for each location in the genome based on single-cell, whole genome data. For simplicity, we consider here a model for performing inference over the copy number profile of one individual cell at a time. 

The data consists in the number of reads $y_{i,c}$ measured at different loci. Here $i$ denotes a sub-region or bin of a chromosome $c$.  We use the same binning method as \cite{lai_hmmcopy_2020}. The goal is to infer for each bin $i$ and chromosome $c$ an integer copy number $x_{i,c}$. As previously reported in \cite{zahn_scalable_2017}, and confirmed in Figure~\ref{fig:chromo}, to a first order approximation the location parameter of the variable $y_{i,c}$ is determined by (1) the known GC contents of each bin, denoted $g_{i,c}$ and (2) the unknown copy number $x_{i,c}$. Here instead of performing GC-normalization as a pre-processing step, as done in \cite{zahn_scalable_2017}, we perform GC-normalization jointly with copy number inference. We describe in this section how doing so allows us to capture the uncertainty over plausible whole-genome duplication events, and induces multimodal posterior distributions.  

The single cell protocol used for the dataset considered here \cite{zahn_scalable_2017} minimizes the PCR amplification biases so that after correcting the GC-bias, the count in a bin is approximately proportional to the copy number of the bin: $ \log y_{i,c} = f(\theta, \log g_{i,c}) + \log(x_{i,c}) + \epsilon_{i,c},$ for $x_{i,c} > 0$. 
Based on Figure~\ref{fig:chromo} in Appendix~\ref{app:chromobreak}, we approximate $f$ using a latent  quadratic function, and hence $\theta$ is a global parameter containing the 3 coefficients specifying a quadratic polynomial. 
We assume $\epsilon_{i,c}$ are independent zero-mean normal random variables with standard deviation $\sigma$. 
To model the prior information that copy number events affect contiguous regions of the genome, for each chromosome $c$, we model the collection of random variables $x_{1,c}, x_{2,c}, \dots$ using a hidden Markov model (HMM). The maximum copy number in each chromosome is modelled using a geometric random variable $m_c$ with a shared latent parameter $p$. 

We placed an exponential prior on $\sigma$ with rate $1/10$. 
We posit normal priors on the coefficients $\theta$ (mean zero, variance 100). 
We put a uniform prior on $p$. The transition probabilities of the Markov chain are determined by the marginals of a Juke Cantor model with a global rate parameter $\tau$. We put a unit exponential prior on $\tau$. Full model specification is available at \url{https://github.com/UBC-Stat-ML/nowellpack/blob/multimodality-example/src/main/java/chromobreak/SingleCell.bl}.

\subsection{Multimodality of the examples considered}

Figures~\ref{fig:mixture-V_vs_X} (bottom left), \ref{fig:multimodality-spike-slab} and \ref{fig:multimodality-ising} support the multimodality of two of the examples considered in Section~\ref{sec_comparison}, namely the Ising and Spike-and-Slab examples. 
Multimodality of other examples considered in Section~\ref{sec:empirical}, \ref{sec_ELE_violation}, \ref{sec:mixture-example} and \ref{sec:chromobreak} is demonstrated in  
Figures~\ref{fig:mixture-V_vs_X}, \ref{fig:chromo}, \ref{fig:trace-plots-mcmcs}, \ref{fig:multimodality-chromo} and \ref{fig:nstates-traces}. Moreover, Figures~\ref{fig:mixture-V_vs_X}, \ref{fig:trace-plots-mcmcs}, \ref{fig:multimodality-chromo} and \ref{fig:nstates-traces} demonstrate that using standard MCMC is insufficient to explore these multimodal distributions.

\subsection{Selection of $\pi_0$}\label{sec:selection-of-pi0}

In our experiments all Bayesian models considered have proper priors, and hence we set $\pi_0$ to the prior in all these situations. The local exploration kernel in this case consists in an independent draw from the prior (performed by sorting the latent random variables according to a linearization of the partial order induced by the directed graphical model, and sampling their values according to these sorted laws).

For the Ising model, we let $\pi_0$ denote a product of independent and identically distributed Bernoulli($1/2$) random variables, one for each node in the Ising grid. It is then straightforward to interpolate between this product distribution and the Ising model of interest using a geometric average. Independent sampling from $\pi_0$ is used for the local exploration kernel at $\beta = 0$. 

Similarly, for the rotor (XY) model, we take $\pi_0$ to be a product of independent and identically distributed Uniform($-\pi, \pi$) random variables, and we use a geometric interpolation between $\pi_0$ and the target distribution.
 
\clearpage

\section{Supplementary figures for Section~\ref{sec_examples}}\label{app_algos}
\subsection{Section~\ref{sec:empirical}}\label{app:empirical}

\begin{figure}[H]
	\begin{center} 
		\includegraphics[width=0.57\linewidth]{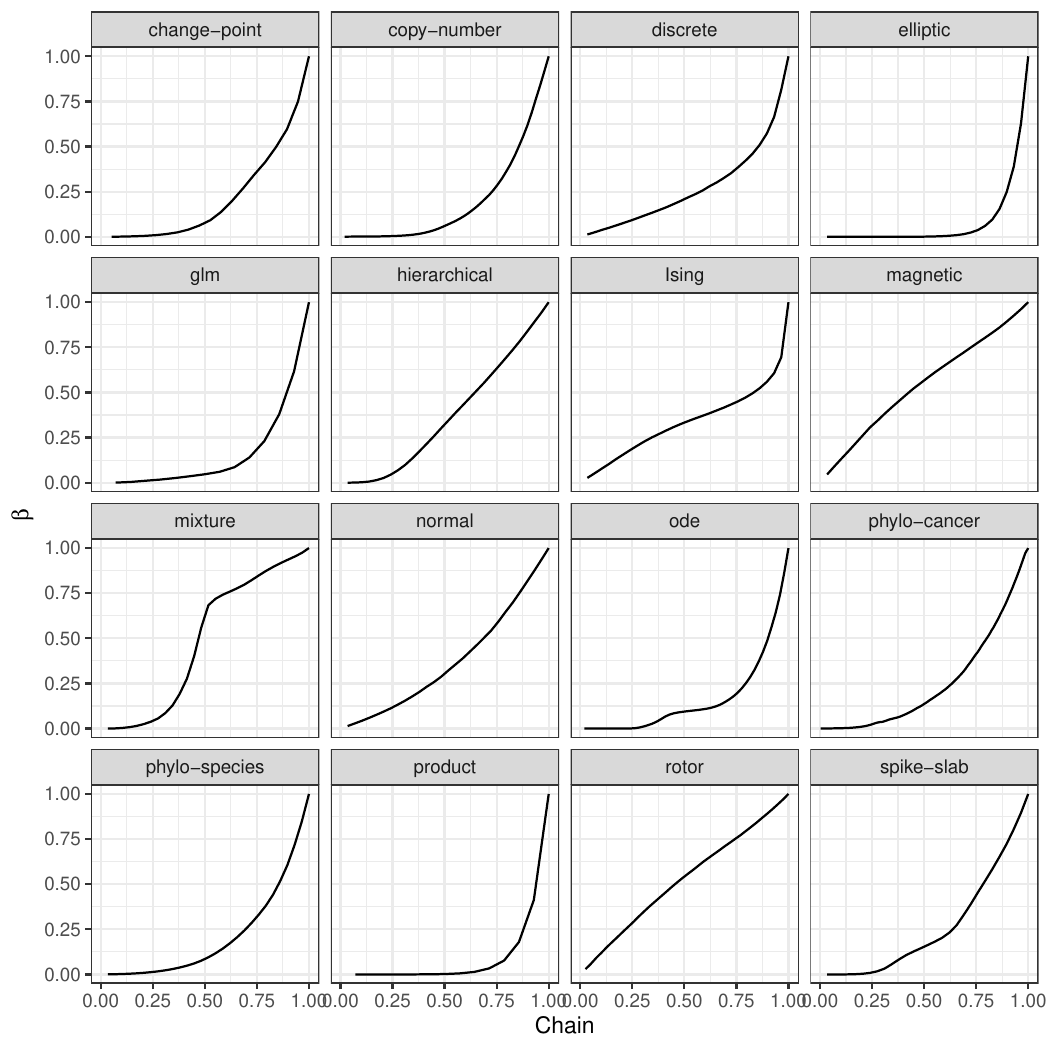} \\
		\includegraphics[width=0.57\linewidth]{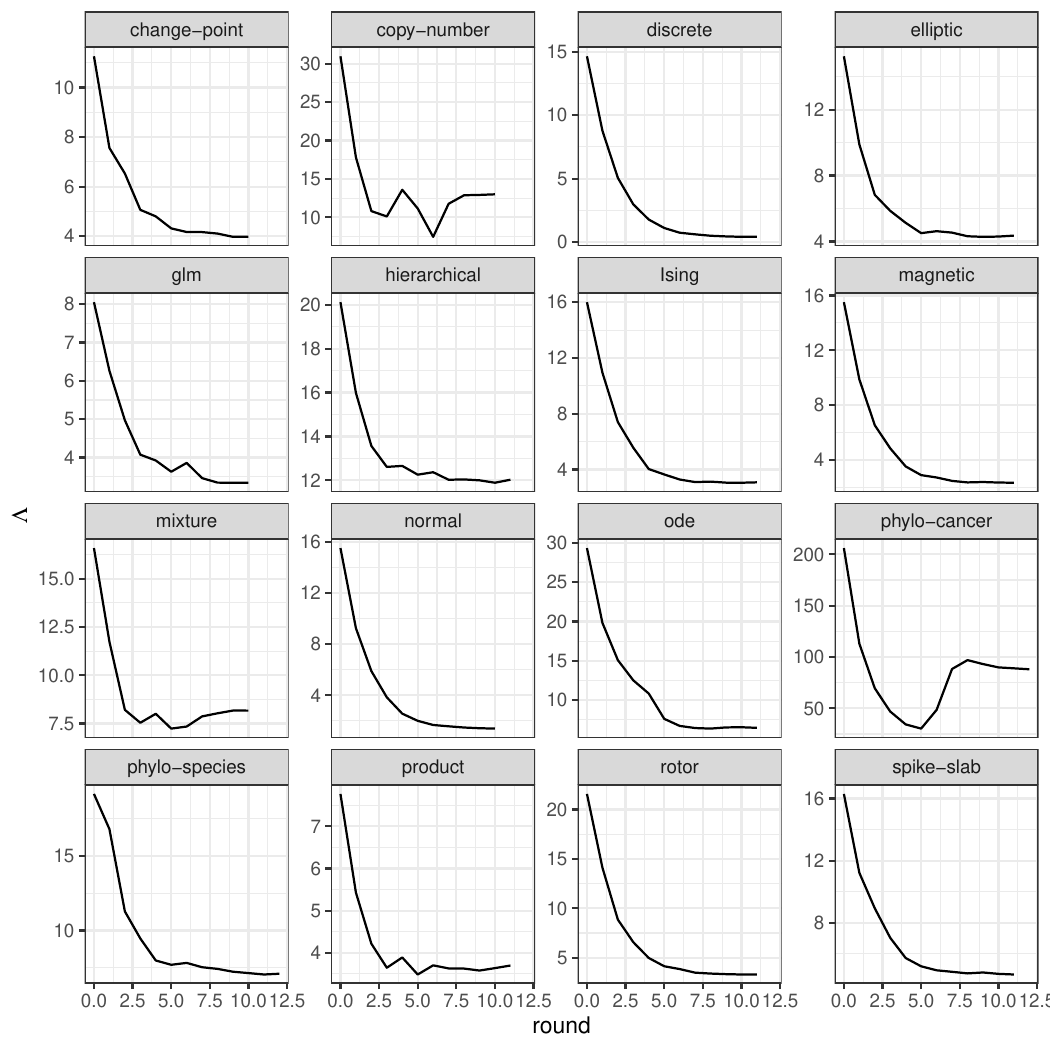} 
	\end{center}
	\caption{Top: Estimates of the schedule generator $G$ for 16 models. The abscissa denotes the normalized chain indices $i/N$, the ordinates, parameters $\beta$. The function $G$ is such that the partition $G(0), G(1/N), G(2/N), \dots, G(1)$ approximates equi-acceptance. Bottom: Estimates of $\Lambda$ for 16 models. The abscissa denotes the schedule optimization round.}
	\label{fig:Gs}
\end{figure}

\subsection{Section~\ref{sec_comparison}}\label{app:comparison}
 
\begin{figure}[H]
	\begin{center}
			\begin{tabular}{lllll}  
				\toprule
				Model (and dataset when applicable)  & Method & $N$ & $\hat s$ (\%) & $\hat \Lambda$ \\
				\midrule
				Spike-and-slab classification & NRPT+SCM & $15$ & $67.6$ & $4.54$ \\
				\;\;(RMS Titanic passengers data, \cite{Hind_2019}) & ARR & $6$ & $36.3$ & $3.82$ \\
				& MMV & $6$ & $39.9$ & $3.61$ \\
				& NRPT & $15$ & $67.2$ & $4.59$ \\
				\midrule
				Bayesian hierarchical model  & NRPT+SCM & $34$ & $63.7$ & $11.96$ \\
				\;\;(historical rocket failure data, \cite{McDowell_2019}) & ARR$^*$ & $2^*$ & $0^*$ & $1^*$ \\
				& MMV & $15$ & $33.6$ & $9.96$ \\
				& NRPT & $34$ & $63.4$ & $12.08$ \\
				\midrule
				Wright-Fisher diffusion & NRPT+SCM & $15$ & $75.2$ & $3.47$ \\
				& ARR & $5$ & $43.8$ & $2.81$ \\
				& MMV & $5$ & $43.7$ & $2.82$ \\
				& NRPT & $15$ & $75.5$ & $3.42$ \\
				\midrule	
				Ising model & NRPT+SCM & $15$ & $78.2$ & $3.05$ \\
				& ARR & $4$ & $37.9$ & $2.48$ \\
				& MMV & $4$ & $36.9$ & $2.52$ \\
				& NRPT & $15$ & $78.5$ & $3.01$ \\
				\bottomrule
			\end{tabular}
		\end{center}
		\caption{Summary statistics for the experiments in Section~\ref{sec_comparison}. The row marked with a star indicates failed optimization of one of the stochastic optimization schemes on the Bayesian hierarchical model.}
		\label{tab:summary-stat-comparisons}
\end{figure}

\begin{figure}[H]
	\begin{center}
		\includegraphics[width=0.24\linewidth]{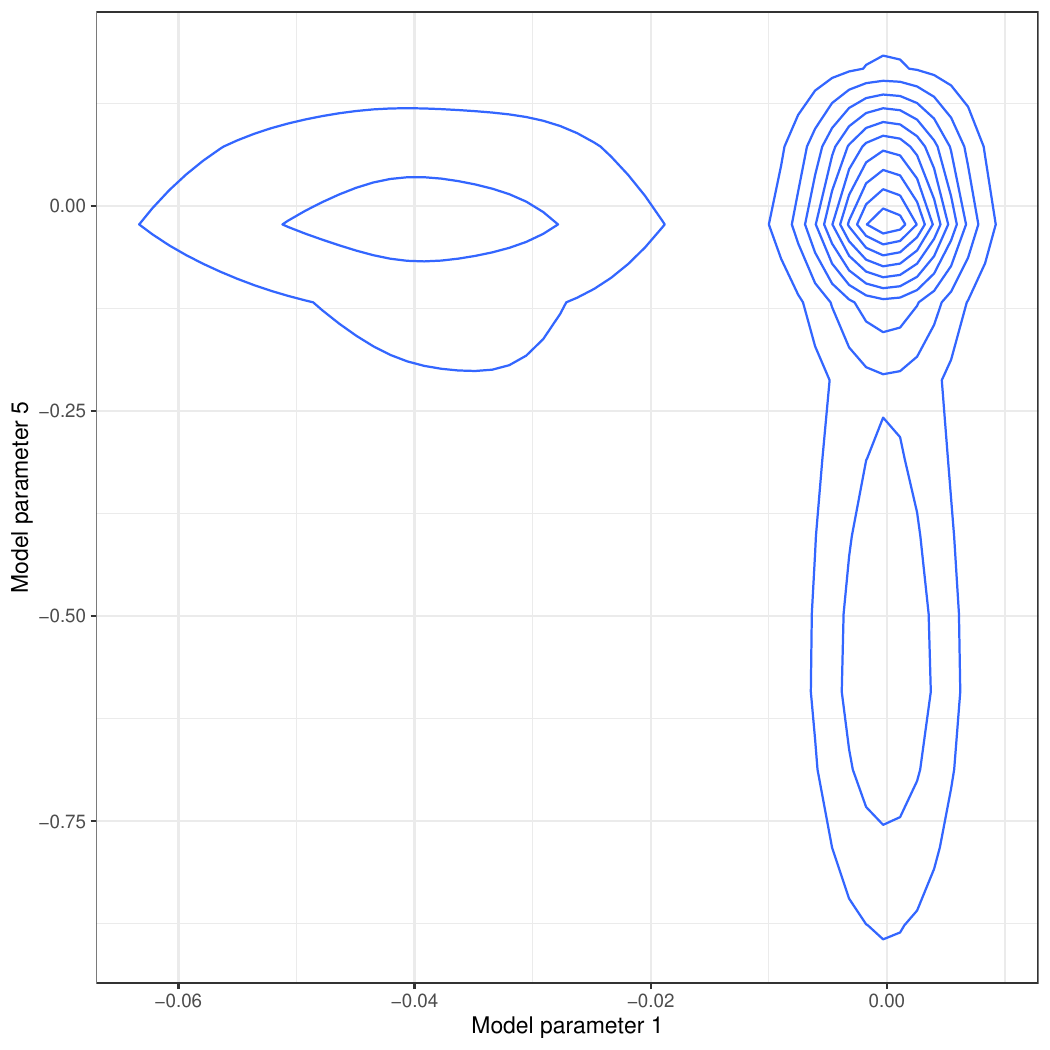}
		\includegraphics[width=0.24\linewidth]{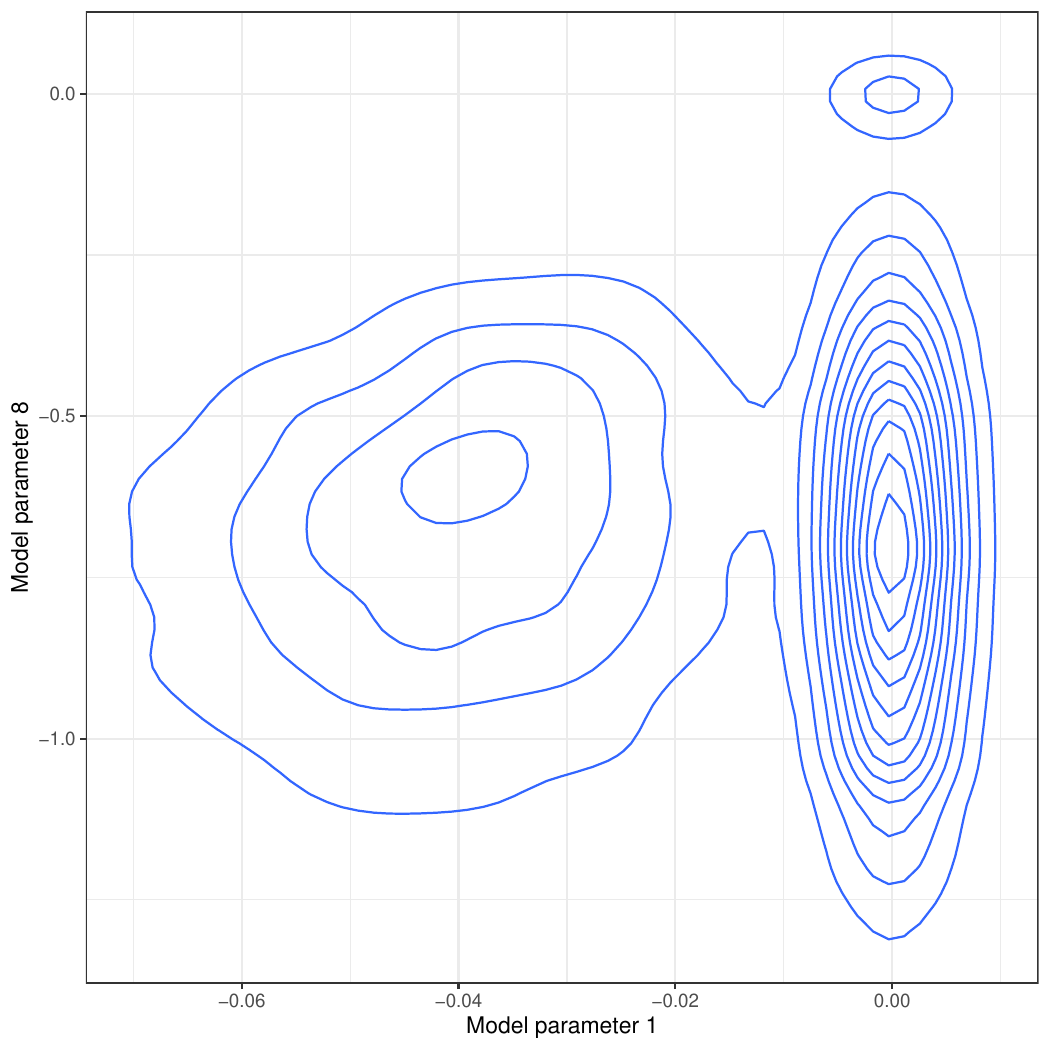}
		\includegraphics[width=0.24\linewidth]{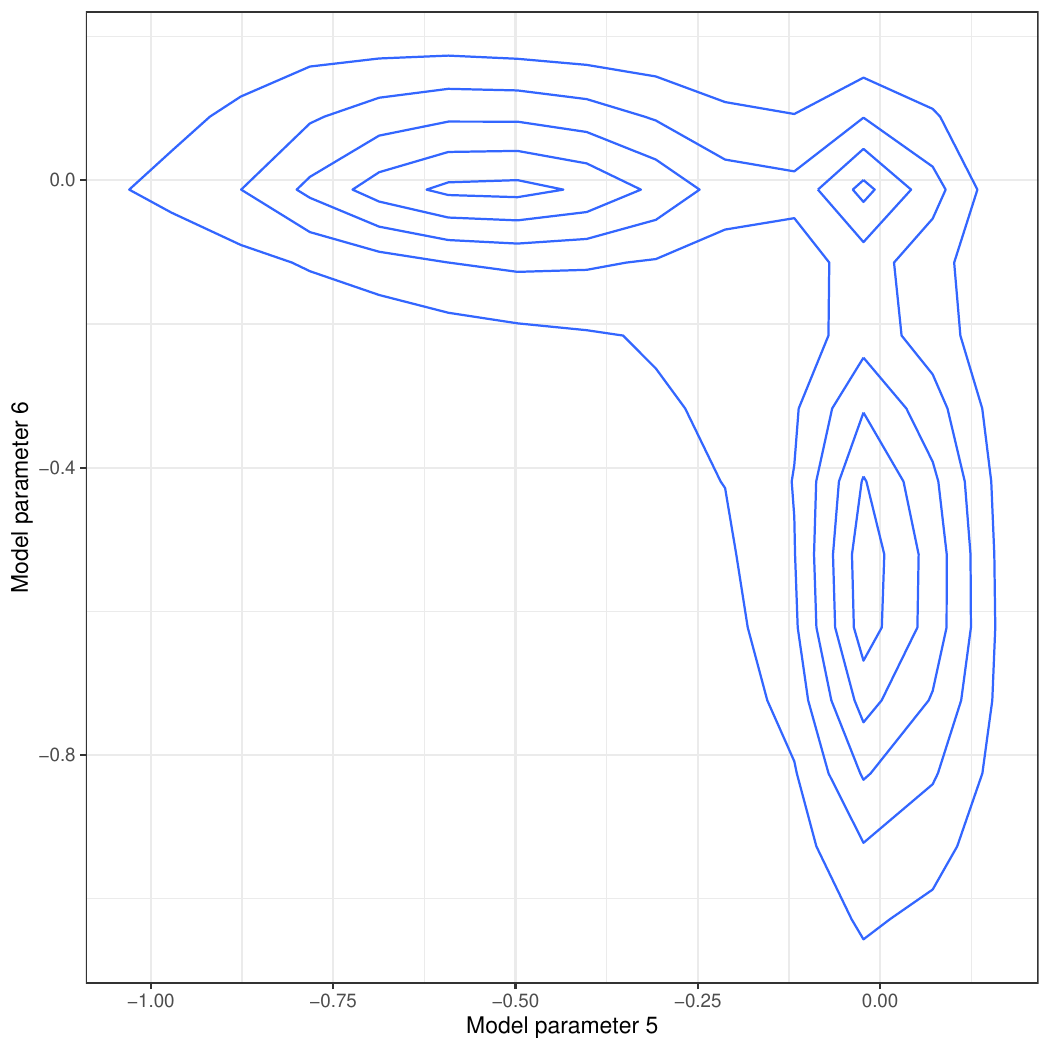}
	\end{center}
	\caption{Examples of multimodality encountered in the Spike-and-Slab model applied to the RMS Titanic passenger dataset. Joint posterior distribution of the following regression parameters  (1) \texttt{passengerAge} and \texttt{passengerClass}; (2) \texttt{passengerAge} and \texttt{SiblingsSpousesAboard}; (3) \texttt{passengerClass} and \texttt{passengerClass2}, the latter corresponding to an artificially duplicated regressor used to investigate the effect of co-linearity on the posterior approximation.}
	\label{fig:multimodality-spike-slab}
\end{figure}

\begin{figure}[H]
	\begin{center}
		\includegraphics[width=0.6\linewidth]{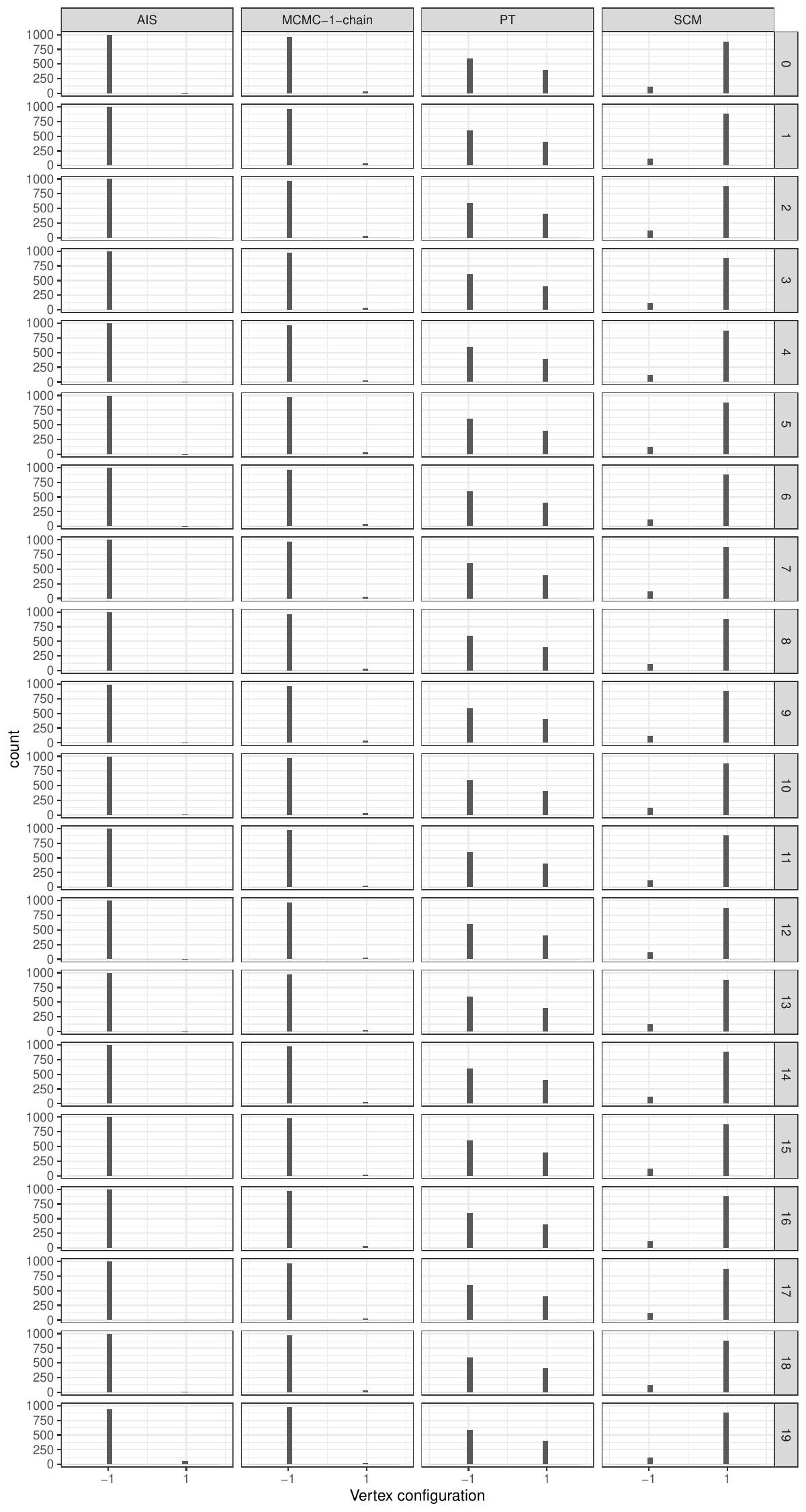}
	\end{center}
	\caption{Example of multimodality in an Ising model. Each facet is a posterior probability mass function, facet rows index the first 20 vertices of the Ising graph, facet columns, four posterior approximation methods. By symmetry, each marginal should place equal mass at spins $-1$ and $1$. The multimodality is only correctly captured by our proposed algorithm (NRPT, third column, wall clock time of 7.250s, $N = 10$). For the other methods, the posterior approximation either misses the other mode completely, namely when using AIS (left column, wall clock time of 5.723s, 1000 particles, relative ESS adaptive annealing schedule) or single-chain MCMC (second column, wall clock time of 1.439s), or, for annealed SMC (fourth column, wall clock time of 8.269s, 1000 particles, relative ESS adaptive annealing schedule), detects the multimodality but not their respective proportions. All wall clock time are reported for a 2.8 GHz Intel Core i7.}
	\label{fig:multimodality-ising}
\end{figure}

\subsection{Section~\ref{sec:mixture-example}}\label{app:mixture-example}
 \begin{figure}[H]
 	\begin{center}
 		\includegraphics[width=0.32\linewidth]{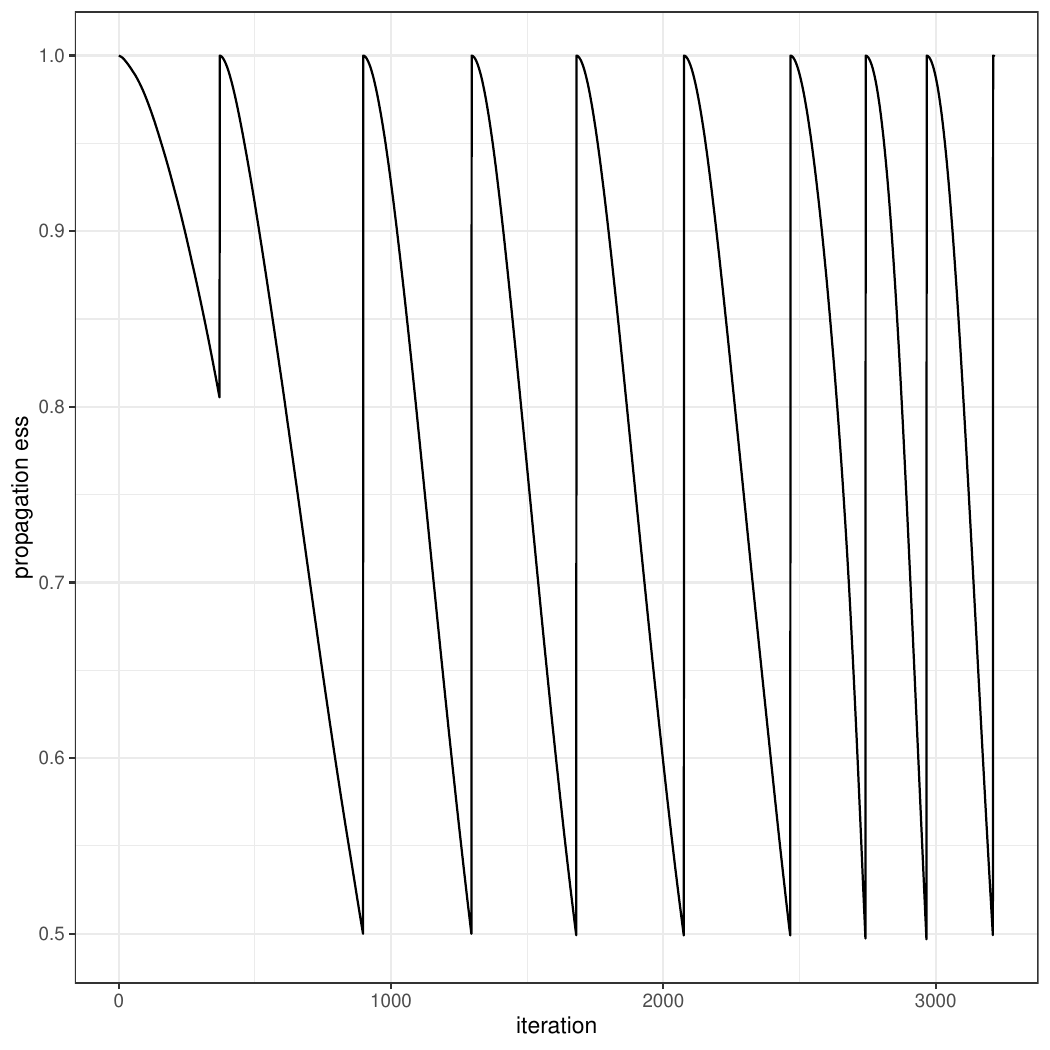} 
 		\includegraphics[width=0.32\linewidth]{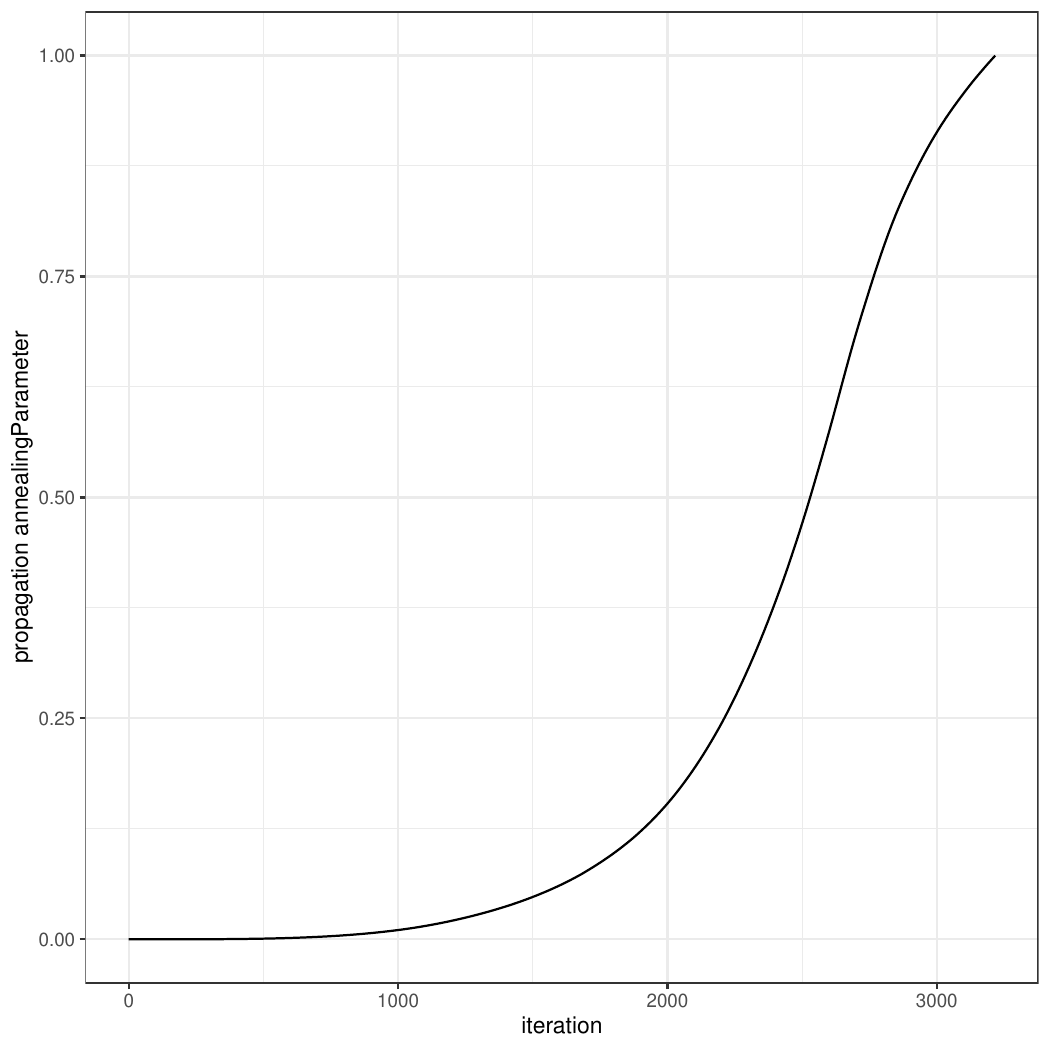}
 		\includegraphics[width=0.32\linewidth]{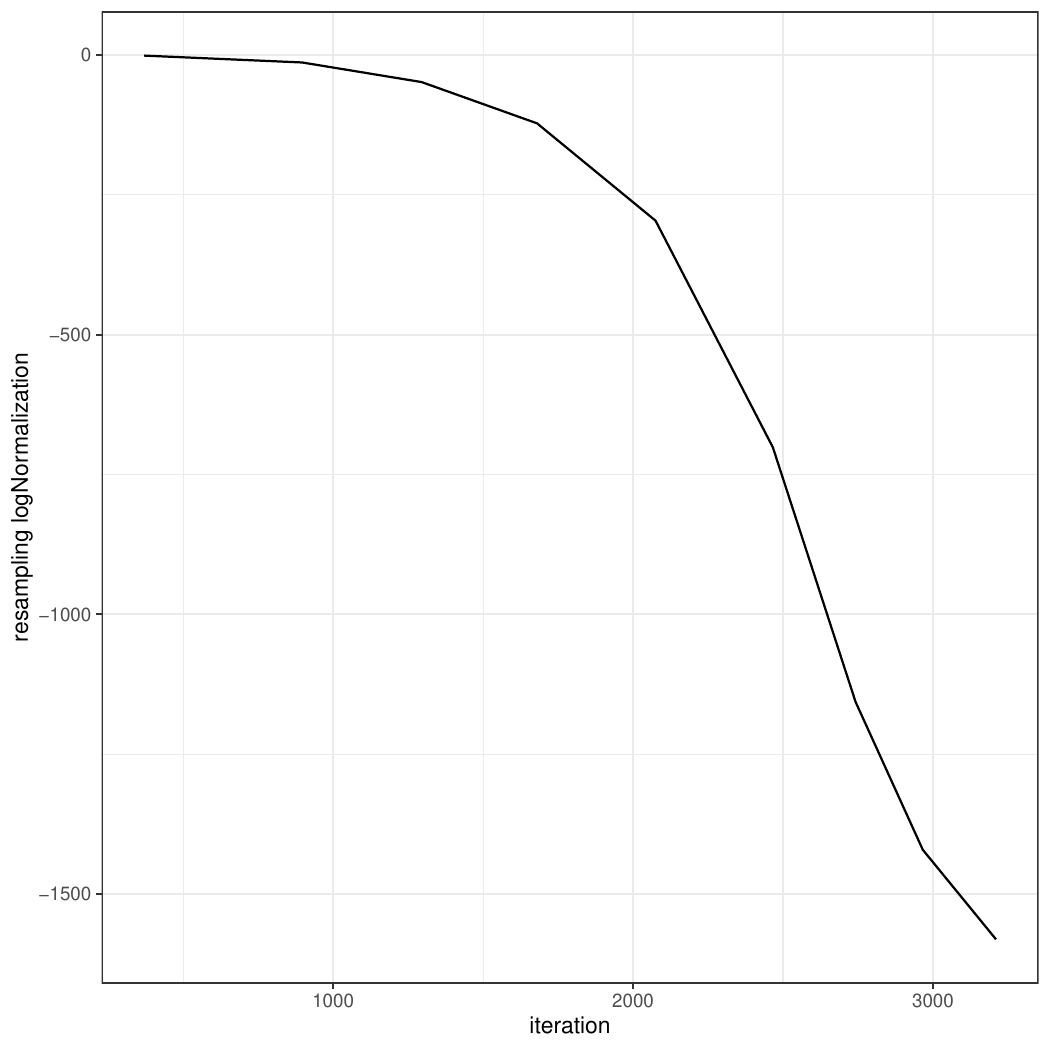} 
 	\end{center}
 	\caption{Diagnostics for the adaptive annealed SMC method (SCM), used as a benchmark on the Bayesian mixture problem. From left to right: (1) ESS as a function of the SMC iteration. Resampling is performed when the ESS drops below $1/2$; (2) annealing parameter as a function of the SMC iteration; (3) log normalization estimates at each resampling step.}
 	\label{fig:scm-diag}
 \end{figure} 
 
 \begin{figure}[H]
 	\begin{center}
 		\includegraphics[width=0.24\linewidth]{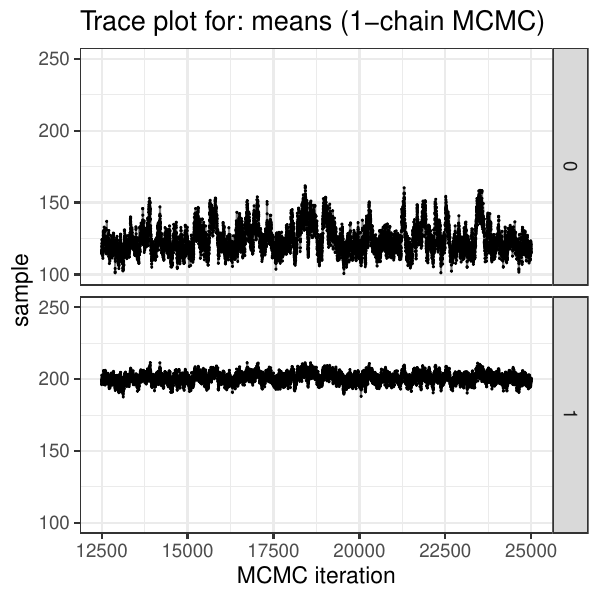} 
 		\includegraphics[width=0.24\linewidth]{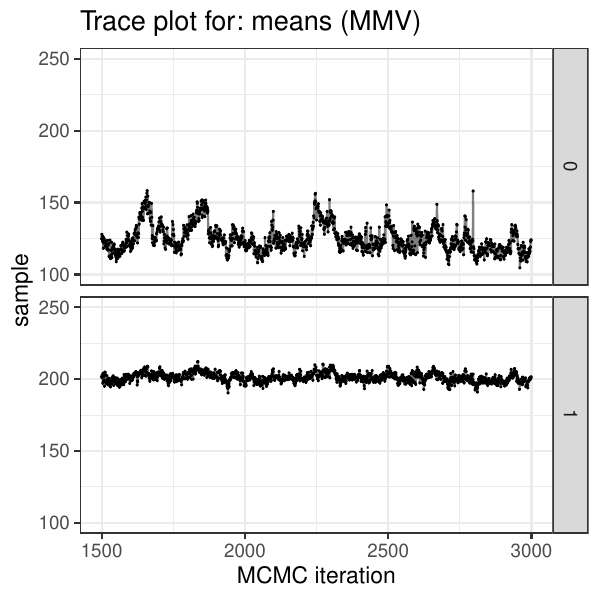}
 		\includegraphics[width=0.24\linewidth]{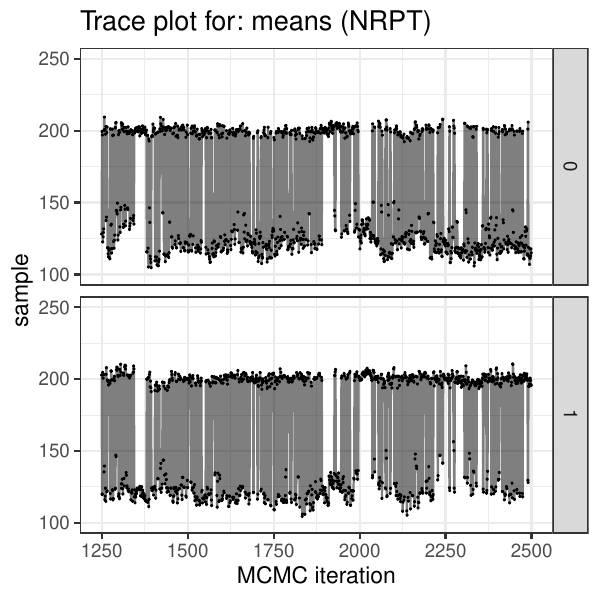} 
 		\includegraphics[width=0.24\linewidth]{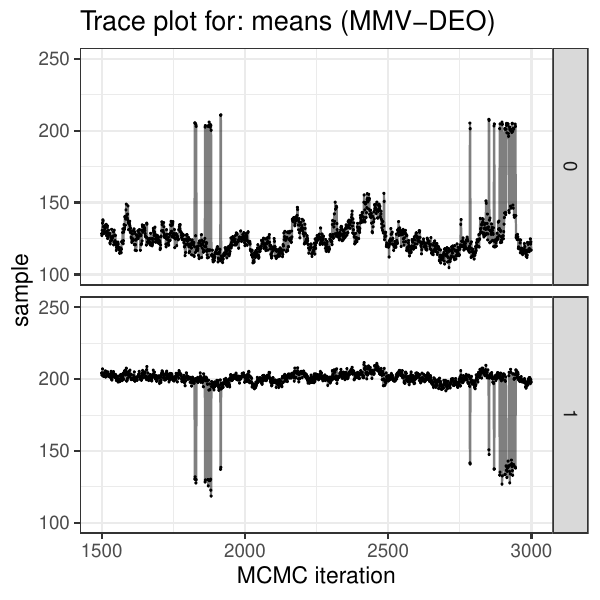}
 	\end{center}
 	\caption{Mixture modelling example: post burn-in trace plots of two model parameters (facet rows) for four MCMC methods considered (facet columns). The two parameters correspond to the location parameters of two exchangeable clusters. Correct MCMC exploration of this unidentifiable model requires label switching, providing a test bed for MCMC over multimodal targets. All methods use a computational budget lower or equal to NRPT's.}
 	\label{fig:trace-plots-mcmcs}
 \end{figure}

 \begin{figure}[H]
 	\begin{center}
 		\includegraphics[width=0.24\linewidth]{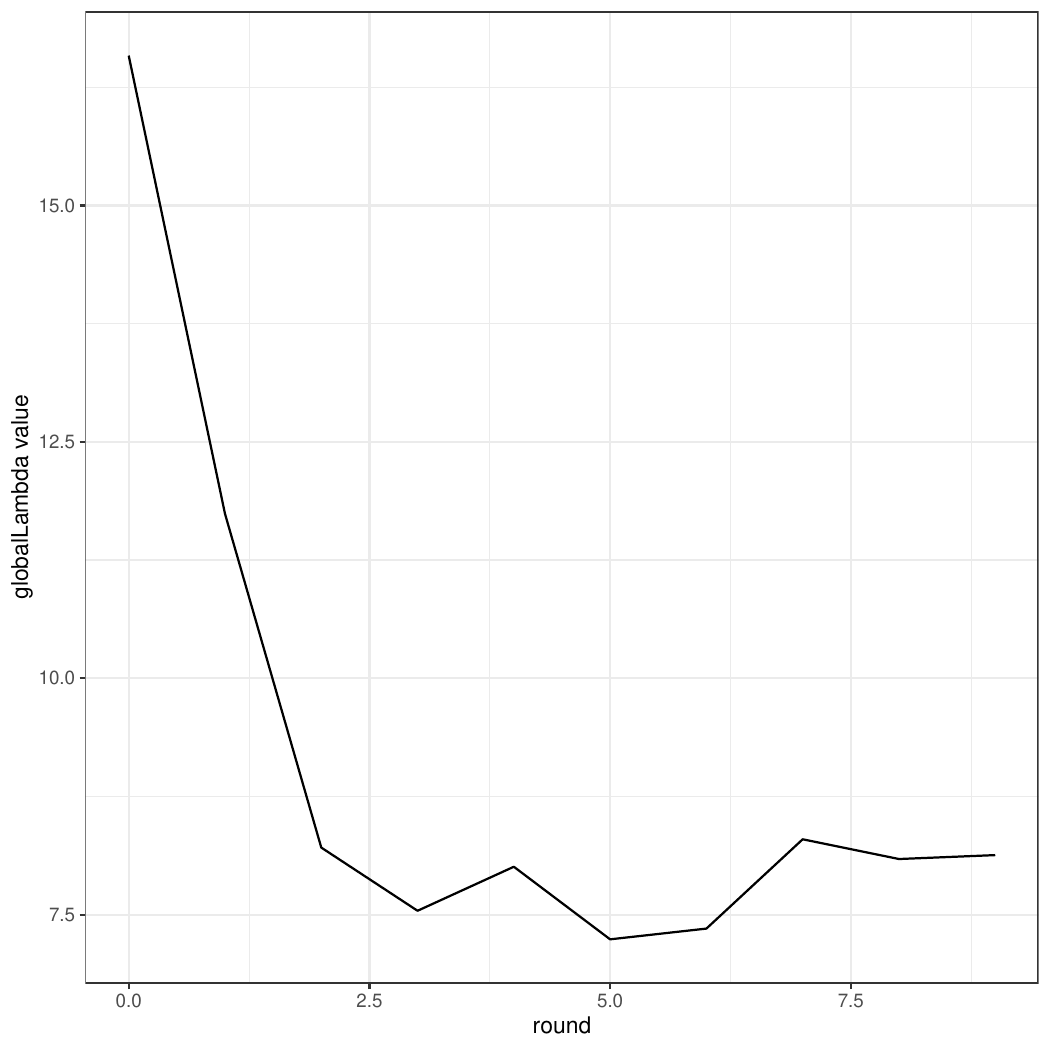}
 		\includegraphics[width=0.24\linewidth]{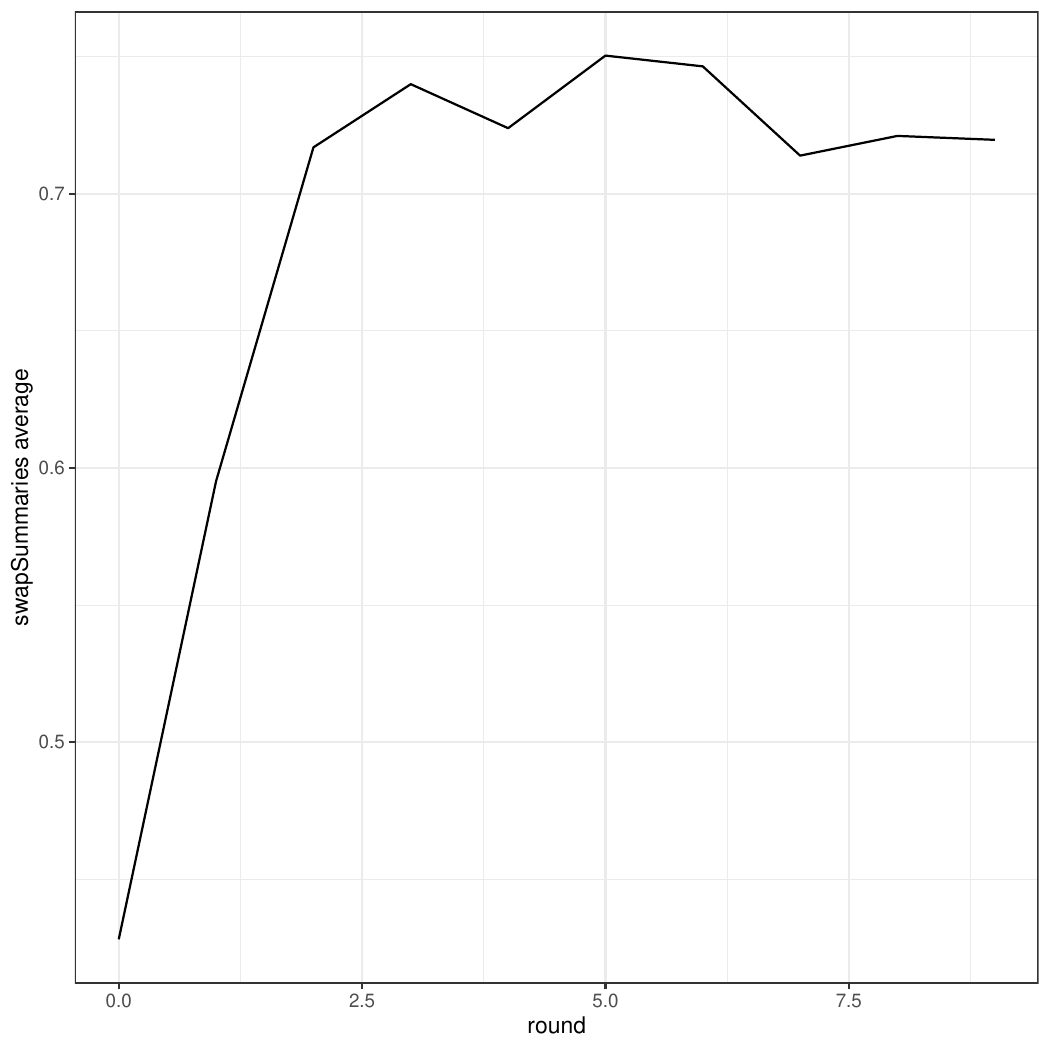}
 		\includegraphics[width=0.24\linewidth]{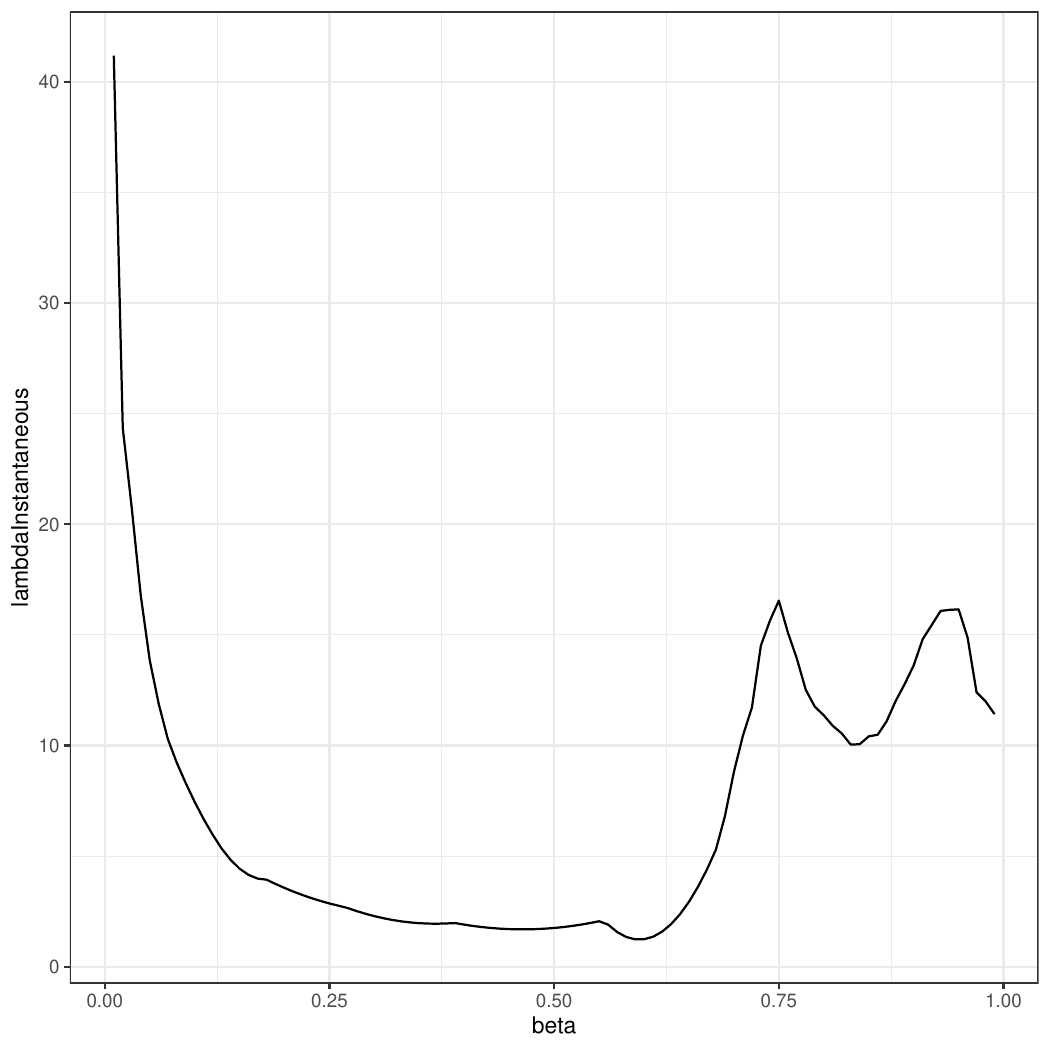}
 		\includegraphics[width=0.24\linewidth]{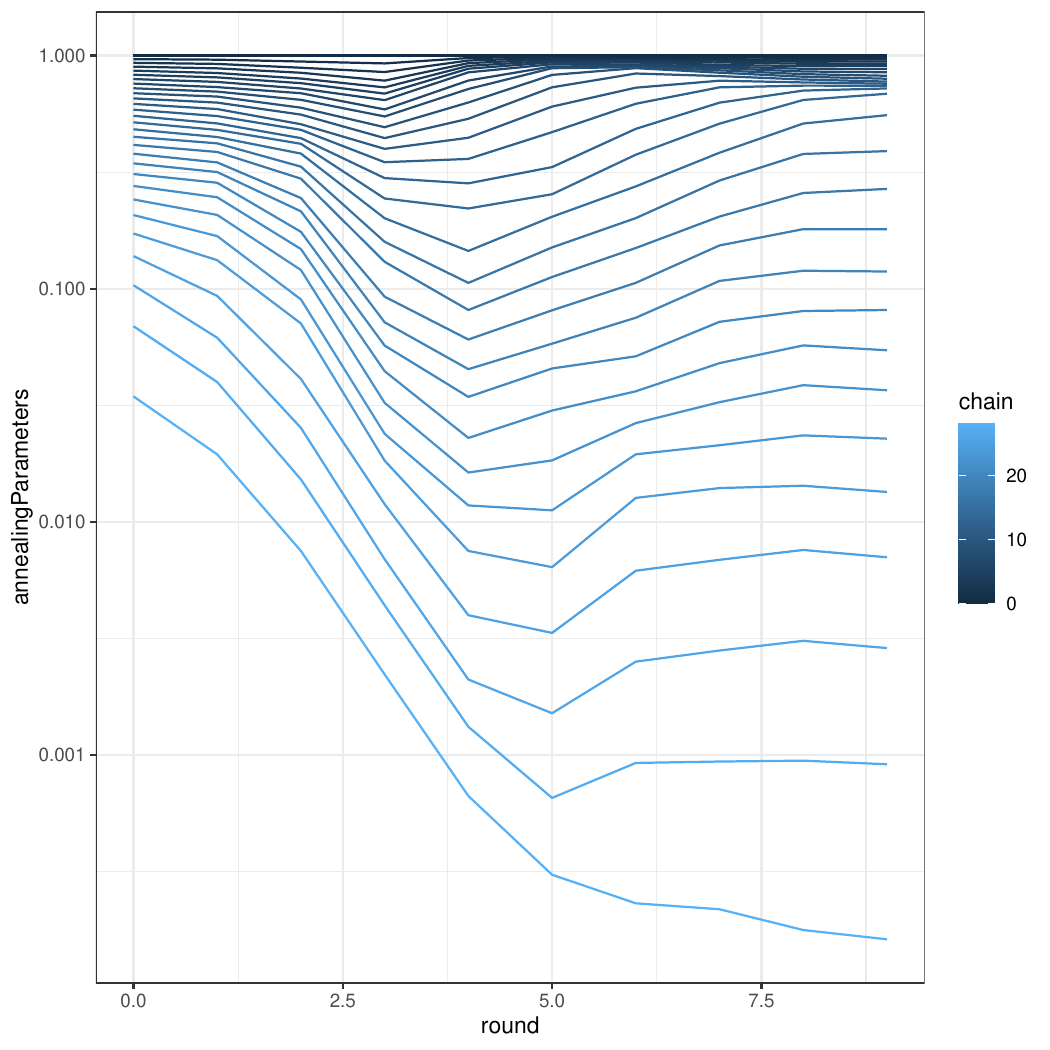}
 	\end{center}
 	\caption{Diagnostics for NRPT on the mixture modelling example. From left to right: (1) Estimate of the global communication barrier $\hat \Lambda$ as a function of the schedule optimization round. (2) Average swap acceptance probability across the 30 chains as a function of the schedule optimization round. The final value is $72\%$. (3) Estimated communication barrier $\lambda$ output by NRPT. The two peaks can be interpreted as transition points where the cluster membership indicator variables go from disorganized (all cluster membership variables are i.i.d.\ at $\beta=0$), to all taking the same value within a cluster. The two clusters having different  number of data points and parameters induce two  distinct transition points. The empirical behaviour observed is analogous to the phase transition found in statistical mechanics models such as the Ising model. Investigation of possible phase transitions in clustering models would be an interesting future direction of investigation, although somewhat orthogonal to this work. (4) Learning curves for the annealing parameters (ordinate axis, log scale) for the $30$ NRPT chains (colours) as a function of the schedule optimization round (abscissa).}
 	\label{fig:nrpt-diag}
 \end{figure}

\subsection{Section~\ref{sec:chromobreak}}\label{app:chromobreak}

\begin{figure}[H]
	\begin{center}
		\includegraphics[width=0.19\linewidth]{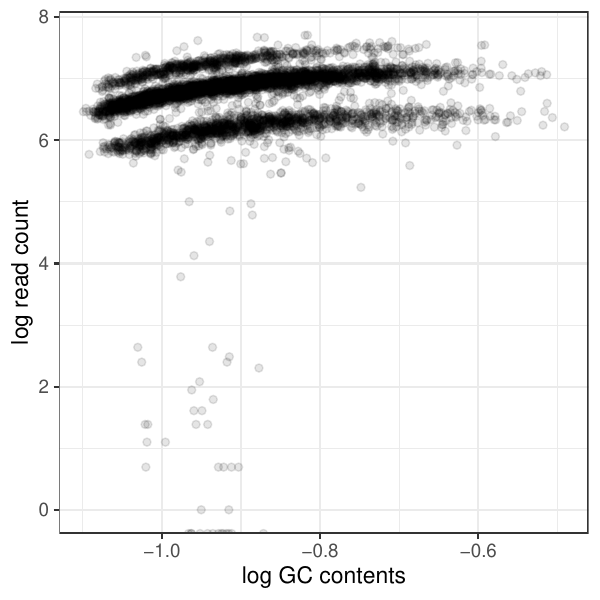} 
		\includegraphics[width=0.19\linewidth]{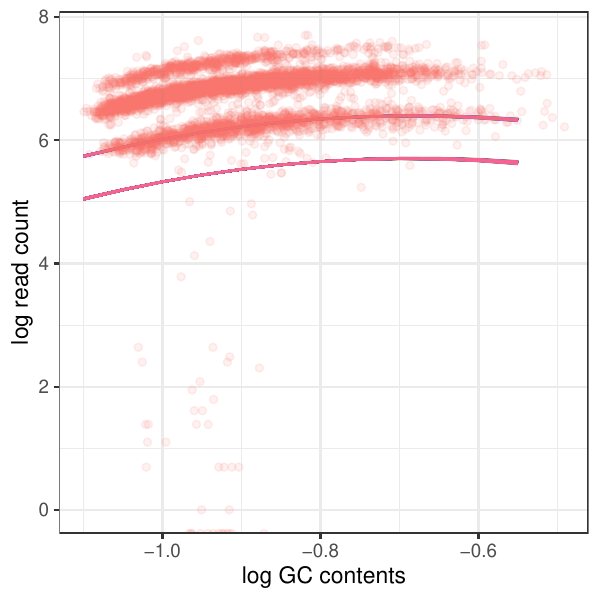} 
		\includegraphics[width=0.19\linewidth]{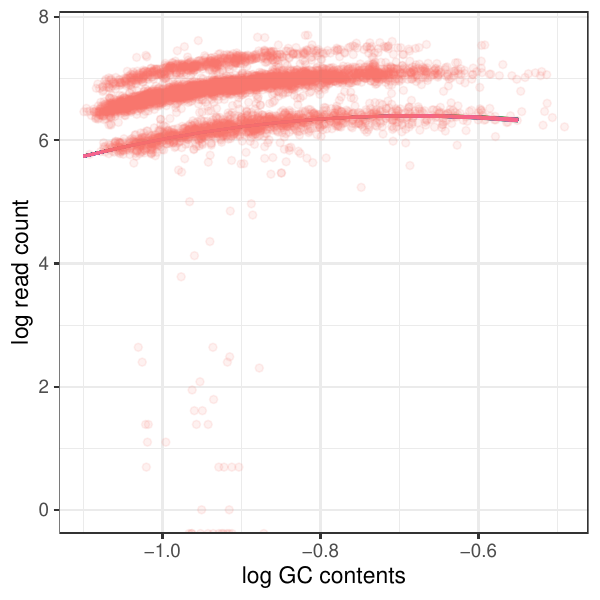} 
		\includegraphics[width=0.17\linewidth]{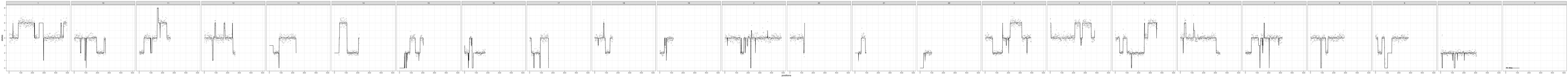}
		\includegraphics[width=0.17\linewidth]{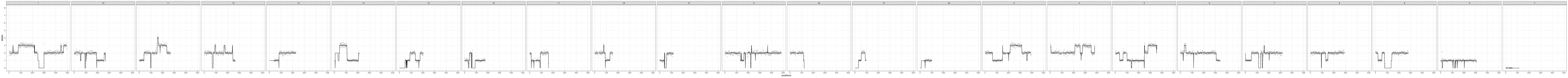}
	\end{center}
	\caption{Copy number inference: inputs and outputs. From left to right: (1) raw data, where each dot represent a genomic bin $i, c$, with its abscissa showing $\log g_{i,c}$ and its ordinate, $\log y_{i,c}$. (2) posterior distribution obtained from NRPT for the random function $f(\theta, \cdot)$ obtained from NRPT. One translucent line $f(\theta_n, \cdot)$ is drawn for each sampled $\theta_n$. Notice the bi-modality of the posterior distribution. (3) The approximation of the same posterior distribution based on a single chain misses one of the modes. (4) Two regions in chromosome 2 supporting a plausible genome duplication (and hence multimodality). We show here a sample index $n$ corresponding to the tetraploid mode in the second sub-panel. The dots show the data after GC-correction based on sample $n$, i.e.\ $\exp(\log y_{i,c} - f(\theta_n, \log g_{i,c}))$. The line shows the imputed copy numbers at iteration $n$. Notice the two small regions with odd copy numbers, around position 25 and just before position 200. (5) Same visualization as  panel (4) but for a sample index $n$ corresponding to the diploid mode. Regions 25 and 200 cannot be explained without assuming genome duplication, however since the regions supporting duplication are small, the model remains uncertain about ploidy and hence multimodal.}
	\label{fig:chromo}
\end{figure}

\begin{figure}[H]
	\begin{center}
		\includegraphics[width=0.09\linewidth]{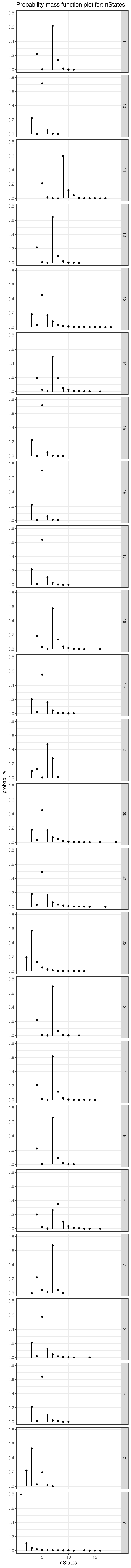}
		\includegraphics[width=0.09\linewidth]{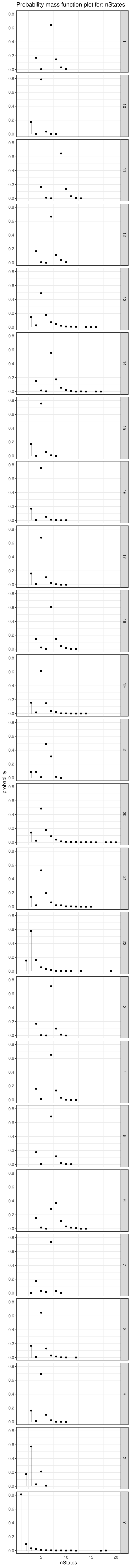}
		\includegraphics[width=0.09\linewidth]{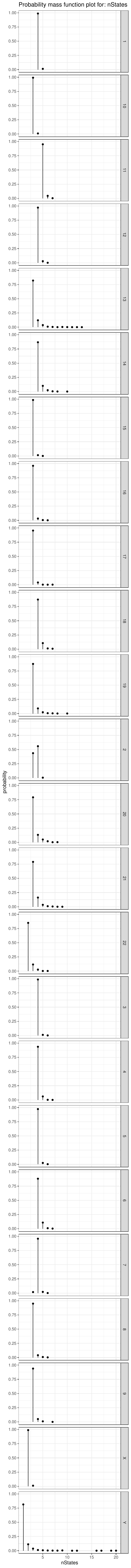}
		\includegraphics[width=0.09\linewidth]{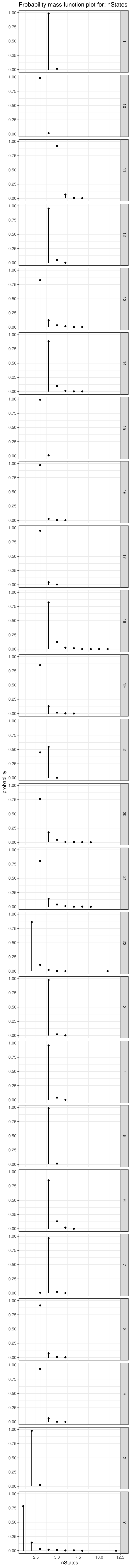}
		
	\end{center}
	\caption{Multimodality of the random variables $m_c$ controlling the size of the copy number space for each chromosome $c$. Each facet displays an approximate posterior probability mass function. Facet rows are indexed by chromosomes. Facet columns are indexed by posterior inference methods, which are all MCMC schemes ran for 5000 iterations. (1, 2) The multimodality arising from a weakly supported genome duplication is captured by NRPT, based on $N=50$ and $N=25$ respectively. (3) Results for an MCMC run based on the same exploration kernel misses the multimodality. (4) The stochastic adaptation method MMV fails to learn a schedule capable of performing round trips within the 5000 iterations and therefore misses the multimodality as well. See Section~\ref{sec:chromobreak} for details.}
	\label{fig:multimodality-chromo}
\end{figure}

\begin{figure}[H]
	\begin{center}
		\includegraphics[width=0.85\linewidth]{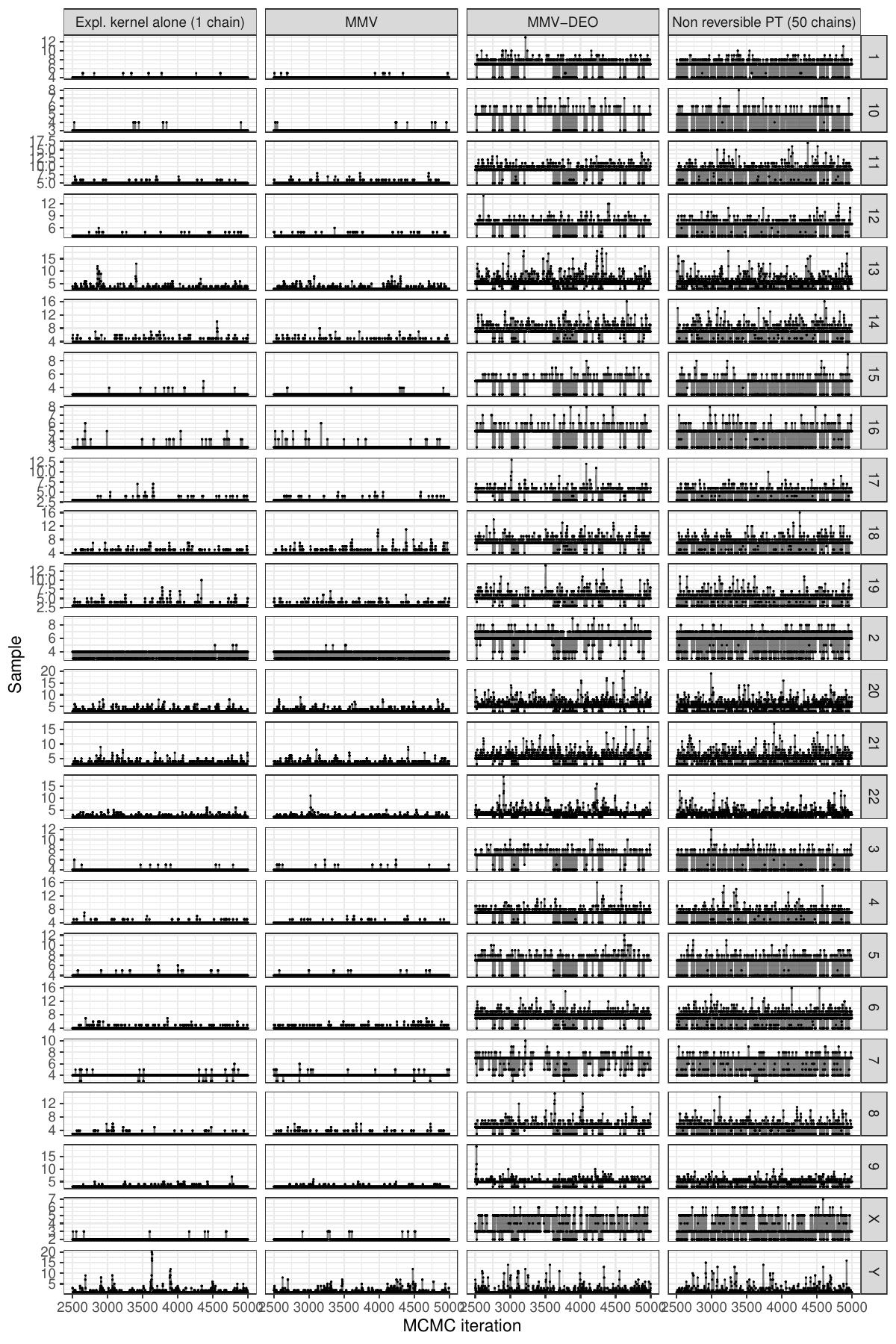}
	\end{center}
	\caption{Enlarged version of Figure~\ref{fig:nstates-traces-small}.}
	\label{fig:nstates-traces}
\end{figure}

\begin{figure}[H]
	\begin{center}
		\includegraphics[width=0.24\linewidth]{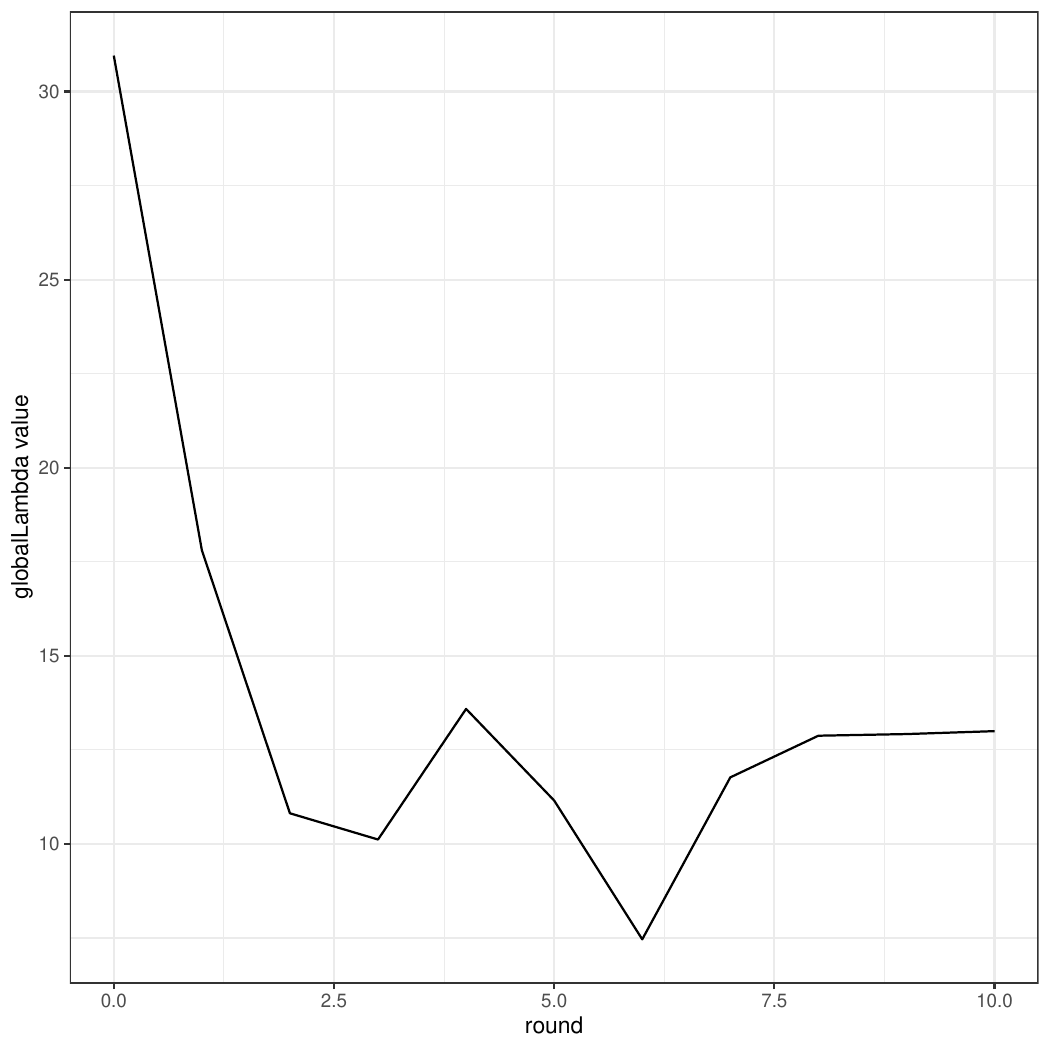}
		\includegraphics[width=0.24\linewidth]{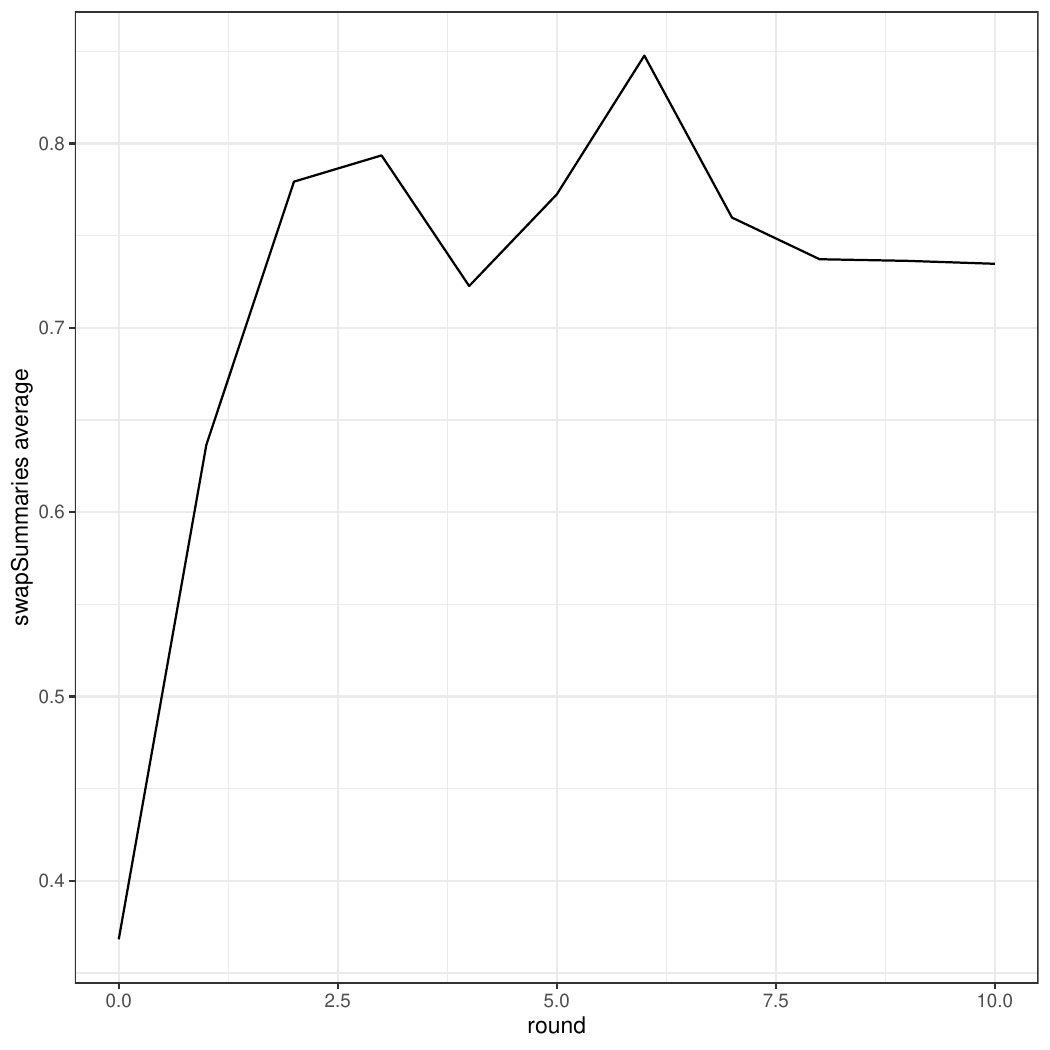}
		\includegraphics[width=0.24\linewidth]{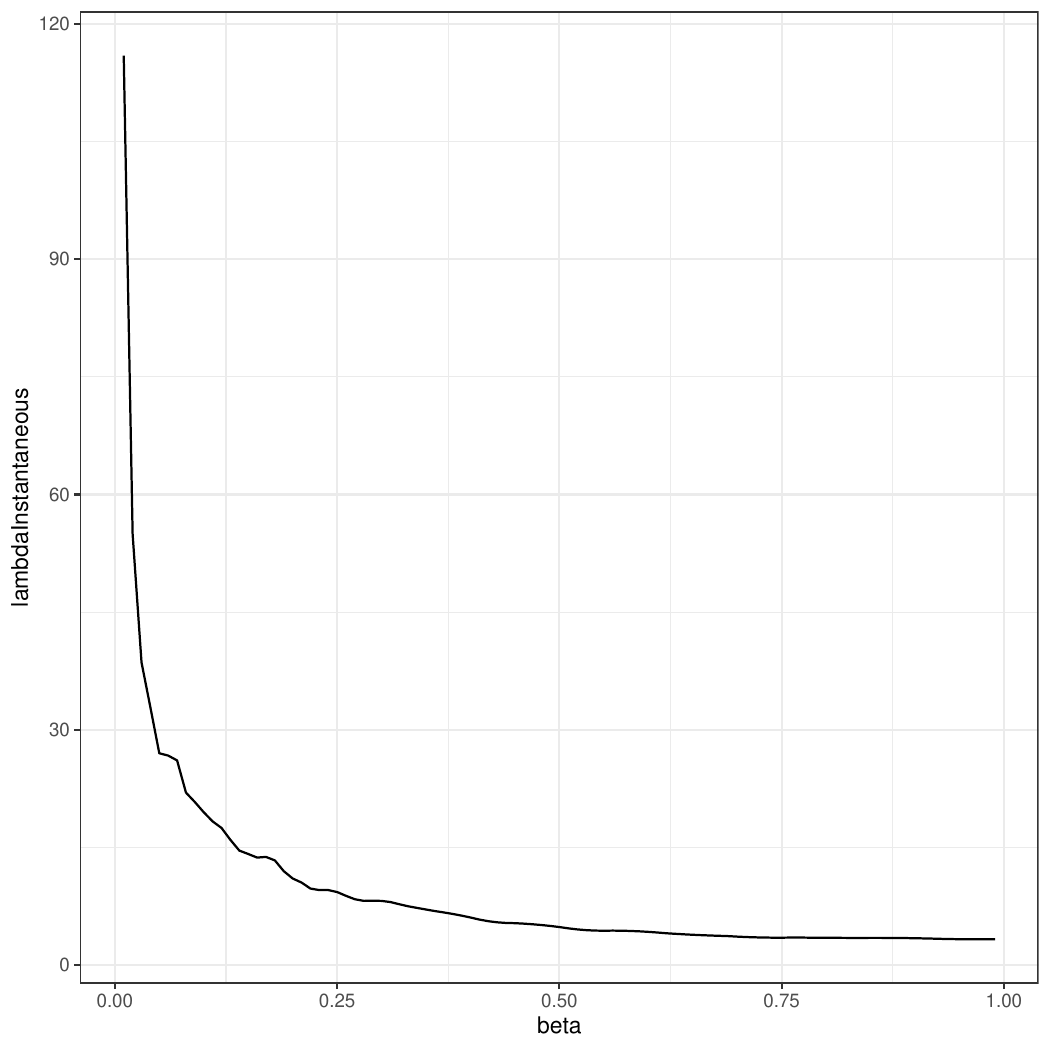}
		\includegraphics[width=0.24\linewidth]{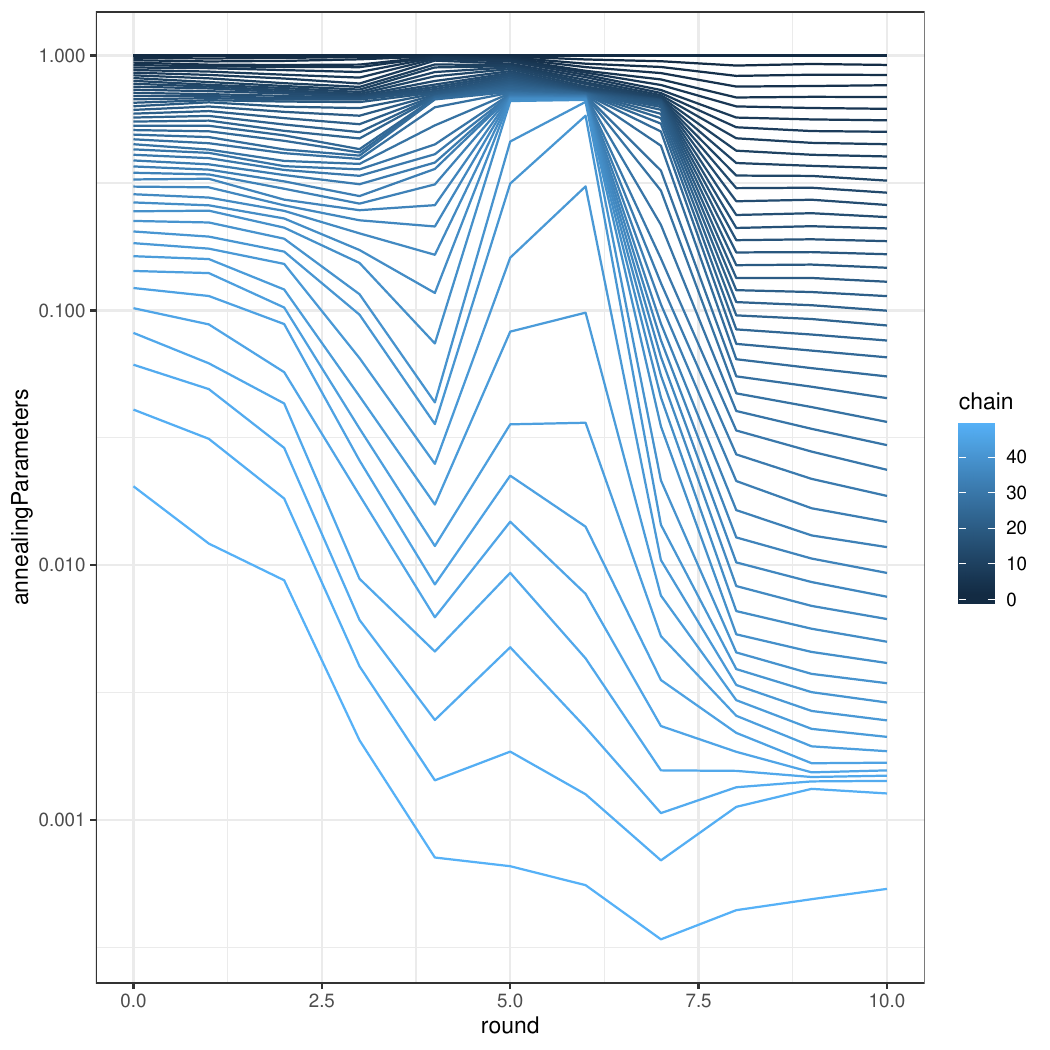}
	\end{center}
	\caption{NRPT diagnostics on the copy number inference example. From left to right: (1) Estimate of the global communication barrier $\hat \Lambda$ as a function of the scehdule optimization round. (2) Average swap acceptance probability across the 50 chains as a function of the schedule optimization round. (3) Estimated communication barrier $\lambda$ output by NRPT.  (4) Learning curve for the annealing parameters (ordinate axis, log scale) for the $50$ NRPT chains (colours) as a function of the schedule optimization round (abscissa).}
	\label{fig:chromo-diag}
\end{figure}
\end{appendices}

\end{document}